\newif\ifsubmission
\newif\ifpersonalcopy
\newif\ifrevision
\newif\ifspacetrick
\newif\ifwithcomments
\newif\iflongversion
\definecolor{modification}{RGB}{35,178,44}
\definecolor{modification}{rgb}{0, 0, 0}
\begin{document}

%% Title information
\title[]{Polymorphic Type Inference for Dynamic Languages}
%Inferring Intersections of Polymorphic Function Types}
\subtitle{Reconstructing Types for Systems Combining Parametric,
Ad-Hoc, and Subtyping Polymorphism}

\authorsaddresses{Authors'
addresses: \href{HTTPS://ORCID.ORG/0000-0003-0951-7535}{Giuseppe
Castagna}, \href{HTTPS://ORCID.ORG/0000-0003-1590-2392}{Mickaël
Laurent},  Institut de Recherche en Informatique Fondamentale (IRIF),
Université Paris Cité, CNRS, 8 place Aurélie Nemours, 75013 Paris, France; \href{HTTPS://ORCID.ORG/0000-0002-1729-870X}{Kim~Nguyen}, Laboratoire Méthodes Formelles, Université Paris-Saclay, CNRS, ENS Paris-Saclay,  91190, Gif-sur-Yvette, France}
\author{Giuseppe Castagna}
\orcid{0000-0003-0951-7535}
\affiliation{
  \department{Institut de Recherche en Informatique Fondamentale (IRIF)}
%  \institution{CNRS - Université Paris-Cité}            %% \institution
%  is required
%cf: https://u-paris.fr/charte-des-signatures-de-publications-scientifiques-universite-de-paris/
\institution{CNRS, Université Paris Cité}%, IRIF, 75006, Paris}
  \country{France}
}

\author{Mickaël Laurent}
\orcid{0000-0003-1590-2392}
\affiliation{
  \department{Institut de Recherche en Informatique Fondamentale (IRIF)}
%  \institution{Université Paris-Cité}
\institution{Université Paris Cité}%, CNRS, IRIF, 75006, Paris}
\country{France}
}

\author{Kim Nguy\~{\^e}n}
 \orcid{0000-0002-1729-870X}   
\affiliation{
  \department{Laboratoire Méthodes Formelles (LMF)}
  %cf https://www.universite-paris-saclay.fr/en/recherche/services-aux-chercheurs/how-sign-research-publication
  \institution{Université Paris-Saclay}
  \postcode{91190}%, CNRS, ENS Paris-Saclay, LMF, 
                                       %91190, Gif-sur-Yvette}
  \city{Gif-sur-Yvette}
  \country{France}
}

\begin{abstract}
We present a type system that combines, in a controlled way,
first-order polymorphism with intersection types, union types, and
subtyping, and prove its safety. We then define a type reconstruction
algorithm that is sound and terminating. This yields a system in which
unannotated functions are given polymorphic types (thanks to
Hindley-Milner) that can express the overloaded behavior of the
functions they type (thanks to the intersection introduction rule) and
that are deduced by applying advanced techniques of type narrowing
(thanks to the union elimination rule). This makes the system a prime
candidate to type dynamic languages.

\end{abstract}

%% 2012 ACM Computing Classification System (CSS) concepts
%% Generate at 'http://dl.acm.org/ccs/ccs.cfm'.
\begin{CCSXML}
  <ccs2012>
     <concept>
         <concept_id>10003752.10010124.10010125.10010130</concept_id>
         <concept_desc>Theory of computation~Type structures</concept_desc>
         <concept_significance>500</concept_significance>
         </concept>
     <concept>
         <concept_id>10003752.10010124.10010138.10010143</concept_id>
         <concept_desc>Theory of computation~Program analysis</concept_desc>
         <concept_significance>500</concept_significance>
         </concept>
     <concept>
         <concept_id>10011007.10011006.10011008.10011009.10011012</concept_id>
         <concept_desc>Software and its engineering~Functional languages</concept_desc>
         <concept_significance>500</concept_significance>
         </concept>
     <concept>
         <concept_id>10011007.10011006.10011008.10011024.10011025</concept_id>
         <concept_desc>Software and its engineering~Polymorphism</concept_desc>
         <concept_significance>500</concept_significance>
         </concept>
   </ccs2012>
\end{CCSXML}

  \ccsdesc[500]{Theory of computation~Type structures}
  \ccsdesc[500]{Theory of computation~Program analysis}
  \ccsdesc[500]{Software and its engineering~Functional languages}
  \ccsdesc[500]{Software and its engineering~Polymorphism}

%% End of generated code

\ifsubmission\else
%% Keywords
%% comma separated list
\keywords{polymorphism,
union types, intersection types,
type reconstruction.}  %% \keywords are mandatory in final camera-ready submission
\fi

\maketitle
\section{Introduction}
\label{sec:intro}
\ifpersonalcopy%
\begin{tikzpicture}[remember picture,overlay,shift={(current page.north east)}]
\node[anchor=north east,xshift=-1.6cm,yshift=-0.4cm]{\href{https://www.acm.org/publications/policies/artifact-review-and-badging-current}{\includegraphics[width=2cm]{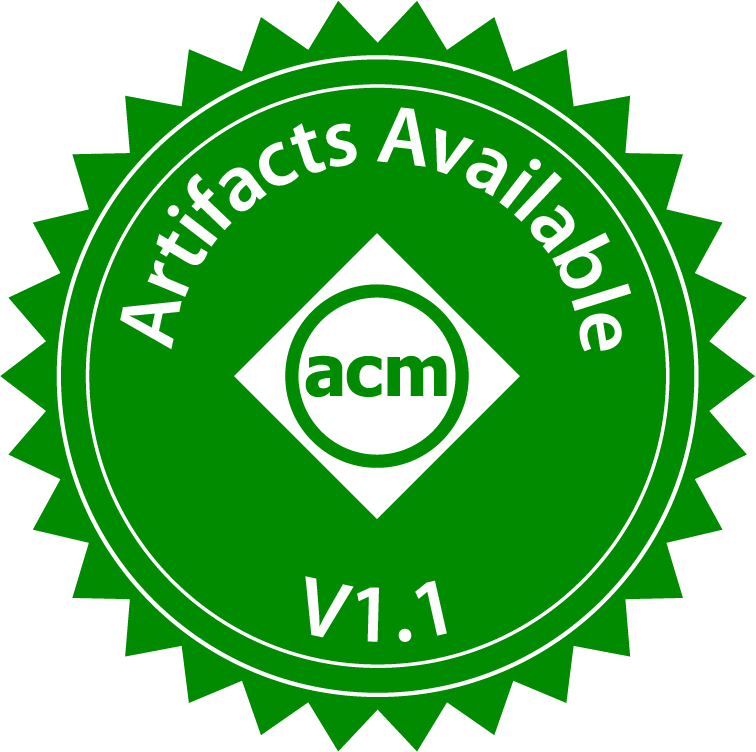}\includegraphics[width=2cm]{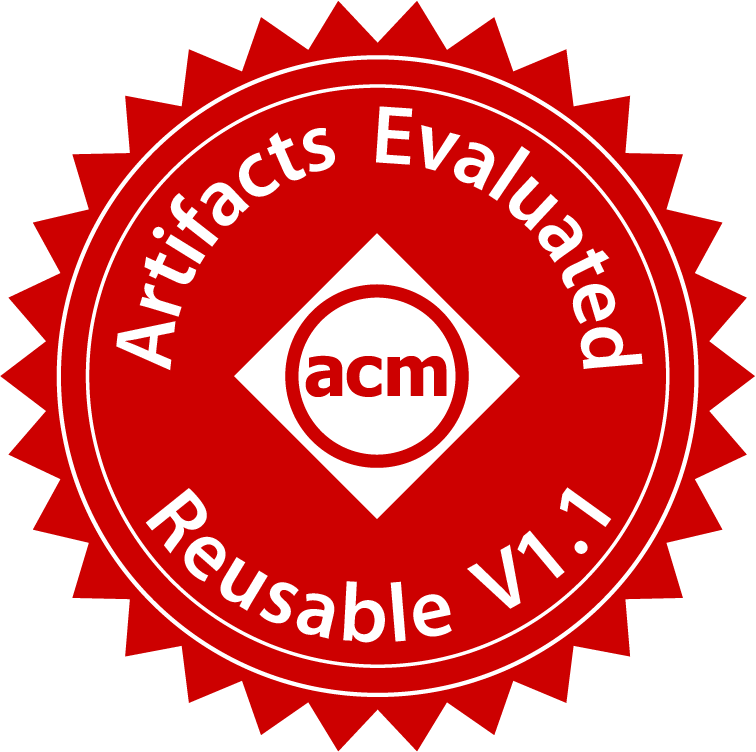}}};
\end{tikzpicture}\fi

Typing dynamic languages is a challenging endeavour
even for very simple pieces of code. For instance, JavaScript's 
logical or operator ``\jsinline{||}'' behaves like the following
function (also in JavaScript):\footnote{\color{modification}This definition does not
capture the short-circuit evaluation of ``\jsinline{||}''.}
\begin{javascript}
 function lOr (x, y) {                       !\label{ln:one}!
     if (x) { return x; } else { return y; }
 }                                           !\label{ln:two}!
\end{javascript}
A naive type for this function is
$(\Bool,\Bool)\to\Bool$, which states that \jsinline{lOr} is a function
that takes two Boolean arguments and returns a Boolean result. This
however is an
overly restrictive type, that does not account for the fact that in
JavaScript logical operators such as \jsinline{lOR} can be applied to
any pairs of arguments, not just to Boolean ones. JavaScript distinguishes two kinds of values: eight
``falsy'' values (i.e., \jsinline{false}, \jsinline{""}, \jsinline{0},
\jsinline{-0}, \jsinline{0n}, \jsinline{undefined}, \jsinline{null},
and \jsinline{NaN}) and the ``truthy'' values (all the others). The
expression \jsinline{if} executes the \jsinline{else} code if and only
if the tested value is falsy. If we want to change the previous type
to account for this fact, then we should give \jsinline{lOr} the
type $(\any,\any)\to\any$ (where $\any$ is the type of all values),
which is a rather useless type since it essentially states that
\jsinline{lOr} is a binary function. To give \jsinline{lOr} a more
informative type, we need union and intersection types (which are
already integrated in typed versions of JavaScript such as TypeScript~\cite{typescript} and Flow~\cite{Flow}): we define the type \Falsy as the following union type
$\jsinline{false}\vee\jsinline{""}\vee\jsinline{0}\vee\jsinline{-0}\vee\jsinline{0n}\vee\jsinline{undefined}\vee\jsinline{null}\vee\jsinline{NaN}$,
where each value denotes here the \emph{singleton type} containing
that value,
and the type \Truthy to be its complement, $\neg\Falsy$, that is, the
type of all
%the (well-typed)
values that are not of type \Falsy. Then
we
can deduce for \jsinline{lOr} the following more precise type
\begin{equation}\label{monolor}
    ((\Truthy,\any)\to\Truthy)\wedge((\Falsy,\Truthy)\to\Truthy)\wedge((\Falsy,\Falsy)\to\Falsy)
\end{equation}
In this type, $\land$ is a type combinator denoting intersection and
meaning that the function has all the types given in the intersection:
that is, in words, if the first argument of a function of this type is
a \Truthy, then the function
returns a \Truthy regardless of the second argument (first arrow type),
while if the first argument is a \Falsy, then the result is of the same
type as the second argument's type (second and third arrow type). Notice how
the use of an intersection of arrow types corresponds to the typing
of an ``overloading'' behavior (also known as, \emph{ad hoc}
polymorphism~\cite{Str67}), insofar as the result of an application depends on the
\emph{type} of the input. 

In order to derive such a type, the type system must deduce that whenever the
condition tested by the \jsinline{if} holds, then \jsinline{x} is of type
\Truthy and, therefore, $(i)$ that all occurrences of \jsinline{x} in the
``then'' branch (here just one) have type \Truthy and $(ii)$ that all the
occurrences of the same variable \jsinline x in the branch ``else'' (here none)
have thus type \Falsy. This kind of deduction is usually referred as \emph{type
narrowing} or \textit{occurrence typing} since it requires to ``narrow'' the
type of a variable \jsinline x differently for its different occurrences. A type
system such as the one for Typed Racket---defined in \citep{THF10} where the
term \textit{occurrence typing} was first introduced---is able to \emph{check}
that \jsinline{lOr} has the type in \eqref{monolor}, meaning that the deduction
requires the programmer to explicitly specify the type in the
code.%
\iflongversion%%%%%%%%%%%
\footnote{Though, Typed Racket has no negation types.}
\else
\ 
\fi%%%%%%%%%%%%%%%%%%%%%%%%%%%%
The system by
\citet{occtyp22} makes a step further, since not only it can check that
\jsinline{lOr}  has the type in \eqref{monolor}, but also it can reconstruct for
\jsinline{lOr} the intersection type
$((\Truthy,\any)\to\Truthy)\land((\Falsy,\any)\to\any)$ which, although it is
less precise a type than \eqref{monolor}, it is inferred from the code of
\jsinline{lOr} as is, without needing any type annotation. This latter work
constitutes the state of the art of this kind of inference, since it is the only
system that can reconstruct intersections of arrow types.

In this work we  go a step further, and show how to infer (i.e.,
reconstruct) intersections of \emph{polymorphic}
function types. In particular,  the system we present here reconstructs for
\jsinline{lOr} the
following first order polymorphic type (where $\alpha$ and $\beta$ are
type variables):\footnote{This type can be considered as an encoding of
$\forall(\alpha\leq\Truthy).\forall(\beta)\,.\,((\alpha,\any)\to\alpha)\wedge((\Falsy,\beta)\to\beta)$
a type expressed in so-called bounded polymorphism: see~\citet[Section
  2]{Cas22}.\label{fn:bounded}}
\begin{equation}\label{polylor}
\forall\alpha,\beta\,.\,((\alpha\Wedge\Truthy,\any)\to\alpha\Wedge\Truthy)\wedge((\Falsy,\beta)\to\beta)
\end{equation}
This type completely specifies the semantics of the function \jsinline{lOR}: it states that if the
first argument is a \Truthy, then the application of the function returns the first argument,\footnote{Strictly speaking, the type
states that the function returns a result of the same type as the
first argument, but by parametricity we can deduce that the result
will be the first argument. Likewise for the second argument. A simple
way to understand it, is by instantiating both type variables in \eqref{polylor}
with the singleton type of the (value result of the) argument.} otherwise
it returns the second argument. This type is more precise than the one
in~\eqref{monolor}, since it allows the system to deduce that, say, if
the first argument of \jsinline{lOR} is an object, then the result will be
an object of the same type (rather than just a truthy value).
%\kim{The following sentence  sounds abit awkward  and uses also twice}
%Not only this type provides a more
%precise description of the behavior of the function, but also this
%degree of precision composes yielding an accurate description also
%for the expressions in which the function is used.
Not only does the system we present here infers such a precise type,
but this kind of precision  is compositional, yielding an
accurate type also for the expressions in which the function is used.
For instance, if we define the following function:
\begin{javascript}
 function id (x) {    !\label{ln:ids}!
     return lOR(x, x)
 }                    !\label{ln:ide}!
\end{javascript}
then, as we explain later on, our system infers that \jsinline{id} has type
$\forall\alpha.\alpha\to\alpha$, viz., that \jsinline{id} is indeed the
polymorphic identity function.

This is clearly better than the current state of the art. Still, it does
not seem too hard a feat to deduce that if we are testing whether \jsinline
x is a truthy value, then when the test succeeds we can assume that \jsinline x
is of type \Truthy. To show the more advanced capabilities of our system let us
have a look at how ECMAScript specifies the semantics of JavaScript logical
operators, as defined in the 2021 version of the
specification~\cite[Section 13.13.1]{ecmascript12}. Since in JavaScript there
are no union or intersection types, then the falsy and truthy values are defined
via an (abstract) function \jsinline{ToBoolean} which simply checks whether its
argument is one of the 8 falsy values and returns \jsinline{false}, otherwise it
returns \jsinline{true} (see its definition in row 1 of
Table~\ref{tab:bench} in Section~\ref{sec:implementation}). In our system, \jsinline{ToBoolean} has type
$(\Truthy\to\jsinline{true})\wedge(\Falsy\to\jsinline{false})$. All logical
operators are then defined by ECMAScript in terms of this function: this has the
advantage that any change to the specification of falsy (e.g., the addition of a
new falsy value, like the addition of the built-in \jsinline{bigint} type and
its constant \jsinline{0n} in ES2020) requires only the modification of this
function, and is automatically propagated to all operators. So the actual
definition of \jsinline{lOr} for ECMAScript is the following one:
\begin{javascript}
 function lOr (x, y) {
     if (ToBoolean(x)) { return x; } else { return y; } !\label{ln:tobool}!
 }
\end{javascript}
If we feed this function to our system, then it infers for it the type in
\eqref{polylor}, that is, the same type it already deduced for the simpler
version of \jsinline{lOr} defined in lines~\ref{ln:one}-\ref{ln:two}. But here
the deduction needed to perform type narrowing is more challenging, since the
system must deduce from the type
$(\Truthy\to\jsinline{true})\wedge(\Falsy\to\jsinline{false})$ of
\jsinline{ToBoolean} that when the \emph{application} in line \ref{ln:tobool}
returns a truthy value, then the \emph{argument} of \jsinline{ToBoolean}  is of
type \Truthy, and it is of type \Falsy otherwise. More generally, we need a
system which, when a test is performed on an arbitrarily complex application,
can narrow the type of all the variables occurring in the application by
exploiting the information provided by the overloaded behavior of the functions
therein.  Achieving such a degree of precision is a hard feat but, we argue, it
is necessary if we want to reconstruct types for dynamic languages, that is, if
we want to type their programs as they are, without requiring the addition of
any type annotations. Indeed, the core operators of these languages (e.g.,
JavaScript's ``\jsinline{||}'', ``\jsinline{&&}'', ``\jsinline{typeof}'',
\ldots) are characterized by an ``overloaded'' behavior, which is then passed
over to the functions that use them. So for instance a simple use of JavaScript
logical or ``\jsinline{||}'' such as in \jsinline{(x => x || 42)} results in a
function whose precise type, as reconstructed by our system, is
$(\Keyw{Falsy}\rightarrow
\jsinline{42})\land(\Keyw{Truthy}\land\alpha\rightarrow\Keyw{Truthy}\land\alpha)$.
JavaScript functions also routinely perform dynamic checks against constants
(notably \jsinline{null} and \jsinline{undefined}), which our system also
handles as part of its more general approach to type narrowing of arbitrary
expressions. 

\subsection{Outline}\label{sec:outline}
\paragraph{Type System (Section~\ref{sec:types})}
So, how can we achieve all this? Conceptually, it is quite simple: we
just merge together three of the most expressive type systems studied
in the literature, namely the Hindley-Milner (HM) polymorphic types~\cite{Hin69,Mil78},
intersection types~\cite{CDV81}, and union types~\cite{MacQueen1986,BDD95}. We achieve it simply by putting
together in a controlled way the deduction rules characteristic of
each of these systems (see Figure~\ref{fig:declarative} in Section~\ref{sec:types})  and proving
that the resulting system is sound (cf., Theorem~\ref{th:typesound}).

More precisely, the type system we describe in Section~\ref{sec:types}
is pretty straightforward. Its core is a classic HM system with
first order polymorphism: a program is a list of let-bindings that
define polymorphic functions; these are typed by inferring a type for
the expressions that define them, this type is then
generalized, yielding a prenex polymorphic type for the function. As usual, the
deduction of the type of each of these expressions is performed in a type
environment that records the generic types for the previously-defined
polymorphic functions, and the type system can instantiate these types
differently for each use of the polymorphic functions in the
expression. The novelty of our system is that when
deducing the types of the expressions  that define the polymorphic
functions, the type system can use not only instantiations of
polymorphic types (rule \Rule{Inst} in Figure~\ref{fig:declarative}), but also intersection and union types.
More precisely, to type these expressions the type system can decide
to use the classic rules of
intersection introduction (rule \Rule{$\wedge$}) and union elimination
(rule \Rule{$\vee$})  given in Figure~\ref{fig:declarative}. For instance, the intersection introduction
rule is used by the system to deduce that since the function \jsinline{lOr}
(either versions) has both type
$((\alpha\Wedge\Truthy,\any)\to\alpha\Wedge\Truthy)$ and type
$((\Falsy,\beta)\to\beta)$, then it has their intersection, too; this
intersection type
is then generalized (when \jsinline{lOr} is defined at top-level)
yielding the polymorphic type in~\eqref{polylor}. The union
elimination rule is essentially used to fine-grainedly type branching
expressions and tests involving applications of overloaded functions:
for instance, to deduce that the function \jsinline{id} in lines
\ref{ln:ids}--\ref{ln:ide} has type $\alpha\to\alpha$, the system can
assume that \jsinline x has type $\alpha$ and separately infer the
type of the body for $\jsinline x : (\alpha\Wedge\Truthy)$ and for
$\jsinline x : (\alpha\Wedge\neg\Truthy)$; since the first
deduction yields  $(\alpha\Wedge\Truthy)$ and the second yields
$(\alpha\Wedge\neg\Truthy)$, then the system deduces that under the
hypothesis  $\jsinline x : \alpha$, the body has the union of these
two types, that is $\alpha$%
\iflongversion
. \footnote{To type \texttt{id} it is not
necessary to use union elimination, the usual rule for application
suffices. We will see later on more complex examples where union
elimination is necessary.}
\else%
.
\fi%%%%%%%%%%%%%%%%%%%%%%%%%%%%%%%%%%%%%%%%%%%%%%
%% fun id
%%  (x : ((('a & `true) , Any) -> (('a) & `true)) & (( (Any\`true), 'b) -> 'b)) 
%%  (y : 'c) : 'c = x (y,y);;
%% does not type in CDuce
%%
Furthermore, as observed by~\citet{occtyp22}, the combination of the
union elimination with the rules of type-cases given in
Figure~\ref{fig:declarative} constitutes the essence of narrowing and
occurrence typing.

The declarative type system given in
Section~\ref{sec:types} is all well and good, but how can we define an
algorithm that infers whether a given expression can be typed in this
system? Rules such as union elimination and intersection introduction
are easy to understand, but they
do not easily lend themselves to an implementation. In order to arrive
to an effective implementation of the type system specified in
Section~\ref{sec:types} we proceed in two steps: $(i)$ the definition of an
algorithmic system and $(ii)$ the definition of a reconstruction algorithm.

\paragraph{Algorithmic System (Section~\ref{sec:algo})} 

The first step towards an effective implementation of our type system
is taken in Section~\ref{sec:algo} where we define an algorithmic system
that is sound and complete with respect to the system of
Section~\ref{sec:types}. The system is algorithmic since it is
composed only by syntax-directed and analytic rules\footnote{A rule is
\emph{analytic} (as opposed to \emph{synthetic}) when the input (i.e.,
$\Gamma$ and $e$) of the judgment at the conclusion is sufficient to
determine the inputs of the judgments at the premises
(cf.\ \cite{ML1994,types2019}).}  and, as such, is immediately
implementable. It is sound and complete since an expression is typable
in it if and only if it is typable in the system of
Section~\ref{sec:types}. To obtain this results the system is defined
on pairs formed by an MSC-form (Maximal Sharing Canonical form) and an
annotation tree. MSC-forms are A-normal forms~\cite{SF92} on steroids:
they are lists of bindings associating variables to expressions in
which every proper subexpression is a variable. Their characteristic
is that they encode expressions and preserve typability in the sense
that every expression is typable if and only if its unique MSC-form is
typable. MSC-forms were introduced by~\citet{occtyp22} to drastically
reduce the range of possible applications of the union elimination
rule; here we improve their definition to deal with our polymorphic
setting and use them for exactly the same reason as
in~\cite{occtyp22}. Annotation trees encode canonical derivations of
the system of Section~\ref{sec:types} for the MSC-form they are paired
with. They are a generalization of type annotations inserted in the
code. Instead of annotating directly an MSC-form with type-annotations
we used a separate annotation tree because of the union elimination
rule which types several times the same expression under different
type environments; this would, thus, require different annotations for
the same subexpressions, each annotation depending on the typing
context: this naturally yields to tree-shaped annotations in which each
branching corresponds either to the different deductions performed by
a union elimination rule or to the different deductions performed by
an intersection introduction rule. The soundness and
completeness properties of the algorithmic systems are thus stated
in terms of MSC-forms and annotation trees. They essentially state
that an expression $e$ has type $t$ in the declarative system of
Section~\ref{sec:types} if and only if there exists a tree annotation
for the (unique) MSC-form of $e$ that is typable in the algorithmic system
with (a subtype of) $t$: see Theorem~\ref{th:main}.

\paragraph{Reconstruction Algorithm (Section~\ref{sec:reconstruction})} The second of the two steps to
achieve an effective implementation for the type system of
Section~\ref{sec:types} is to define a reconstruction algorithm for
the previous algorithmic system, which we do in
Section~\ref{sec:reconstruction}. The statements of the soundness and
completeness properties of the algorithmic system clearly suggest what
this algorithm is expected to do: given an expression that defines a
polymorphic function, the algorithm must transform it into its unique
MSC-form and then try to reconstruct an annotation tree for it so that
the pair MSC-form and annotation tree is typable in the algorithmic
system of Section~\ref{sec:algo}.

The reconstruction is performed by a system of deduction rules that incrementally refines
an annotation tree (initially composed of a single node ``$\annotiinfer$'')
while exploring the list of bindings of the MSC-form of the expression to type. It mixes two independent
mechanisms: one that infers the domain(s) of $\lambda$-terms, and the other that performs type narrowing
when a typecase is encountered.

The first mechanism is inspired by the algorithm $\mathcal{W}$ by
\citet{DM82}: whenever the application of a destructor (e.g., a function application) is
encountered, an algorithm finds a substitution (if any) that makes
this application well-typed. In the context of a HM type system, the
algorithm at issue needs to solve a \emph{unification problem} (i.e.,
whether for two given types $s$ and $t$ there exists a substitution
$\sigma$ such as $s\sigma=t\sigma$) which, if solvable, has a principal
solution given by a single substitution~\cite{Rob65}. In our system,
which is based on subtyping, the algorithm at issue needs to solve a
\emph{tallying problem} (i.e., whether for two given types $s$ and $t$
there exists a substitution $\sigma$ such as $s\sigma\leq t\sigma$)
which, if solvable, has a principal solution given by \emph{a finite set} of
substitutions~\cite{polyduce2}. When multiple substitutions are found,
they are all considered and explored in different branches by adding
an intersection branching node in the current annotation tree.

The second mechanism gets inspiration from \citet{occtyp22} and
refines decompositions made by the union-elimination rule in order to
narrow the types of variables in the branches of a typecase
expression.  When the system encounters a typecase that tests whether some expression $e$ has
type $t$, then the type $s$ of the variable
bound to $e$ (recall that an MSC-form is a list of bindings) is split into $s\land t$ and $s\land \neg t$, and
these splits are in turn propagated recursively in order to generate
new splits for the types of the variables associated with the
subexpressions composing $e$.  For instance, when the algorithm
encounters the test ``\jsinline{if (ToBoolean(x))...}'' at
line~\ref{ln:tobool}, it splits the type of (the variable bound to)
\jsinline{ToBoolean(x)} in two, by intersecting it with
$\jsinline{true}$ and $\neg\jsinline{true}$, and this split in turn
generates a new split $\Truthy$ and $\Falsy$ for the type of the
variable \jsinline{x}.

The reconstruction algorithm we present in Section~\ref{sec:reconstruction} is
sound: if it returns an annotation tree for an
MSC-form, then the pair is typable in the algorithmic system, whose
soundness implies that the expression at the origin of the MSC-form is
typable in the system of Section~\ref{sec:types}.  At this point,
however, it should be pretty obvious that such a reconstruction
algorithm cannot be complete. Our system merges three well known
systems: first-order parametric polymorphism, intersection types,
union elimination. Now, even if parametric polymorphism is decidable,
in our system we can encode (and type, via intersection types)
polymorphic fixed-point combinators, yielding a system with polymorphic
recursion whose inference has been long known to be
undecidable~\cite{Hen93,KTU93}.
%More generally, a well-know property of
%intersection type systems is that they are undecidable, since
%typability is equivalent to deciding
%termination~\cite{Pottinger80}.
Worse, our system
includes union elimination, which is one of the most problematic rules
from an algorithmic viewpoint, not only because it is neither syntax
directed nor analytic, but also because determining an inversion
(a.k.a., generation) lemma for this rule is considered by experts the
most important open problem in the research on union and intersection
types~\cite{DezaniPC}, and an inversion lemma is somehow the first
step to define a type-inference algorithm, since it tells us when and
how to apply the rule. \textcolor{modification}{We discuss in detail
the reasons and implications of incompleteness
in Section~\ref{sec:propreco}.}

Despite being incomplete, our reconstruction algorithm is powerful
enough to handle both complicated typing use-cases and common programming
patterns of dynamic languages.  For instance, for the $Z$ fixed-point
combinator for strict languages $Z =\lambda f.(\lambda x.f(\lambda
v.xxv))\ (\lambda x.f(\lambda v.xxv))$ our algorithm
reconstructs the type $\forall \alpha, \beta,
\gamma. ((\alpha\To\beta)\to((\alpha\To\beta)\Wedge\gamma))\to((\alpha\To\beta)\Wedge\gamma))$
(i.e., in bounded polymorphic notation $\forall (\alpha)(\beta)
(\gamma\leq\alpha\To\beta). ((\alpha\To\beta)\to \gamma)\to\gamma$,
cf.\ Footnote~\ref{fn:bounded}). The combinator can then be used as
is, to define and infer the type of classic polymorphic functions such as \code{map},
\code{fold}, \code{concat}, \code{reverse}, etc., often yielding types more precise than in HM:
for instance if we use $[\alpha*]$ to denote the type of the lists whose elements
have type $\alpha$, then the type inferred for (a curried version of)
\code{fold\_r} is
$\forall\alpha,\beta,\gamma.((\alpha\To\beta\To\beta)\to\beta\to[\alpha*]\to\beta)\wedge(\any\to\gamma\to[\,]\to\gamma)$
where the second type in the intersection states that if the third
argument is an empty list, then the result will be the second argument,
whatever the type of the first argument is. Finally, we designed
our algorithm so that it can take into account explicit type annotations
to help it in the inference process. As an example, our
algorithm can check that the classic \code{filter} function has type
$\forall\alpha,\beta,\gamma.((\alpha\Wedge\beta\To\Bool)\wedge(\alpha\Wedge\neg\beta\To\False))\to[\alpha*]\to[(\alpha\Wedge\beta)*]$,
stating that if we pass to \code{filter} a predicate that returns false for the
elements of $\alpha$ that are not in $\beta$, then filtering a list of
$\alpha$'s will return only elements also in $\beta$.

\smallskip
Sections~\ref{sec:types}, \ref{sec:algo}, and \ref{sec:reconstruction} outlined above
constitute the core of our
contribution. Section~\ref{sec:implementation} presents our implementation.
% Section~\ref{sec:extensions} outlines how to extend our
% system to work in the presence of pattern matching and how to take
% into account explicit type constraints.
In Section~\ref{sec:related}
we discuss related work and Section~\ref{sec:conclusion} concludes our
presentation. For space reasons we omitted in the main text
some rules of the algorithmic and reconstruction systems, as well as
all proofs: they are all given in the appendix,
\ifsubmission
provided as supplemental material for the submission.
\else
available on line as supplemental material.
\fi

\subsection{Discussion, Contributions, and Limitations}\label{sec:discon}

\paragraph{Intersections vs.\ Hindley-Milner}
It is a truth universally acknowledged that intersection type systems
are more powerful than HM systems: for that, one does not even need
full intersections, since Rank 2 intersections suffice. Rank 2
intersection types are types that may contain intersections only to
the left of a single arrow and the system of Rank 2 intersection types
is able to type all ML programs (i.e., all program typable by HM), has principal typings, decidable type
inference, and the complexity of type inference is of the same order
as in ML~\cite{Leivant83}.

However, intersection type systems are not compositional, and this
hinders their use in a modular setting. A program that uses the
polymorphic identity function to apply it to, say, an integer and a
Boolean, type checks since we can infer that the polymorphic identity
function has type $(\Int\To\Int)\Wedge(\Bool\To\Bool)$. But if we want
to \emph{export} this polymorphic identity function to use it in other
unforeseen contexts, then we need for it a type that covers all its
admissible usages, without the need of retype-checking the function
every time it is applied to an argument of a new type.  In other
words, in a modular usage, parametric polymorphism has an edge over
intersection/ad-hoc polymorphism despite being less powerful, since a
type such as $\forall \alpha.\alpha\To\alpha$ synthesizes the
infinitely many combinations of intersection types that can be deduced
for the identity function; however in a local setting, everything that
does not need to be exported can be finer-grainedly typed by 
intersection types.  This division of roles and responsibilities is at
the core of our approach. As we show in the next section, programs are
lists of bindings from variables to expressions. These expressions are
typed in a type environment (generated by the preceding bindings)
which binds variables to polymorphic types. These expressions are
typed by using instantiation, intersection introduction, and union
elimination, but \emph{not} generalization. Generalization is
performed only at top level, that is at the level of programs and
reserved to expressions to be used in other contexts.

\paragraph{Parametricity vs.\ type cases} A
parametric polymorphic function (a.k.a., a generic function) is a function that behaves
uniformly for all possible types of its arguments, that is, whose
behavior does not depend on the type of its arguments. A common way to
characterize a generic function is that it is a function that cannot inspect the parametric parts of its input,
that is, those parts that are typed by a type variable: these parts can
only be either returned as they are, or discarded, or passed to another generic
function. Our approach suggests refining this characterization by
shifting the attention from inputs as a whole to some particular
values among all the possible inputs. This can be seen by comparing the
following two function definitions:\\[1mm]
\centerline{
$\lambda x.\tcase x \Int x x\qquad\qquad\qquad\lambda x.\tcase x \Int
{x+1} x$}

\noindent
Both functions test whether their input is an integer. The function on
the right-hand side returns the successor of the argument when this is
true and the argument itself otherwise; the one on the left-hand side
returns its argument in both cases, that is, it is the identity
function.

Our system deduces for the function on the left the type
 $\alpha\to\alpha$.\footnote{Precisely, it deduces for it the type
 $(\alpha{\wedge}\Int\to\alpha{\wedge}\Int)\wedge(\beta{\setminus}\Int\to\beta{\setminus}\Int)$. Instantiating
 $\beta$ to $\alpha$ yields a subtype of $\alpha\to\alpha$.}  For the
 function on the right it returns the type
 $(\Int\to\Int)\wedge(\alpha{\setminus}\Int\to\alpha{\setminus}\Int)$,
 where $s{\setminus}t$ denotes the set-theoretic difference of the two
 types, that is, $s{\wedge}\neg t$.  These two types suggest how we
 can refine the intuitive characterization of parametricity. A generic function
 can inspect the parametric part of its input {\color{modification}
 (as the function on the left-hand side shows) and its output can
 depend on this inspection (as the function on the right-hand side
 shows), but the parts of its output that are typed by a type
 variable---i.e., the ``parametric'' parts---cannot depend on it. We
 can speak of ``partial'' parametricity, and say that a function is
 parametric ``only'' for the inputs (or parts thereof) that are either
 returned unchanged or discarded: the type variables in its type
 describe such inputs. For instance, the domain of both the functions
 above is $\any$: they both can be applied to any argument. But the
 first function is parametric for all possible inputs, since the
 result of the inspection is not used to produce any particular output
 (it has type $\forall\alpha.\alpha\to\alpha$), while the second
 function is parametric only for the values of its domain that are not
 in \Int, since it uses the result of the inspection to generate the
 result for the integer inputs (by subsumption, the second function
 has type
 $\forall\alpha.\,\alpha\to\Int{\vee}(\alpha{\setminus}\Int)$:
 parametricity holds only for the arguments not in \Int).  }

\paragraph{Contributions}
The general contribution of our work is twofold. First, it proposes a
way to mix parametric and intersection/ad-hoc polymorphism which, in
hindsight, is natural: parametric polymorphism  for
everything defined at top-level and that can thus be used in other
contexts (modularity); intersection polymorphism for everything that
remains local (for which we can thus use more precise non-modular typing). Second, it proposes an effective
way to implement this type discipline by defining a reconstruction
algorithm; with respect to that, a fundamental role is
played by the analysis of the (type-)tests performed by the
expressions, since they drive the way in which types are split:
externally, to split the domain of functions yielding intersection of
arrows (intersection introduction); internally, to split the type of
tested expressions, yielding a precise typing of branching (union elimination). In doing
so, it provides the first system that reconstructs types that oncombine
parametric and \emph{ad hoc} polymorphism.

The technical contributions of the work can be summarized as follows:
\begin{enumerate}[topsep=0pt]
\item We define a type system that combines parametric polymorphism
with union and intersection types for a functional
calculus with type-cases and prove its soundness.
\item We define an algorithmic system that we prove sound and complete
with respect to the previous system.
\item We define an algorithm to reconstruct the type annotations of
the previous algorithmic system and prove it sound and terminating.
\end{enumerate}
The reconstruction algorithm is fully implemented. A prototype
which also implements optional type annotations and pattern matching 
(presented in Appendix~\ref{sec:extensions-appendix})
is available on-line at \url{https://www.cduce.org/dynlang}, and whose
sources are on Zenodo:~\cite{CLN24proto}.

{\color{modification}
\paragraph{Limitations}

The system we present here has some limitations.
Foremost, the reconstruction algorithm of
Section~\ref{sec:reconstruction} uses backtracking, and at each of its
passes it may try to type the same piece of code several
times. Backtracking is inherent to our algorithm, since it proceeds by
successively refining in different passes, the annotation tree of an
MSC-form.  The checking of a same piece of code several times at each
pass is inherent to the use of unions and intersections: the
union-elimination rule repeatedly type-checks the same expression, using different type
hypotheses for a given sub-expression; the intersection-introduction
rule verifies that an expression has all the types of an intersection,
by checking each of them separately. Both features are
very penalizing in terms of performance, and any naive implementation
of the reconstruction described in Section~\ref{sec:reconstruction} would yield
type-inference times that grow exponentially with the size
of the program. Clearly, this is an issue that must be addressed if we
want to apply our system to real-world dynamic languages, and further
work is needed to frame and/or constrain the current
system so that its performance becomes acceptable. Fortunately, the room
for improvement is significant: our prototype is an unoptimized proof
of concept whose implementation was defined to faithfully simulate the
reconstruction inference rules, rather than to obtain an efficient
execution; but the simple addition of textbook memoization
techniques improved its performance by an order of magnitude
(cf.\ Section~\ref{sec:implementation}).

A second limitation of our system is that it is
not sound in the presence of side-effects. The algorithm transforms an
initial expression into its Maximal Sharing Canonical form, which is a list of
bindings, one for each sub-expression of the initial expression. As we explain in Section~\ref{sec:mscf},
these forms are called ``maximal sharing'' since all equivalent sub-expressions
(in the sense
stated by Definition~\ref{def:order}) of the initial expression must be bound by the same
variable, so that any refinement of the type of one sub-expression
(e.g., as a consequence of a type-case) is passed-through to all
equivalent sub-expressions. However, this is sound only if all evaluations
of equivalent sub-expressions return results that have the same 
types. While this is true for pure expressions, this can be invalidated
by the presence of side-effects. In Section~\ref{sec:conclusion} we
suggest some research directions on how to modify the equivalence relation of
Definition~\ref{def:order} to make our system work in the presence of
side-effects. Nevertheless, the work presented here is closer to be
adapted/adaptable to pure functional languages such as Erlang and
Elixir, than to languages such as JavaScript or Python.

Finally, it may be worth pointing out that our approach works only for
strict languages, since it uses a semantic subtyping relation that is
unsound for call-by-name evaluation strategies~\cite{types18}.
}

\section{Source Language and Type System}
\label{sec:types}
\subsection{Syntax and Semantics}
Our core language is fully defined in Figure~\ref{fig:syntax}.
\begin{figure}[h]
\input{fig-syntax}\ifspacetrick\vspace{-1mm}\fi
\caption{Syntax and semantics of the source language\label{fig:syntax}}%\ifspacetrick\vspace{-.5mm}\fi}
\end{figure}
Expressions are an untyped $\lambda$-calculus with
constants $ c $, pairs $ (e, e) $, pair projections $ \pi_i{e} $, and
type-cases\iflongversion $ \tcase{e}\tau e e $\fi.
A type-case $ \tcase{e_0}\tau{e_1}{e_2} $ is a dynamic type test that
first evaluates $ e_0 $
and, then, if $ e_0 $ reduces to a value $ v $,
evaluates $ e_1 $ if $ v $ has type $ \tau $ or $ e_2 $ otherwise.
Type-cases cannot test arbitrary types but just ground types (i.e.,
types without type variables occurring in them) of the form $
\tau $ where the only arrow type that can occur in them is $ \Empty
\to \Any$, the type of all functions.
This means that type-cases can distinguish functions from other values
but they cannot distinguish, say,
functions that have type $ \Int \To \Int $ from those that do not.

Programs are sequences of top-level definitions, ending with an expression that can be seen as the main entry.
This notion of program is useful to capture
the modularity of our type system. Indeed, top-level definitions are typed sequentially:
the type we obtain for a top-level definition is considered definitive and will not be
challenged by a later definition.

The reduction semantics for expressions is the one of call-by-value
pure $\lambda$-calculus with products and with a type-case expression,
together with the context rules that implement a leftmost outermost
reduction strategy. We use the standard substitution operation
$e\subs x{e'}$ that denotes the capture avoiding substitution of $e'$
for $x$ in $e$, whose definition we recall in Appendix~\ref{sec:sem-appendix}.
The relation $v\in\tau$ determines whether a \emph{value}
is of a given type or not and holds true if and only
if $\typof v\leq \tau$, where $\leq$ is the subtyping relation defined
by~\citet{CX11} (we recall its definition in Appendix~\ref{sec:subtyping-appendix}).
Note that $ \typof v $ maps every $\lambda$-abstraction
to $ \Empty \to \Any $ and, thus, dynamic type tests do not depend on static type inference.
This approximation is allowed by the restriction on arrow types in typecases.
Finally, the reduction semantics for programs sequentially reduces top-level definitions,
together with a context rule that allows reducing the expression of the first definition.

\subsection{Types}

Types are those by~\citet{CX11} who add type variables to the semantic subtyping framework of~\citet{FCB02,FCB08}.
\begin{definition}[Types]\label{def:types}
The set of types \types{} is formed by the terms $t$ coinductively produced by the grammar:\ifspacetrick\vspace{-3mm}\fi
\[
\begin{array}{lrcl}
\types & t,s & ::= & b\alt \polyvar\alt\monovar\alt t\to t\alt t\times t\alt t\vee t \alt \neg t \alt \Empty 
\end{array}
\]
and that satisfy the following conditions:
\iflongversion%%%%%%%%%%%%%
\begin{itemize}[nosep,leftmargin=15pt]
\item[-] (regularity) every term has a finite number of different sub-terms;
\item[-] (contractivity) every infinite branch of a term contains an infinite number of occurrences of the
arrow or product type constructors.\ifspacetrick\vspace{-1mm}\fi
\end{itemize}
\else%%%%%%%%%%%%%%%%%
 $(i)$ every term has a finite number of different sub-terms
(regularity) and $(ii)$ every infinite branch of a term contains an infinite number of occurrences of the
arrow or product type constructors (contractivity).
\fi%%%%%%%%%%%%%%%%%%%%%%%%%
\end{definition}
\noindent We use the abbreviations
$
  t_1 \land t_2 \eqdef \lnot (\lnot t_1 \lor \lnot t_2)
$, $
  t_1 \setminus t_2 \eqdef t_1 \land (\lnot t_2) $, and $ \Any \eqdef \lnot
\Empty $. Basic types (e.g., \Int, \Bool) are ranged over by $b$, $\Empty$ and $\Any$
respectively denote the empty (that types no value) and top (that types all
values) types. Coinduction accounts for
recursive types and the condition on infinite branches bars out
ill-formed types such as 
$t = t \lor t$ (which does not carry any information about the set
denoted by the type) or $t = \neg t$ (which cannot represent any set).

For what concerns type variables, we choose \emph{not} to use
type-schemes but rather distinguish two kinds of type variables.
\emph{Polymorphic type variables} ranged over by $\polyvar$, are type variables
that have been  generalized and can therefore be instantiated. In a more
traditional presentation, such variables are bound by the $\forall$ of a
type-scheme ; the set of polymorphic variables is $\polyvars$. \emph{Monomorphic
type variables}, ranged over by $\monovar$ (with bold font), are variables that are not
generalized and therefore cannot be instantiated; the set of monomorphic
variables is $\monovars$. Types that only contain monomorphic
variables are dubbed monomorphic types%
%\iflongversion
\footnote{\color{modification}The term \emph{polytypes} and \emph{monotypes} can be found
(albeit inconsistently) in the literature: in particular, \citet%[page 208]
{Mil78} uses the latter to denote
types with no type variables and the former when he wishes to imply that
a type may, or does, contain a variable. We avoided using
them to prevent any confusion with our
\emph{monomorphic types}. While our \emph{types} are indeed polytypes, our
\emph{monomorphic types} are not monotypes: monotypes do not have type
variables while monomorphic types may have type variables,
though only monomorphic ones. So we used instead \emph{types}
(which may have type variables), \emph{ground types} (which cannot have any type variable), and  \emph{monomorphic types}
(which may have monomorphic type variables, only). }:
%\fi
\[
  \begin{array}{lrclr}
    \textbf{Monomorphic types}  &\mt,\ms &::=& b\alt \monovar \alt \mt\to \mt\alt \mt\times \mt\alt \mt\vee \mt \alt \neg \mt \alt \Empty\\
  \end{array}
\]
\iflongversion%%%%%%%%%%%%%%%%%%%5
%interesting but not strictly necessary, so win some space.
Note that the types presented in~\citet{CX11}
do not differentiate monomorphic and polymorphic variables in the syntax:
quantification is meta-theoretic and is left to the type system.
For instance, in~\cite{polyduce2}, which uses these types, judgments of the type system
have an extra parameter $\Delta$ indicating which type variables are monomorphic.
In this paper, we choose to differentiate monomorphic and polymorphic type variables directly in the syntax of types
in order to avoid this extra parameter $\Delta$. However, this difference is only syntactic:
both type variables in $\monovars$ and $\polyvars$ have the same interpretation in the set-theoretic type theory
and thus play the same role for subtyping.
\fi%%%%%%%%%%%%%%%%%%%%%%%%%%%%%%%%%
Our choice of using two disjoint sets for polymorphic and monomorphic type variables, instead of
the classical approach of using type schemes $\forall\alpha_1...\alpha_n. t$, is justified by two reasons.
First, type schemes are expected to be equivalent modulo $\alpha$-renaming. In our case however,
we do not want polymorphic type variables to be freely renamed because of the use, in the algorithmic type system
of Section~\ref{sec:algo}, of external annotations containing explicit substitutions over some polymorphic type variables
of the context.
Secondly, introducing type schemes would require redefining many of
the usual set-theoretic type-related definitions,
such as the subtyping relation $\leq$, and the type operators for
application $\circ$ and projections $\pi_i$. Instead, we obtain a more streamlined theory by making
subtyping and these operators ignore whether a variable is polymorphic or monomorphic in the current context,
and by explicitly performing instantiations in the type system when required.

\iffalse
\kim{
Also, type schemes are problematic in the presence of set-theoretic operations,
since we do not want to create types such as $(\forall
\alpha.t_1)\vee(\forall\alpha.t_2)$. 
But since we now have now only generalization at top-level is it not a matter of changing the Inst rule with:
\begin{displaymath}
    \frac  { \Gamma \vdash e:\forall\overline{\alpha}.t }
    { \Gamma \vdash e: t\sigma }
    ~~~\overline{\alpha}\subseteq\dom\sigma
\end{displaymath}
I'm not saying we should, but rather should justify why we take the less popular presentation.
}
\fi
The subtyping relation for these types, noted $\leq$, is the one defined
 by~\citet{CX11},
to which the reader may refer for the formal
definition (cf.\ Appendix~\ref{sec:subtyping-appendix}). 
%(we recall it in  Appendix~\ref{sec:subtyping-appendix} for the reader's convenience).
For this presentation, it suffices to consider that
ground types (i.e., types with no variables) are interpreted as sets of \emph{values}
that have that type, and that subtyping is set
containment (i.e., a type $s$ is a subtype of a type $t$ if and only if $t$
contains all the values of type $s$). In particular, $s\to t$
contains all $\lambda$-abstractions that when applied to a value of
type $s$, if their computation terminates, then they return a result of
type $t$ (e.g., $\Empty\to\Any$ is the set of all
functions and $\Any\to\Empty$ is the set
of functions that diverge on every argument). Type connectives
(i.e., union, intersection, negation) are interpreted as the
corresponding set-theoretic operators.
For what concerns non-ground types (i.e., types with
variables occurring in them) all the reader needs to know for this
work is that the subtyping relation of \citet{CX11} is preserved
by type-substitutions. Namely, if $s \leq t$, then $s\sigma \leq t\sigma$ for every
type-substitution $\sigma$.
% (the converse does not hold in general, while
% it holds for semantic type-substitutions in convex models: see \cite{CX11})
We use $\simeq$ to denote the symmetric closure of $\leq$,
thus $s\simeq t$ (read, $s$ is equivalent to $t$) means that $s$ and $t$
denote the same set of values and, as such, they are semantically the same type.

\subsection{Type System}
Our type system is given in full in Figure~\ref{fig:declarative}. 
The typing rules for expressions are, to some extent, the usual ones.
Constants and variables are typed by the corresponding axioms \Rule{Const} and
\Rule{Ax}. The arrow and product constructors have introduction and elimination
rules. Notably, in the case of rule~\Rule{$\rightarrow$I} the type of the
argument is monomorphic. The rules for intersection (\Rule{$\wedge$}) and subtyping (\Rule{$\leq$}) are the classical
ones, and so is the rule for instantiation (\Rule{Inst}) where $\sigma$
denotes a substitution from polymorphic variables to
types. The type-case construction is handled by three rules:
\Rule{$\Empty$}; \Rule{$\in_1$}; \Rule{$\in_2$} . Rule \Rule{$\Empty$}
handles the case
where the tested expression is known to have the empty type. The other two are
symmetric and handle the case when the tested expression is known to have either
the type $\tau$ or its negation, in which case the corresponding branch is
typed. These rules work together with
Rule~\Rule{$\vee$}, which we describe in detail next.
\begin{figure}[t]
\input{fig-declarative}\ifspacetrick\vspace{-4mm}\fi
\caption{Declarative Type System\label{fig:declarative}\ifspacetrick\vspace{-1mm}\fi}
\end{figure}

 At first sight, the formulation of rule \Rule{$\vee$} seems odd, since
the $\vee$ connector does not appear in it. To understand it, 
consider the classic union elimination rule by~\citet{MacQueen1986}:
\begin{mathpar}
  \Infer[$\vee$E]
      {     \Gamma \vdash e':s_1{\vee} s_2 \quad
        \Gamma, x:s_1\vdash e:t\quad \Gamma, x:s_2\vdash e:t
      }  
      {
      \Gamma\vdash e\subs x {e'}  : t
      }{}
\end{mathpar}
Rule \Rule{$\vee$E} types an expression that contains 
      occurrences of an expression $e'$ that has a union type $s_1\vee
      s_2$; the rule substitutes in this expression \emph{some}
      occurrences of $e'$ by the variable $x$ yielding an expression
      $e$, and then types $e$ first under the hypothesis that $x$ has
      type $s_1$ and then under the hypothesis that $x$ has type
      $s_2$. If both succeed, then the common
type is returned for the expression at issue. As shown
      by~\citet{occtyp22}, this rule, together with
the rules for type-cases, allows the system to perform occurrence typing. For
instance, consider the expression $\tcase{f y}\Int {(f y) +
1}{\jsinline{false}}$, in the context where $f$ has type $\any{\to}\any$
      and $y$ is of type $\any$. This
expression can be typed thanks to the rule $[\vee\textrm{E}]$, by considering the
sub-expression $f y$. This sub-expression has
      type $\any$, which can be seen as the union type $\any \simeq \Int{\vee}\lnot\Int$. We
can then replace $x$ for $f y$ and type, using $[\in_1]$, the expression $\tcase{x}\Int{x +
1}{\jsinline{false}}$, with $x:\Int$. This yields a type $\Int$ (rule~$[\in_1]$
ignores the second branch) and by subtyping, the expression has type
$\Int{\vee}\False$. Likewise for the choice $x:\lnot\Int$, using rule~$[\in_2]$
the second branch has type $\False$ and therefore $\Int{\vee}\False$ (again via
subtyping). The whole expression has thus the desired type $\Int{\vee}\False$.

A key element is that rule~$[\vee\textrm{E}]$ guessed how to split
the type $\any$ of $f y$ into $\Int\vee\lnot\Int$. In a non-polymorphic setting,
this is perfectly fine.
\iffalse
(and \Rule{$\vee$E} is indeed the rule used by \citet{occtyp22})\fi
But in
a type-system featuring polymorphism, particular care must be taken when
introducing (fresh) type variables. As it is stated,
MacQueen et al.'s $[\vee\textrm{E}]$ rule could choose to split, say, $\Any$ into a union
$\alpha\vee\lnot\alpha$, with $\alpha$ a polymorphic type variable. If so, then
the rule becomes unsound. As a matter of fact, the premises of the \Rule{$\vee$E} behave
as in rule~\Rule{$\to$I}, in that they introduce in the typing environment a
fresh type whose variables must \emph{not} be instantiated. In our example,
however, in one premise, the rule introduces $x:\alpha$ in the typing
environment which can, for instance, be instantiated by the $[\textrm{Inst}]$ rule. In
the second premise, it introduces $x:\lnot\alpha$ which can also be instantiated
in a \emph{completely different way}. In other words, the correlation between the two
occurrences of the same variable $\alpha$ is lost, which amounts to commuting
the (implicit) universal quantification with the $\vee$ type connective,
yielding a non-prenex polymorphic type
$(\forall\alpha.\alpha)\vee(\forall\alpha.\lnot\alpha)$. To avoid this
unsound situation, we need to ensure that when a type is
split between two components of a union, no polymorphic variable is introduced.
This is achieved by rule $[\vee]$ which requires the type $s$ of $e'$ to be
split as $s\equiv (s\wedge\mt)\vee(s\wedge\lnot\mt)$ (here is our hidden union).
\iffalse
\kim{We really need to fix the vocabulary:
monomorphic type system = POPL22, monomorphic/polymorphic type variables=non instanciabl/instanciable.
I believe the literature uses monotype and polytype, but it is also inconsistent\ldots}
\fi

The top-level definitions of a program are typed sequentially by two
specific rules:
\begin{mathpar}
  \Infer[TopLevel-Expr]
  { \Gamma\vdash e : t }
  { \Gamma\vdashp e : t\grenaming }
  { \grenaming\disjoint\Gamma }
  \qquad
  \Infer[TopLevel-Let]
  { \Gamma\vdashp e : t\\\Gamma,x:t\vdashp p:t' }
  { \Gamma\vdashp \tletexp{x}{e}{p} : t' }
  { }
\end{mathpar}
where $\grenaming$ denotes a generalization, that is a substitution transforming
monomorphic variables into polymorphic ones and where
$\grenaming\disjoint\Gamma{\iffdef}\dom\grenaming\cap\vars{\Gamma}=\emptyset$.

After typing an expression used for a definition, its type is generalized
(Rule~\Rule{Toplevel-Expr}) before being added in the environment
(Rule~\Rule{Toplevel-Let}). Note that this is the only place where
generalization takes place: no rule in the type system for expressions (Figure~\ref{fig:declarative})
allows the generalization of a type variable.  As explained at the
beginning of Section~\ref{sec:discon}, this is not a limitation, since intersection types are more powerful than HM
polymorphism, and top-level generalization is of practical importance
since it is necessary to the modularity of type-checking.
Nevertheless, the core of our inference system is given only by the rules
in Figure~\ref{fig:declarative} for expressions: the above
``\textsc{TopLevel}'' rules are only useful to inhabit variables of
the typing environments used in the rules for expressions, and this makes
it possible to close the expressions being typed. For instance, if a typing
derivation for an expression $e$ is deduced, say, under the hypothesis
$x:\alpha{\to}\alpha$ (with $\alpha$ polymorphic), then it is possible
to obtain a closed program by inhabiting $x$ by a definition like
$\tletexp x {\lambda y.y\text{ ;}...}{e}$.  This is the reason why,
henceforth, we mainly focus on the typing of expressions.

\iffalse
\kim{Maybe the following is not needed ?
The reason why we do not feature a generalisation rule in the declarative type system for expressions
is that it would break the tight relation we have between this declarative type system and
the algorithmic one that will be presented in Section~\ref{sec:algo}. Indeed, after applying a
normalisation process on a derivation of the declarative type system, we have a one-to-one correspondance
with a derivation of the algorithmic type system. This relation would not hold in the presence of
a generalisation rule.}
\fi
The type system is sound (all proofs for this work are given in Appendix~\ref{sec:proofs-appendix}):
  \begin{theorem}[Soundess]\label{th:typesound} If $\varnothing\vdashp p:\tau$, then either
    $p$ diverges or $p\reducesprog v$ with $v\in \tau$.
  \end{theorem}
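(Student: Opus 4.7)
The plan is the standard \emph{Progress + Preservation} route, proved first for expressions under $\vdash$ and then lifted to programs under $\vdashp$, finishing with a coinductive argument on reduction sequences. Concretely, I would aim at the two lemmas

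\textbf{(Preservation).} If $\Gamma \vdash e : t$ and $e \reduces e'$, then $\Gamma \vdash e' : t$.\\
\textbf{(Progress).} If $\varnothing \vdash e : t$, then either $e$ is a value or $e \reduces e'$ for some $e'$.

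\noindent together with their program-level counterparts, and then conclude Theorem~\ref{th:typesound} by combining them with the value-typing lemma described below.

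The scaffolding I would set up first is: (i) \emph{Weakening} on $\Gamma$; (ii) a \emph{Substitution Lemma} stating that if $\Gamma, x : s \vdash e : t$ and $\Gamma \vdash v : s$ with $v$ a value, then $\Gamma \vdash e\subs x v : t$; (iii) \emph{Generation} (inversion) lemmas for each value shape and each destructor, formulated so as to see through the non-syntax-directed rules \Rule{$\wedge$}, \Rule{$\leq$}, \Rule{Inst} and \Rule{$\vee$}. For abstractions, the right form is: if $\Gamma \vdash \lambda x.e : t$, then there exist monomorphic types $\mt_i$ and result types $r_i$ with $\Gamma, x:\mt_i \vdash e : r_i$ and $\bigwedge_i (\mt_i \to r_i) \leq t$ (modulo the fact that any of these can be further instantiated). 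For pairs the statement is analogous and relies on the set-theoretic decomposition of product types from~\citet{FCB08,CX11}. The centerpiece is a \emph{value-typing} lemma: for every value $v$, $\Gamma \vdash v : t$ implies $\typof v \leq t$ whenever $t$ is a test type; this is the bridge between the dynamic predicate $v \in \tau$ (defined as $\typof v \leq \tau$) and static typing, and it is proved by induction on $v$ using the generation lemmas above.

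With these in hand, Preservation is by induction on $e \reduces e'$. The $\beta$ case combines the generation lemma for abstractions with the substitution lemma. The projection cases use the generation lemma for pairs. For a type-case redex $\tcase{v}{\tau}{e_1}{e_2}$, the derivation, after peeling off uses of \Rule{$\wedge$}, \Rule{$\leq$}, \Rule{Inst}, and \Rule{$\vee$}, must conclude with \Rule{$\in_1$}, \Rule{$\in_2$}, or \Rule{$\Empty$}. If $v \in \tau$, the value-typing lemma yields $\Gamma \vdash v : \tau$; combined with the hypothesis $\Gamma \vdash v : \neg\tau$ that a live \Rule{$\in_2$} use would require, we obtain $\Gamma \vdash v : \Empty$, which by value-typing contradicts $\varnothing \neq \typof v \leq \Empty$. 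The symmetric argument handles $v \in \neg\tau$, and the \Rule{$\Empty$} case is impossible for a value. The congruence (evaluation context) cases are routine once one checks that each context preserves the typing of its hole. Progress is proved by induction on the typing derivation of $e$: for each elimination, the generation lemma applied to the closed value in head position gives the expected canonical form (constant, pair, or abstraction), enabling a reduction step; for type-cases, every value $v$ satisfies $v \in \tau$ or $v \in \neg \tau$, so either \Rule{$\in_1$} or \Rule{$\in_2$} fires. Lifting to programs is direct: Preservation for $\tletexp{x}{v}{p} \reducesprog p\subs x v$ follows from inverting \Rule{TopLevel-Let} and \Rule{TopLevel-Expr} and substituting the (generalized) type of $v$ into the derivation of $p$; the congruence $P[e] \reducesprog P[e']$ is handled by expression-level Preservation. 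The theorem itself then follows by coinduction on the reduction sequence of $p$: either the sequence is infinite ($p$ diverges), or it ends at a value $v$ with $\varnothing \vdashp v : \tau$, and value-typing gives $\typof v \leq \tau$, i.e. $v \in \tau$.

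The main obstacle I expect is handling \Rule{$\vee$} in the generation lemmas and in Preservation. Because \Rule{$\vee$} can split the typing of any sub-expression $e'$ into $s \land \mt$ and $s \land \neg \mt$ hypotheses on a substituted variable, inversion is not purely structural: a derivation of $\Gamma \vdash v : t$ may traverse arbitrarily many \Rule{$\vee$} nodes whose cut sub-expression is unrelated to $v$. I would either (a) work modulo a normalization that pushes \Rule{$\vee$} below \Rule{$\wedge$}/\Rule{$\leq$}/\Rule{Inst} and, when possible, out of introduction rules, or (b) prove the generation lemmas directly by induction on derivations, treating a \Rule{$\vee$} step by two recursive calls on branches that type the same expression under strengthened environments and then recombining the two conclusions via \Rule{$\vee$} at the very end. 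The latter is cleaner and pairs well with the Substitution Lemma, which is itself proved by induction on derivations and needs the \Rule{$\vee$} case to commute substitution of a value with the typing of the two branches; this commutation is sound precisely because the rule substitutes a variable by an expression of a fixed type, and values admit their principal type by value-typing.
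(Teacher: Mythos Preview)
There is a genuine gap: Preservation, as you state it for the source reduction $\reduces$, is \emph{false} in this system, and the paper says so explicitly in its proof appendix. The culprit is precisely the rule \Rule{$\vee$}. Suppose a derivation types $e$ by first abstracting several occurrences of a sub-expression $e'$ into a fresh variable $\xx$ via \Rule{$\vee$}, and then exploits the correlation between those occurrences (e.g., in the two branches of a type-case, or in two components of a pair). If the source semantics reduces \emph{one} of these occurrences while leaving the others intact, the reduced occurrence and the unreduced ones are no longer syntactically equal, so the \Rule{$\vee$} step that bound them to the same $\xx$ can no longer be replayed, and the correlation---hence the type---is lost. Your last paragraph worries about \Rule{$\vee$} only at the level of inversion lemmas; the actual obstruction is one level up, in the preservation step itself.

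The paper circumvents this by introducing an auxiliary \emph{parallel} semantics $\preduces$, in which a redex $e_\circ\treduces e_\circ'$ is applied to \emph{every} occurrence of $e_\circ$ in the term simultaneously (via an expression-to-expression substitution $E[e_\circ]\esubs{e_\circ}{e_\circ'}$). Subject reduction is proved for $\preduces$, after first normalising derivations so that \Rule{$\vee$} nodes only abstract non-value sub-expressions (``minimal derivations''); this is exactly what makes the \Rule{$\vee$} case of the induction go through, since all occurrences of the abstracted sub-expression are rewritten in lockstep. Progress is also proved for $\preduces$. Type safety for the parallel semantics follows, and a separate simulation argument (several lemmas relating $\reduces$, $\preduces$, and a free-context reduction $\creduces$) transfers the result back to the source semantics to yield Theorem~\ref{th:typesound}. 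If you want to repair your plan, you should either adopt this parallel-reduction detour, or otherwise argue that some restricted form of preservation (e.g., up to subtyping plus re-derivation with different \Rule{$\vee$} splits) suffices; the paper's route shows that the former is workable.
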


\section{Algorithmic System}
\label{sec:algo}
As discussed in the introduction, the declarative type system is not syntax directed and some rules are not analytic.
In order to make it algorithmic, we first introduce in Section~\ref{sec:msc}
a \textit{canonical form} for expressions that adds syntactic constructions (\textit{bindings})
to indicate when to apply the union elimination rule and on which sub-expression.
Then, in Section~\ref{sec:algosystem}, we define a fully algorithmic type system
that takes a \textit{canonical form} together with an \textit{annotation}
and produces a type.

\subsection{MSC Forms}\label{sec:msc}

\subsubsection{Canonical Forms}\label{sec:canonical}
The \Rule{$\vee$} rule is not syntax directed since it can be applied on any expression
and can split the type of any of its subexpressions. If we want an algorithmic type system,
we need a syntactic way to determine when to apply this rule, and which subexpression to split.
In order to achieve this, we represent our terms with a syntax
called \textit{Maximal Sharing Canonical Form} (MSC Form)
introduced by \citet{occtyp22}. Let us start by defining
the \emph{canonical forms}, which are expressions produced by the following grammar:
\begin{equation*}
  \begin{array}{lrclr}
    \textbf{Atomic expressions} &a &::=& c\alt x\alt \lambda x.\kappa\alt (\xx,\xx)\alt \xx \xx\alt \pi_i \xx\alt \tcase{\xx}{\tau}{\xx}{\xx}\\
    \textbf{Canonical Forms} & \kappa &::=& \xx\alt \bindexp{\xx}{a}{\kappa} \\
  \end{array}
\end{equation*}
Canonical forms, ranged over by $\kappa$, are \textit{binding variables} (noted $\xx$, $\yy$, or $\zz$)
possibly preceded by a list of definitions (from binding variables to atoms). Atoms are either
a variable from a $\lambda$-abstraction (noted $x$, $y$, or $z$), or a
constant, or a $\lambda$-abstraction whose
body is a canonical form, or any other expression in which \emph{all} proper
sub-expressions are binding variables.
\iffalse
\kim{The following paragraph could be removed, since it explains a technical difference with POPL22 which we subsume anyway.

One difference of these canonical forms with those introduced in \citet{occtyp22}
is the use of a distinct set of variables to represent $\lambda$-abstracted
variables ($x$, $y$, or $z$) and bind-abstracted ones ($\xx$, $\yy$, or $\zz$).
The reason is that each subexpression on which we might want to apply the union elimination rule must be defined
in a separate binding. As we could decide to apply the union elimination rule on occurrences of a single lambda variable,
we have to associate a binding to each lambda variable in our expression, but without allowing
arbitrary aliasing as it would allow for more complex canonical forms with no benefit.
Distinguishing binding variables from lambda variables solves this issue:
a binding variable can then be associated to any subexpression, including a single lambda variable,
but not to another binding variable, thus preventing arbitrary aliasing.
Note that the same considerations could also apply to the canonical forms defined in \citet{occtyp22},
but it appears that the typing algorithm used there never attempts to apply the union elimination rule
on occurrences of a single lambda-abstracted variable.}
\fi
An expression in canonical form without any free binding variable can be transformed into an expression of the source
language using the unwinding operator $\eras{.}$ that basically inlines bindings:
$\eras {\bindexp{\xx}{a}{\kappa}} = \eras{\kappa}\subs{\xx}{\eras{a}}$
(see Appendix~\ref{app:unwind} for the full definition).
The inverse direction, that is, producing from a source language
expression a canonical form that unwinds to it, is straightforward (see
Appendix~\ref{app:sl2cf}). However for each expression of the source
language there are several canonical forms that unwind to it. For our
algorithmic type system we need to associate each source language
expression to a unique canonical form, as we define next.

\subsubsection{Maximal Sharing Canonical Forms}\label{sec:mscf}

We define a congruence on canonical forms and atoms:
\begin{definition}[Canonical equivalence]\label{def:order}
  We denote by $\,\eqcan$ the smallest congruence on canonical forms and atoms
  that is closed by $\alpha$-conversion and such that\\[.5mm]
\hspace*{4mm}
$\bindexp{\xx_1}{a_1}{\bindexp{\xx_2}{a_2}{\kappa}}
\ \eqcan\ 
\bindexp{\xx_2}{a_2}{\bindexp{\xx_1}{a_1}{\kappa}}
  \hfill\xx_1{\notin}\fv{a_2},\xx_2{\notin}\fv{a_1}$
\end{definition}
\noindent
To infer types for the source
language, we single out canonical forms satisfying three properties:
\begin{definition}[MSC Forms]\label{def:maximalsharing}
  A \emph{maximal sharing canonical form} (abbreviated as
  \emph{MSC-form}) is (any canonical form $\alpha$-equivalent to) a canonical form $\kappa$ such
  that:
  \begin{enumerate}[leftmargin=12pt,labelsep=2pt]
    \item if  $\bindexp{\xx_1}{a_1}{\kappa_1}$ and
      $\bindexp{\xx_2}{a_2}{\kappa_2}$ are distinct
      sub-expressions of $\kappa$, then $a_1\not\eqcan a_2$
    \item if $\lambda x.\kappa_1$ is a sub-expression of
      $\kappa$ and
      $\bindexp{\yy}a{\kappa_2}$ a sub-expression of $\kappa_1$, then
      $\fv a\,{\not\subseteq}\;\fv{\lambda x.\kappa_1}$
    \item  if $\bindexp{\xx}a{\kappa'}$ is a sub-expression of
      $\kappa$, then $\xx\in\fv{\kappa'}$.
  \end{enumerate}
\end{definition}
MSC-forms are defined modulo
$\alpha$-conversion.\footnote{For
instance, both $\lambda x.\bindexp{\xx}{x}{\bindexp{\zz}{\xx\yy}{\bindexp{\zz'}{\zz\yy}{\zz'}}}$ and
$\lambda x.\bindexp{\xx}{x}{\bindexp{\zz}{\xx\yy}{\zz}}$ are two distinct atoms that can occur in the same
MSC-form, even though the atom $\xx\yy$ appears in both: an
$\alpha$-renaming of $\xx$ makes the first MSC-property hold.}
The first condition states that distinct variables denote different
definitions,
\textcolor{modification}{that is, it enforces the \emph{maximal sharing} of common
sub-expressions.}
The second condition requires bindings to extrude $\lambda$-abstractions whenever possible.
The third condition states that there is no useless bind (bound variables must occur in the body of the binds).

The key property of MSC-forms is that given an expression $e$
of the source language, all its MSC-forms (i.e., all MSC-form whose
unwinding is $e$) are equivalent:
\begin{proposition}\label{th:mscfequiv}
  If $\kappa_1$ and $\kappa_2$ are two MSC-forms and $\eras {\kappa_1}\aequiv\eras{\kappa_2}$, then $\kappa_1\eqcan\kappa_2$.
\end{proposition}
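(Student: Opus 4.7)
The plan is to prove this by strong induction on the size of $e = \eras{\kappa_1}$ (which by hypothesis is $\alpha$-equivalent to $\eras{\kappa_2}$). The guiding intuition is that an MSC-form encodes $e$ as a DAG of its distinct sub-expressions: maximal sharing (property 1) assigns each $\alpha$-class of sub-expressions a unique atom, no-useless-binds (property 3) ensures every atom corresponds to a real occurrence in $e$, and extrusion (property 2) forces every atom to live in the tightest $\lambda$-scope containing all its free variables. Two MSC-forms of the same expression are therefore two topological sorts of the same DAG, and differ only by permutations of independent bindings, which is exactly what $\eqcan$ captures.

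First I would establish a reordering lemma: by iterating the permutation rule generating $\eqcan$, any binding whose atom does not depend on a preceding binding can be moved past it, so in particular any ``ready'' binding (one whose atom's free binding variables are all bound further out) can be brought to the outermost position of the form. Combined with $\alpha$-conversion, this gives enough flexibility to line up bindings one by one.

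Next I would set up the main step. Choose a sub-expression $e'$ of $e$ that is \emph{not} under any $\lambda$ of $e$ and that, inside the DAG of $\kappa_1$, is represented by an atom all of whose binding-variable subterms are bound further out (a ``leaf in the remaining DAG''). By property 1 and property 3 there is exactly one binding $\bindexp{\xx_1}{a_1}{\kappa_1''}$ in $\kappa_1$ whose unwinding equals $e'$, and the analogous statement holds in $\kappa_2$ giving $\bindexp{\xx_2}{a_2}{\kappa_2''}$. After $\alpha$-renaming to a common name $\xx$ and applying the reordering lemma to pull these bindings to the front, $\kappa_1 \eqcan \bindexp{\xx}{a_1}{\kappa_1''}$ and $\kappa_2 \eqcan \bindexp{\xx}{a_2}{\kappa_2''}$. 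I would then compare $a_1$ and $a_2$ by case analysis on the shape of an atom: constants, variables, pairs, applications, projections and type-cases must match syntactically because their unwindings agree and their immediate sub-components are binding variables bound outside; for a $\lambda$-atom the bodies are strictly smaller MSC-forms with $\alpha$-equivalent unwindings, so the induction hypothesis gives $\eqcan$-equivalence of the bodies. Finally $\eras{\kappa_1''}\aequiv\eras{\kappa_2''}$ (both equal $e$ after substituting $\eras{a_1}\aequiv\eras{a_2}$ for $\xx$), so a second use of the induction hypothesis gives $\kappa_1'' \eqcan \kappa_2''$ and hence $\kappa_1 \eqcan \kappa_2$.

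The main obstacle is the $\lambda$-atom case. To apply the induction hypothesis to $\kappa_0^{(1)}$ and $\kappa_0^{(2)}$ (the two candidate bodies) one needs: (a) that each $\kappa_0^{(i)}$ is itself an MSC-form, which follows from properties 1 and 3 being inherited by sub-canonical-forms and property 2 being preserved inside a $\lambda$; (b) that $\eras{\kappa_0^{(1)}}\aequiv\eras{\kappa_0^{(2)}}$, which requires arguing that extrusion places \emph{exactly} the same dependent atoms inside the two $\lambda$-bodies (namely those whose unwindings free-contain the $\lambda$-bound variable), so the bodies unwind to matching parts of $e$. A secondary subtlety is the bookkeeping of $\alpha$-renamings throughout the induction: one must rely on the fact that $\eqcan$ is closed under $\alpha$-conversion to align the binding-variable names of $\kappa_1$ and $\kappa_2$ as the two forms are peeled binding by binding.
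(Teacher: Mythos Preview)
Your overall plan---peel one binding, match the two atoms by a case analysis, recurse on the rest---is the same skeleton as the paper's proof, but there are two real gaps.

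The induction measure is wrong. When the leaf you peel is a constant $c$ or a $\lambda$-variable $x$, the unwinding $\eras{\kappa_1''}$ has exactly the same size as $e$: one occurrence of $c$ (or $x$) is replaced by one occurrence of $\xx$. So ``strong induction on the size of $e$'' does not license the recursive call on $\kappa_1'',\kappa_2''$. The paper inducts instead on the \emph{total number of atoms} in the binding context (counted recursively into $\lambda$-bodies), which decreases by at least one at every peel and also justifies the $\lambda$-case; switching to that measure repairs this part of your argument.

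The second gap is the sentence ``the analogous statement holds in $\kappa_2$.'' You pick $e'$ as a leaf of $\kappa_1$'s DAG and then assert that $\kappa_2$ contains a binding whose atom unwinds to the same $e'$; but establishing that every MSC-form of $e$ has the same set of top-level atom-unwindings is essentially what you are trying to prove. The paper avoids this circularity by a different generalization: it compares two binding contexts $\benv_1,\benv_2$ over a \emph{shared body expression} $e$, and peels the \emph{last} binding $\be{\xx}{a_1}$ of $\benv_1$. Property~3 forces $\xx$ to occur in the shared $e$, and since the two unwindings agree on $e$, $\benv_2$ must also bind $\xx$---so the corresponding binding in $\kappa_2$ is found for free. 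After $\alpha$-renaming $\benv_2$ so the two atoms coincide, the induction continues on $\benv_1',\benv_2'$ with new body $e\subs{\xx}{a}$. The $\lambda$-case is then handled not by comparing the bodies as standalone MSC-forms, but by concatenating the outer context with the $\lambda$-body's context, applying the induction hypothesis to that combined (strictly smaller) context, and using property~2 afterwards to separate which bindings belong inside the $\lambda$.
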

We denote the unique MSC-form whose unwinding is $e$ by $\MSCF e$.
It is easy to transform a canonical form into a MSC-form that has the same unwinding.
The reader can refer to Appendix~\ref{sec:msc-appendix} for a set of rewriting rules implementing this operation.

\subsection{Algorithmic Typing Rules}\label{sec:algosystem}

MSC-forms tell us when to apply the \Rule{$\vee$} rule: a term
$\bindexp\xx a\kappa$ means (roughly) that it must be typed by applying the union rule to
the expression $\kappa\subs \xx a$. Putting an expression into its
MSC-form to type it, thus corresponds to applying
the \Rule{$\vee$} rule on every occurrence of every subexpression of the
original expression. This is a step toward a
syntax-directed type system. However, there are still two issues
to solve before obtaining an algorithmic type system:
$(i)$ rules \Rule{$\land$}, \Rule{Inst}, and \Rule{$\leq$} are still not syntax-directed, and
$(ii)$ rules \Rule{$\vee$},  \Rule{Inst},  \Rule{$\to$I},
and \Rule{$\leq$} are not analytic,
meaning that some of their premises cannot be deduced just by looking at the conclusion:
the \Rule{$\vee$} rule requires guessing a type decomposition (i.e.,
the monomorphic type $\mt$  in the premises), the \Rule{Inst} rule
requires guessing a substitution, the \Rule{$\to$I} rule  requires  guessing
the domain $\mt$ of the function, and the  \Rule{$\leq$} rule requires
guessing the type $t'$ to subsume to.

The issue of \Rule{Inst} and \Rule{$\leq$} not being syntax directed
can be solved by
embedding them in some structural rules (in
particular, in the rules for destructors).
Moreover, as we will see later, the rules in which we embed \Rule{$\leq$} can be made analytic by using some type operators.
As for rule \Rule{$\land$}, making it syntax-directed, is trickier. Indeed, the usual
approach of merging rules \Rule{$\to$I} and \Rule{$\land$} does not work here, since
terms in MSC-forms may hoist a bind definition outside the function where they
are used, causing rule \Rule{$\land$} to be needed on a term that is not,
syntactically, a $\lambda$-abstraction.
Lastly, there is no easy way to guess the substitutions used by \Rule{Inst} rules,
or the domain used in \Rule{$\to$I} rules, or
the decompositions performed by \Rule{$\vee$} rules. To tackle these issues,
our algorithmic type system will not only take a canonical form as input, but also
an annotation that will $(i)$ indicate when to apply an intersection, and $(ii)$ indicate
which type decomposition (for \Rule{$\vee$} rules) and which type
substitutions (for \Rule{Inst} rules) to use.
Formally, our algorithmic system uses judgements of the form
$\Gamma\vdashA\ea \kappa \annot : t$ for a canonical form $\kappa$, and
$\Gamma\vdashA\ea a \annota : t$ for an atom $a$ where $\annot$ and
$\annota$ are respectively \emph{form annotations} and \emph{atom annotations} defined as follows:
\begin{equation*}\label{eq:algolang}
  \hspace*{-2mm}  \begin{array}{lrclr}
      \textbf{Atom annots}  &\annota &::=& \annotconst %\alt \annotvara
      \alt \annotlambda{\mt}{\annot}\alt\annotpair\renaming\renaming\alt\annotapp\Sigma\Sigma\alt\annotproj\Sigma
      \alt\annotempty\Sigma\alt\annotthen\Sigma\alt\annotelse\Sigma\alt\annotinter{\{\annota,...\,,\annota\}}\\
      \textbf{Form annots}  &\annot &::=& \annotvar\renaming\alt
      \annotbind{\annota}{\{(\mt,\annot), \dots, (\mt,\annot)\}}\alt\annotskip{\annot}\alt\annotinter{\{\annot,\dots,\annot\}}
      \end{array}
  \end{equation*}
where $\renaming$ ranges over \emph{renamings} of polymorphic variables, that
is, injective substitutions from $\polyvars$ to $\polyvars$, and
$\Sigma$ ranges over instantiations, that is, sets of substitutions
from $\polyvars$ to $\types$.
We chose to keep annotations separate from the terms, instead of embedding them in the
canonical forms, since in the latter case it would be more complicated
to capture the tree structure of the derivations.
\iffalse
\kim{Can we replace it with an example of small MSC-Form and annotation on the side ? : In
comparison, \citet{occtyp22} write the annotations directly in the
canonical form, but guard them by some environments in order to
indicate in which context an annotation should apply.}
\kim{I find it a pity to not have the axiom rules. I think it's more satisfying to have
the complete system. Since we might win some space by removing a large paragraph in the previous section can we not do it?}
\fi

The algorithmic system is defined by the rules given in
Appendix~\ref{app:algosys}. Below we comment the most interesting
rules (we just omit the rules for constants, variables and two rules
for type-cases). Essentially, there is one typing rule for each annotation,
the only exception being the $\varnothing$ annotation that is used
both in the rule to type constants and in the two rules for variables.
%% \begin{mathpar}
%%   \Infer[Const\AR]
%%   { }
%%   {\Gamma\vdashA \ea c \annotconst:\basic{c}}
%%   { }
%%   \quad
%%   \Infer[Ax\AR]
%%   { }
%%   { \Gamma\vdashA \ea x \annotvara: \Gamma(x) }
%%   { }
%% \end{mathpar}
\begin{mathpar}
  \Infer[$\to$I\AR]
  {\Gamma,x:\mt\vdashA \ea\kappa{\annot}:t}
  {\Gamma\vdashA\ea{\lambda x.\kappa}{\annotlambda{\mt}{\annot}}:\arrow{\mt}{t}}
  { }
\end{mathpar}
To type the atom $\lambda x.\kappa$, the annotation
$\annotlambda{\mt}{\annot}$ provides the domain $\mt$ of the function,
and the annotation $\annot$ for its body.%\vspace{-.5mm}
\begin{mathpar}
\hspace*{15mm}\Infer[$\to$E\AR]
{ }
{\Gamma \vdashA \ea{{\xx_1}{\xx_2}}{\annotapp{\Sigma_1}{\Sigma_2}}: t_1\circ t_2 }
{ \stackedguards{t_1=\Gamma(\xx_1)\Sigma_1,\,\,\,t_2=\Gamma(\xx_2)\Sigma_2}
  {t_1\leq\Empty\to\Any,\,\,\,t_2\leq\dom{t_1}} }
\end{mathpar}
To type an application one  must apply an instantiation and a
subsumption to both the type of the function and the type of the
argument. Instantiations (i.e., $\Sigma_1$ and $\Sigma_2$) are sets of
type substitutions; their application to a type $t$ is defined as
$t\Sigma\eqdeftiny \textstyle\bigwedge_{\sigma\in\Sigma}
t\sigma$.
Since they
cannot be directly
guessed, they are given by the annotation. Subsumption instead  is embedded
in two type operators. A first operator, $\dom{}$, computes the domain
of the arrow and is used to check that the application is
well-typed. A second type operator, $\circ$, computes the type of the result
of the application.  These type operators are defined as follows:
\(\dom t \eqdef \max \{ u \alt t\leq u\to \Any\}\) and
\(\apply t s \eqdef\min \{ u \alt t\leq s\to u\}\).
\begin{mathpar}
  \Infer[$\times$E$_1$\AR]
  { ~}
  {\Gamma \vdashA \ea{\pi_1 \xx}{\annotproj\Sigma}:\bpi_1(t)}
  { \stackedguards{t=\Gamma(\xx)\Sigma}
    {t\leq(\Any{\times}\Any)} }
  \quad
  \Infer[$\times$E$_2$\AR]
  { }
  {\Gamma \vdashA \ea{\pi_2 \xx}{\annotproj\Sigma}:\bpi_2(t)}
  { \stackedguards{t=\Gamma(\xx)\Sigma}
    {t\leq(\Any{\times}\Any)} }
\end{mathpar}
The rules for projections \Rule{$\times$E$_1$\AR} and \Rule{$\times$E$_2$\AR}
follow the same idea as the rule for application \Rule{$\to$E\AR},
with the use of two type operators \(\bpl t\eqdef \min \{ u \alt t\leq \pair u\Any\}\) and
\(\bpr t\eqdef \min \{ u \alt t\leq \pair \Any u\}\).
All these type operators can be effectively computed (cf.\
Appendix~\ref{sec:typeop-appendix}).

\begin{mathpar}
  \Infer[$\times$I\AR]
  { }
  {\Gamma \vdashA \ea{(\xx_1,\xx_2)}{\annotpair{\renaming_1}{\renaming_2}}:\pair {t_1} {t_2}}
  { t_1=\Gamma(\xx_1)\renaming_1,\,\,\,t_2=\Gamma(\xx_2)\renaming_2 }
\end{mathpar}
To type a pair $(\xx_1,\xx_2)$ it is not necessary to instantiate
$\Gamma(\xx_1)$ or $\Gamma(\xx_2)$. However, to avoid unwanted
correlations, it is necessary to rename the
polymorphic type variables of its components.
For instance, when typing the pair $(\xx,\xx)$ with $\xx:\polyvar\to\polyvar$,
it is better to type it with $(\polyvar\to\polyvar,\polyvarb\to\polyvarb)$ rather than
$(\polyvar\to\polyvar,\polyvar\to\polyvar)$, since the former type has strictly more instances
than the latter.

\begin{mathpar}
%  \Infer[$\Empty$\AR]
%  { }
%  {\Gamma\vdashA \ea{\tcase {\xx} \tau {\xx_1}{\xx_2}}{\annotempty\Sigma}: \Empty}
%  { \Gamma(\xx)\Sigma\simeq\Empty }
  \Infer[$\in_1$\AR]
  { }
  {\Gamma\vdashA \ea{\tcase {\xx} \tau {\xx_1}{\xx_2}}{\annotthen\Sigma}: \Gamma(\xx_1)}
  { \Gamma(\xx)\Sigma\leq \tau }
%  \\
%  \Infer[$\in_2$\AR]
%  { }
%  {\Gamma\vdashA \ea{\tcase {\xx} \tau {\xx_1}{\xx_2}}{\annotelse\Sigma}: \Gamma(\x%x_2)}
%  { \Gamma(\xx)\Sigma\leq \neg \tau }
\end{mathpar}
To type type-cases, the annotation indicates which of the three rules
must be applied (here \Rule{$\in_1$}) and how to instantiate the
polymorphic type variables occurring in the type of the tested expression, so that
it satisfies the side condition of the applied rule (see
also \Rule{$\in_2$\AR} and  \Rule{$\Empty$\AR} in Appendix~\ref{app:algosys}). 
\begin{mathpar}
  \Infer[Bind$_1$\AR]
  {\Gamma\vdashA \ea{\kappa}{\annot}:t}
  {
  \Gamma\vdashA\ea{\bindexp {\xx} {a} {\kappa}}{\annotskip \annot}:t
  }
  { \xx\not\in\dom\Gamma }
\end{mathpar}
In rule \Rule{Bind$_1$\AR} the annotation indicates to skip the definition of
the current binding. This rule is used when the binding variable is not required
for typing the body $\kappa$ under the current context $\Gamma$. For instance,
this is the case when $\xx$ only appears in a branch of a typecase that cannot
be taken under $\Gamma$. The side condition $x\not\in\Gamma$ prevents a
potential unsound name conflict between binding variables: as occurrences of
$\xx$ in $\kappa$ denote the $\xx$ binding variable that is being skipped,
having the type of a former binding variable $\xx$ in our environment when
typing $\kappa$ would be unsound.
\begin{mathpar}
  \Infer[Bind$_2$\AR]
  {\Gamma\vdashA \ea a{\annota}:s\\
  {\small(\forall i\in I)}\quad\Gamma,\xx:s\wedge\mt_i\vdashA \ea\kappa{\annot_i}:t_i
  }
  {
    \Gamma\vdashA\ea{\bindexp {\xx} {a} {\kappa}}{\annotbind {\annota} {\{(\mt_i,\annot_i)\}_{i\in I}}} : \textstyle\bigvee_{i\in I}t_i
  }
  { \textstyle\bigvee_{i\in I}\mt_i\simeq\Any }
\end{mathpar}
This rule tries to type the bound atom and then decomposes its
type according to the annotation. This decomposition corresponds to an
application of the \Rule{$\vee$} rule of the declarative type system
with the only difference that the type $s$ of the atom is split in several summands
by intersecting it with the various $\mt_i$ (instead of just two summands as in
the rule \Rule{$\vee$}) whose union covers $\Any$.

Finally, two annotations indicate when and how
to apply rule \Rule{$\wedge$} to atoms and canonical forms:
\begin{mathpar}
  \Infer[$\land$\AR]
  {
  {\small(\forall i\in I)}\quad\Gamma\vdashA \ea a{\annota_i}:t_i
  }
  {
    \Gamma\vdashA\ea{a}{\annotinter{\{\annota_i\}_{i\in I}}} : \textstyle\bigwedge_{i\in I}t_i
  }
  { I\neq\emptyset }
  \quad
  \Infer[$\land$\AR]
  {
  {\small(\forall i\in I)}\quad\Gamma\vdashA \ea\kappa{\annot_i}:t_i
  }
  {
    \Gamma\vdashA\ea{\kappa}{\annotinter{\{\annot_i\}_{i\in I}}} : \textstyle\bigwedge_{i\in I}t_i
  }
  { I\neq\emptyset }
\end{mathpar}
An expression $e$ is typable if and only if its unique (modulo $\eqcan$) MSC-form
is typable, too:
\begin{theorem}[Soundness and Completeness]\label{th:main}
  For every term $e$ of the source language
  \begin{align*}
    \hspace*{4cm}
    \vdash e:t\quad &\Rightarrow\quad\exists\annot\,\vdashA\ea{\MSCF e}{\annot}:t'\leq t&\text{(completeness)}\\
    \vdash e:t\quad &\Leftarrow\quad\;\vdashA\ea{\MSCF e}{\annot}:t&\text{(soundness)}
  \end{align*}
\end{theorem}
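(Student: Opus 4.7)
The plan is to prove the two directions independently.

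Soundness ($\Leftarrow$) proceeds by a direct induction on the algorithmic derivation, proving the stronger statement that $\Gamma \vdashA \ea\kappa\annot : t$ implies $\Gamma \vdash \eras\kappa : t$ (mutually with the analogous statement for atom judgments). Each algorithmic rule is simulated by its declarative counterparts: the structural rules \Rule{Inst} and \Rule{$\leq$} are reconstructed from the substitutions $\Sigma_k$ and renamings stored in the annotations, and from the subsumption embedded in the side conditions of the type operators $\dom$, $\circ$, $\bpi_1$, $\bpi_2$; the $\annotinter{\{\ldots\}}$ annotation unfolds into a chain of \Rule{$\wedge$} applications; typecase, pair and projection annotations map onto their homonymous declarative rules. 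The delicate case is \Rule{Bind$_2$\AR}, which simulates \Rule{$\vee$}: the $|I|$-way split of the atom type $s$ into $(s \wedge \mt_i)_{i \in I}$ with $\bigvee_i \mt_i \simeq \Any$ is realized by iterated binary applications of \Rule{$\vee$} and then \Rule{$\leq$} to collapse into $\bigvee_i t_i$. The \Rule{Bind$_1$\AR} case requires the observation that any occurrence of the skipped variable $\xx$ in $\kappa$ must lie in a position not touched by the derivation (e.g.\ the unused branch of an \Rule{$\in_j$\AR} application), so inlining $\eras a$ through the unwinding preserves the declarative derivation of $\eras\kappa\subs\xx{\eras a}$.

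Completeness ($\Rightarrow$) is the harder direction. The plan is first to transfer to $\eras{\MSCF e}$ using $\eras{\MSCF e} \aequiv e$ (Proposition~\ref{th:mscfequiv}) together with closure of the declarative system under $\alpha$-equivalence, then to normalize the derivation of $\vdash \eras{\MSCF e} : t$ into a form that exposes the algorithmic structure of $\MSCF e$, and finally to build the annotation $\annot$ by induction on the normalized derivation paired with the MSC-form. Applications of \Rule{Inst} and \Rule{$\leq$} sitting above destructor rules are absorbed into the $\Sigma$ slots of atom annotations such as $\annotapp{\Sigma_1}{\Sigma_2}$ and $\annotproj\Sigma$, and into the definitional side conditions of the type operators; applications of \Rule{$\wedge$} become $\annotinter{\{\ldots\}}$ branching nodes. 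The crucial step is pairing each use of \Rule{$\vee$} with a binding of the MSC-form: by the maximal sharing property (Definition~\ref{def:maximalsharing}), every sub-expression of $\eras{\MSCF e}$ that can be the target of a \Rule{$\vee$} application corresponds to a unique binding variable, so we can emit an $\annotbind{\annota}{\{(\mt_i,\annot_i)\}_{i\in I}}$ annotation exactly there. The algorithmic type $t'$ may be a proper subtype of $t$ because the absorbed subsumptions yield the minimal or maximal types by definition of $\dom$, $\circ$, $\bpi_1$, $\bpi_2$.

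The main obstacle is the normalization step in completeness. Declarative derivations allow arbitrary interleavings of \Rule{$\vee$}, \Rule{$\wedge$}, \Rule{Inst} and \Rule{$\leq$}, whereas the algorithmic system fixes their order by the MSC-form skeleton. Commuting \Rule{$\vee$} past the other rules requires several technical lemmas: it commutes with \Rule{Inst} because the splitting type $\mt$ is monomorphic and thus unaffected by instantiations of polymorphic variables; it commutes with \Rule{$\leq$} since subsumption is preserved by \Rule{$\vee$}; and commutation with \Rule{$\wedge$} produces the $\annotinter{\{\ldots\}}$ branching at the outer level. The maximal sharing property is essential: without it, a \Rule{$\vee$} application targeting a sub-expression occurring at several positions would have no unique binding counterpart, and the annotation construction would not go through.
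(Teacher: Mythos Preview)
Your proposal is correct and follows essentially the same approach as the paper. Soundness is proved by structural induction on the algorithmic derivation exactly as you outline, and completeness proceeds via a normalization of declarative derivations followed by an induction that extracts the annotation tree, relying on maximal sharing to align \Rule{$\vee$} applications with bindings of the MSC-form.

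The paper's completeness argument is more elaborately staged than your sketch: it first moves to an equivalent alternative declarative system $\vdasha$ with $n$-ary \Rule{$\vee$} and \Rule{$\wedge$} rules and syntactically distinguished binding versus lambda variables, then defines a notion of \emph{canonical} derivation (acceptable and well-positioned \Rule{$\vee$} nodes with respect to a total expression order, \Rule{Inst} and \Rule{$\leq$} confined to specific positions above destructors) and proves normalization into this form in three successive passes (\Rule{$\vee$}, then \Rule{Inst}, then \Rule{$\leq$}). The canonical form ensures that each \Rule{$\vee$} node lines up with a unique binding in $\MSCF e$ via a dedicated decomposition lemma, after which the annotation is read off by induction. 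Your ``commutation'' intuition is the right mental model but the paper realizes it by explicit generation/elimination of \Rule{$\vee$} nodes rather than local rule permutations. One technical point you gloss over: the completeness lemma actually yields $t' \polyleq t$ (i.e.\ $t'\Sigma \leq t$ for some set of instantiations $\Sigma$) rather than $t' \leq t$ directly, because the normalization steps may discard instantiations at the root.
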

It is easy to generate
the unique MSC-form associated to a source language expression $e$
(cf. Appendix~\ref{sec:msc-appendix}).
Theorem~\ref{th:main} states that this MSC-form
is typable if and only if $e$ is: we reduced the problem of typing $e$
to the one of finding an annotation that makes the unique MSC-form of $e$
typeable with the algorithmic type system.\footnote{%
\color{modification}As stated by Theorem~\ref{th:main} the transformation
of an expression into its MSC-form preserves typing. However,
intuitively, it does not preserve the reduction semantics, since
bindings regroup different occurrences of some sub-expression that in
the original expression might be evaluated at different stages of the
reduction or not evaluated at all. We said ``intuitively'' because
no operational semantics is defined for canonical forms (this was
alredy the case for~\cite{occtyp22}).} 
\begin{figure*}
\input{fig-ex-annot.tex}\ifspacetrick\vspace{-3mm}\fi
  \caption{MSC-form and two annotations for $\lambda x. \tcase {f x}\Int{g (f x)} {x}$
  \label{fig:exannot}}
  \end{figure*}
Figure~\ref{fig:exannot} gives an example of an MSC-form and two possible 
annotations for it. The term  ``$\lambda x. \tcase {f x}\Int{g (f x)}
  {x}$'' (where $f:\forall\alpha.\alpha\To\alpha$ and $g:\Int\To\Int$) is
put in MSC-form (on the left). In the first annotation, the function is typed
with a single $\lambda$ annotation (line 3). The interesting part is the annotation of
the binding for $\bvar{u}$ (line 5): the corresponding \texttt{keep} annotation
represents an application of the union elimination rule on the occurrences of
the expression $f x$ whose type $\monovarb$ is split into $\monovarb{\wedge}\Int$
(line 6) and $\monovarb{\setminus}\Int$ (line 9). Each subcase is annotated
  accordingly. Notice in the second subcase that
the annotation for $\bvar{v}$ is \texttt{skip} (line 10) 
which indicates that this
particular variable must not be used (as $g(fx)$ cannot be typed since in the
``else'' branch, $f x$ has type $\lnot\Int$). A different annotation, yielding a
better type, is the one on the right. This intersection annotation (line 2) separates the domain of
the $\lambda$-abstraction into two cases, each typed independently, yielding for
the whole function an intersection type.

{\color{modification}
The example in Figure~\ref{fig:exannot} also shows why the condition of maximal sharing for our
forms is necessary, not only for their uniqueness, but also for
the completeness of the algorithmic system: if the two occurrences of $f x$ in ``$\lambda x. \tcase {f x}\Int{g (f x)}
  {x}$'' were not bound by the same variable (as in the leftmost
  column of line 5 in
  Figure~\ref{fig:exannot}), viz., if the sharing were not maximal,
  then it would not be possible to deduce that $g(f x)$ is well typed:
  $g$ expects an integer, but without maximal sharing it is not
  possible to deduce that the occurrence of $f x$ in the first branch
  is indeed of type $\Int$.

The problem of inferring an annotation for an MSC-form as the above---in particular
the rightmost (more precise) annotation in Figure~\ref{fig:exannot}---is tackled in the next section.
}

\section{Reconstruction}
\label{sec:reconstruction}

This section describes an algorithm to find an annotation for an
expression in MSC-form, such that the pair  expression and annotation is
typable in the algorithmic system.
Though this algorithm is not complete,
%(inference for systems with intersection types is known to be
%undecidable,\cite{Pottinger80}),
it is sound and terminating \textcolor{modification}{(see
Section~\ref{sec:propreco} for the formal statements and a discussion
about incompleteness)}. Experimental results are presented in Section~\ref{sec:implementation}.

The annotation reconstruction algorithm is composed of two systems of deduction rules:
the \textit{main reconstruction algorithm} (Section~\ref{sec:mainreconstruction})
which produces intermediate annotations containing information about the domains of $\lambda$-abstractions
and the type decompositions to use in bindings, and the \textit{auxiliary reconstruction algorithm}
(Section~\ref{sec:auxreconstruction}) which converts these intermediate annotations
into annotations for the algorithmic type system, by computing instantiations
$\Sigma$ for destructors.

\subsection{The Tallying Algorithm}\label{sec:tallying}

One key ingredient used by the reconstruction algorithm is the \textit{tallying} algorithm.
Roughly, \emph{tallying} is the equivalent of the \textit{unification}
used in algorithm $\mathcal{W}$~\cite{DM82}, but for a type system with subtyping.
The tallying algorithm was introduced by~\citet{polyduce2} to solve the following problem:
given a set of pairs $\{(t_i,t_i')\}_{i\in I}$ and a set of type variables $\Delta$ representing the monomorphic type variables,
find all substitutions $\sigma$ whose domain is disjoint from $\Delta$ (noted $\sigma\disjoint\Delta$) and that satisfy $\forall i\in I.\ t_i\sigma\leq t_i'\sigma$.
\citet{polyduce2} show that this problem is decidable and give an algorithm to characterize all solutions.
As for unification, for each instance of the tallying problem there is either no solution or
several substitutions each of which is a solution of the problem. The difference
is that while with unification all solutions are characterized by a
principal substitution, with tallying they are characterized by a
principal finite \emph{set} of substitutions.\footnote{This is due to
the presence of the empty type. For instance, the principal solution of unifying
$\alpha\times\beta$ with $s\times t$ is the substitution
$\{\alpha\tvarto s,\beta\tvarto t\}$, while all substitutions that make the former type
become a subtype of the latter are characterized by a set containing three
distinct substitutions: $\{\alpha\tvarto\Empty\}$, $\{\beta\tvarto\Empty\}$, and $\{\alpha\tvarto s,\beta\tvarto t\}$.}
More precisely, all substitutions that are  solutions to a tallying instance are characterized
by a \emph{principal} set  $\Sigma$ of substitutions,
such that every $\sigma\in\Sigma$ is a solution, and for any solution $\sigma$, we have
$\exists\sigma_1{\in}\Sigma.\ \exists \sigma_2.\ \sigma_2\disjoint\Delta \text{ and }\sigma\simeq\sigma_2\circ\sigma_1$,
where $\circ$ denotes the composition of substitutions and $\simeq$ is
pointwise type equivalence.

In this work, all tallying instances use a single constraint, and we will note
$\tallyf{t_1}{t_2}$ the set of substitutions $\Sigma$ characterizing all the solutions
of the tallying instance $\{(t_1,t_2)\}$, where $\Delta=\monovars$
and thus $\forall\sigma\in\Sigma.\ \dom\sigma\subseteq\polyvars$ \textcolor{modification}{(we
use the symbol $\dot\leq$, rather than $\leq$ to stress that it denotes a
constraint to be solved, rather than a pair in the subtyping relation)}.

The tallying function $\tallyff$ finds substitutions for polymorphic type variables,
but in order to infer the domain of $\lambda$-abstractions, we may need to find substitutions for
monomorphic type variables. We thus introduce an additional tallying function,
$\tallyinferf{t_1}{t_2}$:
%, defined in terms of $\tallyff$, renamings and restrictions:

\begin{definition}
Let $\restr{\sigma}{X}$ denote the restriction of the substitution $\sigma$ to the domain $X$. We define
\[\tallyinferf{t_1}{t_2}=\{\restr{(\sigma\circ\sigma'\circ\grenaming)}{\monovars} \,\alt\, \sigma'\in\tallyf{\fresh({t_1})\grenaming}{\fresh({t_2})\grenaming}\}\]
where $\fresh(t)$ denotes the type $t$ where polymorphic type
  variables have been substituted by fresh ones; $\grenaming$ is a
  renaming from $(\vars{t_1}\cup\vars{t_2})\cap\monovars$ to fresh
  polymorphic variables; and $\sigma$ is a substitution mapping polymorphic variables
  appearing in the image of $\sigma'\circ\grenaming$ to fresh monomorphic variables.
\end{definition}
In a nutshell, polymorphic type variables in $t_1$ and $t_2$ are refreshed in order to
decorrelate them, and monomorphic type variables are generalized using $\grenaming$ so that
$\tallyff$ is allowed to find solutions for them.
Each solution $\sigma'$ is composed with $\grenaming$
in order to restore the connection with the initial monomorphic type variables,
and the polymorphic type variables in the image of the resulting substitution are
transformed into monomorphic ones by composing $\sigma$ with it. Finally,
the substitution is restricted to $\monovars$ (i.e., to the domain of
$\grenaming$).

For example, an instance such as $\tallyinferf{\Int{\land}\polyvar\to\Int{\land}\polyvar}{\monovarb\to\polyvar}$
can be generated during reconstruction, when a function of type $\Int{\land}\polyvar\to\Int{\land}\polyvar$
is applied to an argument of type $\monovarb$, but the $\polyvar$ on the
right-hand of $\dot\leq$ is unrelated to the one on the left-hand
side. Decorrelating them yields a unique solution
$\{ \monovarb\tvarto \monovarb'{\land}\Int \}$, that is, $\monovarb$ must be substituted by
$\monovarb'{\land}\Int$ in our context for the  application to be typeable.

\subsection{Main Reconstruction Algorithm}\label{sec:mainreconstruction}

The \emph{main reconstruction algorithm}, defined in this section,
infers the domains of $\lambda$-abstractions and the
decompositions of types into disjoint unions to use for bindings.
It works by successively refining \emph{intermediate annotations} defined below.
These intermediate annotations store information about the domains of $\lambda$-abstractions and the decompositions of bindings.
However, the instantiations $\Sigma$ used to type destructors (i.e.,
applications, projections, and typecases) in the algorithmic type system
are not stored in intermediate annotations, because they might get invalidated as the reconstruction progresses:
when  new information is found about the domain of a lambda or the decomposition of a binding,
the algorithm will retype some intermediate definitions of the MSC-form,
thus invalidating the instantiations $\Sigma$ of later definitions.
Thus, these instantiations $\Sigma$ will be recomputed whenever needed, using the auxiliary system
(Section~\ref{sec:auxreconstruction})
that converts intermediate annotations into annotations for the algorithmic type system.

Atom and form \emph{intermediate annotations} are defined by the grammar below:\ifspacetrick\vspace{-.5mm}\fi
\begin{equation*}
  \begin{array}{lrclr}
  \textbf{Split annotations}  &\sanns &::=& \{(\mt,\banns),\dots,(\mt,\banns)\}\\
  \textbf{Atom intermediate annot.}  &\lanns &::=& \annotiinfer%\alt\annotinferthen\alt\annotinferelse
  \alt\annotiuntyp\alt\annotityp
  \alt\annotiinter{\{\lanns,\dots,\lanns\}}{\{\lanns,\dots,\lanns\}}\\&&&
  %\alt\annotiempty\alt\annotithen\alt\annotielse
  \alt\annotithen\alt\annotielse\alt\annotilambda{\mt}{\banns}\\
  \textbf{Form intermediate annot.}  &\banns &::=& \annotiinfer\alt\annotiuntyp\alt\annotityp
  \alt\annotiinter{\{\banns,\dots,\banns\}}{\{\banns,\dots,\banns\}}\\&&&
  \alt\annotitryskip\banns\alt\annotitrybind{\lanns}{\banns}{\banns}\\&&&
  \alt\annotiprop{\lanns}{\Gammas}{\sanns}{\sanns}
  \alt\annotiskip{\banns}\alt\annotibind{\lanns}{\sanns}{\sanns}
  \end{array}
\end{equation*}
%The meaning of each constructor will be detailed later on. 
%
In the following, we use the metavariable $\kappaa$ to range over both atoms and
expressions (i.e., $\kappaa ::= a\alt\kappa$).
Similarly, the metavariable $\aannot$ ranges over atom annotations $\annota$ and form annotations $\annot$
 (i.e., $\aannot ::= \annota\alt\annot$); while
the metavariable $\aanns$ ranges over atom intermediate annotations $\lanns$ and form intermediate annotations $\banns$
 (i.e. $\aanns ::= \lanns\alt\banns$).

Let $\msubst$ range over \emph{monomorphic substitution}, that is, substitutions from $\monovars$
to monomorphic types, and $\Msubst$  range over finite sets of monomorphic
substitutions ($\Msubst ::= \{\msubst, \dots, \msubst\}$).
The main reconstruction algorithm is presented as a deduction rule system,
for judgments of the form $\Gamma\raavdash\epa{\kappaa}{\aanns}\refines\results$,
where $\results$ is a result defined as follows:
\begin{equation*}
  \begin{array}{lrclr}
    \textbf{Result}  &\results &::=& \resok{\aanns}\alt\resfail{}%\\&&&
    \alt\respart{\Gamma}{\aanns}{\aanns}%\\&&&
    \alt\ressubst{\Msubst}{\aanns}{\aanns}%\\&&&
    \alt\resvar{\xx}{\aanns}{\aanns}
  \end{array}
\end{equation*}
Let us see what each result for $\Gamma\raavdash\epa{\kappaa}{\aanns}$ means:
\begin{description}[leftmargin=10pt]
\item[$\resok{\aanns'}$:] the reconstruction was successful
  and  $\kappaa$ can be typed by the algorithmic type system using the annotation $\aanns'$
  (after converting it into an annotation $\aannot$ using the auxiliary reconstruction system).
  This result is terminal (i.e., it is a definitive answer that cannot
  be further refined).
\item[$\resfail$:] the reconstruction has failed. The algorithm was
  not able to find an annotation that makes $\kappaa$ typable with the
  algorithmic system.
  This result is terminal.
  \item[$\ressubst{\Msubst}{\aanns_1}{\aanns_2}$:] the reconstruction
  found a set of substitutions  $\Msubst$ that if applied to $\Gamma$
  may make $\kappaa$ typable.
  In practice, for each substitution $\msubst\in\Msubst$,
  the reconstruction will be called again on the environment $\Gamma\msubst$ and annotation $\aanns_1\msubst$.
  However, this does not necessarily mean that the reconstruction will fail on the current environment $\Gamma$: $\kappaa$ might still be typeable but with a less precise type (e.g., it could yield an arrow type
  with a smaller domain). Thus, 
  this \textit{default} case which does not instantiate $\Gamma$
  is also explored, using the annotation $\aanns_2$ instead of $\aanns_1$.
\item[$\respart{\Gamma'}{\aanns_1}{\aanns_2}$:] the reconstruction
  found some splits for the variables in $\dom{\Gamma'}$ that if applied to $\Gamma$
  may make $\kappaa$ typable. In practice, the system  generates
  several new environments: one is obtained by (pointwise) intersecting 
  $\Gamma$ with $\Gamma'$ and then it is used to retype $\kappaa$ with
  the annotation $\aanns_1$; the others are obtained by intersecting
  $\Gamma$ with all the possible
  pointwise negations of $\Gamma'$ and then they are  used to retype $\kappaa$ with
  the annotation $\aanns_2$. 
\item[$\resvar{\xx}{\aanns_1}{\aanns_2}$:] the reconstruction found
  that in order to type $\kappaa$,
  the definition of the bind-abstracted variable $\xx$ should be typed.
  Any branch that successfully types it continues with the annotation $\aanns_1$,
  otherwise it continues with the annotation $\aanns_2$.
\end{description}
\iflongversion%%%%%%%%%%%%%%%%%%%%%
The last three results are intermediary and specify three arguments
for their continuation: an object that generates new hypotheses to make $\kappaa$
typable when is analyzed again, an annotation $\aanns_1$ to use in that
case, and a default annotation $\aanns_2$ to be used when the new
hypotheses do not hold.
%% \else
%% The last three results are intermediary and specify three arguments
%% for their continuation: an object to generates new hypotheses for
%% typing $\kappaa$, and two annotations $\aanns_1$ and $\aanns_2$ to use
%% when the hypotheses hold and do not, respectively.
\fi%%%%%%%%%%%%%%%%%%%%%%%%%%%%%%%%%%%%%%%
Initially, any form or atom $\kappaa$ is annotated with $\annotiinfer$, and this annotation
is then refined until it yields a terminal result (i.e., either $\resok{}$ or
$\resfail$).
The rules below are presented by decreasing priority (i.e., the first rule that applies is used).
Some rules have been omitted for concision, but the reader can find the full reconstruction system
in Appendix~\ref{sec:mainreconstruction-appendix}.

There are two different forms of judgments: $\Gamma\raavdash\epa{\kappaa}{\aanns}\refines\results$
and $\Gamma\aavdash\epa{\kappaa}{\aanns}\refines\results$.
We first define rules for the judgment $\aavdash$ for every canonical form and atom.
The results of these judgments are not necessarily terminal and, therefore, it may be necessary
to call the reconstruction again in order to refine them.
This is the purpose of $\raavdash$ judgments which call repetitively $\aavdash$ judgments
when relevant, so that in the end we get a terminal result.
Let us first focus on $\aavdash$ judgments.

\begin{mathpar}
  \Infer[Ok]
  { }
  {\Gamma\aavdash \epa \kappaa\annotityp\refines\resok{\annotityp}}
  { }
  \qquad
  \Infer[Fail]
  { }
  {\Gamma\aavdash \epa \kappaa\annotiuntyp\refines\resfail}
  { }
\end{mathpar}
If a canonical form or atom $\kappaa$ is annotated with $\annotityp$,
then reconstruction is finished for $\kappaa$, and it is typeable
in the current context $\Gamma$.
The annotation $\annotityp$ is never used on $\lambda$-abstractions and bindings because
the system needs to store more information for them.
Likewise, if a form or atom $\kappaa$ is annotated with $\annotiuntyp$,
then reconstruction is finished for $\kappaa$ by failing
in the current context.
\begin{mathpar}
  % \Infer[Const]
  % { }
  % {\Gamma\aavdash \epa c{\annotiinfer}\refines\resok{\annotityp}}
  % { }
  % \\
  \Infer[AxOk]
  { x\in\dom\Gamma }
  {\Gamma\aavdash \epa x{\annotiinfer}\refines\resok{\annotityp}}
  { }
  \qquad
  \Infer[AxFail]
  { }
  {\Gamma\aavdash \epa x{\annotiinfer}\refines\resfail}
  { }
\end{mathpar}
If a $\lambda$-abstracted variable $x$
is in the environment, then it is typeable and thus the algorithm returns $\resok{\annotityp}$.
%($\annotityp$ being the new annotation for this atom).
Otherwise, $x$ is undefined and $\resfail$ is returned.
\begin{mathpar}
  \Infer[AppVar$_i$]
  { \xx_i\not\in\dom\Gamma }
  {\Gamma\aavdash \epa{\xx_1 \xx_2}{\annotiinfer}\refines
  {\resvar{\xx_i}{\annotiinfer}{\annotiuntyp}}}
  {}
\end{mathpar}
To type the application $\xx_1 \xx_2$, we must first ensure that
  $\{\xx_1,\xx_2\}\subseteq\dom\Gamma$. If it is not the case, then
  the two rules \Rule{AppVar$_i$} (for $i=1,2$) try to remedy it by returning $\resvar{\xx_i}{\annotiinfer}{\annotiuntyp}$, 
  which is the result that asks the system to try to type the atom bound to
  $\xx_i$ for $\xx_i\not\in\dom\Gamma$. If the attempt is successful,
then the algorithm will continue the reconstruction for the
  application with the annotation $\annotiinfer$ and $\xx_i\in\dom\Gamma$,
otherwise it will continue with the annotation $\annotiuntyp$ making
  the reconstruction fail on this application.
%\negskip  
\begin{mathpar}
  \Infer[AppInfer]
  {
      \tallyinfer{\Msubst}{\Gamma(\xx_1)}{\Gamma(\xx_2)\to \polyvar}
  }
  {\Gamma\aavdash \epa{{\xx_1}{\xx_2}}{\annotiinfer}\refines
  {\ressubst{\Msubst}{\annotityp}{\annotiuntyp}}}
  { \polyvar\in\polyvars\text{ fresh} }
\end{mathpar}
If $\{\xx_1,\xx_2\}\subseteq\dom\Gamma$, then the rule \Rule{AppInfer} tries
to find all instances of the current context in which the
application $\xx_1 \xx_2$ is typeable, by subsuming $\Gamma(\xx_1)$
(the type of the function) to $\Gamma(\xx_2)\To\alpha$ (a function
type whose domain is the type of the argument). For that,
it calls the tallying algorithm which returns a set of substitutions $\Msubst$. Then,
$\ressubst{\Msubst}{\annotityp}{\annotiuntyp}$ is returned,
meaning that this application should be typeable under every instance $\Gamma\msubst$
of the current context $\Gamma$ (with $\msubst\in\Msubst$).
The default case (i.e., when the current context is unchanged, for
example, when $\Msubst=\varnothing$) cannot be
typed,
so it is annotated with $\annotiuntyp$ (see
rule \Rule{Iterate$_2$} later on).
The rules for pairs are similar and have been omitted.
\begin{mathpar}
  \Infer[CaseSplit]
  { \Gamma(\xx)\not\leq\tau\\\Gamma(\xx)\not\leq\neg\tau }
  {\Gamma\aavdash \epa{\tcase {\xx} \tau {\xx_1}{\xx_2}}{\annotiinfer}\refines
  {\respart{\envsingl \xx \tau}{\annotiinfer}{\annotiinfer}}}
  { }
\end{mathpar}
The key rule for type-cases is \Rule{CaseSplit},
corresponding to the case where $\xx$ is in $\Gamma$, but with a type
that does not allow the selection of a specific branch.
Thus, we need to partition the type of $\xx$ in two, one part being a subtype of $\tau$ and the other
a subtype of $\neg \tau$. This is achieved by returning $\respart{\{(\xx:\tau)\}}{\annotiinfer}{\annotiinfer}$:
this result is backtracked up to the binding of $\xx$, where it will split the associated type, accordingly.
\begin{mathpar}
  \Infer[CaseThen]
  {
    \Gamma(\xx)\leq\tau\\
    \tallyinfer{\Msubst}{\Gamma(\xx)}{\Empty}
  }
  {\Gamma\aavdash \epa{\tcase {\xx} \tau {\xx_1}{\xx_2}}{\annotiinfer}\refines\ressubst{\Msubst}{\annotityp}{\annotithen}}
  { }
  \\
  \Infer[CaseElse]
  {
    \Gamma(\xx)\leq\neg\tau\\
    \tallyinfer{\Msubst}{\Gamma(\xx)}{\Empty}
  }
  {\Gamma\aavdash \epa{\tcase {\xx} \tau {\xx_1}{\xx_2}}{\annotiinfer}\refines\ressubst{\Msubst}{\annotityp}{\annotielse}}
  { }
  \\
  \Infer[CaseVar$_i$]
  {
    \xx_i\not\in\dom\Gamma
  }
  {\Gamma\aavdash \epa{\tcase {\xx} \tau {\xx_1}{\xx_2}}{\annotithenelse}\refines
  {\resvar{\xx_i}{\annotityp}{\annotiuntyp}}}
  { }
\end{mathpar}
When the type of $\xx$ allows the selection of a branch, then either the
rule \Rule{CaseThen} or the rule \Rule{CaseElse} applies. If we are in
the
case of \Rule{CaseThen}, that is $\Gamma(\xx)\leq\tau$, then we have
to determine whether we will apply the algorithmic
rule \Rule{$\Empty$\AR} or the algorithmic
rule  \Rule{$\in_1$\AR}. To determine it, the \Rule{CaseThen} rule calls
$\tallyinferf{\Gamma(\xx)}{\Empty}$ which returns the set of contexts
$\Gamma\msubst$ (for $\msubst\in\Msubst$)  under which the algorithmic rule \Rule{$\Empty$\AR} is to be applied,
that is, the contexts under which the tested expression $\xx$ has an empty type.
% This allows to find branches thatwill yield a more precise type.
% For instance, consider the expression $\lambda x.\ \tcase{x}{\Int}{\true}{\false}$.
% This expression does not contain any application or projection, and yet we would like
% to type it with the intersection type $(\Int\to\True)\land(\neg\Int\to\False)$ rather than $\Any\to\Bool$.
% The first branch, where $x$ has type $\Int$, corresponds to the case where the second branch
% of the typecase is not taken (i.e. $\Gamma(\xx)\land\neg\Int\simeq\Empty$) and will be found be the rule \Rule{CaseElse}.
% Similarly, the second branch where $x$ has type $\neg\Int$ corresponds to the case where
% the first branch of the typecase is not taken (i.e. $\Gamma(\xx)\land\Int\simeq\Empty$)
% and will be found by the rule \Rule{CaseThen}.
The default case, corresponding to the case in which the type of
$\Gamma(\xx)$ is not guaranteed to be empty 
and, thus, in which the algorithmic
rule \Rule{$\in_1$\AR} must be applied, is annotated with $\annotithen$. 
This annotation is handled by the rule \Rule{CaseVar$_1$} which forces
the system to type $\xx_1$, the binding variable
associated to the first branch. The case for \Rule{CaseElse}
and \Rule{CaseVar$_2$} is analogous.

We omitted the remaining rules for type-cases since they are
straightforward: the rule for $\xx\not\in\dom\Gamma$, which triggers a
$\resvar{\xx}{\annotiinfer}{\annotiuntyp}$ result; two rules
similar to \Rule{CaseVar$_i$}, but where $\xx_i\in\dom\Gamma$, which simply
return $\resok\annotityp$.
\begin{mathpar}
  \Infer[LambdaInfer]
  {
      \Gamma\aavdash \epa{\lambda x. \kappa}{\annotilambda{\monovar}{\annotiinfer}} \refines\results
  }
  {
      \Gamma\aavdash \epa{\lambda x. \kappa}{\annotiinfer} \refines\results
  }
  { \monovar\in\monovars\text{ fresh} }
  \\
%\end{mathpar}\begin{mathpar}
  \Infer[Lambda]
  {
      \Gamma,x:\mt\raavdash \epa{\kappa}{\banns} \refines\results
  }
  {
      \Gamma\aavdash \epa{\lambda x. \kappa}{\annotilambda{\mt}{\banns}} \refines
      {\mapres\results{X}{\annotilambda{\mt}{X}}}
  }
  { }
\end{mathpar}
The rules for $\lambda$-abstractions mimic algorithm $\mathcal W$. Rule \Rule{LambdaInfer}
transforms the initial $\annotiinfer$ annotation into a
$\annotilambda{\monovar}{\annotiinfer}$ annotation. As in $\mathcal{W}\!$, $\lambda$-abstracted variables are initially given a fresh type variable,
which will then be substituted as needed while reconstructing the type
of the
body; here we use a fresh monomorphic variable, but
$\tallyinferff$ will transform it into a polymorphic---thus,
instantiable---one, just for the reconstruction in the body.
Rule \Rule{Lambda} adds the $\lambda$-abstracted variable to the
environment with the type specified in the annotation,
recursively calls reconstruction on the body, and reestablishes the variable type annotation on the result.
The notation $\mapres\results{X\!\!}{\!\!f(X)}$ denotes the result $\results$ where $f$ has been applied
to every annotation $X$.
\begin{mathpar}
  \Infer[BindInfer]
  {
      \Gamma\aavdash\epa{\bindexp \xx a \kappa}{\annotitryskip{\annotiinfer}}\refines\results
  }
  {
      \Gamma\aavdash\epa{\bindexp \xx a \kappa}{\annotiinfer}\refines\results
  }
  { }
\end{mathpar}
The \Rule{BindInfer} rule transforms an initial $\annotiinfer$
annotation into a $\annotitryskip{\annotiinfer}$ annotation which skips
the binding and annotates the body $\kappa$ with $\annotiinfer$. We do not try to type the definition of a binding until
it is actually used, because its variable might appear only in unreachable
positions (e.g., in an unreachable branch of a type-case). In other
words, we implement a lazy typing discipline for bind-abstracted variables.
If the variable is used at some point, then an attempt to type it will be initiated by the \Rule{BindTrySkip$_1$} rule below:
%\negskip
\begin{mathpar}
  \Infer[BindTrySkip$_1$]
  {
      \Gamma\raavdash \epa\kappa\banns\refines\resvar{\xx}{\banns_1}{\banns_2}\\
      \Gamma\aavdash \epa{\bindexp \xx a \kappa}{\annotitrybind{\annotiinfer}{\banns_1}{\banns_2}}\refines\results
  }
  {
      \Gamma\aavdash\epa{\bindexp \xx a \kappa}{\annotitryskip{\banns}}\refines\results
  }
  { }
\end{mathpar}
This rule tries to type the body of the binding, starting with the annotation $\banns$ (initially, $\annotiinfer$).
If the result is  $\resvar{\xx}{\banns_1}{\banns_2}$, then it means
that the current binding is used in the body $\kappa$ and,
thus, the system should try to type it.
Consequently, the annotation for the current binding is changed into a $\annotitrybind{\annotiinfer}{\banns_1}{\banns_2}$
so that, at the next iteration, its definition will be reconstructed.

If typing the body of the binding yields a result different from $\resvar{\xx}{\banns_1}{\banns_2}$,
then this result is just propagated as in \Rule{Lambda} (the corresponding rules have been omitted).
\begin{mathpar}
  \Infer[BindTryKeep$_1$]
  {
      \Gamma\raavdash \epa a\lanns\refines\resok{\lanns'}\\
      \Gamma\aavdash\epa{\bindexp \xx a \kappa}{\annotibind{\lanns'}{\{(\Any,\banns_1)\}}{\emptyset}}\refines\results
  }
  {
      \Gamma\aavdash\epa{\bindexp \xx a \kappa}{\annotitrybind{\lanns}{\banns_1}{\banns_2}}\refines\results
  }
  { }
  \\
  \Infer[BindTryKeep$_2$]
  {
      \Gamma\raavdash \epa a\lanns\refines\resfail\\
      \Gamma\aavdash\epa{\bindexp \xx a \kappa}{\annotiskip{\banns_2}}\refines\results
  }
  {
      \Gamma\aavdash\epa{\bindexp \xx a \kappa}{\annotitrybind{\lanns}{\banns_1}{\banns_2}}\refines\results
  }
  { }
\end{mathpar}
As expected, if the current annotation for the binding is a $\annotitrybind{\lanns}{\banns_1}{\banns_2}$,
then the system tries to reconstruct the annotation for the definition. If it succeeds,
then it becomes possible to type the definition and to continue the reconstruction of the body using $\banns_1$.
This is what \Rule{BindTryKeep$_1$} does by changing the current annotation to
$\annotibind{\lanns'}{\{(\Any,\banns_1)\}}{\emptyset}$ (more details below).
If the reconstruction of the definition fails (rule \Rule{BindTryKeep$_2$}), then we have no choice but
to skip this definition and use the default annotation $\banns_2$ to type the body.

% Finally, the rule \Rule{BindSkip$_3$} handles all the other possibles results for the reconstruction of the definition,
% and simply forward them after reestablishing the annotation for the current binding.

In an annotation $\annotibind{\lanns}{\sanns}{\sanns'}$ for
the binding of a variable $\xx$, $\lanns$ is the
annotation for typing the definition of $\xx$, while the two other
arguments describe the type decomposition to use for $\xx$ and, for each part of the decomposition, the annotation to use for the body. More precisely,
$\sanns$ contains the parts of the type decomposition
that have yet to be explored, and $\sanns'$ contains the parts that have already been fully explored.
In particular, the annotation
$\annotibind{\lanns'}{\{(\Any,\banns_1)\}}{\emptyset}$ used in rule \Rule{BindTryKeep$_1$}
means that the type of the definition does not need to be partitioned:
there is only one part, covering $\Any$, associated with an annotation $\banns_1$ for typing the body.

\begin{mathpar}
  \Infer[BindOk]
  { }
  {
    \Gamma\aavdash\epa{\bindexp \xx a \kappa}{\annotibind{\lanns}{\emptyset}{\sanns}}
    \refines\resok{\annotibind{\lanns}{\emptyset}{\sanns}}
  }
  { }
  \end{mathpar}
  If all the parts of the type decomposition have already been
  explored (i.e., $\varnothing$ in the annotation in the rule above),
  then the reconstruction is successful. Otherwise,
the following rules are applied:
  \begin{mathpar}
    \Infer[BindKeep$_1$]
    {
      \Gamma\paavdash \epa a{\lanns}\refines\annota\\\Gamma\vdashA \ea a{\annota}:s\\
      \Gamma, \xx:s\land\mt\raavdash \epa\kappa\banns\refines\resok{\banns'}\\
      \Gamma\aavdash\epa{\bindexp \xx a \kappa}{\annotibind{\lanns}{\sanns}{\{(\mt,\banns')\}\cup\sanns'}}\refines\results
    }
    {
      \Gamma\aavdash\epa{\bindexp \xx a \kappa}{\annotibind{\lanns}{\{(\mt,\banns)\}\cup \sanns}{\sanns'}}
      \refines{\results}
    }
    { }
\end{mathpar}\begin{mathpar}
  \Infer[BindKeep$_2$]
  {
    \Gamma\paavdash \epa a{\lanns}\refines\annota\\\Gamma\vdashA \ea a{\annota}:s\\
    \Gamma, \xx:s\land\mt\raavdash\epa{\kappa}{\banns}\refines\respart{\Gamma'}{\banns_1}{\banns_2}\\
    \xx\in\dom{\Gamma'}\\
    \Gamma\eaavdash\ety{a}{\neg (\mt\land\Gamma'(\xx))}\refines\Gammas_1\\
    \Gamma\eaavdash\ety{a}{\neg (\mt\setminus\Gamma'(\xx))}\refines\Gammas_2
  }
  {
    \Gamma\aavdash\epa{\bindexp \xx a \kappa}{\annotibind{\lanns}{\{(\mt,\banns)\}\cup \sanns}{\sanns'}}
    \refines\respart{\Gamma'\setminus\xx}{\banns_1'}{\banns_2'}
  }
  { }
\end{mathpar}
with, in the last rule,
$\banns_1'=\annotiprop{\lanns}{\Gammas_1\cup\Gammas_2}{\{(\mt\land\Gamma'(\xx),\banns_1),\,
(\mt\setminus\Gamma'(\xx),\banns_2)\}\cup\sanns}{\sanns'}$ and $\banns_2'=\annotibind{\lanns}{\{(\mt,\banns_2)\}\cup \sanns}{\sanns'}$

\medskip
In both rules, the definition of the binding is typed using the annotation $\lanns$.
For that, it is first converted into an annotation $\annota$ of
the algorithmic type system, using the deduction rules for the judgment $\Gamma\paavdash \epa
a{\lanns}\refines\annota$, defined in Section~\ref{sec:auxreconstruction}.
Then, the type $s$ obtained for the definition is
intersected with one of the parts of the type decomposition, according to the second argument
of the \texttt{keep}() annotation (i.e., $\{(\mt,\banns)\}\cup \sanns$ in both rules),
and the corresponding annotation for the body is reconstructed recursively.
Note that, since split annotations are sets, then the order in which the parts are explored is arbitrary.

The rule \Rule{BindKeep$_1$} for an annotation $\annotibind{\lanns}{\sanns}{\sanns'}$  is responsible for moving a branch from $\sanns$ to $\sanns'$
when the result for the branch is $\resok{}$. If instead the
reconstruction of the body requires to further split the type of $\xx$,
then the rule \Rule{BindKeep$_2$} splits the current branch into two branches.
However, before exploring these two branches, some information about the split needs to be propagated,
to ensure that when a split is explored, it is under a context as precise as possible.

Let us explain this by an example.
Assume  we have a polymorphic primitive function $\texttt{id}$ of type $\polyvar\to\polyvar$
and an initial environment $\Gamma=\{ x:\Bool \}$. We want  to type the following canonical form,
and deduce for it the type $\True$ (since $\xx$ and $\yy$ are always bound
to the same value):
\[ \bindexp{\xx}{x}{\bindexp{\yy}{\texttt{id } \xx}{\bindexp{\zz}{\tcase{\yy}{\True}{\xx}{\true}}{\zz}}} \]
At some point, the partition associated to $\yy$ will change from $\{\Any\}$ to $\{\True,\neg\True\}$
because of the type-case (rule \Rule{CaseSplit}).
However, if the case corresponding to $(\yy:\True)$ is immediately explored, it will yield for the body
the type $\Bool$, because $\xx$ still has the type $\Bool$ in the environment. In order to obtain
the more precise type $\True$, we must deduce, before exploring the case $(\yy:\True)$,
that when $\texttt{id}\ \xx$ (the definition of $\yy$) has type
$\True$, then $\xx$ also has type $\True$. Knowing that,
the type of $\xx$ should be split accordingly into $\{\True,\neg\True\}$.

This mechanism of backward propagation of splits is initiated in the \Rule{BindKeep$_2$} rule
with the two premises $\Gamma\eaavdash\ety{a}{\neg (\mt\land\Gamma'(\xx))}\refines\Gammas_1$ and
$\Gamma\eaavdash\ety{a}{\neg (\mt\setminus\Gamma'(\xx))}\refines\Gammas_2$.
This auxiliary judgment $\Gamma\eaavdash\ety{a}{\mt}\refines\Gammas$, defined in Appendix~\ref{sec:envrefinement-appendix}, can be read as follows:
\textit{refining the current environment $\Gamma$ with one of the $\Gamma'\in\Gammas$ ensures that the atom $a$ will have type \textup{$\mt$}}.
The refinements we obtain are stored in the annotation of the binding, using an annotation
$\annotiprop{\lanns}{\Gammas}{\sanns}{\sanns'}$.
This annotation is handled by two other rules (omitted here) whose role is to propagate these refinements
one after the other using successive $\respart{\Gamma'}{\banns_1}{\banns_2}$ results (with $\Gamma'\in\Gammas$),
before finally restoring a $\annotibind{\lanns}{\sanns}{\sanns'}$ annotation.

\begin{mathpar}
  \Infer[InterEmpty]
  { }
  {
      \Gamma\aavdash \epa{\kappaa}{\annotiinter{\emptyset}{\emptyset}} \refines \resfail
  }
  { }
  \quad
  \Infer[InterOk]
  { }
  {
      \Gamma\aavdash \epa{\kappaa}{\annotiinter{\emptyset}{S}} \refines
      \resok{\annotiinter{\emptyset}{S}}
  }
  { }
  \end{mathpar}\begin{mathpar}
%\\
  \Infer[Inter$_1$]
  {
      \Gamma\raavdash \epa\kappaa{\aanns}\refines\resok{\aanns'}\\
      \Gamma\aavdash \epa{\kappaa}{\annotiinter{S}{\{\aanns'\}\cup S'}}\refines\results
  }
  {
      \Gamma\aavdash \epa{\kappaa}{\annotiinter{\{\aanns\}\cup S}{S'}} \refines\results
  }
  { }
\end{mathpar}\begin{mathpar}
  \Infer[Inter$_2$]
  {
      \Gamma\raavdash \epa\kappaa{\aanns}\refines\resfail\\
      \Gamma\aavdash \epa{\kappaa}{\annotiinter{S}{S'}}\refines\results
  }
  {
      \Gamma\aavdash \epa{\kappaa}{\annotiinter{\{\aanns\}\cup S}{S'}} \refines\results
  }
  { }
\end{mathpar}\begin{mathpar}
  \Infer[Inter$_3$]
  {
      \Gamma\raavdash \epa\kappaa\aanns\refines\results
  }
  {
      \Gamma\aavdash \epa{\kappaa}{\annotiinter{\{\aanns\}\cup S}{S'}} \refines
      \mapres{\results}{X}{{(\annotiinter{\{X\}\cup S}{S'})}}
  }
  { }
\end{mathpar}

Intersection annotations are introduced
by the $\raavdash$ judgments defined below.
In an intersection annotation $\annotiinter{S}{S'}$,
the annotations in $S'$ are fully processed (i.e., the associated
reconstruction returned $\resok{}$),
while the annotations in $S$ are not: they still have to be refined one after the other (rule \Rule{Inter$_3$}).
If one of them becomes fully processed, it is moved in $S'$ (rule \Rule{Inter$_1$}).
Conversely, if one of them fails, it is removed (rule \Rule{Inter$_2$}).
The process stops when $S$ is empty: then, the reconstruction fails if $S'$ is empty (rule \Rule{InterEmpty}),
and succeed otherwise (rule \Rule{InterOk}).

Finally, we formalize the rules for the judgments $\raavdash$. As said earlier,
the purpose of $\raavdash$ is to repeatedly call $\aavdash$ judgments
so that, in the end, we obtain a terminal result.
\begin{mathpar}
  \Infer[Iterate$_1$]
  { \Gamma\aavdash\epa{\kappaa}{\aanns}\refines\respart{\Gamma'}{\aanns_1}{\aanns_2}\\
    \Gamma\raavdash\epa{\kappaa}{\aanns_1}\refines\results
  }
  { \Gamma\raavdash\epa{\kappaa}{\aanns}\refines\results }
  {\Gamma'=\emptyenv}
%  \\
\end{mathpar}\begin{mathpar}
  \Infer[Iterate$_2$]
  { 
    \Gamma\aavdash \epa\kappaa\aanns\refines\ressubst{\{\msubst_i\}_{i\in I}}{\aanns_1}{\aanns_2}\\
    \Gamma\raavdash\epa{\kappaa}{\annotiinter{\{{\aanns_1\msubst_i}\}_{i\in I}\cup\{\aanns_2\}}{\emptyset}}\refines\results
  }
  { \Gamma\raavdash\epa{\kappaa}{\aanns}\refines\results }
  {\forall i\in I.\ \msubst_i\disjoint\Gamma }
\end{mathpar}
The iteration continues as long as it yields non-terminal results that are
immediately usable, that is, either  they return a trivial split (i.e.,
$\Gamma'=\emptyenv$) as in rule \Rule{Iterate$_1$}, or they return
substitutions that do not affect the current environment (i.e.,
$\msubst_i\disjoint\Gamma$) as in rule \Rule{Iterate$_2$}. For the
latter rule, the iteration may need to introduce an intersection
annotation (useless when $I$ is empty) in order to explore all the cases of a
$\ressubst{\{\msubst_i\}_{i\in I}}{\aanns_1}{\aanns_2}$ result (where
$\aanns\msubst$ is the intermediate annotation $\aanns$ in which the
substitution $\msubst$ has been applied recursively to every type in it).
An important special case of
the \Rule{Iterate$_2$} rule is when $I=\varnothing$: in that case the
iteration continues by trying to type $\kappaa$ with the default
annotation $\aanns_2$ and the current environment $\Gamma$. For
instance, this special case triggers
a \Rule{CaseVar$_1$} after a \Rule{CaseThen} and a \Rule{CaseVar$_2$}
after a \Rule{CaseElse}.

If the result is already terminal or if it is not immediately usable,
then it is directly returned:
\begin{mathpar}
  \Infer[Stop]
  { \Gamma\aavdash\epa{\kappaa}{\aanns}\refines\results }
  { \Gamma\raavdash\epa{\kappaa}{\aanns}\refines\results }
  { }
\end{mathpar}
In particular, if $\results = \respart{\Gamma'}{\aanns_1}{\aanns_2}$
where $\Gamma'\not=\varnothing$ (i.e., \Rule{Iterate$_1$} does not
apply), then \Rule{Stop} backtracks until $\Gamma'$ becomes empty;
likewise if  $\results =\ressubst{\{\msubst_i\}_{i\in
I}}{\aanns_1}{\aanns_2}$ and $\Gamma\psi_i\not\simeq\Gamma$ for some $i$ (i.e. \Rule{Iterate$_2$} does not
apply), then \Rule{Stop} backtracks until it exits the scope of
the binders of the
variables that make the side condition of  \Rule{Iterate$_2$} fail.

\subsection{Auxiliary Reconstruction Algorithm}\label{sec:auxreconstruction}

The auxiliary reconstruction algorithm defined in this section
converts an intermediate annotation of the main reconstruction system into an annotation
for the algorithmic type system.
For that, it needs to retrieve the polymorphic substitutions $\Sigma$
needed to type the atoms.

Formally, the algorithm takes as input an environment $\Gamma$, an atom or canonical form $\kappaa$,
and an intermediate annotation $\aanns$, and produces an annotation $\aannot$ for the algorithmic type system.
It is presented as a deduction rule system for  judgments of the form $\Gamma\paavdash\epa{\kappaa}{\aanns}\refines\aannot$.
%Note that the deduction rules below can fail to produce a judgment, in particular
%if the intermediate annotation $\aanns$ is not terminal.
Some rules have been omitted for concision (they can be found in
Appendix~\ref{sec:auxreconstruction-appendix}): for instance,
%
% \begin{mathpar}
%   \Infer[Const]
%   { }
%   {\Gamma\paavdash \epa c{\annotityp}\refines\annotconst}
%   { }
%   \qquad
%   \Infer[Ax]
%   { }
%   {\Gamma\paavdash \epa x{\annotityp}\refines\annotvara}
%   { x\in\dom\Gamma }
% \end{mathpar}
%
the rules for constants and axioms are omitted since straightforward, as they just  transform an intermediate annotation $\annotityp$
into an annotation $\annotvara$ for the algorithmic type system; 
likewise, the rules for $\lambda$-abstractions and intersections are straightforward
and have been omitted, since they just proceed recursively
on their children annotations. The most important rule for this system
is the one for applications:
\begin{mathpar}
  \Infer[App]
  {
      t_1=\Gamma(\xx_1)\\t_2=\Gamma(\xx_2)\\
      \renaming_1=\arenaming{t_1}\\\renaming_2=\arenaming{t_2}\\
      \tally{\Sigma}{t_1\renaming_1}{t_2\renaming_2\to \polyvar}
  }
  {\Gamma\paavdash \epa{{\xx_1}{\xx_2}}{\annotityp}\refines
  \annotapp{\{\sigma\circ\renaming_1\,\alt\,\sigma\in\Sigma\}}{\{\sigma\circ\renaming_2\,\alt\,\sigma\in\Sigma\}}}
  {\begin{array}{l}\Sigma\neq\emptyset\\[-1mm] \polyvar\in\polyvars\text{ fresh}\end{array} }
\end{mathpar}
where $\arenaming{t}$ returns a renaming from $\vars t\cap\polyvars$ to fresh polymorphic variables.

For applications, an annotation of the form $\annotapp{\Sigma_1}{\Sigma_2}$ must be produced.
In order to find some instantiations $\Sigma_1$ and $\Sigma_2$ (for
$\xx_1$ and $\xx_2$ respectively) that make the application typable,
the \Rule{App} rule solves the tallying instance $\tallyf{t_1\renaming_1}{t_2\renaming_2\to \polyvar}$.
The purpose of $\renaming_1$ and $\renaming_2$ is to decorrelate type variables in $\Gamma(\xx_1)$
and in $\Gamma(\xx_2)$. For instance, assume we want to reconstruct
the instantiations for the atom ``$\xx\,\xx$''
with $\Gamma(\xx)=\polyvarb\to\polyvarb$.
{\color{modification}
The tallying instance  $\tallyf{\polyvarb\to\polyvarb}{(\polyvarb\to\polyvarb)\to\polyvar}$ yields only a very specific, uninteresting solution (i.e., $\polyvar=\polyvarb=\mu X.X\to X$)\footnote{\color{modification}The solution is not interesting since it is the one that allows any simply typed system to type all pure lambda terms. We do not need this recursive type to type, say, the application of the polymorphic identity function to itself.} 
because of the use of the same type variable $\polyvarb$ on both sides of $\dot\leq$. But each occurrence of $\xx$ has a polymorphic type that can be instantiated independently. Thus, we remove this useless and constraining dependency by refreshing the generic type variables yielding
 $\tallyf{\polyvarb'\to\polyvarb'}{(\polyvarb\to\polyvarb)\to\polyvar}$ which has interesting solutions, in particular  
$\{ \polyvarb'\leadsto\polyvarb\to\polyvarb \;;\; \polyvar\leadsto\polyvarb\to\polyvarb \}$.}
The side-condition $\Sigma\neq\emptyset$ ensures that the tallying instance has at least one solution
(otherwise the annotation produced would be invalid).
%
% \begin{mathpar}
%   \Infer[Pair]
%   { \renaming_1=\arenaming{\Gamma(\xx_1)}\\\renaming_2=\arenaming{\Gamma(\xx_2)} }
%   {\Gamma\paavdash \epa{(\xx_1,\xx_2)}{\annotityp}\refines\annotpair{\renaming_1}{\renaming_2}}
%   { }
%   \\
%   \Infer[Proj]
%   {
%       \tally{\Sigma}{\Gamma(\xx)}{\polyvar\times\polyvarb}\\\Sigma\neq\emptyset
%   }
%   {\Gamma\paavdash \epa{\pi_i \xx}{\annotityp}\refines\annotproj{\Sigma}}
%   { \polyvar, \polyvarb\in\polyvars\text{ fresh} }
% \end{mathpar}
%
% \begin{mathpar}
%   \Infer[Case$_\Empty$]
%   { \sigma\in\tallyf{\Gamma(\xx)}{\Empty} }
%   {\Gamma\paavdash \epa{\tcase {\xx} \tau {\xx_1}{\xx_2}}{\annotityp}\refines\annotempty{\{\sigma\}}}
%   { }
%   \\
%   \Infer[Case$_1$]
%   { \sigma\in\tallyf{\Gamma(\xx)}{\tau}\quad\xx_1\in\dom\Gamma }
%   {\Gamma\paavdash \epa{\tcase {\xx} \tau {\xx_1}{\xx_2}}{\annotityp}\refines\annotthen{\{\sigma\}}}
%   { }
%   \\
%   \Infer[Case$_2$]
%   { \sigma\in\tallyf{\Gamma(\xx)}{\neg\tau}\quad\xx_2\in\dom\Gamma }
%   {\Gamma\paavdash \epa{\tcase {\xx} \tau {\xx_1}{\xx_2}}{\annotityp}\refines\annotelse{\{\sigma\}}}
%   { }
% \end{mathpar}
%
The rules for projections, pairs, and type-cases are similar and, thus, omitted.
%
% \begin{mathpar}
%   \Infer[Lambda]
%   {\Gamma,x:\ms\paavdash \epa\kappa{\banns}\refines\annot}
%   {\Gamma\paavdash\epa{\lambda x.\kappa}{\annotilambda{\ms}{\banns}}
%     \refines \annotlambda{\ms}{\annot}}
%   { }
% \end{mathpar}
%
% \begin{mathpar}
%   \Infer[Inter]
%   {
%   {\small(\forall i\in I)}\quad\Gamma\paavdash \epa\kappaa{\aanns_i}\refines\aannot_i
%   }
%   {
%     \Gamma\paavdash\epa{\kappaa}{\annotiinter{\emptyset}{\{\aanns_i\}_{i\in I}}}
%     \refines \annotinter {\{\aannot_i\}_{i\in I}}
%   }
%   { I \neq \emptyset }
% \end{mathpar}
%
% The rule \Rule{Inter} takes as input an annotation $\annotiinter{S_1}{S_2}$ and
% recursively build annotations for each branch in $S_2$. It requires $S_1=\varnothing$ because
% the intermediate annotation given as input should be in a terminal state (all branches should have been fully explored
% by the main reconstruction algorithm).
% The rule for $\lambda$-abstractions is similar and has been omitted: it just proceeds recursively on the body of the $\lambda$-abstraction. 
%
\begin{mathpar}
  % \Infer[Var]
  % { \renaming=\arenaming{\Gamma(\xx)} }
  % { \Gamma\paavdash \epa\xx \annotityp: \annotvar\renaming }
  % { }
  % \quad
  % \Infer[BindSkip]
  % {\Gamma\paavdash \epa\kappa\banns\refines \annot}
  % {
  % \Gamma\paavdash\epa{\bindexp {\xx} {a} {\kappa}} {\annotiskip \banns}\refines\annotskip\annot
  % }
  % { \xx\not\in\dom\Gamma }
  % \\
  \Infer[BindKeep]
  {
  \Gamma\paavdash \epa a \lanns\refines\annota\qquad
  \Gamma\vdashA \ea a \annota:s\qquad 
  {\small(\forall i\in I)}\ \ \Gamma,\xx:s\land\mt_i\paavdash \epa\kappa{\banns_i}\refines \annot_i\\
  }
  {
  \Gamma\paavdash\epa{\bindexp {\xx} {a} {\kappa}} {\annotibind \lanns{\emptyset}{\{(\mt_i,\banns_i)\}_{i\in I}}}
  \refines \annotbind \annota {\{(\mt_i,\annot_i)\}_{i\in I}}
  }{(*)}
\end{mathpar}
(where $(*)$ is $\tbvee_{i\in I}\mt_i\simeq\Any$).
The rule \Rule{BindKeep} takes as input an intermediate annotation $\annotibind{\lanns}{\sanns}{\sanns'}$,
with $\sanns=\varnothing$, since all branches must have been fully explored by the main reconstruction algorithm.
The rule recursively transforms the intermediate annotation $\lanns$ for the definition $a$ into an annotation $\annota$
for the algorithmic type system, and uses it to type $a$.
It can then update the environment and proceed recursively on the body $\kappa$,
for each branch in $\sanns'$.

% LocalWords:  decompositions

{\color{modification}
\subsection{Properties of the Reconstruction Algorithm}\label{sec:propreco}
As recalled at the beginning of the section, reconstruction
is sound, terminating, but incomplete.

\begin{theorem}[Soundness]
  If $\Gamma\paavdash\epa{\kappa}{\banns}\refines\annot$, then $\exists t.\ \Gamma\vdashA\ea{\kappa}{\annot}:t$.
\end{theorem}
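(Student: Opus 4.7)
The plan is to prove soundness by a straightforward mutual induction on the derivation of the auxiliary reconstruction judgment, handling both atom judgments $\Gamma\paavdash\epa{a}{\lanns}\refines\annota$ and form judgments $\Gamma\paavdash\epa{\kappa}{\banns}\refines\annot$ simultaneously, since applications and bindings cross between the two syntactic categories. For each rule of the auxiliary reconstruction system (given in Appendix~\ref{sec:auxreconstruction-appendix}), I must exhibit a derivation in the algorithmic type system of Section~\ref{sec:algosystem} that uses the rule of the same name, together with a witness type $t$.

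First I would dispatch the straightforward base cases. For constants, variables and $\annotvara$-annotations, the produced algorithmic annotation is immediately typable by \Rule{Const\AR}, \Rule{Ax\AR}, or the rules for $\lambda$-bound variables, with $t$ being the appropriate basic or environment type. The structural rules ($\lambda$-abstractions, pairs, intersections, skip, and the typecase selection rules) follow directly from the induction hypotheses applied to the premises, simply rebuilding the corresponding algorithmic derivation on top.

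The main obstacle is the \Rule{App} rule (and the analogous rules for projections and typecases, whose treatment mirrors it). Here the algorithm calls tallying on $t_1\renaming_1\;\dot\leq\;t_2\renaming_2 \to \polyvar$ with $\polyvar$ fresh, obtains $\Sigma\neq\emptyset$, and produces $\Sigma_1 = \{\sigma\circ\renaming_1 \alt \sigma\in\Sigma\}$ and $\Sigma_2 = \{\sigma\circ\renaming_2 \alt \sigma\in\Sigma\}$. To apply \Rule{$\to$E\AR} I must check its two side conditions: that $\Gamma(\xx_1)\Sigma_1 \leq \Empty\to\Any$, and that $\Gamma(\xx_2)\Sigma_2 \leq \dom{\Gamma(\xx_1)\Sigma_1}$. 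Both follow from the specification of $\tallyff$: every $\sigma\in\Sigma$ is a solution, so $t_1\renaming_1\sigma \leq (t_2\renaming_2\to\polyvar)\sigma = t_2\renaming_2\sigma \to \polyvar\sigma$ holds for each $\sigma$; intersecting over $\Sigma$ yields $\Gamma(\xx_1)\Sigma_1 \leq \bigwedge_{\sigma\in\Sigma}(t_2\renaming_2\sigma \to \polyvar\sigma)$, which is a (non-empty) intersection of arrows and thus a subtype of $\Empty\to\Any$, with domain at least $\bigvee_{\sigma\in\Sigma} t_2\renaming_2\sigma = \Gamma(\xx_2)\Sigma_2$. The side condition $\Sigma\neq\emptyset$ is exactly what ensures that the resulting instantiation is well-defined. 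The witness type is then $t_1 \circ t_2$ with the above $t_1, t_2$. Key here is that the renamings $\renaming_1,\renaming_2$ only rename polymorphic variables injectively and therefore preserve typeability and subtyping, so the decorrelation step is harmless.

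Finally, for \Rule{BindKeep} I apply the induction hypothesis to the premise for $a$ to obtain an algorithmic derivation $\Gamma\vdashA\ea a\annota:s$, and to each premise for the branches to obtain $\Gamma,\xx{:}s\land\mt_i \vdashA \ea\kappa{\annot_i}:t_i$. The side condition $\bigvee_{i\in I}\mt_i\simeq\Any$ required by the algorithmic rule \Rule{Bind$_2$\AR} is precisely the side condition $(*)$ of \Rule{BindKeep}, so the algorithmic rule applies and yields the witness $t = \bigvee_{i\in I} t_i$. The omitted rules for pairs, projections and typecases follow the same tallying-and-reconstruct pattern as \Rule{App}, and the intersection rule composes the per-branch derivations via \Rule{$\land$\AR}. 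Thus in every case a type $t$ is exhibited, completing the induction.
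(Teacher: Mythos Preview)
Your proposal is correct and follows essentially the same route as the paper: a mutual structural induction on the auxiliary-reconstruction derivation, with the only non-trivial case being \Rule{App}, where the side conditions of \Rule{$\to$E\AR} are discharged from the tallying specification together with the antitonicity of $\dom{\cdot}$. One small slip: you write $\bigvee_{\sigma\in\Sigma} t_2\renaming_2\sigma = \Gamma(\xx_2)\Sigma_2$, but by definition $t\Sigma = \bigwedge_{\sigma\in\Sigma} t\sigma$, so $\Gamma(\xx_2)\Sigma_2$ is the \emph{intersection}; the argument still goes through since $\bigwedge_\sigma t_2\renaming_2\sigma \leq \bigvee_\sigma t_2\renaming_2\sigma \leq \dom{\Gamma(\xx_1)\Sigma_1}$.
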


\begin{theorem}[Termination]\label{rec_term}
    The deduction rules $\raavdash$ and $\aavdash$ define a terminating algorithm:
    it can either fail (if no rule applies at some point) or return a result $\results$.
\end{theorem}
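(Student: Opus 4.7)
The plan is to find a well-founded lexicographic measure on configurations $(\Gamma, \kappaa, \aanns)$ that strictly decreases at every recursive invocation of $\aavdash$ or $\raavdash$, so that any maximal derivation either reaches a terminal result ($\resok{}$ or $\resfail$) or dies because no rule applies. Either way, termination follows in finite time.

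First, I would settle termination of the auxiliary routines on which the reconstruction relies. The tallying algorithm terminates by~\citet{polyduce2}; the type operators $\dom{}$, $\circ$, $\bpl{}$, $\bpr{}$ are effectively computable (Appendix~\ref{sec:typeop-appendix}); and the auxiliary reconstruction $\paavdash$ and the environment-refinement judgment $\eaavdash$ each terminate by a straightforward structural induction on their annotation and atom arguments, since neither judgment reopens already-processed sub-parts.

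Next I would define $\mu(\Gamma,\kappaa,\aanns)$ with three lexicographically ordered components: (i) a bound on the number of further splits that may be applied to types of variables in $\Gamma$, derived from the finite collection of test types $\tau$ syntactically occurring in $\kappaa$; (ii) the syntactic size of $\kappaa$; (iii) a refinement degree of $\aanns$ that sums the number of $\annotiinfer$ placeholders, the cardinality of the pending split set $\sanns$ in each $\annotibind{\lanns}{\sanns}{\sanns'}$, and the cardinality of the pending $\Gammas$ in each $\annotiprop{\lanns}{\Gammas}{\sanns}{\sanns'}$. I would then case-analyse every rule: the structural-recursion rules (\Rule{Lambda}, \Rule{BindKeep$_1$}, \Rule{BindKeep$_2$}) strictly decrease (ii); the placeholder-expansion rules (\Rule{LambdaInfer}, \Rule{BindInfer}, \Rule{BindTrySkip$_1$}, \Rule{BindTryKeep$_{1,2}$}, \Rule{AppInfer}, \Rule{CaseThen}, \Rule{CaseElse}, and the intersection-handling rules) preserve (i)–(ii) but strictly decrease (iii) by replacing a placeholder or consuming a pending slot with a strictly more refined or terminal sub-annotation. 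The iteration rules of $\raavdash$ relaunch $\aavdash$ on a strictly smaller configuration: \Rule{Iterate$_1$} applies only when $\Gamma'=\emptyenv$, so it just feeds back the refined annotation produced by the preceding $\aavdash$ step; \Rule{Iterate$_2$} packages the finite substitution set into an intersection annotation whose children each have strictly smaller refinement degree, and whose environments either coincide with $\Gamma$ (by the side-condition $\msubst_i\disjoint\Gamma$) or are not recursed upon. Finally, \Rule{Stop} returns directly, contributing nothing to the recursion.

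The main obstacle will be the well-foundedness of component (i), and in particular the \Rule{BindKeep$_2$}/$\annotiprop$ interaction, which is the only mechanism that refines the types of already-bound variables through back-propagation. The argument I would use is that every refinement produced by \Rule{BindKeep$_2$} has the form $\Gamma(\xx)\wedge\mt$ or $\Gamma(\xx)\setminus\mt$ for some $\mt$ semantically expressible as a Boolean combination of the test types syntactically appearing in $\kappaa$, together with the types of the atoms bound in the MSC-form (themselves built from the same finite pool). Hence each bound variable can be refined only finitely many times before its type reaches a semantic fixpoint (or becomes $\Empty$, in which case the corresponding branch is immediately discarded). This bounds (i), closes the lexicographic argument, and delivers termination.
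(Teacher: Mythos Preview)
Your proposal has a genuine gap in the handling of duplication. Rules such as \Rule{Iterate$_2$} and \Rule{BindKeep$_2$} (via \Rule{BindKeep$_3$}) do not merely refine an annotation: they \emph{copy} unprocessed material. In \Rule{Iterate$_2$}, the annotation $\aanns$ becomes an intersection with $|I|+1$ children; each child $\aanns_1\msubst_i$ has the same structure as $\aanns_1$ (type substitution does not consume placeholders or pending split entries), and these $\aanns_1,\aanns_2$ have been wrapped on the way up by \Rule{BindKeep$_3$}, \Rule{Lambda}, etc., so they carry verbatim copies of the pending $\sanns$ of every enclosing \texttt{keep}. Your component~(iii)---a natural-number count of placeholders and pending splits---therefore \emph{increases} at these steps, and components~(i) and~(ii) are unchanged (same $\Gamma$ by the side-condition $\msubst_i\disjoint\Gamma$, same $\kappaa$). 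No lexicographic product of natural-number measures can absorb this kind of multiplicative blow-up; this is exactly the Kirby--Paris hydra phenomenon.

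The paper's proof addresses precisely this point by assigning an \emph{ordinal} weight $\texttt{w}(\Gamma,\kappaa,\aanns)$, using ordinal exponentiation $\omega^{\alpha}$ at every $\lambda$, \texttt{skip}, \texttt{keep}, \texttt{try-keep}, and \texttt{try-skip} node, and ordinal sums (in decreasing order) over the children of intersections and split sets. The crucial property of ordinals is that replacing one summand by finitely many \emph{strictly smaller} summands (and re-sorting) yields a strictly smaller sum; then $\omega^{(\cdot)}$ lifts this to the enclosing level. Your component~(i) is also problematic: the refinements produced by $\eaavdash$ on applications come from tallying and may contain fresh monomorphic variables, so they are not Boolean combinations of the test types in $\kappaa$; and even granting a finite pool, the bound you describe is per-variable, whereas the annotation may hold unboundedly many duplicated branches, each entitled to its own sequence of refinements. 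To repair the argument you would need to move to an ordinal (or multiset-of-ordinals) measure, which is essentially what the paper does.
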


The incompleteness of the reconstruction algorithm is inherent to our
system and derives from the lack of principal typing. A simple example
is the curried function \code{map} defined in the third row of Table~\ref{tab:bench} in the next section. Our reconstruction deduces for it the
type $(\alpha\to\beta)\to [\alpha*]\to[\beta*]$ (actually, a slightly
more precise type),
where $[\alpha*]$ is the type of the lists of elements of type
$\alpha$. This states that an application of \code{map} yields a
function that maps lists of $\alpha$'s into lists of $\beta$'s. But for
every natural number $n$, the declarative system can also deduce that
the result maps lists of $\alpha$'s of length $n$ into lists of
$\beta$'s of the same length $n$.
 %% (e.g., for $n=3$,  the type
 %% $(\alpha{\to}\beta)\to
 %% [\alpha\ \alpha\ \alpha]\to[\beta\ \beta\ \beta]$).
Our algorithm can \emph{check}
each of these types, but none of them can be deduced from the type
reconstructed by the algorithm. And since we do not have dependent
types or infinite intersections, then the declarative system cannot
have a principal type expressing all these different derivations.
In other terms, incompleteness stems from the fact that the
declarative system can use all the infinitely many decompositions of
unions in the union elimination rule, and the infinitely many
decompositions of the domain of a
function when reconstructing its type as an intersection of arrows. The
algorithmic counterpart of this, is that there are infinitely many
annotations that the algorithmic system can use to type these
expressions and that these infinite choices cannot be summarized by a
notion of principal annotation: the reconstruction chooses one
particular annotation, and therefore it will miss some solutions.

%% Therefore, it is easy to construct an example that can be typed in the
%% declarative and algorithmic systems but fails in our reconstruction:
%% it suffices to pass the result of a map application to a function that
%% expects a list with exactly a given number of elements.

There is a second source of incompleteness for reconstruction, which is
not inherent to the system, but a design choice, instead: the fact that reconstruction does not perform the so-called
``expansion'' of intersection types. This is shown by the
rule \Rule{App} in Section~\ref{sec:auxreconstruction}, where tally is
applied without expanding the types in the constraint (e.g., if
      $\tallyf{t_1\renaming_1}{t_2\renaming_2\to \polyvar}$
fails we can expand the type of the function and try $\tallyf{t_1\renaming_1 \wedge t_1\renaming_3}{t_2\renaming_2\to \polyvar}$,
and so on and so forth by alternating expansions on the function and on the
argument types: see~\cite[Section~3.2.3]{polyduce2} for more details). The consequence of this is that if you take the
definition of the function \code{filter} given in row 4 of
Table~\ref{tab:bench}, and you remove all type annotations, then the
type reconstructed by the algorithm for it is less precise than the
one specified by the annotations, which could have been reconstructed
if the algorithm had instead expanded the type of the parameter \code{f}.

Despite incompleteness, the declarative rules of
Figure~\ref{fig:declarative} form a reliable guide to which
programs are accepted, provided we bear in mind that the algorithm
approximates data structures according to the tests performed
on them. So, typically, the type reconstructed for a function on
lists, will probably differentiate the cases for empty and not-empty
lists, but not for, say, lists of size 42, unless the function
contains an explicit test for it. This (and to a lesser extent,
expansion) is essentially the main difference with the declarative
system, which has the liberty to deduce the type for the case of lists
of size 42, even if this property is not tested in the body of the
function. In that case, the programmer can still use an explicit type
annotation to \emph{check} that the specific type works.

}

\section{Implementation}
\label{sec:implementation}
% !TEX root = ./main.tex

We have implemented the reconstruction algorithm presented in
Section~\ref{sec:reconstruction}, using the CDuce \cite{cduce} API for
the subtyping and the tallying algorithms. The prototype
is 4500 lines of OCaml code and features several extensions
such as optional type annotations, pattern matching
(cf. Appendix~\ref{sec:extensions-appendix}), records, and a more
user-friendly syntax. It implements some optimizations,
{\color{modification} briefly discussed at the end of this section},
for instance {\color{modification}memoization and a mechanism} to avoid typing redundant branches
when inferring the domains of $\lambda$-abstractions.
\lstset{language=[Objective]Caml,columns=fixed,basicstyle=\linespread{0.43}\ttfamily\scriptsize\color{darkblue},
morekeywords={is},aboveskip=-0.5em,belowskip=-1em,xleftmargin=-0.5em}
%
% \newcolumntype{L}[1]{>{\raggedright\let\newline\\\arraybackslash\hspace{0pt}}m{#1}}
% \newcolumntype{C}[1]{>{\centering\let\newline\\\arraybackslash\hspace{0pt}}m{#1}}
% \newcolumntype{R}[1]{>{\raggedleft\let\newline\\\arraybackslash\hspace{0pt}}m{#1}}
%
\begin{table}[t]%\vspace{-1.8mm}
\caption{Types inferred by the implementation (times are in ms)}\ifspacetrick\vspace{-1mm}\fi
\label{tab:bench}
   {\scriptsize
  \begin{tabular}{|@{\,}c@{\,}|p{0.42\textwidth}|p{0.445\textwidth}|p{0.045\textwidth}|}
\hline
  & Code & Inferred type & Time\\
\hline
1 &
\begin{lstlisting}
type Falsy = False | "" | 0
type Truthy = ~Falsy

let toBoolean x =
    if x is Truthy then true else false

let lOr (x,y) =
    if toBoolean x then x else y

let id x = lOr (x,x)
\end{lstlisting}
&
~\newline
~\newline
%~\newline
$(\Falsy\to\False)\land(\Truthy\to\True)$\newline
~\newline
$(((\alpha\land\Truthy)\times\Any)\to\alpha\land\Truthy)\land$\newline
$((\Falsy\times\beta)\to\beta)$\newline
~\newline
$\alpha\to\alpha$
&
~\newline
~\newline
%~\newline
\hspace*{\fill}3.42\newline
~\newline
\hspace*{\fill}13.31\newline
~\newline
\hspace*{\fill}8.46
\\\hline
2 &
\begin{lstlisting}
let fixpoint = fun f ->
    let delta = fun x ->
        f ( fun  v -> ( x x v ))
    in delta delta
\end{lstlisting}
&
~\newline
$((\beta \to \alpha) \to (\beta \to \alpha) \land \gamma) \to (\beta \to \alpha) \land \gamma$\newline
&
~\newline
\hspace*{\fill}15.37\newline
\\\hline
3 &
\begin{lstlisting}
let map_stub map f lst =
    if lst is Nil then nil
    else (f (fst lst), map f (snd lst))

let map = fixpoint map_stub
\end{lstlisting}
&
~\newline
$\dots$\newline
~\newline
% $((\alpha \to \beta) \to ([ \alpha* ] \to [ \beta* ])) \land$\newline
% $(\Any \to [  ] \to [  ])$
$((\alpha \to \beta) \to ([ \alpha* ] \to [ \beta* ])) \land(\Any \to [  ] \to [  ])$
&
~\newline
\hspace*{\fill}33.03\newline
~\newline
\hspace*{\fill}84.75
\\\hline
4 &
\begin{lstlisting}
let filter_stub filter
        (f: ('a->Any) & ('b -> ~True))
        (l: [('a|'b)*]) =
    if l is Nil then nil
    else if f(fst(l)) is True
    then (fst(l),filter f (snd(l)))
    else filter f (snd(l))

let filter = fixpoint filter_stub
\end{lstlisting}
&
~\newline
~\newline
~\newline
$\dots$\newline
~\newline
~\newline
~\newline
$((\alpha\to\Any){\wedge}(\beta\to\neg\True))\to[(\alpha\vee\beta)*]{\to}[(\alpha{\setminus}\beta)*]$
&
~\newline
~\newline
~\newline
\hspace*{\fill}21.19\newline
~\newline
~\newline
~\newline
\hspace*{\fill}13.83
\\\hline
5 &
\begin{lstlisting}
let rec flatten x = match x with
 | [] -> []
 | h::t -> concat (flatten h) (flatten t)
 | _ -> [x] 
\end{lstlisting}
&
~\newline
{\color{modification}(Tree $\;\to\;$ [($\alpha\setminus$[$\Any*$])$*$]) $\;\;\;\wedge\;\;\;$ ($\beta\setminus$[$\Any*$] $\;\to\;$ [ $\beta\setminus$[$\Any*$] ])
\newline where Tree \;=\; [Tree$*$] $\;\vee\;$ $(\alpha\setminus$[$\Any+$])}
% (Tree\,$\to$\,[($\alpha{\setminus}$[$\Any*$])$*$]) $\wedge$ (Emp\,$\to$\,[\,])
% $\wedge$ (($\alpha{\setminus}$[$\Any$+])\,$\to$\,[$\alpha{\setminus}$[$\Any$+]])
% \newline
%  where Tree = [(($\alpha{\setminus}$[$\Any$+])\,$\vee$\,Tree)$*$] and
% Emp = [Emp$*$]
&
~\newline
\hspace*{\fill}374.41
\\\hline
\end{tabular}
}\ifspacetrick\vspace{-2mm}\fi
\end{table}

We give in Table~\ref{tab:bench} the code of several functions, using a syntax similar
to OCaml, where uppercase identifiers (e.g., \Keyw{True}, \Keyw{Truthy}) denote types
and lowercase identifiers denote variables or constants. For each function we
report its inferred type and the time used to infer it. To enhance
readability we manually curated the types which, thus, may be syntactically
different from (but are semantically equivalent to) the types printed by the prototype.
The experiments were performed on an Intel Core i9-10900KF 3.70GHz CPU.
The code was compiled natively using OCaml 4.14.1. All these examples (and more)
can be tested on the web-based interactive prototype
\ifsubmission
submitted as supplementary material (the examples are preloaded).
\else
hosted at
\url{https://www.cduce.org/dynlang}.
\fi
The web version is compiled to JavaScript using js\_of\_ocaml \cite{jsofocaml},
and is about 8 times slower than the native version.

Code 1 features the examples used in the introduction.

Code 2 implements Curry's fix-point combinator in a call-by-value setting.
Though it is traditionally given the type $((\beta \to \alpha) \to (\beta \to \alpha)) \to (\beta \to \alpha)$,
our prototype infers a slightly more precise type by intersecting the co-domain of the
argument with $\gamma$.

Code 3 shows how to use the fix-point combinator to type recursive functions.
The \texttt{map\_stub} function implements a step of the traditional \texttt{map}
function. The type inferred for this function has been omitted for simplicity.
Then, \texttt{map} is obtained by applying the fixed-point combinator to \texttt{map\_stub}.
Note that $[\alpha*]$ denotes a list of elements of type $\alpha$, and $[]$ denotes an empty list.
The branch $\Any\to []\to []$ may be surprising, but it is correct since the map function
does not use its first argument if the second argument is an empty list.

Code 4 shows how type annotations (cf.\ Appendix~\ref{sec:tconstr})
can be used to infer more precise types: when the \code{filter}
function is applied to a characteristic
function for the set
$\alpha\vee\beta$ whose type precises that the elements in $\beta$ do not satisfy the
predicate, then the inferred type has these elements removed from the
type of the result.

%% Code 4 shows how type annotations (cf.\ Appendix~\ref{sec:tconstr})
%% can be used to infer more precise types: when the \code{filter}
%% function applied to the characteristic function of a set $\alpha$ and
%% its type precises that the elements in $\beta$ do not satisfy the
%% predicate, then the inferred type has these elements removed from the
%% result.

{\color{modification} The grammar for expressions in Figure~\ref{fig:syntax} does not
include recursive functions, since \emph{from a theoretical viewpoint} they are useless: \citet[page 356]{Mil78} justifies the
addition of a ``$\textsf{fix}\,x.e$'' expression by the
fact that his system cannot type Curry's fixpoint combinator, but, as
explained in Section~\ref{sec:outline} and shown by Code 2 above, our system
can. However, \emph{from a practical viewpoint}, the use
of \code{let\,rec} definitions instead of fixed-points combinators may
dramatically improve the speed of reconstruction, which is why the previous
definitions of \code{map} and \code{filter} with a fixed-point combinator must be considered  just as  stress tests for our
reconstruction algorithm. For recursive functions we implemented
classic \code{let\,rec} definitions, for which the reconstruction
takes the arity of the function into account.
%This may improve the speed of reconstruction of
%unannotated functions,
% This is the case, for
% instance, for the \code{let\,rec} version of \code{filter} for which
% our implementation returns a type that subsumes the one inferred by ML (it is a big
% intersection which, besides the ML type, enumerates some specific cases for the input, e.g.,
% empty list, non-empty list, function always returning true etc.), but
% which fails with the fixpoint combinator.
%{\color{modification}such as Code 5, whose inference time decreases from
%over a second to 374ms}.
Code 5 shows the use of \code{let\,rec} and of pattern matching
(cf. Appendix~\ref{sec:pat}) and is an example of the improvement
brought by \code{let\,rec} definitions: reconstruction for the same
definition but with        a fixpoint
combinator is four times slower.
}
The code defines the deep flatten function that
transforms arbitrary nested lists into the list of their elements (where
\code{concat} is a function of type $[\alpha*]\to[\beta*]\to[(\alpha*)
(\beta*)]$, the result being the type of lists starting with
$\alpha$ elements and ending with $\beta$ ones).
\citet{gre19} considers this function to be the ultimate test for
any type system: as he explains, this simple polymorphic
function
defies all type systems since of all existing languages, none can
reconstruct a type for it and only a couple of languages can check its explicitly typed version:
CDuce  and
Haskell (the latter by resorting to complex metaprogramming
constructions). Our system reconstructs a  precise type
for \code{flatten} as shown by the
first arrow in its intersection type, which states that \code{flatten} is a
function that takes a tree (i.e.,
% a list of elements that are either
% trees or values different from non-empty lists
{\color{modification}either a list of elements that are trees, or a value different from a list})
and returns the list
of elements of the tree that are not lists;
{\color{modification}the other arrow of the intersection states that
when flatten is applied to an element different from a list, then it returns
the list containing only that element}.

{\color{modification} Our prototype focuses on proximity with the
inference system for reconstruction, rather than on performance: we
used it mainly to explore and test our system, which is why it is
implemented in a purely functional style with persistent data
structures (so as to simulate the reconstruction inference rules).
Nonetheless, a few optimizations were implemented in order to mitigate
the cost of backtracking and branching.  One source of inefficiency comes from
the intersection nodes that are generated when a destructor is
reconstructed. This generation can lead to an explosion of the number
of branches to explore, even though many of these branches are
redundant. In the prototype, this is mitigated by recording, for each
$\lambda$-abstraction, the domains already explored for it, and by
trimming branches that do not explore new combinations of domains.

Another source of inefficiency comes from the type decompositions performed
after each binding. Although these type decompositions are usually
small (e.g., the type of a binding is seldom split in more than two parts),
%only one element if the type of a variable is never split),
it becomes an issue when typing large
expressions with multiple type-cases.
%, since these small decompositions compose.
For instance, a preliminary and unoptimized
implementation of the reconstruction algorithm took about 40 seconds to type the
\code{bal}(ance) function used in the module \code{Map} of the OCaml standard
library, that contains 6 different pattern matches and 4
type-cases~\cite{ocamlmap}. Adding a simple memoization mechanism that prevents the
reconstruction from retyping an atom several times for equivalent contexts,
decreased the inference time down to 4 seconds.

While these simple optimizations significantly improve  performance,
they are still far from what would be considered acceptable for  real applications.
To be used in mainstream languages, the type system will have to be adapted
and restricted so as to ensure better and uniform performance. 
To this purpose, we believe that some more language-oriented
optimization techniques could be of help.
An example is what the development team of Luau \cite{Luau}
did on the occasion of its recent switch to semantic subtyping \citep{luausemsub}.
The developers did this switch by implementing a two-phase approach:
first, a sound syntactic system, fast but imprecise, is used to try to prove subtyping,
and only if it fails, the computationally expensive semantic subtyping inference is used.
% (Luau targets the game developers' community, which is quite picky on system responsiveness).
We think not only that such a staged approach could be applied in our case,
but also that the partial results of the first phase could be used to improve the performance of the later phases,
as in the case of the \code{let\,rec}, where knowing the arity of the defined function
improves the performance of the reconstruction.
This could be further coupled with slicing, meaning that our type reconstruction
could be applied to very delimited regions that would bound the possibility of backtracking.
These techniques are language-dependent, and quite different from the
algorithmic aspects developed here,
though they will completely rely on it. We plan to explore them in
future work.
}

\section{Related work}
\label{sec:related}

This work can be seen as a polymorphic extension of \cite{occtyp22} from which
it borrows some key notions, such as $(i)$ the combination of the union
elimination rule (from \cite{BDD95}) with three rules for type-cases, in order
to capture the essence of occurrence typing (\cite{THF08}), $(ii)$ the use of
MSC forms to drive the application of the union elimination rule, and $(iii)$
the use of annotations in the algorithmic type system. However, the introduction
of polymorphic types greatly modifies the meta-theory. Besides its influence on
the union elimination rule, the interplay between intersection, union
elimination and instantiation suggests a different style of type
annotations, to be amenable to type inference. We use external annotations while
\cite{occtyp22} annotates terms. Further, the presence of type
variables  imposes to use tallying in an inference algorithm inspired by
$\mathcal{W}$ by \citet{DM82} and from \cite{polyduce2}, where tallying was
first introduced to type polymorphic applications. This yields a clear
improvement over \cite{occtyp22} which is unable to infer higher-order types
for function arguments, while our algorithm is able to do so even for recursive
functions.

The use of trees to annotate calculi with full-fledged intersection types is
common. In the presence of explicitly-typed overloaded
functions, one must be able to precisely describe how the types of nested
$\lambda$-abstractions relate to the various ``branches'' of the outermost
function. The work most similar to ours is \cite{LiquoriR07}, since the
deductions are performed on pairs of marked term and proof term. A marked term
is an untyped term where variables are marked with integers and a proof term is
a tree that encodes the structure of the typing derivation and relates marks to
types. Other approaches, such as \cite{Ronchi02,WellsDMT02,BVB08}, duplicate the
term typed with an intersection, such that each copy corresponds exactly to one
member of the intersection. Lastly, the work of \cite{WellsH02} does not
duplicate terms but rather decorate $\lambda$-abstractions with a richer concept
of \emph{branching shape} which essentially allows one to give names to the
various branches of an overloaded function and to use these names in the
annotations of nested $\lambda$-abstraction.
% Types which are not bound to the ``active'' name are then ignored.
Note that none of these works features type
reconstruction, which was our main motivation to eschew annotations within
terms, since the backtracking nature of our reconstruction would imply
rewriting terms over and over.

Inference for ML systems with subtyping, unions, and intersections has been studied
in MLsub \cite{mlsub} and extended  with richer
types and a limited form of negation in MLstruct \cite{mlstruct}. Both works trade expressivity for
principality. They define a lattice of types and an algebraic subtyping relation
that ensures principality, but forbids the intersection of arrow types. This
precludes them from expressing overloaded functions, but allows them to define a
principal polymorphic type inference with unions and intersections. We justify
our choice of set-theoretic types, with no type principality and a complex
inference, by our aim to type dynamic languages, such as Erlang or JavaScript,
where overloading plays an important role. We favour the expressivity necessary
to type many idioms of these languages, and rely on user-defined annotations
when necessary to compensate for the incompleteness of type inference. Lastly,
both works implement some form of type simplifications (e.g., \citet{mlsub} use automata
techniques to simplify types), a problem of practical
importance that we did not tackle, yet.

\citet{APF20} provide a principal type inference for a type system with rank-2 intersection types.
In their work, overloaded behaviors are expressible using intersection
types, but they are limited by the rank-2 restriction.
Union types are not supported, nor are equi-recursive types
(actually, it does not feature a general notion of subtyping between two arbitrary types).
Their inference does not require backtracking: it generates a set of constraints that are then solved using
a \textit{set unification algorithm}. This approach for inference has some similarities with the one by~\citet{CPN16}
improved and further developed by \citet{tommasophd} in a context with set-theoretic types,
where the \textit{set unification algorithm} is replaced by \textit{tallying} in the presence of subtyping.
However, while \cite{tommasophd} does support intersection types with no ranking limitation, it is not able to infer
intersection types for overloaded functions. Our work aims to improve this aspect, as well as providing a more precise typing of type-cases
(occurrence typing).

Work by \citet{disjointinter} and \citet{fbow} study disjoint intersection and union types.
They allow expressing overloaded behaviors by a general deterministic merge operator.
%Non-functional values can also be merged, in particular records.
In our work, we do not have a general merge operator: overloaded behaviors only emerge
through the use of type-case expressions (or the application of an overloaded function).
Our work can be extended with pattern-matching, in which case the first matching branch is selected.
This is a different approach than the one used with disjoint intersection types, where branches are disjoint
and have no priority and where ambiguous programs are rejected using a notion of mergeability and distinguishability,
allowing to define a general merge operator and to support nested composition,
which may be useful in some contexts such as compositional programming \cite{compositional}.

{\color{modification} \citet{Jim00} presents a polar type system which features
intersections and parametric polymorphism. In Jim's type system, quantifiers may
appear only in positive positions in types, while intersections may only appear in
negative positions. This yields a system that is more expressive than rank-2 intersection
types, and therefore more expressive than ML. Furthermore, the system features
principal types, and a decidable type inference. Some aspects of this work are
similar to ours, in particular the use of MGS, an algorithm to compute the most
general solution of a (syntactic) sub-typing problem, that plays the same role as
our tallying algorithm. Despite these similarities, the approaches differ in the
kind of programs they handle: in \cite{Jim00}, intersections are only deduced by applying
higher-order function parameters to arguments of distinct types within the body of a function,
while in our approach, they can also be caused by a type-case.}

Finally, set-theoretic types are starting to be
integrated into \textit{real-world languages}, for instance by \citet{erlang}
for Erlang, by \citet{luausemsub} for Luau, and by \citet{elixir} for Elixir. We
believe that, in the future, our work could be used in these systems in order to
benefit from a more precise typing of type-cases and pattern matching, as well
as by providing an optional type inference that can be used in conjunction with
explicit type annotations.

\section{Conclusion}
\label{sec:conclusion}

This work aims at providing a formal and expressive type system for dynamic languages,
where type-cases can be used to give functions an overloaded behavior.
It features a type inference that mixes both parametric polymorphism (for modularity)
and intersection polymorphism (to capture overloaded behaviors).
{\color{modification}
In that sense, our work is more than  a simple study on typability:
as a
matter of fact, monomorphic intersection and union types are
sufficient to type a closed program where all function applications
are known (cf., Section~\ref{sec:discon}), but this would be
bad from a language design point of view, and it is the reason why
people program using ML-style programming languages rather than
intersection based ones. Separate compilation and modular
definitions are requirements of any reasonable programming
language. The essence of this work is thus to challenge the limits of how
much precision one can obtain (through intersection types)---ideally
precise enough to type idioms of dynamic languages---while preserving
modularity (thanks to let-polymorphism).}

While we believe our work to be an important step towards a better static typing of
dynamic languages, several key features are still missing.
{\color{modification} First, the presence of side effects may invalidate
our approach: if the \Rule{$\vee$} rule in
Figure~\ref{fig:declarative} is applied to two different occurrences of an
expression $e'$ that is not pure, then the rule may type an expression that
yields a run-time type error. This can be  seen on the
algorithmic system, where the transformation into an MSC-form binds the two
occurrences of $e'$ to the same variable, thus wrongly assuming that they both yield
the same result. Strictly speaking, our algorithmic approach does not require expressions to be pure; it
just needs that when two occurrences of an expression may produce two distinct
values that may change the result of a dynamic test, then these two occurrences
must be bound by two different binds. Having only pure expressions is a
straightforward way to satisfy this property. Having each subexpression bound to
a distinct variable  (i.e., no sharing, that is, a less precise
system, in which the union rule is never used) is a way to retain safety
in the presence of side-effects. But between these two extrema, there is a whole
palette of less coarse solutions that make it possible to apply our approach in the
presence of side-effects, and that we plan to study in future work.
This poses two main challenges: $(i)$ how to
separate problematic expressions from non-problematic ones (e.g., a \code{gen\_id:\,Unit$\to$Int} function 
performs side-effects, but if its result is tested
only against \code{Int}, then it is sound to have all occurrences
of \code{gen\_id} bound by the same bind  during typing) which, in terms of the type system,
corresponds to characterize a class of subexpressions $e'$ that
can be safely used in rule \Rule{$\vee$}; 
and $(ii)$~how to do so \emph{before}  our
type inference,
% namely when putting expressions into their MSC-form,
at a point when type information is not available, yet.
}

Second,
while the performance of our prototype is reasonable, it can certainly be
improved by using more sophisticated implementations techniques and
heuristics on the lines we outlined at the end of Section~\ref{sec:implementation}.

Third, the interactions between code that is exported and
code that is local must be better studied and understood: using
intersection for local polymorphic functions and generalization for
global ones, may not always be entirely satisfactory since
the types of the global functions may be ``polluted'' by the types of the
local applications, yielding less a precise reconstruction for the former. One solution can be to hoist the definition of polymorphic
functions at toplevel whenever possible.
%% if an identity
%% function is defined locally and applied only to, say, Boolean and
%% integer values,
%% then it can be typed as $(\Bool\To\Bool)\wedge(\Int\To\Int)$; but if
%% the same function is also applied to the polymorphic parameter of an exported
%% function, reconstruction might unduly restrict the domain
%% of the exported function to $\alpha\wedge(\Bool\vee\Int)$. Our algorithm
%% handles such a case by extruding this local code at top level and
%% generalizing it, but it might miss subtler cases of dependency.

Lastly, an important future work is the support of row-polymorphism: while
records can be easily added to the present work, the precise typing of
functions
operating on records requires row-polymorphism. This is especially
important for dynamic languages where records are seamlessly used to
encode both objects and dictionaries. A first step in that direction may be to integrate the work by
\citet{Cas23}, which unifies dictionaries and records.

\begin{acks}                            %% acks environment is optional
We warmly thank the POPL reviewers: their careful reading
and suggestions allowed us to improve the presentation
significantly. A special thank the reviewers of the POPL artifact
evaluation for their detailed and insightful reviews.

This work was partially supported by the \emph{Chaire Langages Dynamiques
pour les Données} of the \emph{Fondation Université Paris-Saclay},
by the \emph{SECUREVAL} ANR project n.\ ANR-22-PECY-0005 and by a
CIFRE PhD.\ grant with Remote Technology.
\end{acks}

\bibliography{main}

\newpage
\appendix

\section{Extensions}
\label{sec:extensions-appendix}
In this appendix, we present some extensions for the source language,
in particular let-bindings (not to be coufounded with the top-level definitions
composing a program: the let-bindings presented in this section can be used
anywhere in an expression and do not generalize the type of their definition)
and pattern matching.

This section gives an overview of these extensions together with some explanations,
but the full semantics and typing rules can be found in the next appendices.

\subsection{Let Bindings}\label{sec:let}

\subsubsection{Declarative Type System}

Let bindings can be added to the syntax of our language:
\begin{equation*}
\begin{array}{lrclr}
    \textbf{Expressions} &e &::=& \cdots\alt \letexp{x}{e}{e}
\end{array}
\end{equation*}
with the following notion of reduction:
\begin{eqnarray*}
  \letexp{x}{v}{e} &\reduces& e\subs x v
\end{eqnarray*}
% \begin{align*}
% \textbf{Evaluation Context}\qquad E ::= & \cdots \alt \letexp{x}{E}{e}
% \end{align*}

At first sight, we could think of adding this typing rule to the declarative type system:
\begin{mathpar}
    \Infer[Let]
    { \Gamma \vdash e_1: t_1\\ \Gamma,x:t_1 \vdash e_2: t_2 }
    { \Gamma \vdash \letexp{x}{e_1}{e_2}: t_2 }
    { }
\end{mathpar}

However, this extension of the declarative type system has one issue:
let-bindings can introduce aliasing,
preventing in some cases the \Rule{$\vee$} rule from applying.
For instance, consider the following expression:

\[ \lambda x.\ \letexp{y}{x}{\tcase{f\ x}{\Int}{f\ y}{42}} \]

with $f:\Any\to\Any$.

Though for any argument $x$ this function yields an integer, it is not possible to derive for it the type
$\Any\to\Int$ using this extension of the declarative type system. Indeed, $f\ x$ and $f\ y$ are not
syntactically equivalent and thus the \Rule{$\vee$} rule can only decompose their types independently,
loosing the correlation between these two expressions.

One way to fix this issue is to remove this kind of aliasing before applying the declarative type system.
For that, we can introduce an intermediate language featuring an alternative version of let-bindings:

\begin{equation*}
  \begin{array}{lrclr}
      \textbf{Expressions} &e &::=& \cdots\alt \evalexp{e}{e}
  \end{array}
\end{equation*}

Let-bindings of the source language can be transformed into this alternative version
using a transformation $\transl{.}$ defined as follows (the other cases are straightforward):
\begin{align*}
  \transl{\letexp{x}{e_1}{e_2}} & = \evalexp{\transl{e_1}}{\transl{e_2}\subs{x}{\transl{e_1}}}
\end{align*}

Finally, the declarative type system can be extended with this rule:
\begin{mathpar}
  \Infer[Let]
  { \Gamma \vdash e_1: t_1\\ \Gamma\vdash e_2: t_2 }
  { \Gamma \vdash \evalexp{e_1}{e_2}: t_2 }
  { }
\end{mathpar}

\subsubsection{Algorithmic Type System}

Let-bindings are added to MSC forms as a new atom construction:
\begin{equation*}
    \begin{array}{lrclr}
        \textbf{Atomic expr} &a &::=& \cdots\alt \evalexp{\xx}{\xx}
    \end{array}
\end{equation*}
The intuition is the same as for the declarative type system:
we want to get rid of the aliasing caused by let-bindings,
while still using bindings to \textit{factorize} each subexpression.
Indeed, to produce an atom for the expression
$\letexp x {e_1}{e_2}$ we must replace each subexpression by a
binding variable, which would yield something of the form $\letexp x
{\xx_1}{\xx_2}$. Since the body of the let-expression
is a variable, then the variable $x$ is only an alias for $\xx_1$ and
thus is undesirable.
Consequently, only the other two variables are specified,
which yields $\evalexp{\xx_1}{\xx_2}$ and which explains the
definition of the atom for let expressions.

For instance, the expression 
$\letexp{x}{\lambda y.y}{(x,x)}$ has the following canonical form:
\[
\begin{array}{l}
  \bindexp{\xx_1}{(\lambda y.\bindexp{\yy}{y}\yy)}{}\\
  \bindexp{\xx_2}{(\xx_1,\xx_1)}{}\\
  \bindexp{\xx_3}{(\evalexp{\xx_1}{\xx_2})}{}\\
  {\xx_3}
\end{array}
\]
Note that, as explained above, the variable $x$ is no longer present in the canonical form.

The algorithmic type system can then be extended with the following rule:
\begin{mathpar}
    \Infer[Let\AR]
    { }
    { \Gamma\vdashA \ea {\evalexp{\xx_1}{\xx_2}} \annotvara: \Gamma(\xx_2) }
    { \xx_1\in\dom\Gamma }
\end{mathpar}

It is straightforward to extend the reconstruction with additional rules
in order to support this new construction (c.f. appendix~\ref{sec:reconstruction-appendix}).

\subsection{Type Constraints}\label{sec:tconstr}

A new construction $(e\bbcolon\tau)$ can be added to our source language.
This construction acts as a type constraint: if the expression $e$ does not reduce
to a value of type $\tau$ (and does not diverge), then the reduction will be stuck.
In a sense, it could be seen as a \textit{cast}, but we will not use this terminology
in order to avoid confusions with gradual typing. Actually, we only introduce this construction
because it will be used later to encode more general type-cases.

We add the following construction to our source language:
\begin{equation*}
    \begin{array}{lrclr}
        \textbf{Expressions} &e &::=& \cdots\alt (e\bbcolon\tau)
    \end{array}
\end{equation*}
with the following notion of reduction:
\begin{eqnarray*}
    (v\bbcolon\tau) &\reduces& v \qquad\qquad\qquad\text{if }v\in\tau
\end{eqnarray*}

The declarative type system can trivially be extended by adding this rule:
\begin{mathpar}
    \Infer[Constr]
    {\Gamma\vdash e: \tau\\\Gamma\vdash e: t}
    {\Gamma\vdash (e\bbcolon\tau): t}
    { }
\end{mathpar}

The same construction is added to the atoms of canonical forms:
\begin{equation*}
    \begin{array}{lrclr}
        \textbf{Atomic expr} &a &::=& \cdots\alt \xx\bbcolon\tau
    \end{array}
\end{equation*}

The annotations of the algorithmic type system also need to be extended:
\begin{equation*}
    \begin{array}{lrclr}
      \textbf{Atoms annotations}  &\annota &::=& \cdots \alt \annotannot \Sigma
    \end{array}
  \end{equation*}
and the algorithmic type system is extended with the following rule:
\begin{mathpar}
    \Infer[Constr\AR]
    { }
    { \Gamma\vdashA \ea {\xx\bbcolon\tau} {\annotannot\Sigma}: \Gamma(\xx) }
    { \Gamma(\xx)\Sigma\leq\tau }
\end{mathpar}

It is also straightforward to extend the reconstruction with additional rules
in order to support this new construction (c.f. appendix~\ref{sec:reconstruction-appendix}).

\subsection{Pattern Matching}\label{sec:pat}

Pattern matching is a fundamental feature of functional languages, and even
some dynamic languages such as Python have started to implement it.
In this section, we show how this feature can be added in our source language.
We proceed in two steps: first, a more general typecase construct with arbitrary arity is introduced,
and secondly, this construct is generalized again so that branches can be decorated
with patterns instead of just types.

\subsubsection{Extended Typecases}

We start by adding a generalized version of the typecase, that can have any number
of branches:

\begin{equation*}
  \begin{array}{lrclr}
      \textbf{Expressions} &e &::=& \cdots\alt (\typecases{e}{\tau\rightarrow e\alt\dots\alt\tau\rightarrow e})
  \end{array}
\end{equation*}
with the following notion of reduction:
  \begin{eqnarray*}
    \typecases{v}{\tau_1\rightarrow e_1\alt\dots\alt \tau_n\rightarrow e_n} &\reduces&
    e_k\quad\begin{array}{l}
      \text{if } v:\tau_k\setminus(\tbvee_{i\in 1\isep {k-1}} \tau_i)\\
      \text{for any }k\in 1\isep n
    \end{array}
  \end{eqnarray*}

In terms of typing, however, we choose not to extend the type system with additional rules
in order to preserve its minimality. Instead, we transform expressions with
extended typecases into expressions of the source langauge presented in section~\ref{sec:types},
with the let-binding and type constraints extensions (\ref{sec:let} and \ref{sec:tconstr}).
For that, we use the following transformation:

\begin{align*}
  \transl c & = c\\
  \transl x & = x\\
  \transl {\lambda x.e} & = \lambda x.\transl e\\
  \transl {\pi_i e}&= \pi_i \transl e\\
  \transl {e_1e_2}&= \transl {e_1} \transl {e_2}\\
  \transl {(e_1,e_2)} &= (\transl {e_1}, \transl {e_2})\\
  \transl {\tcase{e}{\tau}{e_1}{e_2}} &= \tcase{\transl e}{\tau}{\transl {e_1}}{\transl {e_2}}\\
  \transl {\letexp{x}{e_1}{e_2}} &= \letexp{x}{\transl {e_1}}{\transl {e_2}}\\
  \transl {e\bbcolon\tau} &= \transl e \bbcolon \tau\\
  \transl{\typecases{e}{\tau_1\rightarrow e_1\alt\dots\alt\tau_n\rightarrow e_n}} & =
  \begin{array}{l}\letexp{x}{(\transl{e}\bbcolon\tbvee_{i\in 1\isep n}\tau_i)}
    {\\\casestr x {\tau_1\rightarrow \transl{e_1}\,;\,\dots\,;\,\tau_n\rightarrow \transl{e_n}}}
  \end{array}\text{ with }x\text{ fresh}\\
  \casestr x {\tau\rightarrow e} & =  e\\
  \casestr x {\tau\rightarrow e \,;\, C} & =
  \tcase{x}{\tau}{e}{\casestr x {C}}
\end{align*}

\subsubsection{Pattern Matching}

Now, we introduce patterns and a pattern matching construct in the source language:
\begin{equation*}
  \begin{array}{lrclr}
    \textbf{Patterns}  &p &::=& \tau\alt x\alt p\patand p\alt p\pator p\alt (p,p)\alt x\pateq c\\%\alt \mu X.p\alt X\\
    \textbf{Expressions} &e &::=& \cdots\alt 
    (\matchs{e}{p\rightarrow e\alt\dots\alt p\rightarrow e})
  \end{array}
\end{equation*}

The associated reduction rule can be found in Appendix~\ref{sec:sem-appendix}.

In terms of typing, we proceed as before by transforming
an expression with pattern matching into an expression
without pattern matching (but with extended typecases and let-bindings),
using the following transformation:

\begin{align*}
  \transl c & = c\\
  \transl x & = x\\
  \transl {\lambda x.e} & = \lambda x.\transl e\\
  \transl {\pi_i e}&= \pi_i \transl e\\
  \transl {e_1e_2}&= \transl {e_1} \transl {e_2}\\
  \transl {(e_1,e_2)} &= (\transl {e_1}, \transl {e_2})\\
  \transl {\tcase{e}{\tau}{e_1}{e_2}} &= \tcase{\transl e}{\tau}{\transl {e_1}}{\transl {e_2}}\\
  \transl {\letexp{x}{e_1}{e_2}} &= \letexp{x}{\transl {e_1}}{\transl {e_2}}\\
  \transl {e\bbcolon\tau} &= \transl e \bbcolon \tau\\
  \transl{\typecases{e}{\tau_1\rightarrow e_1\alt\dots\alt\tau_n\rightarrow e_n}} & =
  \typecases{\transl e}{\tau_1\rightarrow \transl{e_1}\alt\dots\alt\tau_n\rightarrow \transl{e_n}}\\
  \transl{\matchs e {p_1\rightarrow e_1\alt\dots\alt p_n\rightarrow e_n}} & =
  \letexp{x}{\transl{e}}{\typecases {x}{
  \begin{array}{l}
  \pattype{p_1}\rightarrow e_1'\\
  \alt\dots\\
  \alt\pattype{p_n}\rightarrow e_n'
  \end{array}}}
\end{align*}
with $x$ fresh,
\begin{align*}
    \pattype{\tau} & = \tau\\
    \pattype{x} & = \Any\\
    \pattype{p_1\patand p_2} & = \pattype{p_1}\land\pattype{p_2}\\
    \pattype{p_1\pator p_2} & = \pattype{p_1}\lor\pattype{p_2}\\
    \pattype{(p_1, p_2)} & = \pattype{p_1}\times\pattype{p_2}\\
    \pattype{x\pateq c} & = \Any%\basic{c}
\end{align*}
and where for every $i\in 1\isep m$:\\
$e_i' = \letexp{x_1}{\pattr{x_1}{p_i}{x}}{\ldots\letexp{x_m}{\pattr{x_m}{p_i}{x}}{\transl{e_i}}}$
    for $\{x_1,...,x_m\}=\vars{p_i}$ with
\begin{align*}
  \pattr{x}{x}{e} & = e\\
  \pattr{x}{x\pateq c}{e} & = c\\
  \pattr x {(p_1,p_2)} e  & =  \pattr x {p_i}{\pi_i e} &\text{if } x\in\vars{p_i}\\
  \pattr x {p_1\patand p_2} e   & =  \pattr x {p_i}e &\text{if } x\in\vars{p_i}\\
  \pattr x {p_1\pator p_2} e  & =
    \tcase {e} {\pattype {p_1}}{\pattr x{p_1}e}{\pattr x{p_2}e}\\
  \pattr x p e & = \text{undefined} &\text{otherwise}                                   
\end{align*}

\section{Full semantics with extensions}
\label{sec:sem-appendix}
Expressions of the source language with extensions of Appendix~\ref{sec:extensions-appendix} are defined as follows:
\begin{equation*}
    \begin{array}{lrclr}
      \textbf{Test Types}  &\tau &::=& b\alt \Empty\to \Any\alt \tau\times \tau\alt \tau\vee \tau \alt \neg \tau \alt \Empty\\
      \textbf{Patterns}  &p &::=& \tau\alt x\alt p\patand p\alt p\pator p\alt (p,p)\alt x\pateq c\\
      \textbf{Expressions} &e &::=& c\alt x\alt\lambda x.e\alt e e\alt (e,e)\alt \pi_i e\alt
      \tcase{e}{\tau}{e}{e}\alt \letexp{x}{e}{e}\alt (e\bbcolon\tau)
      \\&&&\hspace*{-8pt}\alt (\typecases{e}{\tau\rightarrow e\alt\dots\alt\tau\rightarrow e})\alt(\matchs{e}{p\rightarrow e\alt\dots\alt p\rightarrow e})\\[.3mm]
      \textbf{Values} & v  &::=& c\alt\lambda x.e\alt (v,v)
    \end{array}
\end{equation*}

The associated reduction rules are:
\begin{eqnarray*}
    (\lambda x.e)v &\reduces& e\subs x{v}\\[-1mm]
    \pi_1 (v_1,v_2) &\reduces& v_1\\[-1mm]
    \pi_2 (v_1,v_2) &\reduces& v_2\\[-1mm]
    \tcase{v}{\tau}{e_1}{e_2} &\reduces&
    e_1\qquad\qquad\qquad\text{if } v\in\tau\\[-1mm]
    \tcase{v}{\tau}{e_1}{e_2} &\reduces& e_2\qquad\qquad\qquad\text{if } v\in\neg\tau\\[-1mm]
    \letexp{x}{v}{e} &\reduces& e\subs x v\\[-1mm]
    (v\bbcolon\tau) &\reduces& v \qquad\qquad\qquad\text{if }v\in\tau\\[-1mm]
    \typecases{v}{\tau_1\rightarrow e_1\alt\dots\alt \tau_n\rightarrow e_n} &\reduces&
    e_k\qquad\qquad\begin{array}{l}
      \text{if } v:\tau_k\setminus(\tbvee_{i\in 1\isep {k-1}} \tau_i)\\
      \text{for any }k\in 1\isep n
    \end{array}\\[-1mm]
    \matchs{v}{p_1\rightarrow e_1\alt\dots\alt p_n\rightarrow e_n} &\reduces&
    e_k(\patsem{v}{p_k})\quad\begin{array}{l}
      \text{if } v:\pattype{p_k}\setminus(\tbvee_{i\in 1\isep {k-1}} \pattype{p_i})\\
      \text{for any }k\in 1\isep n\end{array}
\end{eqnarray*}
together with the context rules that implement a leftmost outermost
reduction strategy, that is, $E[e]\reduces
E[e']$ if $e\reduces e'$ where the evaluation contexts $E[]$ are
defined as
follows:
\begin{align*}
   \textbf{Evaluation Context}\quad E &::=  [\,]\alt v E \alt E e \alt
   (v,E) \alt (E,e) \alt \pi_i E \alt
    \tcase{E}{\tau}{e}{e}\\& \alt \letexp{x}{E}{e} \alt (E\bbcolon\tau)\\
    &\alt (\typecases{E}{\tau\rightarrow e\alt\dots\alt\tau\rightarrow e})\\
    &\alt (\matchs{E}{p\rightarrow e\alt\dots\alt p\rightarrow e})
\end{align*}

Capture-avoiding substitutions are defined as follows (cases for extended typecases and pattern-matchings have been
omitted for concision):
\begin{align*}
    c\subs x{e'} &= c\\
    x\subs x{e'} &= e'\\
    y\subs x{e'} &= y &x\neq y\\
    (\lambda x.e)\subs x{e'} &= \lambda x.e\\
    (\lambda y.e)\subs x{e'} &= \lambda y.(e\subs x{e'}) &x\neq y, y\notin\fv{e'} \\
    (\lambda y.e)\subs x{e'} &= \lambda z.(e\subs y z\subs x{e'}) &x\neq
    y, y\in\fv{e'}, z\text{ fresh} \\
    (e_1 e_2)\subs x{e'} &= (e_1\subs x{e'}) (e_2\subs x{e'})\\
    (e_1, e_2)\subs x{e'} &= (e_1\subs x{e'}, e_2\subs x{e'})\\
    (\pi_i e)\subs x{e'} &= \pi_i (e\subs x{e'})\\
    (\tcase{e_1}{\tau}{e_2}{e_3})\subs x{e'} &= \tcase{e_1\subs x{e'}}{\tau}{e_2\subs x{e'}}{e_3\subs x{e'}}\\
    (\letexp{x}{e_1}{e_2})\subs x{e'} &= \letexp{x}{e_1\subs x{e'}}{e_2}\\
    (\letexp{y}{e_1}{e_2})\subs x{e'} &= \letexp{y}{e_1\subs x{e'}}{e_2\subs x{e'}}&x\neq y, y\notin\fv{e'}\\
    (\letexp{y}{e_1}{e_2})\subs x{e'} &= \letexp{y}{e_1\subs x{e'}}{e_2\subs y z\subs x{e'}}&x\neq y, y\in\fv{e'}, z\text{ fresh}%\\
    %(\typecases{e}{\tau_1\rightarrow e_1\alt\dots\alt\tau_n\rightarrow e_n})\subs x{e'} &= \typecases{e\subs x{e'}}{\tau_1\rightarrow e_1\subs x{e'}\alt\dots\alt\tau_n\rightarrow e_n\subs x{e'}}
\end{align*}

The relation $v\in\tau$ that determines whether a value
is of a given type or not and holds true if and only
if $\typof v\leq \tau$, where
\begin{align*}
    \typof {\lambda x.e} &= \Empty\to\Any\\
    \typof c & = \basic{c}\\
    \typof{(v_1,v_2)}&=\typof{v_1}\times\typof{v_2}
\end{align*}

Finally, the operators used in the reduction rule for pattern matching are defined as follows:
\begin{align*}
    \pattype{\tau} & = \tau\\
    \pattype{x} & = \Any\\
    \pattype{p_1\patand p_2} & = \pattype{p_1}\land\pattype{p_2}\\
    \pattype{p_1\pator p_2} & = \pattype{p_1}\lor\pattype{p_2}\\
    \pattype{(p_1, p_2)} & = \pattype{p_1}\times\pattype{p_2}\\
    \pattype{x\pateq c} & = \Any%\basic{c}
\end{align*}
and
\begin{align*}
    \patsem{v}{\tau} & = \ids & \text{if }v:\tau\\
    \patsem{v}{x} & = \subs{x}{v} &\\
    \patsem{v}{(p_1\patand p_2)} & = \sigma_1\cup\sigma_2
        &\text{if }\sigma_1=\patsem{v}{p_1}\text{ and }\sigma_2=\patsem{v}{p_2}\\
    \patsem{v}{(p_1\pator p_2)} & = \patsem{v}{p_1}&\text{if }\patsem{v}{p_1}\neq\patfail\\
    \patsem{v}{(p_1\pator p_2)} & = \patsem{v}{p_2}&\text{if }\patsem{v}{p_1}=\patfail\\
    \patsem{v}{(p_1,p_2)} & = \sigma_1\cup\sigma_2
    &\text{if }v=(v_1,v_2)\text{,\,\,\,}\sigma_1=\patsem{v_1}{p_1}\text{ and }\sigma_2=\patsem{v_2}{p_2}\\
    \patsem{v}{(x\pateq c)} & = \subs{x}{c}&\\
    \patsem{v}{p} & = \patfail & \text{otherwise}
\end{align*}

\section{Subtyping Relation}
\label{sec:subtyping-appendix}
\noindent
Subtyping is defined by giving a set-theoretic interpretation of the
types of Definition~\ref{def:types} into a suitable domain
$\Domain$. In case of polymorphic types, the domain at issue must
satisfy the property of \emph{convexity} \cite{CX11}. A simple model
that satisfies convexity was proposed by~\cite{Gesbert2015}. We
succintly present it in this section. The reader may refer
to~\cite[Section 3.3]{Cas22}
for more details. 
\begin{definition}[Interpretation domain~\cite{Gesbert2015}]\label{def:interpretation}
 The \emph{interpretation domain} $ \Domain $ is the set of finite terms $ d $
produced inductively by the following grammar
\begin{align*}
  d & \Coloneqq  c^L \mid (d, d)^L \mid \Set{(d, \domega), \dots, (d, \domega)}^L
    \\
    \domega & \Coloneqq d \mid \Omega
\end{align*}
where $ c $ ranges over the set $ \Constants $ of constants, $L$ ranges over finite sets of type variables,
and where $ \Omega $ is such that $ \Omega \notin \Domain $.
\end{definition}
The elements of $ \Domain $ correspond, intuitively,
to (denotations of) the results of the evaluation of expressions,
labeled by finite sets of type variables. 
In particular, in a higher-order language,
the results of  computations can be functions which, in this model,
are represented by sets of finite relations
of the form $ \Set{(d_1, \domega_1), \dots, (d_n, \domega_n)}^L $,
where $ \Omega $ (which is  not in $ \Domain $)
can appear in second components to signify
that the function fails (i.e., evaluation is stuck) on the
corresponding input. This is implemented by using in the second
projection the meta-variable $\domega$ which ranges over
$ \Domain_\Omega = \Domain \cup \Set{\Omega} $ (we reserve
$d$ to range over $\Domain$, thus excluding $\Omega$).
This constant $\Omega$ is used to ensure that $\Any\to\Any$ is not a
supertype of all function types: if we used $d$ instead of $\domega$,
then every well-typed function could be subsumed to  $\Any\to\Any$
and, therefore, every application could be given the type $\Any$,
independently from its argument as long as this argument is typable (see Section 4.2 of~\cite{FCB08} for details).
The restriction to \emph{finite} relations corresponds to the intuition
that the denotational semantics of a function is given by the set of
its finite approximations, where finiteness is a restriction necessary
(for cardinality reasons) to give the
semantics to higher-order functions. Finally, the sets of type
variables that label the elements of the domain are used to interpret
type variables: we interpret a type variable
$\alpha$ by the set of all elements that are labeled by $\alpha$, that
is $\TypeInter{\alpha}=\Set{ d\mid \alpha\in\tags(d)}$ (where we define
$\tags(c^L)=\tags((d, d' )^L)=\tags(\Set{(d_1, \domega_1), \dots,
  (d_n, \domega_n)}^L)=L$). 

We define the interpretation $ \TypeInter{t} $ of a type $ t $
so that it satisfies the following equalities,
where  $ \Pf $ denotes the restriction of the powerset to finite
subsets and $  \ConstantsInBasicType{}$ denotes the function
that assigns to each basic type the set of constants of that type, so
  that  for every constant $c$ we have $
  c\in \ConstantsInBasicType(\basic {c})$
(we use $\basic{c}$ to denote the
basic type of the constant $c$):
\begin{align*}
  \TypeInter{\Empty} & = \emptyset&
  \TypeInter{\alpha}& =\Set{ d\mid \alpha\in\tags(d)} &
  \TypeInter{t_1 \lor t_2} & = \TypeInter{t_1} \cup \TypeInter{t_2}
  \\
  \TypeInter{b} & = \ConstantsInBasicType(b)  &
  \TypeInter{\lnot t} & = \Domain \setminus \TypeInter{t} &
  \TypeInter{t_1 \times t_2} & = \TypeInter{t_1} \times \TypeInter{t_2} \\
  \TypeInter{t_1 {\to} t_2} & =
    \{R \in \Pf(\Domain{\times}\Domain_\Omega) \mid\forall (d, \domega) \in R. \:d \in \TypeInter{t_1} \implies \domega \in \TypeInter{t_2}\}
   \span\span
   \span\span
\end{align*}
We cannot take the equations above
directly as an inductive definition of $ \TypeInter{} $
because types are not defined inductively but coinductively.
Notice however that the contractivity condition of
Definition~\ref{def:types}  ensures that the binary relation $\vartriangleright
\,\subseteq\!\types{\times}\types$ defined by $t_1 \lor t_2 \vartriangleright
t_i$, $t_1 \land t_2 \vartriangleright
t_i$, $\neg t \vartriangleright t$ is Noetherian.
This gives an induction principle\footnote{In a nutshell, we can do
proofs and give definitions by induction on the structure of unions and negations---and, thus, intersections---but arrows, products, and basic types are the base cases for the induction.}  on $\types$ that we
use combined with  structural induction on $\Domain$ to give the following definition,
which validates these equalities.
\begin{definition}[Set-theoretic interpretation of types]\label{def:interpretation-of-types}
We define a binary predicate $ (d : t) $
(``the element $ d $ belongs to the type $t$''),
where $ d \in \Domain $ and $ t \in \types $,
by induction on the pair $ (d, t) $ ordered lexicographically.
The predicate is defined as follows:
\begin{align*}
  (c : b)  &= c \in \ConstantsInBasicType(b) \\
  (d:\alpha) &= \alpha\in\tags(d) \\
  ((d_1, d_2) : t_1 \times t_2 )  &=
    (d_1 : t_1) \mathrel{\mathsf{and}} (d_2 : t_2) \\
  (\Set{(d_1, \domega_1),..., (d_n, \domega_n)} : t_1 \to t_2)  &=
   \forall i \in [1.. n] . \:
    \mathsf{if} \: (d_i : t_1) \mathrel{\mathsf{then}} (\domega_i : t_2) \\
  (d : t_1 \lor t_2)  &= (d : t_1) \mathrel{\mathsf{or}} (d : t_2) \\
  (d : \lnot t)  &= \mathsf{not} \: (d : t) \\
  (\domega : t)  &= \mathsf{false} & \text{ otherwise}
  \end{align*}
We define the \emph{set-theoretic interpretation}
$ \TypeInter{} : \types \to \Pd(\Domain) $
as $ \TypeInter{t} = \{d \in \Domain \mid (d : t)\} $.
\end{definition}
Finally,
we define the subtyping preorder and its associated equivalence relation
as follows.

\begin{definition}[Subtyping relation]\label{def:subtyping}
  We define the \emph{subtyping} relation $ \leq $
  and the \emph{subtyping equivalence} relation $ \simeq $
  as
  \(
    t_1 \leq t_2 \iffdef \TypeInter{t_1} \subseteq \TypeInter{t_2}\) and   
  \(t_1 \simeq t_2 \iffdef (t_1 \leq t_2) \mathrel{\mathsf{and}} (t_2 \leq t_1)
    \: .
  \)
\end{definition}

\section{Declarative type system with extensions}
\label{sec:decl-appendix}

The declarative type system extended with the extensions of Appendix~\ref{sec:extensions-appendix}
uses expressions produced by the following grammar:
\begin{equation*}
\begin{array}{lrclr}
    \textbf{Expressions} &e &::=& c\alt x\alt\lambda x.e\alt e e\alt (e,e)\alt \pi_i e \alt
    \tcase{e}{\tau}{e}{e}\alt \evalexp{e}{e}\alt (e\bbcolon\tau)
\end{array}
\end{equation*}
Note that extended typecases and pattern matching are absent because they are encoded
using let-bindings and type constraints before typing. Similarly, we use the construction
$\evalexp{e}{e}$ for let-bindings instead of the initial construction $\letexp{x}{e}{e}$
in order to avoid aliasing.
You should refer to Section~\ref{sec:let} for more details on this transformation.

The deduction rules for the declarative type system are:
\begin{mathpar}
    \Infer[Const]
    { }
    {\Gamma\vdash c:\basic{c}}
    { }
    \and
    \Infer[Ax]
  { }
  {\Gamma \vdash x: \Gamma(x)}
  { }
  \\
  \Infer[$\to$I]
    {\Gamma,x:\mt\vdash e:t}
    {\Gamma\vdash\lambda x.e: \arrow{\mt}{t}}
    { }
  \qquad
  \Infer[$\to$E]
  {
    \Gamma \vdash e_1: \arrow {t_1}{t_2}\\
    \Gamma \vdash e_2: t_1
  }
  { \Gamma \vdash {e_1}{e_2}: t_2 }
  { }
  \\
  \Infer[$\times$I]
  {\Gamma \vdash e_1:t_1 \and \Gamma \vdash e_2:t_2}
  {\Gamma \vdash (e_1,e_2):\pair {t_1} {t_2}}
  { }
  \qquad
  \Infer[$\times$E$_1$]
  {\Gamma \vdash e:\pair{t_1}{t_2}}
  {\Gamma \vdash \pi_1 e:t_1}
  { } \quad
  \Infer[$\times$E$_2$]
  {\Gamma \vdash e:\pair{t_1}{t_2}}
  {\Gamma \vdash \pi_2 e:t_2}
  { }
  \\  
\Infer[$\Empty$]
{
  \Gamma \vdash e:\Empty
}
{\Gamma\vdash \tcase {e} \tau {e_1}{e_2}: \Empty}
{ }
\qquad
  \Infer[$\in_1$]
  {
    \Gamma \vdash e:\tau \and \Gamma\vdash e_1: t_1
  }
  {\Gamma\vdash \tcase {e} \tau {e_1}{e_2}: t_1}
  { }
  \quad
  \Infer[$\in_2$]
  {
    \Gamma \vdash e:\neg \tau \and \Gamma\vdash e_2: t_2
  }
  {\Gamma\vdash \tcase {e} \tau {e_1}{e_2}: t_2}
  { }
  \\
  \Infer[$\vee$]
  {
    \Gamma \vdash e': s\\
    \Gamma, x:s\land \mt\vdash e:t\\
    \Gamma, x:s\land \neg\mt\vdash e:t
  }
  {
    \Gamma\vdash e\subs x {e'}  : t
  }
  { }
  \qquad
  \Infer[$\wedge$]
  { \Gamma \vdash e: t_1\\ \Gamma \vdash e: t_2 }
  { \Gamma \vdash e: t_1\land t_2 }
  { }
  \\
  \Infer[Inst]
  { \Gamma \vdash e:t }
  { \Gamma \vdash e: t\sigma }
  { }
  \qquad
  \Infer[$\leq$]
  { \Gamma \vdash e:t }
  { \Gamma \vdash e: t' }
  { t\leq t' }
\end{mathpar}

with these additional rules for the extensions of Appendix~\ref{sec:extensions-appendix} (let-bindings and type constraints):
\begin{mathpar}
    \Infer[Let]
    { \Gamma \vdash e_1: t_1\\ \Gamma\vdash e_2: t_2 }
    { \Gamma \vdash \evalexp{e_1}{e_2}: t_2 }
    { }
    \qquad
    \Infer[Constr]
    {\Gamma\vdash e\bbcolon \tau\\\Gamma\vdash e: t}
    {\Gamma\vdash (e\bbcolon\tau): t}
    { }
\end{mathpar}

\section{Computation of MSC-forms}
\label{sec:msc-appendix}

\subsection{From Canonical Forms to Source Language Expressions}\label{app:unwind}

We recall the grammar for canonical forms, with the extensions presented in Appendix~\ref{sec:extensions-appendix}:
\begin{equation*}
  \begin{array}{lrclr}
    \textbf{Atomic expressions} &a &::=& c\alt x\alt \lambda x.\kappa\alt (\xx,\xx)\alt \xx \xx\alt \pi_i \xx\alt \tcase{\xx}{\tau}{\xx}{\xx}
    \alt\evalexp{\xx}{\xx}\alt\xx\bbcolon\tau\\
    \textbf{Canonical Forms} & \kappa &::=& \xx\alt \bindexp{\xx}{a}{\kappa} \\
  \end{array}
\end{equation*}

Any canonical form can be transformed into an expression of the source language
using the unwiding operator $\eras{.}$ defined as follows:

\begin{align*}
  \eras c & = c\\
  \eras x & = x\\
  \eras {\lambda x.\kappa} & = \lambda x.\eras \kappa\\
  \eras {\xx_1 \xx_2} &= \xx_1 \xx_2\\
  \eras {(\xx_1,\xx_2)} &=  (\xx_1,\xx_2)\\
  \eras {\pi_i \xx}&= \pi_i \xx & i=1,2\\
  \eras {\tcase{\xx}{\tau}{\xx_1}{\xx_2}}&= \tcase{\xx}{\tau}{\xx_1}{\xx_2}\\
  \eras {\evalexp{\xx_1}{\xx_2}} &= \letexp{x}{\xx_1}{\xx_2\subs{\xx_1}{x}}\text{ with }x\text{ fresh}\\
  \eras {\xx\bbcolon\tau} &= \xx\bbcolon\tau\\
  \eras {\bindexp{\xx}{a}{\kappa}} &= \eras{\kappa}\subs{\xx}{\eras{a}}\\
  \eras {\xx} &= \xx
\end{align*}

\subsection{From Source Language Expressions to Canonical Forms}\label{app:sl2cf}

The inverse direction, that is, producing from a source language
expression a canonical form that unwinds to it, is also straightforward.

Let $\benv$ denote a binding context, that is, an ordered list of mappings 
from binding variables to atoms. Each mapping is written as a pair $\be \xx e$.
We note these lists extensionally by separating elements by a semicolon, that is,
$\be{\xx_1}{a_1};\dots;\be{\xx_n}{a_n}$ and use $\bempty$ to denote the empty list.
% We note $\bsubs e \benv$ the expression $e\subs{\xx_n}{e_n}\dots\subs{\xx_1}{e_1}$
% where $\be{\xx_1}{e_1};\dots;\be{\xx_n}{e_n}$ are the pairs in $\benv$.

We define an operation $\term(\benv,\kappa)$ which takes a binding context $\benv$ 
and a canonical form $\kappa$ and constructs the canonical form containing the bindings
listed in $\benv$ and ending with $\kappa$, that is:
\begin{align*}
  \term(\varepsilon, \kappa) &\eqdef \kappa\\
  \term((\be\xx a;\benv),\kappa) &\eqdef\bindexp{\xx}{a}{\term(\benv,\kappa)}
\end{align*}

We can now define the function $\ucanon e$ that transforms an expression
$e$ into a pair $(\benv,\xx)$ formed
by a binding context $\benv$ and a binding variable $\xx$ that will be bound to
the atom representing $e$. The definition is as follows, where $\xx_\circ$ is a fresh binding variable.
\begin{align*}
  \ucanon c & = (\be{\xx_\circ}c,\xx_\circ)\\
  \ucanon x & = (\be{\xx_\circ}x,\xx_\circ)\\
  \ucanon {\lambda x.e} & = (\be{\xx_\circ}{\lambda x.\term\ucanon e},\xx_\circ)\\
  \ucanon {\pi_i e}&= ((\benv;\be{\xx_\circ}{\pi_i \xx}),\xx_\circ) &\hspace*{-12mm}\text{
    where } (\benv,\xx) = \ucanon{e}
  \\
  \ucanon {e_1e_2}&=
  ((\benv_1;\benv_2;\be{\xx_\circ}{\xx_1\xx_2}),\xx_\circ)
  &\hspace*{-12mm}\text{where } (\benv_1,\xx_1) = \ucanon{e_1},\; (\benv_2,\xx_2) = \ucanon{e_2}
  \\
  \ucanon {(e_1,e_2)} &=
  ((\benv_1;\benv_2;\be{\xx_\circ}{(\xx_1,\xx_2)}),\xx_\circ)
  &\hspace*{-2mm}\text{where } (\benv_1,\xx_1) = \ucanon{e_1},\; (\benv_2,\xx_2) = \ucanon{e_2}
  \\
  \ucanon {\tcase{e}{\tau}{e_1}{e_2}}
     &= ((\benv;\benv_1;\benv_2;\be{\xx_\circ}{\tcase{\xx}{\tau}{\xx_1}{\xx_2}}),\xx_\circ)\hspace*{-17mm}\\[-1mm]
  \span\span\text{where }  (\benv,\xx) = \ucanon{e},\;
  (\benv_1,\xx_1) = \ucanon{e_1},\;  (\benv_2,\xx_2) = \ucanon{e_2}\\
  \ucanon \xx & = (\bempty,\xx)\\
  \ucanon{\letexp{x}{e_1}{e_2}} & =
  ((\benv_1; \benv_2; \be{\xx_\circ}{\evalexp{\xx_1}{\xx_2}}), \xx_\circ)
  \\[-1mm]
  \span\span
%  &
  \text{where }
  (\benv_1, \xx_1) = \ucanon{e_1},
  (\benv_2, \xx_2) = \ucanon{e_2\subs{x}{\xx_1}}\\
  \ucanon{e\bbcolon\tau} & = ((\benv;  \be{\xx_\circ}{ \xx\bbcolon\tau}), \xx_\circ)
  &\hspace*{-12mm}\text{where }(\benv, \xx) = \ucanon{e}
\end{align*}

It is easy to prove that, for any term of the source language $e$,
$\eras{\term(\ucanon e)}= e$.

\subsection{From Canonical Forms to a MSC Form}\label{app:cf2msc}

It is easy to transform a canonical form into a MSC-form that has the same unwinding.
This can be done by applying the rewriting rules below, that are confluent and normalizing.

\begin{align}
  \begin{array}{l}
    \bindexp{\xx_1}{a_1}{}\\
    \bindexp{\xx_2}{a_2}{\kappa} 
  \end{array}
  &\msred\mkern10mu
  \bindexp{\xx_1}{a_1}{\kappa\subs{\xx_2}{\xx_1}}\hspace*{-33mm}
  &a_1\eqcan a_2
  \label{sharing}
\\
\bindexp{\xx}a{\kappa}
  \mkern10mu&\msred\mkern10mu
  \kappa
  &\xx\notin\fv\kappa
  \label{useless}
\\
\begin{array}{l}
\bindexp{\xx}{\lambda y.(\\\mkern10mu\bindexp{\zz}{a}{\kappa_\circ})\\\hspace*{-.5mm}}{\kappa}
\end{array}
   \mkern10mu&\msred\mkern10mu
\begin{array}{l}
\bindexp{\zz}{a}{\\\bindexp{\xx}{\lambda y.\kappa_\circ}{\kappa}}
\end{array}
   &y\notin\fv a,\zz\not\in\fv\kappa\label{extrude}
\end{align}

\begin{align}
\kappa_1\mkern10mu&\msred\mkern10mu\kappa_2&
\exists\kappa_1'.\ \kappa_1\eqcan\kappa_1'\msred\kappa_2\label{modeqcan}
\end{align}
Rule \eqref{sharing} implements the maximal sharing: if two variables
bind atoms with the same unwinding (modulo $\alpha$-conversion), then the variables are unified. Rule
\eqref{useless} removes useless bindings. Rule \eqref{extrude}
extrudes bindings from abstractions of variables that do not occur in
the argument of the binding. Rule \eqref{modeqcan} applies the
previous rule modulo the canonical equivalence: in practice it applies
the swap of binding defined in Definition~\ref{def:order} as many times as it is
needed to apply one of the other rules. As
customary, these rules can be applied under any context.

The transformation above transforms every canonical form into an MSC-form
that has the same unwinding. It thus allows to compute $\MSCF e$ for any expression
$e$ of the source language.

\section{Type operators}
\label{sec:typeop-appendix}
The algorithmic type system presented in this work use the following type-operators:
\begin{eqnarray*}
\dom t & = & \max \{ u \alt t\leq u\to \Any\}
\\
\apply t s & = &\,\min \{ u \alt t\leq s\to u\}
\\
\bpl t & = & \min \{ u \alt t\leq \pair u\Any\}
\\
\bpr t & = & \min \{ u \alt t\leq \pair \Any u\}
\end{eqnarray*}
In words, $\apply t s$ is the best (i.e., smallest wrt $\leq$) type we can deduce for the
application of a function of type $t$ to an argument of type $s$.
Projection and domain are standard.
All these operators can be effectively computed as shown below
(see~\citet{FCB08,CLLN20} for details and proofs).

\textcolor{modification}{If any of the types at issue is empty, then the computation is straightforward:
$\dom\Empty = \Any$ and $\apply\Empty s = \apply t \Empty =
\bpl\Empty=\bpr\Empty=\Empty$. Otherwise the operators are computed as follows. }

For $t\eqdnf \bigvee_{i\in I}\left(\bigwedge_{p'\in P'_i}\alpha_{p'}\land\bigwedge_{n'\in N'_i}\neg\alpha_{n'}'\wedge\bigwedge_{p\in P_i}(s_p\to t_p) \wedge \bigwedge_{n\in N_i}\neg(s_n'\to t_n')\right)$,\\
\textcolor{modification}{where each summand of the outer union is not empty}, the first two operators are computed by:
\begin{eqnarray*}
  \dom{t}    & = & \bigwedge_{i\in I}\bigvee_{p\in P_i}s_p\\[4mm]
  \apply t s & = & \bigvee_{i\in I}\left(\bigvee_{\{Q\subsetneq P_i\alt s\not\leq\bigvee_{q\in Q}s_q\}}\left(\bigwedge_{p\in P_i\setminus Q}t_p\right)\right)\hspace*{1cm}\makebox[0cm][l]{(for $s\leq\dom{t}$)}\\[4mm]
\end{eqnarray*}
For $t\eqdnf \bigvee_{i\in I}\left(\bigwedge_{p'\in P'_i}\alpha_{p'}\land\bigwedge_{n'\in N'_i}\neg\alpha_{n'}'\wedge\bigwedge_{p\in P_i}(s_p, t_p) \wedge \bigwedge_{n\in N_i}\neg(s_n', t_n')\right)$,\\
\textcolor{modification}{where each summand of the outer union is not empty}, the last two operators are computed by
\begin{eqnarray*}
  \bpl t & = & \bigvee_{i \in I} \bigvee_{N' \subseteq N_i} \left(\bigwedge_{p \in P_i} s_p \wedge \bigwedge_{n \in N'} \neg s_n' \right)\\[4mm]
  \bpr t & = & \bigvee_{i \in I} \bigvee_{N' \subseteq N_i} \left(\bigwedge_{p \in P_i} t_p \wedge \bigwedge_{n \in N'} \neg t_n' \right)\\[4mm]
\end{eqnarray*}

\section{Algorithmic type system}\label{app:algosys}
\label{sec:algo-appendix}

\begin{equation*}
\hspace*{-2mm}  \begin{array}{lrclr}
    \textbf{Atom annots}  &\annota &::=& \annotconst %\alt \annotvara
    \alt \annotlambda{\mt}{\annot}\alt\annotpair\renaming\renaming\alt\annotapp\Sigma\Sigma\alt\annotproj\Sigma
    \alt\annotempty\Sigma\alt\annotthen\Sigma\alt\annotelse\Sigma\alt\annotinter{\{\annota,...\,,\annota\}}\\
    \textbf{Form annots}  &\annot &::=& \annotvar\renaming\alt
    \annotbind{\annota}{\{(\mt,\annot), \dots, (\mt,\annot)\}}\alt\annotskip{\annot}\alt\annotinter{\{\annot,\dots,\annot\}}
    \end{array}
\end{equation*}

The algorithmic type system is defined by the following deduction rules:

\begin{mathpar}
    \Infer[Const\AR]
    { }
    {\Gamma\vdashA \ea c \annotconst:\basic{c}}
    { }
    \quad
    \Infer[Ax\AR]
    { }
    { \Gamma\vdashA \ea x \annotvara: \Gamma(x) }
    { }
    \\
    \Infer[$\to$I\AR]
      {\Gamma,x:\mt\vdashA \ea\kappa{\annot}:t}
      {\Gamma\vdashA\ea{\lambda x.\kappa}{\annotlambda{\mt}{\annot}}:\arrow{\mt}{t}}
      { }
    \\
    \Infer[$\to$E\AR]
    { }
    {\Gamma \vdashA \ea{{\xx_1}{\xx_2}}{\annotapp{\Sigma_1}{\Sigma_2}}: t_1\circ t_2 }
    { \stackedguards{t_1=\Gamma(\xx_1)\Sigma_1,\,\,\,t_2=\Gamma(\xx_2)\Sigma_2}
      {t_1\leq\Empty\to\Any,\,\,\,t_2\leq\dom{t_1}} }
    \\
    \Infer[$\times$I\AR]
    { }
    {\Gamma \vdashA \ea{(\xx_1,\xx_2)}{\annotpair{\renaming_1}{\renaming_2}}:\pair {t_1} {t_2}}
    { t_1=\Gamma(\xx_1)\renaming_1,\,\,\,t_2=\Gamma(\xx_2)\renaming_2 }
    \\
    \Infer[$\times$E$_1$\AR]
    { }
    {\Gamma \vdashA \ea{\pi_1 \xx}{\annotproj\Sigma}:\bpi_1(t)}
    { \stackedguards{t=\Gamma(\xx)\Sigma}
      {t\leq(\Any\times\Any)} }
    \\
    \Infer[$\times$E$_2$\AR]
    { }
    {\Gamma \vdashA \ea{\pi_2 \xx}{\annotproj\Sigma}:\bpi_2(t)}
    { \stackedguards{t=\Gamma(\xx)\Sigma}
      {t\leq(\Any\times\Any)} }
    \\
    \Infer[$\Empty$\AR]
    { }
    {\Gamma\vdashA \ea{\tcase {\xx} \tau {\xx_1}{\xx_2}}{\annotempty\Sigma}: \Empty}
    { \Gamma(\xx)\Sigma\simeq\Empty }
    \\
    \Infer[$\in_1$\AR]
    { }
    {\Gamma\vdashA \ea{\tcase {\xx} \tau {\xx_1}{\xx_2}}{\annotthen\Sigma}: \Gamma(\xx_1)}
    { \Gamma(\xx)\Sigma\leq \tau }
    \\
    \Infer[$\in_2$\AR]
    { }
    {\Gamma\vdashA \ea{\tcase {\xx} \tau {\xx_1}{\xx_2}}{\annotelse\Sigma}: \Gamma(\xx_2)}
    { \Gamma(\xx)\Sigma\leq \neg \tau }
    \\
    \Infer[$\land$\AR]
    {
    {\small(\forall i\in I)}\quad\Gamma\vdashA \ea a{\annota_i}:t_i
    }
    {
      \Gamma\vdashA\ea{a}{\annotinter{\{\annota_i\}_{i\in I}}} : \textstyle\bigwedge_{i\in I}t_i
    }
    { I\neq\emptyset }
  \end{mathpar}
  \begin{mathpar}
    \Infer[Var\AR]
    { }
    { \Gamma\vdashA \ea\xx {\annotvar\renaming}: \Gamma(\xx)\renaming }
    { }
    \quad
    \Infer[Bind$_1$\AR]
    {\Gamma\vdashA \ea{\kappa}{\annot}:t}
    {
    \Gamma\vdashA\ea{\bindexp {\xx} {a} {\kappa}}{\annotskip \annot}:t
    }
    { \xx\not\in\dom\Gamma }
    \\
    \Infer[Bind$_2$\AR]
    {\Gamma\vdashA \ea a{\annota}:s\\
    {\small(\forall i\in I)}\quad\Gamma,\xx:s\wedge\mt_i\vdashA \ea\kappa{\annot_i}:t_i
    }
    {
      \Gamma\vdashA\ea{\bindexp {\xx} {a} {\kappa}}{\annotbind {\annota} {\{(\mt_i,\annot_i)\}_{i\in I}}} : \textstyle\bigvee_{i\in I}t_i
    }
    { \textstyle\bigvee_{i\in I}\mt_i\simeq\Any }
    \\
    \Infer[$\land$\AR]
    {
    {\small(\forall i\in I)}\quad\Gamma\vdashA \ea\kappa{\annot_i}:t_i
    }
    {
      \Gamma\vdashA\ea{\kappa}{\annotinter{\{\annot_i\}_{i\in I}}} : \textstyle\bigwedge_{i\in I}t_i
    }
    { I\neq\emptyset }
  \end{mathpar}
% \begin{mathpar}
%   \Infer[$\land$\AR]
%   {
%   {\small(\forall i\in I)}\quad\Gamma\vdashA \ea\kappaa{\aannot_i}:t_i
%   }
%   {
%     \Gamma\vdashA\ea{\kappaa}{\annotinter{\{\aannot_i\}_{i\in I}}} : \textstyle\bigwedge_{i\in I}t_i
%   }
%   { I\neq\emptyset }
% \end{mathpar}  

To extend the system to type the extensions presented in
Appendix~\ref{sec:extensions-appendix} the following rules must be added:

\begin{mathpar}
    \Infer[Let\AR]
    { }
    { \Gamma\vdashA \ea {\evalexp{\xx_1}{\xx_2}} \annotvara: \Gamma(\xx_2) }
    { \xx_1\in\dom\Gamma }
\\
    \Infer[Constr\AR]
    { }
    { \Gamma\vdashA \ea {\xx\bbcolon\tau} {\annotannot\Sigma}: \Gamma(\xx) }
    { \Gamma(\xx)\Sigma\leq\tau }
\end{mathpar}

\section{Full reconstruction system}
\label{sec:reconstruction-appendix}

\subsection{Main Reconstruction System}\label{sec:mainreconstruction-appendix}

\begin{equation*}
  \begin{array}{lrclr}
  \textbf{Split annotations}  &\sanns &::=& \{(\mt,\banns),\dots,(\mt,\banns)\}\\
  \textbf{Atoms intermediate annot.}  &\lanns &::=& \annotiinfer%\alt\annotinferthen\alt\annotinferelse
  \alt\annotiuntyp\alt\annotityp
  \alt\annotiinter{\{\lanns,\dots,\lanns\}}{\{\lanns,\dots,\lanns\}}\\&&&
  %\alt\annotiempty\alt\annotithen\alt\annotielse
  \alt\annotithen\alt\annotielse\alt\annotilambda{\mt}{\banns}\\
  \textbf{Forms intermediate annot.}  &\banns &::=& \annotiinfer\alt\annotiuntyp\alt\annotityp
  \alt\annotiinter{\{\banns,\dots,\banns\}}{\{\banns,\dots,\banns\}}\\&&&
  \alt\annotitryskip\banns\alt\annotitrybind{\lanns}{\banns}{\banns}\\&&&
  \alt\annotiprop{\lanns}{\Gammas}{\sanns}{\sanns}\\&&&
  \alt\annotiskip{\banns}\alt\annotibind{\lanns}{\sanns}{\sanns}
  \end{array}
\end{equation*}

% \subsubsection{Deduction Rules for $\aavdash$ Judgements}

  \begin{mathpar}
    \Infer[Ok]
    { }
    {\Gamma\aavdash \epa \kappaa\annotityp\refines\resok{\annotityp}}
    { }
    \qquad
    \Infer[Fail]
    { }
    {\Gamma\aavdash \epa \kappaa\annotiuntyp\refines\resfail}
    { }
  \end{mathpar}
  
  \begin{mathpar}
    \Infer[Const]
    { }
    {\Gamma\aavdash \epa c{\annotiinfer}\refines\resok{\annotityp}}
    { }
    \\
    \Infer[AxOk]
    { x\in\dom\Gamma }
    {\Gamma\aavdash \epa x{\annotiinfer}\refines\resok{\annotityp}}
    { }
    \qquad
    \Infer[AxFail]
    { }
    {\Gamma\aavdash \epa x{\annotiinfer}\refines\resfail}
    { }
  \end{mathpar}
  
  \begin{mathpar}
    \Infer[PairVar$_i$]
    { \xx_i\not\in\dom\Gamma }
    {\Gamma\aavdash \epa{(\xx_1,\xx_2)}{\annotiinfer}\refines
    {\resvar{\xx_i}{\annotiinfer}{\annotiuntyp}}}
    { }
    \\
    \Infer[PairOk]
    { }
    {\Gamma\aavdash \epa{(\xx_1,\xx_2)}{\annotiinfer}\refines
    {\resok{\annotityp}}}
    { }
    \\
    \Infer[ProjVar]
    { \xx\not\in\dom\Gamma }
    {\Gamma\aavdash \epa{\pi_i \xx}{\annotiinfer}\refines
    {\resvar{\xx}{\annotiinfer}{\annotiuntyp}}}
    { }
    \\
    \Infer[ProjInfer]
    {
    \tallyinfer{\Msubst}{\Gamma(\xx)}{\polyvar\times\polyvarb}
    }
    {\Gamma\aavdash \epa{\pi_i \xx}{\annotiinfer}\refines
    {\ressubst{\Msubst}{\annotityp}{\annotiuntyp}}}
    { \polyvar, \polyvarb\in\polyvars\text{ fresh} }
  \end{mathpar}
  
  \begin{mathpar}
    \Infer[AppVar$_i$]
    { \xx_i\not\in\dom\Gamma }
    {\Gamma\aavdash \epa{\xx_1 \xx_2}{\annotiinfer}\refines
    {\resvar{\xx_i}{\annotiinfer}{\annotiuntyp}}}
    { }
    \\
    \Infer[AppInfer]
    {
        \tallyinfer{\Msubst}{\Gamma(\xx_1)}{\Gamma(\xx_2)\to \polyvar}
    }
    {\Gamma\aavdash \epa{{\xx_1}{\xx_2}}{\annotiinfer}\refines
    {\ressubst{\Msubst}{\annotityp}{\annotiuntyp}}}
    { \polyvar\in\polyvars\text{ fresh} }
  \end{mathpar}
  
  \begin{mathpar}
    \Infer[CaseVar]
    { \xx\not\in\dom\Gamma }
    {\Gamma\aavdash \epa{\tcase {\xx} \tau {\xx_1}{\xx_2}}{\annotiinfer}\refines
    {\resvar{\xx}{\annotiinfer}{\annotiuntyp}}}
    { }
    \\
    \Infer[CaseSplit]
    { \Gamma(\xx)\not\leq\tau\\\Gamma(\xx)\not\leq\neg\tau }
    {\Gamma\aavdash \epa{\tcase {\xx} \tau {\xx_1}{\xx_2}}{\annotiinfer}\refines
    {\respart{\envsingl \xx \tau}{\annotiinfer}{\annotiinfer}}}
    { }
  \end{mathpar}
  \begin{mathpar}
    \Infer[CaseEmpty]
    {
      \Gamma(\xx)\simeq\Empty
    }
    {\Gamma\aavdash \epa{\tcase {\xx} \tau {\xx_1}{\xx_2}}{\annotiinfer}\refines\resok{\annotityp}}
    { }
    \\
    \Infer[CaseThen]
    {
      \Gamma(\xx)\leq\tau\\
      \tallyinfer{\Msubst}{\Gamma(\xx)}{\Empty}
    }
    {\Gamma\aavdash \epa{\tcase {\xx} \tau {\xx_1}{\xx_2}}{\annotiinfer}\refines\ressubst{\Msubst}{\annotityp}{\annotithen}}
    { }
    \\
    \Infer[CaseElse]
    {
      \Gamma(\xx)\leq\neg\tau\\
      \tallyinfer{\Msubst}{\Gamma(\xx)}{\Empty}
    }
    {\Gamma\aavdash \epa{\tcase {\xx} \tau {\xx_1}{\xx_2}}{\annotiinfer}\refines\ressubst{\Msubst}{\annotityp}{\annotielse}}
    { }
  \end{mathpar}
  \begin{mathpar}
    \Infer[CaseVar$_i$]
    {
      \xx_i\not\in\dom\Gamma
    }
    {\Gamma\aavdash \epa{\tcase {\xx} \tau {\xx_1}{\xx_2}}{\annotithenelse}\refines
    {\resvar{\xx_i}{\annotityp}{\annotiuntyp}}}
    { }
    \\
    \Infer[CaseOk$_i$]
    { }
    {\Gamma\aavdash \epa{\tcase {\xx} \tau {\xx_1}{\xx_2}}{\annotithenelse}\refines
    {\resok{\annotityp}}}
    { }
  \end{mathpar}

  \begin{mathpar}
    \Infer[LambdaInfer]
    {
        \Gamma\aavdash \epa{\lambda x. \kappa}{\annotilambda{\monovar}{\annotiinfer}} \refines\results
    }
    {
        \Gamma\aavdash \epa{\lambda x. \kappa}{\annotiinfer} \refines\results
    }
    { \monovar\in\monovars\text{ fresh} }
    \\
    \Infer[LambdaEmpty]
    { }
    {
        \Gamma\aavdash \epa{\lambda x. \kappa}{\annotilambda{\Empty}{\banns}} \refines\resfail
    }
    { }
    \\
    \Infer[Lambda]
    {
        \Gamma,x:\mt\raavdash \epa{\kappa}{\banns} \refines\results
    }
    {
        \Gamma\aavdash \epa{\lambda x. \kappa}{\annotilambda{\mt}{\banns}} \refines
        {\mapres\results{X}{\annotilambda{\mt}{X}}}
    }
    { }
  \end{mathpar}
  with $\mapres{\results}{X}{f(X)}$ an auxiliary function that applies $f$ to each intermediate annotation in $\results$:
  \begin{align*}
      \mapres{\resok{\aanns}}{X}{f(X)} & \eqdef \resok{f(\aanns)}\\
      \mapres{\resfail{}}{X}{f(X)} & \eqdef \resfail{}\\
      \mapres{\respart{\Gamma}{\aanns_1}{\aanns_2}}{X}{f(X)} & \eqdef \respart{\Gamma}{f(\aanns_1)}{f(\aanns_2)}\\
      \mapres{\ressubst{\Msubst}{\aanns_1}{\aanns_2}}{X}{f(X)} & \eqdef \ressubst{\Msubst}{f(\aanns_1)}{f(\aanns_2)}\\
      \mapres{\resvar{\xx}{\aanns_1}{\aanns_2}}{X}{f(X)} & \eqdef \resvar{\xx}{f(\aanns_1)}{f(\aanns_2)}
  \end{align*}

  \begin{mathpar}
    \Infer[FormVar]
    { \xx\not\in\dom\Gamma }
    { \Gamma\aavdash \epa\xx\annotiinfer\refines{\resvar{\xx}{\annotiinfer}{\annotiuntyp}}}
    { }
    \\
    \Infer[FormOk]
    { }
    { \Gamma\aavdash \epa\xx\annotiinfer\refines{\resok{\annotityp}}}
    { }
  \end{mathpar}
  
  \begin{mathpar}
    \Infer[BindInfer]
    {
        \Gamma\aavdash\epa{\bindexp \xx a \kappa}{\annotitryskip{\annotiinfer}}\refines\results
    }
    {
        \Gamma\aavdash\epa{\bindexp \xx a \kappa}{\annotiinfer}\refines\results
    }
    { }
  \end{mathpar}
  
  \begin{mathpar}
    \Infer[BindTrySkip$_1$]
    {
        \Gamma\raavdash \epa\kappa\banns\refines\resvar{\xx}{\banns_1}{\banns_2}\\
        \Gamma\aavdash \epa{\bindexp \xx a \kappa}{\annotitrybind{\annotiinfer}{\banns_1}{\banns_2}}\refines\results
    }
    {
        \Gamma\aavdash\epa{\bindexp \xx a \kappa}{\annotitryskip{\banns}}\refines\results
    }
    { }
    \\
    \Infer[BindTrySkip$_2$]
    {
        \Gamma\raavdash \epa\kappa\banns\refines\resok{\banns'}
    }
    {
        \Gamma\aavdash\epa{\bindexp \xx a \kappa}{\annotitryskip{\banns}}\refines\resok{\annotiskip{\banns'}}
    }
    { }
    \\
    \Infer[BindTrySkip$_3$]
    {
        \Gamma\raavdash \epa\kappa\banns\refines\results
    }
    {
        \Gamma\aavdash\epa{\bindexp \xx a \kappa}{\annotitryskip{\banns}}\refines\mapres{\results}{X}{\annotitryskip{X}}
    }
    { }
  \end{mathpar}

  \begin{mathpar}
    \Infer[BindSkip$_1$]
    {
        \Gamma\raavdash \epa\kappa\banns\refines\resvar{\xx}{\banns_1}{\banns_2}\\
        \Gamma\aavdash \epa{\bindexp \xx a \kappa}{\annotiskip{\banns_2}}\refines\results
    }
    {
        \Gamma\aavdash\epa{\bindexp \xx a \kappa}{\annotiskip{\banns}}\refines\results
    }
    { }
    \\
    \Infer[BindSkip$_2$]
    {
        \Gamma\raavdash \epa\kappa\banns\refines\results
    }
    {
        \Gamma\aavdash\epa{\bindexp \xx a \kappa}{\annotiskip{\banns}}\refines\mapres{\results}{X}{\annotiskip{X}}
    }
    { }
  \end{mathpar}

  \begin{mathpar}
    \Infer[BindTryKeep$_1$]
    {
        \Gamma\raavdash \epa a\lanns\refines\resok{\lanns'}\\
        \Gamma\aavdash\epa{\bindexp \xx a \kappa}{\annotibind{\lanns'}{\{(\Any,\banns_1)\}}{\emptyset}}\refines\results
    }
    {
        \Gamma\aavdash\epa{\bindexp \xx a \kappa}{\annotitrybind{\lanns}{\banns_1}{\banns_2}}\refines\results
    }
    { }
    \\
    \Infer[BindTryKeep$_2$]
    {
        \Gamma\raavdash \epa a\lanns\refines\resfail\\
        \Gamma\aavdash\epa{\bindexp \xx a \kappa}{\annotiskip{\banns_2}}\refines\results
    }
    {
        \Gamma\aavdash\epa{\bindexp \xx a \kappa}{\annotitrybind{\lanns}{\banns_1}{\banns_2}}\refines\results
    }
    { }
    \\
    \Infer[BindTryKeep$_3$]
    {
        \Gamma\raavdash \epa a\lanns\refines\results
    }
    {
        \Gamma\aavdash\epa{\bindexp \xx a \kappa}{\annotitrybind{\lanns}{\banns_1}{\banns_2}}\refines
        \results'
    }
    { }
  \end{mathpar}
  where $\results'=\mapres{\results}{X}{\annotitrybind{X}{\banns_1}{\banns_2}}$.

  \begin{mathpar}
    \Infer[BindOk]
    { }
    {
      \Gamma\aavdash\epa{\bindexp \xx a \kappa}{\annotibind{\lanns}{\emptyset}{\sanns}}
      \refines\resok{\annotibind{\lanns}{\emptyset}{\sanns}}
    }
    { }
  \end{mathpar}

  \begin{mathpar}
    \Infer[BindKeep$_1$]
    {
      \Gamma\paavdash \epa a{\lanns}\refines\annota\\\Gamma\vdashA \ea a{\annota}:s\\
      \Gamma, \xx:s\land\mt\raavdash \epa\kappa\banns\refines\resok{\banns'}\\
      \Gamma\aavdash\epa{\bindexp \xx a \kappa}{\annotibind{\lanns}{\sanns}{\{(\mt,\banns')\}\cup\sanns'}}\refines\results
    }
    {
      \Gamma\aavdash\epa{\bindexp \xx a \kappa}{\annotibind{\lanns}{\{(\mt,\banns)\}\cup \sanns}{\sanns'}}
      \refines{\results}
    }
    { }
    \\
    \Infer[BindKeep$_2$]
    {
      \Gamma\paavdash \epa a{\lanns}\refines\annota\\\Gamma\vdashA \ea a{\annota}:s\\
      \Gamma, \xx:s\land\mt\raavdash\epa{\kappa}{\banns}\refines\respart{\Gamma'}{\banns_1}{\banns_2}\\
      \xx\in\dom{\Gamma'}\\
      \Gamma\eaavdash\ety{a}{\neg (\mt\land\Gamma'(\xx))}\refines\Gammas_1\\
      \Gamma\eaavdash\ety{a}{\neg (\mt\setminus\Gamma'(\xx))}\refines\Gammas_2
    }
    {
      \Gamma\aavdash\epa{\bindexp \xx a \kappa}{\annotibind{\lanns}{\{(\mt,\banns)\}\cup \sanns}{\sanns'}}
      \refines\respart{\Gamma'\setminus\xx}{\banns_1'}{\banns_2'}
    }
    { }
  \end{mathpar}
  where:
  \begin{itemize}
    \item $\banns_1'=\annotiprop{\lanns}{\Gammas_1\cup\Gammas_2}{\{(\mt\land\Gamma'(\xx),\banns_1),\, (\mt\setminus\Gamma'(\xx),\banns_2)\}\cup\sanns}{\sanns'}$
    \item $\banns_2'=\annotibind{\lanns}{\{(\mt,\banns_2)\}\cup \sanns}{\sanns'}$
  \end{itemize}
  \begin{mathpar}
    \Infer[BindKeep$_3$]
    {
      \Gamma\paavdash \epa a{\lanns}\refines\annota\\\Gamma\vdashA \ea a{\annota}:s\\
      \Gamma, \xx:s\land\mt\raavdash\epa\kappa\banns\refines\results
    }
    {
      \Gamma\aavdash\epa{\bindexp \xx a \kappa}{\annotibind{\lanns}{\{(\mt,\banns)\}\cup \sanns}{\sanns'}}
      \refines\results'
    }
    { }
  \end{mathpar}
  where $\results'={\mapres{\results}{X}{\annotibind{\lanns}{\{(\mt,X)\}\cup \sanns}{\sanns'}}}$.
 
  \begin{mathpar}
    \Infer[BindProp$_1$]
    {
      \Gamma'\in\Gammas\\\compatible{\Gamma}{\Gamma'}
    }
    {
      \Gamma\aavdash\epa{\bindexp \xx a \kappa}{\annotiprop{\lanns}{\Gammas}{\sanns}{\sanns'}}
      \refines\respart{\Gamma''}{\banns_1}{\banns_2}
    }
    { }
  \end{mathpar}
  where:
  \small\begin{itemize}
    \item $\compatible{\Gamma}{\Gamma'} \Leftrightarrow (\dom{\Gamma'}\subseteq\dom\Gamma) \text{ and } (\forall \xx\in\dom{\Gamma'}.\ (\Gamma(\xx)\land\Gamma'(\xx)\not\simeq\Empty) \text{ or }(\Gamma(\xx)\simeq\Empty))$
    \item $\Gamma''=\{(\xx:\mt)\in\Gamma'\,\alt\,\Gamma(\xx)\not\leq\mt\}$
    \item $\banns_1=\annotibind{\lanns}{\sanns}{\sanns'}$
    \item $\banns_2=\annotiprop{\lanns}{\Gammas\setminus\{\Gamma'\}}{\sanns}{\sanns'}$
  \end{itemize}
  \normalsize\begin{mathpar}
    \Infer[BindProp$_2$]
    {
      \Gamma\aavdash\epa{\bindexp \xx a \kappa}{\annotibind{\lanns}{\sanns}{\sanns'}}
      \refines\results
    }
    {
      \Gamma\aavdash\epa{\bindexp \xx a \kappa}{\annotiprop{\lanns}{\Gammas}{\sanns}{\sanns'}}
      \refines\results
    }
    { }
  \end{mathpar}

  \begin{mathpar}
    \Infer[InterEmpty]
    { }
    {
        \Gamma\aavdash \epa{\kappaa}{\annotiinter{\emptyset}{\emptyset}} \refines \resfail
    }
    { }
    \\
    \Infer[InterOk]
    { }
    {
        \Gamma\aavdash \epa{\kappaa}{\annotiinter{\emptyset}{S}} \refines
        \resok{\annotiinter{\emptyset}{S}}
    }
    { }
    \\
    \Infer[Inter$_1$]
    {
        \Gamma\raavdash \epa\kappaa{\aanns}\refines\resok{\aanns'}\\
        \Gamma\aavdash \epa{\kappaa}{\annotiinter{S}{\{\aanns'\}\cup S'}}\refines\results
    }
    {
        \Gamma\aavdash \epa{\kappaa}{\annotiinter{\{\aanns\}\cup S}{S'}} \refines\results
    }
    { }
    \\
    \Infer[Inter$_2$]
    {
        \Gamma\raavdash \epa\kappaa{\aanns}\refines\resfail\\
        \Gamma\aavdash \epa{\kappaa}{\annotiinter{S}{S'}}\refines\results
    }
    {
        \Gamma\aavdash \epa{\kappaa}{\annotiinter{\{\aanns\}\cup S}{S'}} \refines\results
    }
    { }
    \\
    \Infer[Inter$_3$]
    {
        \Gamma\raavdash \epa\kappaa\aanns\refines\results
    }
    {
        \Gamma\aavdash \epa{\kappaa}{\annotiinter{\{\aanns\}\cup S}{S'}} \refines
        \mapres{\results}{X}{{(\annotiinter{\{X\}\cup S}{S'})}}
    }
    { }
  \end{mathpar}
  
  % \subsubsection{Deduction Rules for $\raavdash$ Judgements}

  \begin{mathpar}
    \Infer[Iterate$_1$]
    { \Gamma\aavdash\epa{\kappaa}{\aanns}\refines\respart{\Gamma'}{\aanns_1}{\aanns_2}\\
      \Gamma\raavdash\epa{\kappaa}{\aanns_1}\refines\results'
    }
    { \Gamma\raavdash\epa{\kappaa}{\aanns}\refines\results' }
    {\Gamma'=\emptyenv}
    \\
    \Infer[Iterate$_2$]
    { 
      \Gamma\aavdash \epa\kappaa\aanns\refines\ressubst{\{\msubst_i\}_{i\in I}}{\aanns_1}{\aanns_2}\\
      \Gamma\raavdash\epa{\kappaa}{\annotiinter{\{{\aanns_1\msubst_i}\}_{i\in I}\cup\{\aanns_2\}}{\emptyset}}\refines\results'
    }
    { \Gamma\raavdash\epa{\kappaa}{\aanns}\refines\results' }
    {\forall i\in I.\ \msubst_i\disjoint\Gamma }
    \\
    \Infer[Stop]
    { \Gamma\aavdash\epa{\kappaa}{\aanns}\refines\results }
    { \Gamma\raavdash\epa{\kappaa}{\aanns}\refines\results }
    { }
  \end{mathpar}
  with $\aanns\msubst$ denoting the intermediate annotation $\aanns$ in which the
  substitution $\msubst$ has been applied recursively to every type (in $\lambda$-abstraction annotations and binding annotations).

  The following rules can be added to support the extensions presented in Appendix~\ref{sec:extensions-appendix}:

\begin{mathpar}
  \Infer[LetVar$_i$]
  { \xx_i\not\in\dom\Gamma }
  { \Gamma \aavdash \epa{\evalexp{\xx_1}{\xx_2}}{\annotiinfer} \refines \resvar{\xx_i}{\annotiinfer}{\annotiuntyp}}
  { }
  \\
  \Infer[LetOk]
  { }
  { \Gamma \aavdash \epa{\evalexp{\xx_1}{\xx_2}}{\annotiinfer} \refines \resok{\annotityp}}
  { }
\end{mathpar}

\begin{mathpar}
  \Infer[ConstrVar]
  { \xx\not\in\dom\Gamma }
  {\Gamma\aavdash \epa{\xx\bbcolon\tau}{\annotiinfer}\refines\resvar{\xx}{\annotiinfer}{\annotiuntyp}}
  { }
  \\
  \Infer[ConstrInfer]
  {
  \tallyinfer{\Msubst}{\Gamma(\xx)}{\tau}
  }
  {\Gamma\aavdash \epa{\xx\bbcolon\tau}{\annotiinfer}\refines
  \ressubst{\Msubst}{\annotityp}{\annotiuntyp}
  }
  { }
\end{mathpar}

\subsection{Auxiliary Reconstruction System}\label{sec:auxreconstruction-appendix}
  
In the following, $\arenaming{t}$ denotes a renaming from $\vars t\cap\polyvars$ to fresh polymorphic variables.

  \begin{mathpar}
    \Infer[Const]
    { }
    {\Gamma\paavdash \epa c{\annotityp}\refines\annotconst}
    { }
    \qquad
    \Infer[Ax]
    { }
    {\Gamma\paavdash \epa x{\annotityp}\refines\annotvara}
    { x\in\dom\Gamma }
  \end{mathpar}
  \begin{mathpar}
    \Infer[Pair]
    { \renaming_1=\arenaming{\Gamma(\xx_1)}\\\renaming_2=\arenaming{\Gamma(\xx_2)} }
    {\Gamma\paavdash \epa{(\xx_1,\xx_2)}{\annotityp}\refines\annotpair{\renaming_1}{\renaming_2}}
    { }
    \\
    \Infer[Proj]
    {
        \tally{\Sigma}{\Gamma(\xx)}{\polyvar\times\polyvarb}
    }
    {\Gamma\paavdash \epa{\pi_i \xx}{\annotityp}\refines\annotproj{\Sigma}}
    {\begin{array}{l}\Sigma\neq\emptyset\\[-1mm] \polyvar, \polyvarb\in\polyvars\text{ fresh}\end{array}}
    \\
    \Infer[App]
    {
        t_1=\Gamma(\xx_1)\\t_2=\Gamma(\xx_2)\\
        \renaming_1=\arenaming{t_1}\\\renaming_2=\arenaming{t_2}\\
        \tally{\Sigma}{t_1\renaming_1}{t_2\renaming_2\to \polyvar}
    }
    {\Gamma\paavdash \epa{{\xx_1}{\xx_2}}{\annotityp}\refines
    \annotapp{\{\sigma\circ\renaming_1\,\alt\,\sigma\in\Sigma\}}{\{\sigma\circ\renaming_2\,\alt\,\sigma\in\Sigma\}}}
    {\begin{array}{l}\Sigma\neq\emptyset\\[-1mm] \polyvar\in\polyvars\text{ fresh}\end{array} }  
  \end{mathpar}
  
  \begin{mathpar}
    \Infer[Case$_\Empty$]
    { \sigma\in\tallyf{\Gamma(\xx)}{\Empty} }
    {\Gamma\paavdash \epa{\tcase {\xx} \tau {\xx_1}{\xx_2}}{\annotityp}\refines\annotempty{\{\sigma\}}}
    { }
    \\
    \Infer[Case$_1$]
    { \sigma\in\tallyf{\Gamma(\xx)}{\tau} }
    {\Gamma\paavdash \epa{\tcase {\xx} \tau {\xx_1}{\xx_2}}{\annotityp}\refines\annotthen{\{\sigma\}}}
    { \xx_1\in\dom\Gamma }
    \\
    \Infer[Case$_2$]
    { \sigma\in\tallyf{\Gamma(\xx)}{\neg\tau} }
    {\Gamma\paavdash \epa{\tcase {\xx} \tau {\xx_1}{\xx_2}}{\annotityp}\refines\annotelse{\{\sigma\}}}
    { \xx_2\in\dom\Gamma }
  \end{mathpar}
  
  \begin{mathpar}
    \Infer[Lambda]
    {\Gamma,x:\mt\paavdash \epa\kappa{\banns}\refines\annot}
    {\Gamma\paavdash\epa{\lambda x.\kappa}{\annotilambda{\mt}{\banns}}
      \refines \annotlambda{\mt}{\annot}}
    { }
  \end{mathpar}
  
  \begin{mathpar}
    \Infer[Var]
    { \renaming=\arenaming{\Gamma(\xx)} }
    { \Gamma\paavdash \epa\xx \annotityp: \annotvar\renaming }
    { }
    \quad
    \Infer[BindSkip]
    {\Gamma\paavdash \epa\kappa\banns\refines \annot}
    {
    \Gamma\paavdash\epa{\bindexp {\xx} {a} {\kappa}} {\annotiskip \banns}\refines\annotskip\annot
    }
    { \xx\not\in\dom\Gamma }
    \\
    \Infer[BindKeep]
    {
    \Gamma\paavdash \epa a \lanns\refines\annota\qquad
    \Gamma\vdashA \ea a \annota:s\qquad 
    {\small(\forall i\in I)}\ \ \Gamma,\xx:s\land\mt_i\paavdash \epa\kappa{\banns_i}\refines \annot_i\\
    }
    {
    \Gamma\paavdash\epa{\bindexp {\xx} {a} {\kappa}} {\annotibind \lanns{\emptyset}{\{(\mt_i,\banns_i)\}_{i\in I}}}
    \refines \annotbind \annota {\{(\mt_i,\annot_i)\}_{i\in I}}
    }{(*)}  
  \end{mathpar}
  where $(*)$ is $\tbvee_{i\in I}\mt_i\simeq\Any$.
    
  \begin{mathpar}
    \Infer[Inter]
    {
    {\small(\forall i\in I)}\quad\Gamma\paavdash \epa\kappaa{\aanns_i}\refines\aannot_i
    }
    {
      \Gamma\paavdash\epa{\kappaa}{\annotiinter{\emptyset}{\{\aanns_i\}_{i\in I}}}
      \refines \annotinter {\{\aannot_i\}_{i\in I}}
    }
    { I \neq \emptyset }
  \end{mathpar}

  The following rules can be added to support the extensions presented in Appendix~\ref{sec:extensions-appendix}:

\begin{mathpar}
    \Infer[Let]
    { }
    {\Gamma\paavdash \epa{\evalexp{\xx_1}{\xx_2}}{\annotityp}\refines\annotvara}
    { }
\qquad
  \Infer[Constr]
  {
      \sigma\in\tallyf{\Gamma(\xx)}{\tau}
  }
  {\Gamma\paavdash \epa{\xx\bbcolon\tau}{\annotityp}\refines\annotannot{\{\sigma\}}}
  { }
\end{mathpar}

\subsection{Split Propagation System}\label{sec:envrefinement-appendix}

The split propagation system defined in this section tries to deal with the following problem:
\textit{given an environment $\Gamma$, an atom $a$ and a type $t$,
what additional assumptions can be made on $\Gamma$ in order to ensure that $a$ has type $t$}?
It is used by the main reconstruction system in order to propagate splits made by bindings.

  \begin{mathpar}
    \Infer[Const$_1$]
    { \basic{c}\leq\mt }
    { \Gamma\eaavdash \ety{c}\mt \refines \{\emptyenv\} }
    { }
    \qquad
    \Infer[Const$_2$]
    { }
    { \Gamma\eaavdash \ety{c}\mt \refines \{ \} }
    { }
    \\
    \Infer[Ax$_1$]
    { \Gamma(x)\leq\mt }
    { \Gamma\eaavdash \ety{x}\mt \refines \{\emptyenv\} }
    { }
    \qquad
    \Infer[Ax$_2$]
    { }
    { \Gamma\eaavdash \ety{x}\mt \refines \{ \} }
    { }
    \\
    \Infer[Proj$_1$]
    { }
    { \Gamma\eaavdash \ety{\pi_1 \xx}\mt \refines \{\{\xx:\mt\times\Any\}\} }
    { }
    \quad
    \Infer[Proj$_2$]
    { }
    { \Gamma\eaavdash \ety{\pi_2 \xx}\mt \refines \{\{\xx:\Any\times \mt\}\} }
    { }
    \\
    \Infer[Pair]
    { \mt\eqdnf (\textstyle\bigvee_{i\in I}(\pair{\mt_i}{\ms_i})) \vee \dots }
    { \Gamma\eaavdash \ety{(\xx_1,\xx_2)}\mt \refines \{\{\xx_1:\mt_i\}\land\{\xx_2:\ms_i\}\,\alt\,i\in I\} }
    { }  
    \\
    \Infer[Case]
    { }
    { \Gamma\eaavdash \ety{\tcase {\xx} \tau {\xx_1}{\xx_2}}\mt \refines \{\{\xx:\tau,\ \xx_1:\mt\}, \{\xx:\neg\tau,\ \xx_2:\mt\}\} }
    { }
  \end{mathpar}
  \begin{mathpar}
    \Infer[App]
    {
      \Gamma(\xx_1)\eqdnf \tbvee_{i\in I}t_i\\
      \forall i\in I.\ \{\psubst_j\}_{j\in J_i}=\tallyf{t_i}{\polyvar\to\mt}
    }
    { \Gamma\eaavdash \ety{{\xx_1}{\xx_2}}\mt \refines
      \textstyle\bigcup_{i\in I}\Gammas_i
    }
    { \polyvar\in\Polyvars\text{ fresh} }
  \end{mathpar}
  where, for every $i\in I$, $\Gammas_i=\{\{\xx_1:(t_i\psubst_j)\psubst_j',\ \xx_2:(\polyvar\psubst_j)\psubst_j'\}\,\alt\,j\in J_i\}$
  with $\psubst_j'$ a type substitution mapping each polymorphic type variable $\polyvarb$ appearing in
  $t_i\psubst_j$ or $\polyvar\psubst_j$ to either:
  \begin{itemize}
    \item $\Any$ if $\polyvarb$ only appears in covariant positions in $\polyvar\psubst_j$,
    \item $\Empty$ if $\polyvarb$ only appears in contravariant positions in $\polyvar\psubst_j$,
    \item a fresh monomorphic type variable otherwise.
  \end{itemize}
  \begin{mathpar}
    \Infer[Lambda]
    { }
    { \Gamma\eaavdash \ety{\lambda x.\ \kappa}\mt \refines \{ \} }
    { }
  \end{mathpar}
  
The following rules can be added to support the extensions presented in Appendix~\ref{sec:extensions-appendix}:

\begin{mathpar}
    \Infer[Let]
    { }
    { \Gamma\eaavdash \ety{\evalexp{\xx_1}{\xx_2}}\mt \refines \{\{\xx_2:\mt\}\} }
    { }
\qquad
  \Infer[Constr]
  { }
  { \Gamma\eaavdash \ety{\xx\bbcolon\tau}\mt \refines \{\{\xx:\mt\}\} }
  { }
\end{mathpar}

\section{Proofs}
\label{sec:proofs-appendix}

The proofs are for the source language presented in section~\ref{sec:types} without extension:
\begin{equation*}
    \begin{array}{lrclr}
        \textbf{Expressions} &e &::=& c\alt x\alt\lambda x.e\alt e e\alt (e,e)\alt \pi_i e \alt
        \tcase{e}{\tau}{e}{e}
    \end{array}
\end{equation*}

We fix some notations relative to substitutions:
\begin{itemize}
    \item $\subst$ ranges over substitutions from type variables ($\Polyvars\disjcup\Monovars$) to types
    \item $\renaming$ ranges over renamings of polymorphic variables, that is, injective substitutions from $\polyvars$ to $\polyvars$
    \item $\sigma$ ranges over substitutions from polymorphic type variables $\polyvars$ to types
    \item $\Sigma$ ranges over sets of substitutions from polymorphic type variables $\polyvars$ to types
    \item $\msubst$ ranges over substitutions from monomorphic type variables $\monovars$ to monomorphic types
    \item $\Msubst$ ranges over sets of substitutions from monomorphic type variables $\monovars$ to monomorphic types
\end{itemize}

\subsection{A Canonical Form for the Derivations}\label{sec:normalisation}

Derivations for the declarative type system can have many shapes
(see Appendix~\ref{sec:decl-appendix} for the full declarative system,
without the rules for extensions).
In particular, the union elimination rule \Rule{$\vee$E} can be used anywhere in
the derivation and changes the expression to type by performing a substitution
on it. Other non-structural rules such as \Rule{$\wedge$}, \Rule{$\leq$} and \Rule{Inst}
can also be applied anywhere in the derivation.
In this section, we will define canonical derivations that restrict the use of those rules.
This will then be used, in Section~\ref{sec:typesafety}, to establish a type safety theorem.

\subsubsection{Alternative Form of the Declarative Type System}

In order to be able to express our normalisations lemmas, we first need to slightly modify some rules of the declarative
type system. In order to avoid confusions, the modified declarative type system will use this turnstile symbol: $\vdasha$.

First, we modify the \Rule{Ax} rule so that it can perform a renaming of the polymorphic type variables
in $\Gamma(x)$:

\begin{mathpar}
    \Infer[Ax]
    { }
    {\Gamma \vdasha x: \Gamma(x)\renaming}
    { }
\end{mathpar}

This new \Rule{Ax} rule is derivable in the initial delcarative type system by composing
a \Rule{Ax} rule and a \Rule{Inst} rule. Still, allowing the \Rule{Ax} rule to perform a renaming
of polymorphic type variables is useful, as it allows to uncorrelate types
without resorting to the \Rule{Inst} rule.
For instance, consider the pair $(x,x)$ with $x$ having the type $\polyvar\to\polyvar$.
While this pair could be typed $\pair{(\polyvar\to\polyvar)}{(\polyvar\to\polyvar)}$,
this type does not allow instantiating the left-hand side and right-hand side of the product independently.
A better type would be $\pair{(\polyvar\to\polyvar)}{(\polyvarb\to\polyvarb)}$,
and with this new \Rule{Ax} rule, it can be derived without having to use a \Rule{Inst} rule.
This way, the \Rule{Inst} rule can be reserved to cases that require non trivial
instantiations (i.e. not just renamings). Note that the necessity of performing this renaming
comes from the fact that we do not use type schemes $\forall \vec\alpha.\ t$, where renaming
of the type variables in $\vec\alpha$ can be performed implicitely anywhere.

Secondly, we use a \Rule{$\wedge$} rule of multiple arity instead of a binary one:
\begin{mathpar}
    \Infer[$\wedge$]
    { \small(\forall i\in I)\quad\Gamma \vdasha e:t_i }
    {\Gamma \vdasha e: \tbwedge_{i\in I} t_i }
    { I\neq\emptyset }
\end{mathpar}

This allows to combine successive \Rule{$\wedge$} rule applications
into one \Rule{$\wedge$} rule, making the normalisation lemmas easier to express.
This new \Rule{$\wedge$} rule is admissible in the $\vdash$ system: it can be replaced by
several consecutive \Rule{$\wedge$} nodes. 

Similarly, we will use a \Rule{$\vee$} rule of multiple arity:
\begin{mathpar}
    \Infer[$\vee$]
    {  
      \Gamma \vdasha e': s\\
      \small(\forall i\in I)\quad\Gamma, x:s\land \mt_i\vdasha e:t
    }
    {
    \Gamma\vdasha e\subs x {e'}  : t
    }
    { \{\mt_i\}_{i\in I}\in\Partitions{\Any} }
\end{mathpar} with $\Partitions{t}$ denoting the set of partitions of the type $t$,
that is, the set of all sets $\{t_i\}_{i\in I}$ such that:
$(i)$ $\tbvee_{i\in I}t_i\simeq t$, $(ii)$ $\forall i\in I.\ t_i\not\simeq\Empty$, and
$(iii)$ $\forall i,j\in I.\ i\neq j\Rightarrow t_i\land t_j\simeq\Empty$. The guard condition
is most of time omitted, for concision.

This allows to combine successive \Rule{$\vee$} rule applications substituting the same sub-expression
into one \Rule{$\vee$} rule, making the normalisation lemmas easier to express.
Again, this new \Rule{$\vee$} rule is admissible. For instance, the following derivation:
\small\begin{mathpar}
  \Infer[$\vee$]
  {  
    \Infer{ A }{\Gamma \vdasha e': s}{ }
    \Infer{ B }{\Gamma, y:s\land \mt_1\vdasha e:t}{ }
    \Infer{ C }{\Gamma, y:s\land \mt_2\vdasha e:t}{ }
    \Infer{ D }{\Gamma, y:s\land \mt_3\vdasha e:t}{ }
  }  
  {
    \Gamma\vdasha e \subs x {e'}  : t
  }
  { }
\end{mathpar}\normalsize
can be transformed to use only two binary \Rule{$\vee$} rules:
\small\begin{mathpar}
    \Infer[$\vee$]
    {  
      \Infer{ A }{\Gamma \vdasha e': s}{ }
      \Infer{ B }{\Gamma, x:s\land \mt_1\vdasha e:t}{ }
      \Infer{ X }{\Gamma, x:s\land \neg\mt_1\vdasha e:t}{ }
    }  
    {
      \Gamma\vdasha e \subs x {e'}  : t
    }
    { }
    \\\text{with }X\text{ being the following derivation:}\\
    \Infer[$\vee$]
    {
    [\text{Ax}]\quad
    % \Infer{ [\text{Ax}] }{\Gamma, y:s\land \neg\mt_1'\vdasha y: s\land\neg\mt_1'}{ }
    \Infer{ C\subs x y }{\Gamma, x:s\land \neg\mt_1, y:s\land \mt_2\vdasha e\subs x y:t}{ }
    \Infer{ D\subs x y }{\Gamma, x:s\land \neg\mt_1, y:s\land \mt_3\vdasha e\subs x y:t}{ }
    }  
    {
      \Gamma, x:s\land \neg\mt_1\vdasha (e\subs x y)\subs y x:t
    }
    { }
\end{mathpar}\normalsize
This construction can be generalized for a partition of $\Any$ of any cardinality.

\newcommand{\Axl}[0]{Ax$_\lambda$}
\newcommand{\Axv}[0]{Ax$_\vee$}

Lastly, we distinguish variables that are introduced by
a \Rule{$\to$I} node from variables introduced by a \Rule{$\vee$} node.
The formers are called \textit{lambda variables}, the set of all lambda variables is denoted by $\LVars$
and ranged over by $x$, $y$, and $z$.
The latters are called \textit{binding variables}, the set of all binding variables is denoted by $\BVars$
and ranged over by $\xx$, $\yy$, and $\zz$.
$\LVars$ and $\BVars$ form a partition of the set of variables $\LBVars$ (formally, $\LBVars = \LVars\disjcup\BVars$).
The syntax of expressions and the rules of the type system are changed accordingly as follows:
\begin{equation*}
    \begin{array}{lrclr}
      \textbf{Expressions} &e &::=& c\alt x\alt\xx\alt\lambda x.e\alt e e\alt (e,e)\alt \pi_i e \alt\tcase{e}{\tau}{e}{e}\\
    \end{array}
\end{equation*}
\begin{mathpar}
    \Infer[\Axl]
    { }
    {\Gamma \vdasha x: \Gamma(x)\renaming}
    { }
    \qquad
    \Infer[\Axv]
    { }
    {\Gamma \vdasha \xx: \Gamma(\xx)\renaming}
    { }
    \\
    \Infer[$\vee$]
    {  
    \Gamma \vdasha e': s\\
    \small(\forall i\in I)\quad\Gamma, \xx:s\land \mt_i\vdasha e:t
    }
    {
    \Gamma\vdasha e\subs \xx {e'}  : t
    }
    { \{\mt_i\}_{i\in I}\in\Partitions{\Any} }
\end{mathpar}

When needed, we will use the notation $\xxx$ to range over both binding variables and lambda variables.
For instance, we could write the proposition $\forall \xxx\in\dom\Gamma.\ \Gamma(\xxx)\not\simeq\Empty$.
We say that an expression $e$ is a ground expression if $e$ does not contain any binding variable,
and that a derivation $D$ is a ground derivation if it derives a judgement for a ground expression.
For what concerns programs, they use lambda variables for top-level definitions,
and are only composed of ground expressions.

This new system is equivalent to the initial type system:
the combination of both \Rule{\Axl} and \Rule{\Axv}
gives the previous \Rule{Ax} rule.

A full declarative type system with these modifications is presented in Figure~\ref{fig:alt_declarative}.

\begin{figure}[t]
  \input{proofs/fig-declarative-alt}
  \caption{Alternative Declarative Type System\label{fig:alt_declarative}}
\end{figure}

The rules \Rule{Const}, \Rule{\Axl}, \Rule{$\to$I},
\Rule{$\to$E}, \Rule{$\times$I}, \Rule{$\times$E$_1$}, \Rule{$\times$E$_2$}, \Rule{$\Empty$},
\Rule{$\in_1$} and \Rule{$\in_2$} will be called \textit{structural rules} as their use is guided
by the structure of the expression to type,
each of them allowing to type a specific syntactic construction.
In particular, note that the rule \Rule{\Axv} is not considered structural as binding variables
$\xx$ are not supposed to appear in the initial expression (they are only introduced in the derivation
when using a \Rule{$\vee$} rule).

Also, the first premise of a \Rule{$\vee$} rule will be called its \textit{definition premise},
and its others premises will be called \textit{body premises}.

All the proofs in the next sections and chapters will use the $\vdasha$ declarative type system,
which is equivalent to the $\vdash$ type system.

\begin{proposition}
  For any ground expression $e$, type environment $\Gamma$ and type $t$:
  \begin{align*}
    \Gamma\vdash e:t &\Leftrightarrow \Gamma\vdasha e:t
  \end{align*}
\end{proposition}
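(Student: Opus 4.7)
\smallskip
\noindent\textbf{Proof plan.} The proof proceeds by two separate inductions on the structure of the derivations, one for each direction of the equivalence. Before starting, I would fix how lambda and binding variables from the $\vdasha$ world map back to the ordinary variables of $\vdash$: since $\vdash$ has a single sort of variable, I can simply erase the distinction, and conversely when going from $\vdash$ to $\vdasha$, the variable introduced by each $\to$I node becomes a fresh lambda variable and the variable substituted away by each $\vee$E node becomes a fresh binding variable (this is compatible with the statement because $e$ itself is assumed ground, so only lambda variables can appear free in it).

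\smallskip
\noindent\textbf{Forward direction ($\vdash\ \Rightarrow\ \vdasha$).} By induction on the $\vdash$-derivation. All the purely structural rules (\Rule{Const}, \Rule{$\to$I}, \Rule{$\to$E}, \Rule{$\times$I}, \Rule{$\times$E$_i$}, \Rule{$\Empty$}, \Rule{$\in_1$}, \Rule{$\in_2$}), as well as \Rule{Inst} and \Rule{$\leq$}, transport directly since they appear identically in $\vdasha$. For \Rule{Ax}, I would apply \Rule{\Axl} (resp.\ \Rule{\Axv}) with the identity renaming. For \Rule{$\wedge$}, I instantiate the multi-arity $\vdasha$ rule with $|I|=2$. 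The only delicate case is \Rule{$\vee$}: if both $\mt$ and $\neg\mt$ are non-empty then $\{\mt,\neg\mt\}$ is a valid partition of $\Any$ and I use the multi-arity rule directly; if $\mt\simeq\Empty$ (the case $\neg\mt\simeq\Empty$ is symmetric), then $s\land\neg\mt\simeq s$ and the non-trivial body premise already yields $\Gamma,\xx\!:\!s\vdasha e\{\xx/x\}:t$, so I use the multi-arity \Rule{$\vee$} with the trivial partition $\{\Any\}$.

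\smallskip
\noindent\textbf{Backward direction ($\vdasha\ \Rightarrow\ \vdash$).} Again by induction on the derivation. The structural rules, \Rule{Inst}, and \Rule{$\leq$} transport identically. The \Rule{\Axl} and \Rule{\Axv} rules of $\vdasha$ are simulated in $\vdash$ by \Rule{Ax} followed by \Rule{Inst} applied to the renaming $\renaming$ (a renaming is a particular kind of type substitution of polymorphic variables). A multi-arity \Rule{$\wedge$} rule with $|I|=n$ is simulated by $n-1$ nested applications of the binary \Rule{$\wedge$} rule of $\vdash$. The main work is to simulate the multi-arity \Rule{$\vee$} rule: given a partition $\{\mt_i\}_{i\in I}$ of $\Any$ with $|I|=n$, I would proceed by induction on $n$, using the construction already sketched in the preamble of Section~\ref{sec:normalisation}. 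Concretely, I peel off one element $\mt_{i_0}$, apply a binary \Rule{$\vee$} of $\vdash$ with splitting type $\mt_{i_0}$ so that the ``then'' branch reuses the corresponding body premise, and in the ``else'' branch I apply the induction hypothesis to the smaller partition $\{\mt_i\land\neg\mt_{i_0}\}_{i\in I\setminus\{i_0\}}$ of $\neg\mt_{i_0}$, after substituting a fresh variable $y$ for $\xx$ and then re-substituting $\xx$ back, precisely as in the two-step derivation displayed in the excerpt.

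\smallskip
\noindent\textbf{Main obstacle.} The technically heaviest point is the backward simulation of the multi-arity \Rule{$\vee$} rule: I have to feed the nested binary \Rule{$\vee$}'s with body premises that are typed under the refined environment $\Gamma,\xx\!:\!s\land\mt_i\land\neg\mt_{i_0}\land\cdots$, whereas the induction hypothesis only gives me derivations under $\Gamma,\xx\!:\!s\land\mt_i$. Strengthening these derivations requires an auxiliary weakening/strengthening lemma stating that if $\Gamma,\xx\!:\!t\vdash e:u$ and $t'\leq t$ then $\Gamma,\xx\!:\!t'\vdash e:u$, which itself is a routine induction on the derivation and is provable uniformly for both $\vdash$ and $\vdasha$. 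Once that lemma is in place, the two inductions go through without further difficulty.
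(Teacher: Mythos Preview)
Your proposal is correct and follows essentially the same route as the paper: the forward direction is immediate (modulo the degenerate-partition corner case you rightly handle), and the backward direction unfolds the $n$-ary \Rule{$\wedge$} and \Rule{$\vee$} into chains of binary ones, using \Rule{Inst} to absorb the axiom renaming. Your ``main obstacle'' is actually lighter than you state, since $\{\mt_i\}_{i\in I}$ being a partition gives $\mt_i\land\neg\mt_{i_0}\simeq\mt_i$ for $i\neq i_0$, so the needed strengthening reduces to replacing an environment type by an equivalent one (plus a trivial weakening to carry the extra outer binding), both of which your monotonicity lemma covers.
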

\begin{proof}
  The $\Rightarrow$ direction is trivial. The $\Leftarrow$ direction is obtained
  by using \Rule{Inst} nodes to rename polymorphic type variables of axioms whenever needed, and by locally
  transforming $n$-ary \Rule{$\wedge$} nodes into $n-1$ binary \Rule{$\wedge$} nodes,
  and $n$-ary \Rule{$\vee$} nodes into $n-1$ binary \Rule{$\vee$} nodes, as detailled above.
\end{proof}

Now, we introduce a new order $\polyleq$ on types. Intuitively, it adds to
the subtyping order $\leq$ the possibility to instantiate polymorphic type variables.
We then use it in the statement of a monotonicity lemma that will be used extensively
in the next sections.

\begin{definition}[Polymorphic subtyping order]\label{def:polyleq}
  We define the order relation $\polyleq$ over types as follows:
  \[ \forall t_1,t_2.\ t_1\polyleq t_2 \Leftrightarrow \exists\Psubst.\ t_1\Psubst\leq t_2 \]
\end{definition}

Note that, while this order will be extensively used in the proofs,
it will not be used in algorithms as we have no way to compute it.
Indeed, while deciding this order might seem equivalent to
solving a tallying problem (defined in Section~\ref{sec:tallying}),
it is actually not the case as deciding this order requires to find a set of substitutions $\Psubst$,
and not a single substitution $\psubst$.

\begin{definition}
  For any order relation $\leq$ over types,
  we define the order relation $\leq$ over environments as follows:
  \[ \forall \Gamma_1,\Gamma_2.\ \Gamma_1\leq \Gamma_2 \Leftrightarrow \forall \xxx\in\dom{\Gamma_2}.\ \xxx\in\dom{\Gamma_1}
  \text{ and } \Gamma_1(x)\leq \Gamma_2(x) \]
\end{definition}

\newcommand{\InstLeq}[0]{Inst$\wedge$$\leq$}
We introduce for convenience a new notation that takes the form of a new rule \Rule{\InstLeq},
but is actually just a shortand for
a specific combination of \Rule{Inst}, \Rule{$\land$}, and \Rule{$\leq$} rules:
\small\begin{mathpar}
    \Infer[\InstLeq]{
        \Infer{A}{\Gamma\vdasha e: t'}{}\quad(\exists\Psubst.\ \tbwedge_{\psubst\in\Psubst}t'\psubst \leq t)
    }{
        \Gamma\vdasha e: t
    }{ }
    \quad\raisebox{7pt}{$\leftrightarrow$}\quad
    \Infer[$\leq$]{
        \Infer[$\wedge$]{
            \Infer[Inst]{
                \Infer{A}{\Gamma\vdasha e: t'}{}
            }{
                \Gamma\vdasha e: t'\psubst
            }{ \forall \psubst\in\Psubst }
        }
        { \Gamma\vdasha e: \tbwedge_{\psubst\in\Psubst}t'\psubst }
        { }
    }
    { \Gamma\vdasha e: t }
    { }
\end{mathpar}\normalsize

\begin{lemma}[Monotonicity]\label{monotonicity_lemma}
  For derivation $D$ of $\Gamma\vdasha e:t$ and environment $\Gamma'$
  such that $\Gamma'\polyleq\Gamma$, $D$ can be transformed into a derivation of $\Gamma'\vdasha e:t$
  just by adding \Rule{$\leq$}, \Rule{Inst} and \Rule{$\wedge$} nodes.
\end{lemma}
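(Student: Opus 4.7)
\medskip

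\noindent\textbf{Proof plan.} The proof proceeds by structural induction on the derivation $D$ of $\Gamma \vdasha e:t$. The inductive hypothesis states that for every sub-derivation $D'$ of $\Gamma'' \vdasha e'' : t''$ and every environment $\Gamma''' \polyleq \Gamma''$, we can transform $D'$ into a derivation of $\Gamma''' \vdasha e'' : t''$ by inserting only \Rule{$\leq$}, \Rule{Inst}, and \Rule{$\wedge$} nodes. The key observation that makes the induction go through is that the order $\polyleq$ is stable under extension by monomorphic bindings: if $\Gamma' \polyleq \Gamma$, then $\Gamma', \xxx : \mt \polyleq \Gamma, \xxx : \mt$ for any $\xxx$ and $\mt$ (taking $\Psubst = \{\mathrm{id}\}$ on the new binding).

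The cases for structural rules are straightforward: for rules whose premises share the same environment as the conclusion (\Rule{Const}, \Rule{$\to$E}, \Rule{$\times$I}, \Rule{$\times$E$_i$}, \Rule{$\Empty$}, \Rule{$\in_1$}, \Rule{$\in_2$}), we apply the IH to each premise with the new environment $\Gamma'$ and then re-apply the same structural rule. For \Rule{$\to$I}, the premise derives $\Gamma, x:\mt \vdasha e:t$; by the stability observation, $\Gamma', x:\mt \polyleq \Gamma, x:\mt$, so the IH gives a derivation under $\Gamma', x:\mt$, to which we re-apply \Rule{$\to$I}. The case for \Rule{$\vee$} is analogous: the IH applied to the definition premise yields $\Gamma' \vdasha e' : s$ (same $s$), and each body premise uses an environment $\Gamma, \xx: s\wedge\mt_i$ which is bounded by $\Gamma', \xx: s\wedge\mt_i$, so the IH applies. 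The non-structural rules \Rule{$\wedge$}, \Rule{Inst}, and \Rule{$\leq$} are immediate: apply the IH to the unique premise and re-apply the rule.

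The interesting cases are the two axioms \Rule{\Axl} and \Rule{\Axv}, which are where the hypothesis $\Gamma' \polyleq \Gamma$ is actually used. Consider \Rule{\Axl}: the original derivation concludes $\Gamma \vdasha x : \Gamma(x)\renaming$. By definition of $\polyleq$, there exists a set of substitutions $\Psubst$ such that $\bigwedge_{\psubst\in\Psubst} \Gamma'(x)\psubst \leq \Gamma(x)$. Since subtyping is preserved by substitutions, applying $\renaming$ on both sides gives
\[
\bigwedge_{\psubst\in\Psubst} \Gamma'(x)(\psubst\circ\renaming) \;\leq\; \Gamma(x)\renaming.
\]
We start from a fresh application of \Rule{\Axl} deriving $\Gamma' \vdasha x : \Gamma'(x)$ (with the identity renaming), and then use the \Rule{\InstLeq} shortcut introduced just before the lemma with the family $\{\psubst\circ\renaming \mid \psubst\in\Psubst\}$ to conclude $\Gamma' \vdasha x : \Gamma(x)\renaming$ using only \Rule{Inst}, \Rule{$\wedge$}, and \Rule{$\leq$} nodes, as required. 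The case \Rule{\Axv} is identical modulo replacing $x$ by $\xx$.

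The main (indeed only) subtlety is the axiom case, because $\polyleq$ is defined via an existentially quantified \emph{set} of substitutions rather than a single one; this is exactly why the conclusion of the lemma allows inserting \Rule{$\wedge$} nodes in addition to \Rule{Inst} and \Rule{$\leq$}. All the other cases are purely a matter of transporting the induction hypothesis through the premises, relying on the simple monotonicity of environment extension with monomorphic types.
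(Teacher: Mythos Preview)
Your proof is correct and follows exactly the approach sketched in the paper: a structural induction on $D$ in which the only non-trivial work happens at the axiom leaves, where each \Rule{\Axl}/\Rule{\Axv} node is replaced by an \Rule{\InstLeq} pattern built from the set of substitutions witnessing $\Gamma'(x)\polyleq\Gamma(x)$. The paper's own proof is a one-liner (``each \Rule{\Axv} and \Rule{\Axl} node is replaced by a \Rule{\InstLeq} pattern''), and your write-up simply spells out the details, including the straightforward propagation through the remaining rules and the stability of $\polyleq$ under extension by a monomorphic binding---all of which is exactly what ``straightforward induction'' is hiding. One cosmetic remark: when you compose the renaming with the witnessing substitutions, the order should be $\renaming\circ\psubst$ (apply $\psubst$ first, then $\renaming$) under the paper's convention $t(\sigma_2\circ\sigma_1)=(t\sigma_1)\sigma_2$, but this does not affect the argument.
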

\begin{proof}
  Straightforward induction on the derivation $\Gamma\vdasha e:t$,
  where each \Rule{\Axv} and \Rule{\Axl} node is replaced by a \Rule{\InstLeq} pattern
  of that node.
\end{proof}

\subsubsection{Normalisation Lemmas}\label{norm_lemmas}

Derivations for the declarative type system of Figure~\ref{fig:alt_declarative} can still take
many different shapes. In this section, we define several normalisation lemmas, each restricting
the use of a non-structural rule. They are then combined into a normalisation theorem.

\newcommand{\Vee}[0]{$\vee$}
\newcommand{\an}[1]{\textcolor{red}{#1}}
\textbf{Normalisation of \Rule{\Vee} nodes}

\begin{lemma}[Introduction of an arbitrary \Rule{\Vee} node]\label{vee_generation_lemma}
    Let $\Gamma$ a type environment, $e$ and $e_\xx$ two expressions,
    and $\{\mt_i\}_{i\in I} $ a partition of $\Any$.
    Let $D$ be a derivation for the judgement $\Gamma\vdasha e\subs{\xx}{e_\xx}:t$ such that
    $D$ does not contain any \Rule{\Vee} node performing a substitution $\subs{\yy}{e_\yy}$
    with $e_\yy$ a strict sub-expression of $e_\xx$.
    If $e_\xx$ is typable under the context $\Gamma$, then there exists a type $s$ such that
    $D$ can be transformed into a derivation whose root is a \Rule{\Vee} node of the following form:
    \begin{mathpar}
        \Infer[\Vee]
        { 
            \Infer{\dots}{\Gamma\vdasha e_\xx:s}{}\\
            \Infer{\dots}{\Gamma, \xx:s\land \mt_i\vdasha e:t}{\forall i\in I}
        }
        { \Gamma\vdasha e\subs{\xx}{e_\xx}:t }
        { }
    \end{mathpar}
\end{lemma}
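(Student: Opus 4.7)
The plan is to proceed by structural induction on $D$, aiming to hoist a \Rule{\Vee} node performing the substitution $\subs{\xx}{e_\xx}$, with the prescribed partition $\{\mt_i\}_{i\in I}$, up to the root of the derivation. The key insight is that the no-\Rule{\Vee}-inside-$e_\xx$ hypothesis ensures each occurrence of $e_\xx$ induced by the substitution is typed \emph{as a whole} by some identifiable sub-derivation of $D$, so we can factor these occurrences out uniformly. Without loss of generality I will assume $\xx$ is fresh with respect to $\Gamma$ and to the bound variables of $e$ (by $\alpha$-renaming), so that substituting $\xx$ back in the body premises never causes capture.

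Concretely, I first identify the sub-derivations of $D$ that type the occurrences of $e_\xx$ arising from free occurrences of $\xx$ in $e$; let $s_1,\dots,s_n$ be the types these sub-derivations assign. Each such sub-derivation types $e_\xx$ in an extended context $\Gamma, x_1:\mt_1,\dots,x_k:\mt_k$ where the $x_i$ are lambda-variables bound above the occurrence in $e$; since $e_\xx$ does not mention the $x_i$, a weakening/strengthening argument restricts it to a derivation in $\Gamma$. Setting $s = \bigwedge_{j=1}^n s_j$ (with $n \geq 1$) and combining the sub-derivations via a multi-arity \Rule{$\wedge$} node gives the definition premise $\Gamma \vdasha e_\xx : s$; if $n = 0$, i.e.\ $\xx \notin \fv{e}$, the typability hypothesis supplies a derivation $\Gamma \vdasha e_\xx : s$ for some $s$ directly. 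For each body premise $\Gamma, \xx : s \land \mt_i \vdasha e : t$, the derivation is obtained from $D$ by replacing every identified sub-derivation of $e_\xx$ at type $s_j$ by an \Rule{\Axv} node on $\xx$ followed by an \Rule{\InstLeq} step producing $s_j$; this is sound because $s \land \mt_i \leq s \leq s_j$, and Lemma~\ref{monotonicity_lemma} lifts the remaining structural part of $D$ to the extended environment. When $s \land \mt_i \simeq \Empty$ the construction still works, since \Rule{\Axv} produces $\xx : \Empty$ and subsumption yields any $s_j$.

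The induction handles the intermediate rules in a routine way: structural rules acting outside the substituted occurrences commute with the hoist by applying the induction hypothesis to each premise and rebuilding the rule above the new \Rule{\Vee}; the non-structural rules \Rule{Inst}, \Rule{$\leq$}, and \Rule{$\wedge$} commute straightforwardly (they can be pushed into each body premise); an inner \Rule{\Vee} in $D$ is handled in two sub-cases, according to the assumption that its substituted expression $e_\yy$ is not a strict sub-expression of $e_\xx$. Either $e_\yy = e_\xx$, in which case the inner $\vee$ is merged with the target $\vee$ by taking a common refinement of the two partitions and adjusting the body premises; or $e_\yy$ is disjoint from (or contains) $e_\xx$, and the induction hypothesis applies to each premise independently before reassembling around the inner $\vee$.

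The main technical obstacle is keeping a clean correspondence between occurrences of $\xx$ in $e$ and occurrences of $e_\xx$ in $e\subs{\xx}{e_\xx}$ across inner \Rule{\Vee} rules: such rules rewrite the expression under consideration by substituting other binding variables, and in general may fragment sub-terms in ways that do not respect the $\xx \mapsto e_\xx$ correspondence. The side-hypothesis that $D$ contains no \Rule{\Vee} rule performing a substitution by a \emph{strict} sub-expression of $e_\xx$ is precisely what prevents any occurrence of $e_\xx$ from being broken up in $D$, so its typing derivation can always be extracted as a single unit; without this hypothesis, a sub-expression of $e_\xx$ could be split among the summands of an inner \Rule{\Vee} and no single type $s_j$ could be attached to the corresponding occurrence, making the construction of $s$ and of the body premises impossible.
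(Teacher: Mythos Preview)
Your second paragraph is essentially the paper's proof: collect all sub-derivations of $D$ that type $e_\xx$, intersect their result types to form $s$, and in each body premise replace those sub-derivations by an \Rule{\Axv} on $\xx$ followed by subsumption. The paper does exactly this in a few lines, with two small differences. First, it also throws into the intersection a derivation $C$ of $\Gamma\vdasha e_\xx:\Any$ obtained from the typability hypothesis; this uniformly covers the $n=0$ case you single out. Second, it notes that the extended contexts in which the $C_k$ live may contain not only lambda-variables from \Rule{$\to$I} but also binding variables from inner \Rule{\Vee} nodes; the capture-avoidance argument you give for lambda-variables applies verbatim to those, but you should mention them.

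Where you diverge from the paper is the inductive scaffolding of paragraphs~1 and~3. The paper does \emph{not} proceed by structural induction on $D$: once you have globally identified and replaced the $C_k$, the construction is finished, and there is nothing to commute or hoist rule by rule. Your paragraph~3 (commuting structural rules, merging an inner \Rule{\Vee} on the same $e_\xx$ via a common refinement of partitions, etc.) is therefore unnecessary work, and it is also where your argument becomes loose: the case split ``$e_\yy = e_\xx$, or $e_\yy$ is disjoint from (or contains) $e_\xx$'' is not obviously exhaustive, and handling an inner \Rule{\Vee} inductively would force you to reconcile possibly different types $s$ obtained from different premises. The paper sidesteps all of this: its single-pass replacement works uniformly through any inner \Rule{\Vee}, and the only role of the no-strict-subexpression hypothesis is the one-line observation at the end that replacing $e_\xx$ by $\xx$ cannot invalidate any inner \Rule{\Vee} node (since such a node never substitutes a strict sub-expression of $e_\xx$). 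I would drop the induction entirely and keep just your paragraph~2, augmented with the two points above.
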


\begin{proof}
    Let $C$ a derivation for $\Gamma\vdasha e_\xx:\Any$.
    We collect in $D$ the set $\{C_k\}_{k\in K}$ of all the subderivations for the expression $e_\xx$.
    As substitutions are capture-avoiding, no variable in $\fv{e_\xx}$ could have been introduced
    in the environment by a \Rule{$\to$I} or \Rule{\Vee} node in $D$.
    Thus, we know that the derivations $\{C_k\}_{k\in K}$ are still valid under the initial environment $\Gamma$.

    Thus, we can build the following derivation:
    \small\begin{mathpar}
        \Infer[\Vee]
        { 
            \Infer[$\wedge$]{
                \Infer{C}{\Gamma\vdasha e_\xx:\Any}{}\quad
                \Infer{C_k}{\Gamma\vdasha e_\xx:t_k}{\forall k\in K}
            }{\Gamma\vdasha e_\xx:\tbwedge_{k\in K} t_k}{}\\
            \Infer{D'_i}{\Gamma, \xx: (\tbwedge_{k\in K} t_k)\land\mt_i\vdasha e:t}{\forall i\in I}
        }
        { \Gamma\vdasha e\subs{\xx}{e_\xx}:t }
        { }
    \end{mathpar}\normalsize with each $D'_i$ being a derivation easily derived from $D$ by
    substituting $e_\xx$ by $\xx$ when relevant, using a \Rule{\Axv} rule on $\xx$
    instead of a subderivation for $e_\xx$ when necessary, and by using monotonicity (Lemma~\ref{monotonicity_lemma}).
    The hypothesis on the derivation $D$ ensures that it does not contain any conflicting \Rule{\Vee} node
    that would become inapplicable due to the fact that $e_\xx$ has been substituted by $\xx$.
\end{proof}

\begin{lemma}[Elimination of aliasing]\label{remove_aliasing_lemma}
    Let $D$ be a ground derivation, and $N$ be a \Rule{\Vee} node in $D$ applying a substitution
    $\subs{\yy}{\xx}$. Then, $N$ can be removed from $D$, without adding any new \Rule{\Vee} node nor structural node in $D$.
\end{lemma}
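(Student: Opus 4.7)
The goal is to show that the aliasing \Rule{\Vee} node $N$ is redundant, because its substitution $\subs{\yy}{\xx}$ merely relates two binding variables with $\xx$ already in scope. The proof must avoid introducing any new \Rule{\Vee} or structural node, but may freely insert \Rule{\Axv}, \Rule{Inst}, \Rule{$\wedge$}, \Rule{$\leq$}.

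\textbf{Step 1 (analyze the definition premise).} The definition premise of $N$ derives $\Gamma \vdasha \xx: s$. By induction on this subderivation --- using that a typing of a variable can only be concluded via \Rule{\Axv}, \Rule{Inst}, \Rule{$\wedge$} or \Rule{$\leq$} --- one extracts a witness set of substitutions $\Psubst$ with $\bigwedge_{\psubst\in\Psubst}\Gamma(\xx)\psubst\leq s$, i.e., $\Gamma(\xx) \polyleq s$. Since every $\mt_i$ is monomorphic and hence unaffected by a polymorphic $\psubst$, the same $\Psubst$ also witnesses $\Gamma(\xx)\wedge\mt_i \polyleq s\wedge\mt_i$ for every $i\in I$.

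\textbf{Step 2 (apply monotonicity).} For each body premise derivation $D_i$ of $\Gamma,\yy{:}s\wedge\mt_i\vdasha e:t$, invoke the monotonicity lemma (Lemma~\ref{monotonicity_lemma}) with the refined environment $\Gamma,\yy{:}\Gamma(\xx)\wedge\mt_i$. This transforms $D_i$ into a derivation of $\Gamma,\yy{:}\Gamma(\xx)\wedge\mt_i\vdasha e:t$ by adding only \Rule{Inst}, \Rule{$\wedge$}, \Rule{$\leq$} nodes at the \Rule{\Axv} leaves on $\yy$.

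\textbf{Step 3 (reroute $\yy$ through $\xx$).} Substitute $\xx$ for $\yy$ throughout each transformed $D_i$. The only places where this affects the derivation tree are the \Rule{\Axv} leaves on $\yy$: each such leaf (yielding $(\Gamma(\xx)\wedge\mt_i)\renaming$ after the added \Rule{Inst}/\Rule{$\leq$}/\Rule{$\wedge$}) is replaced by a \Rule{\Axv} on $\xx$ (yielding $\Gamma(\xx)\renaming'$) followed by suitable \Rule{Inst}/\Rule{$\wedge$}/\Rule{$\leq$} nodes. The resulting tree is then attached in place of the subderivation rooted at $N$ in $D$, yielding a derivation of $\Gamma\vdasha e\subs{\yy}{\xx}:t$ containing neither $N$ nor any new \Rule{\Vee} or structural node.

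\textbf{Main obstacle.} Step~3 is the crux: from \Rule{\Axv} on $\xx$ we can only derive supertypes of $\Gamma(\xx)$ via the allowed rules, and in particular we cannot recover the refinement by $\mt_i$ that \Rule{\Axv} on $\yy$ provided. The delicate part of the proof is therefore to show that the refinement $\mt_i$ is never essential at the \Rule{\Axv} leaves --- either because subsequent \Rule{$\leq$}/\Rule{Inst}/\Rule{$\wedge$} steps in $D_i$ already subsume it away, or because one can pick a specific body premise $D_{i_0}$ for which $\Gamma(\xx) \polyleq s\wedge\mt_{i_0}$ holds (possibly after appealing to other normalization lemmas of Section~\ref{norm_lemmas} to first simplify $D$). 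This step likely proceeds by structural induction on the body premise derivation, exploiting the uniformity of the conclusion type $t$ across all $i\in I$ and the fact that $\yy$ is fresh, so that every use of $\yy$ traces back to a \Rule{\Axv} leaf that can be uniformly re-routed through $\xx$.
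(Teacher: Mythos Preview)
Your proposal has a genuine gap, and you have correctly located it yourself in the ``Main obstacle'' paragraph --- but you have not resolved it.

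The local approach of Steps~1--3 cannot succeed. At the node $N$, the environment already contains $\xx$ with some type $\Gamma'(\xx)$ (coming from an outer \Rule{\Vee} node, since $D$ is ground). You want to replace each \Rule{\Axv} leaf on $\yy$ (giving $(s'\wedge\mt_i')\renaming$ in the $i$-th body premise) by a \Rule{\Axv} leaf on $\xx$ (giving $\Gamma'(\xx)\renaming$). But from $\Gamma'(\xx)$ you can only reach supertypes via \Rule{Inst}/\Rule{$\wedge$}/\Rule{$\leq$}, and $\Gamma'(\xx)$ may genuinely straddle several pieces of the partition $\{\mt_i'\}_{i\in I}$, so there is no single $i_0$ with $\Gamma'(\xx)\polyleq s'\wedge\mt_{i_0}'$. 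Your first speculation (``subsequent steps subsume it away'') is false in general; your second (``pick a specific $i_0$'') is false for the same reason; your third (``structural induction exploiting uniformity of $t$'') is not an argument. Appealing to ``other normalization lemmas'' is circular: this lemma is a building block for those.

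The paper's proof is \emph{non-local}: it does not work at $N$ but at the outer \Rule{\Vee} node that introduced $\xx$, with partition $\{\mt_i\}_{i\in I}$ and body premise $B_k$ containing $N$. The key move is to \emph{refine that outer partition}, replacing the single branch $\mt_k$ by the branches $\{\mt_k\wedge\mt_j'\}_{j\in J}$. In the $j$-th refined branch the environment now has $\xx:s\wedge\mt_k\wedge\mt_j'$, and since $s\wedge\mt_k\polyleq s'$ (your Step~1 observation, applied at the outer node) we get $s\wedge\mt_k\wedge\mt_j'\leq s'\wedge\mt_j'$. Monotonicity then lets you use $D_j\subs{\yy}{\xx}$ directly in place of the subderivation rooted at $N$. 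This modifies an existing $n$-ary \Rule{\Vee} node rather than creating a new one, so the node count constraint is respected.
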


\begin{proof}
    The following transformation can be performed to the subderivation introducing $\xx$
    (as $D$ is a ground derivation, there must be a \Rule{\Vee} node that introduces $\xx$):

    \small\begin{mathpar}
        \Infer[\Vee]
        { 
            \Infer{ A }{ \Gamma\vdasha e_\xx : s }{ }
            \Infer{ B }{ \Gamma, \xx:s\land \mt_k\vdasha e: t }{ }
            \Infer{ E_i }{ \Gamma, \xx:s\land \mt_i\vdasha e:t }
            { \forall i\in I\setminus\{k\} %\raisebox{0.2cm}{\hspace*{-1.5cm}$\forall i\in I\setminus\{k\}$}
            }
        }
        { \Gamma\vdasha e\subs{\xx}{e_\xx}:t }
        { }
        \\
        \text{where }B\text{ is a derivation that contains this subderivation $S$}\\
        \text{(whose root is the node $N$ to eliminate):}\\
        \Infer[\Vee]{
          \Infer{ C }{ \Gamma'\vdasha \xx : s' }{ }
          \Infer{ D_j }{ \Gamma', \yy:s'\land \mt_j'\vdasha e' : t' }
          { \forall j\in J%\raisebox{0.2cm}{\hspace*{-1cm}$\forall j\in J$}
          }
        }
        {
            \Gamma'\vdasha e'\subs{\yy}{\xx}:t'
        }
        { }
        \\
        \text{with }\Gamma' = (\Gamma, \xx:s\land \mt_k) \disjcup \Gamma'' \text{ for some }\Gamma''
        \\\raisebox{1pt}{\hspace*{3cm}$\downarrow$ (transformed into)}\\
        \Infer[\Vee]
        { 
            \Infer{ A }{ \Gamma\vdasha e_\xx : s }{ }
            \Infer{ B_j' }{ \Gamma, \xx:s\land \mt_k\land \mt_j'\vdasha e: t }{ \forall j\in J }
            \Infer{ E_i }{ \Gamma, \xx:s\land \mt_i\vdasha e:t }
            { \forall i\in I\setminus\{k\}%\raisebox{0.2cm}{\hspace*{-1.5cm}$\forall i\in I\setminus\{k\}$}
            }
        }
        { \Gamma\vdasha e\subs{\xx}{e_\xx}:t }
        { }
        \\
        \text{where }B_j'\text{ is constructed from }B\text{ by monotonicity (Lemma~\ref{monotonicity_lemma}),}\\
        \text{and by replacing the subderivation $S$ by this one:}\\
        \Infer{ D_j' }
        {
            (\Gamma, \xx:s\land \mt_k\land \mt_j') \disjcup \Gamma''\vdasha e'\subs{\yy}{\xx}:t'
        }
        { }\\
        \text{where $D_j'$ is constructed from $D_j\subs{\yy}{\xx}$ by monotonicity (Lemma~\ref{monotonicity_lemma}) $(\star)$}
    \end{mathpar}
    \normalsize $(\star)$ Note that we have $s\land \mt_k\polyleq s'$
    as the only structural rule that the derivation $C$ can use is \Rule{\Axv} on $\xx$,
    and thus $s\land \mt_k\land \mt_j'\leq s'\land\mt_j'$.
\end{proof}

An interesting thing to note is that the proof above would not work in the presence of a generalization rule such as
the one used at top-level:
\begin{mathpar}
  \Infer[Gen]{ \Gamma\vdasha e : t }{ \Gamma\vdasha e : t\subst }{ \subst\disjoint\Gamma }
\end{mathpar}

Indeed, the guard condition $\subst\disjoint\Gamma$ may prevent monotonicity.
More precisely, in the transformation above, if the partition $\{\mt_i\}_{i\in I}$
introduces a new monomorphic type variable, its introduction earlier in the environment
might prevent a potential application of a \Rule{Gen} rule in the derivation $B$.
This impossibility to eliminate aliasing would be an issue for the normalisation lemma (Lemma~\ref{vee_norm_lemma})
detailled below, as it states, in particular,
that the union elimination rule only needs to be applied once for a given sub-expression.
This is the reason why, in our declarative type system, generalisation only occurs at top-level.

\begin{definition}[Acceptable \Rule{\Vee} node]\label{def:acceptable}
  In any derivation, a \Rule{\Vee} node $N$ doing the substitution $e\subs{\xx}{e'}$
  is said acceptable if it satisfies the following constraints:
  \begin{itemize}
      \item $e$ contains $\xx$ (no useless definition), and
      \item $e$ does not contain $e'$ (maximal sharing), and
      \item $e'$ is not a binding variable (no aliasing)
  \end{itemize}
\end{definition}

\begin{definition}[Binding context]\label{def:benv}
  A binding context $\benv$ is an ordered list of mappings 
  from binding variables to atoms. Each mapping is written as a pair $\be \xx e$.
  We note these lists extensionally by separating elements by a semicolon, that is,
  $\be{\xx_1}{a_1};\dots;\be{\xx_n}{a_n}$ and use $\bempty$ to denote the empty list.

  We note $\bsubs e \benv$ the expression $e\subs{\xx_n}{e_n}\dots\subs{\xx_1}{e_1}$
  where $\be{\xx_1}{e_1};\dots;\be{\xx_n}{e_n}$ are the pairs in $\benv$.
  %The concatenation of two binding contexts $\benv_1$ and $\benv_2$ is noted $\benv_1\bcup\benv_2$.
\end{definition}

Given a derivation $D$ and a node $N$ of $D$,
we call \textit{binding context of $N$ in $D$} the binding context $\be{\xx_1}{e_1};\dots;\be{\xx_n}{e_n}$,
where $\subs{\xx_1}{e_1};\dots;\subs{\xx_n}{e_n}$ are the substitutions made by all the \Rule{\Vee} nodes
crossed by one of their body premises when going from the root of $D$ to $N$.

In the following definition, we use an order over expressions:
\begin{definition}[Expression order]\label{def:leqexpr}
  A (possibly partial) order $\leqexpr$ is called expression order if it is an order over expressions
  modulo $\alpha$-renaming, and if it is an extension of the sub-expression order:
  if $e_1$ is a sub-expression of $e_2$ modulo $\alpha$-renaming, then we should have $e_1\leqexpr e_2$.
  We write $e_1 \ltexpr e_2$ when $e_1 \leqexpr e_2$ and $e_2 \not\leqexpr e_1$.
\end{definition}

\begin{definition}[Well-positionned \Rule{\Vee} node]\label{def:well-positionned}
    A \Rule{\Vee} node $N$ of a derivation $D$, of binding context $\benv$ in $D$
    and doing the substitution $e\subs{\xx}{e_1}$,
    is said well-positionned in $D$ relatively to the expression order $\leqexpr$
    (or more succintly, well-positionned in $(D,\leqexpr)$)
    if there is no node $N'$ on the path from the root to $N$ such that:
    \begin{itemize}
        \item Every variable in $\fv{\bsubs{e_1}{\benv}}$ is in the type environment of $N'$, and
        \item $N'$ is not a \Rule{\Vee} rule, or $N'$ is a \Rule{\Vee} rule
        of binding context $\benv'$ and performing a substitution
        $\subs{\yy}{e_2}$ such that $\bsubs{e_1}\benv\ltexpr\bsubs{e_2}{\benv'}$.
    \end{itemize}
  \end{definition}

  \begin{definition}[\Rule{\Vee}-canonical derivation]
    For a given expression order $\leqexpr$,
    a derivation $D$ is \Rule{\Vee}-canonical for the order $\leqexpr$
    if every \Rule{\Vee} node it contains is acceptable and well-positionned in $(D,\leqexpr)$.
  \end{definition}

\begin{definition}[Form derivations, atomic derivations]
  A derivation $D$ is a form derivation if every segment of branch containing the root
  and stopping at the first \Rule{\Vee} node (if any, or stopping at a leaf otherwise):
  \begin{itemize}
    \item does not contain any structural node, and
    \item if it ends with a \Rule{\Vee} node $N$, the definition premise of $N$ is an atomic derivation,
    and the body premises of $N$ are form derivations.
  \end{itemize}

  A derivation $D$ is an atomic derivation if every segment of branch containing the root
  and stopping at the first \Rule{$\to$I} node (if any, or stopping at a leaf otherwise):
  \begin{itemize}
    \item does not contain any \Rule{\Vee} node, and
    \item contains exactly one structural node, and
    \item if it ends with a \Rule{$\to$I} node $N$, the premise of $N$ is a form derivation.
  \end{itemize}
\end{definition}

A derivation that is both a \Rule{\Vee}-canonical derivation and a form derivation
is called \Rule{\Vee}-canonical form derivation.

\begin{lemma}[Normalisation of \Rule{\Vee}]\label{vee_norm_lemma}
    Given an expression order $\leqexpr$, any ground derivation $D$ of $\Gamma\vdasha e:t$
    can be transformed into a \Rule{\Vee}-canonical form derivation of $\Gamma\vdasha e:t'$
    for the order $\leqexpr$ and with $t'\polyleq t$.
\end{lemma}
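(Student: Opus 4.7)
The plan is to prove this by a structural induction on $e$, combined with an inner transformation that repeatedly hoists $\vee$ nodes and repairs acceptability, using a well-founded measure based on $\leqexpr$ to guarantee termination. Intuitively, we build the canonical form top-down: for the current expression $e$, we first collect all sub-expressions $e_1,\ldots,e_n$ (modulo $\alpha$-renaming and sharing) that should be bound at the outermost $\vee$ layer --- these are exactly the sub-expressions that, relative to $\leqexpr$, would be candidates for a well-positioned $\vee$ node that is visible from the root. We order them by $\leqexpr$ (largest first) so that the resulting stack of $\vee$ nodes is well-positioned.

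For each such $e_i$, I would invoke Lemma~\ref{vee_generation_lemma} on $D$ to hoist a $\vee$ node over $e_i$ to the root, producing a $\vee$ node whose definition premise types $e_i$ and whose body premises type $e\subs{\xx_i}{e_i}$ under refined hypotheses. The side hypothesis of that lemma --- that $D$ contains no $\vee$ node substituting a strict sub-expression of $e_i$ --- is exactly what the largest-first ordering guarantees. After hoisting the $\vee$ node for $e_i$, I recurse with $e_{i+1}$ on each body premise separately. Once all the outer $\vee$ layer is built, each body premise has shape $\Gamma,\xx_1:\cdots,\ldots,\xx_n:\cdots\vdasha e':t$ for some $e'$ obtained by substituting binding variables back into $e$; at this point $e'$ begins with a structural constructor (constant, variable, $\lambda$, application, pair, projection, or type-case), and I can apply the induction hypothesis of the outer structural induction to the sub-derivations coming from that structural node, which gives form derivations as required by the definition.

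Acceptability is handled as follows. Useless $\vee$ nodes (where $\xx\notin\fv{e}$) are simply deleted: the body premises already derive the conclusion and can be kept by monotonicity (Lemma~\ref{monotonicity_lemma}). Non-maximal-sharing nodes do not arise by construction, because the outer collection step identifies sub-expressions up to $\eqcan$ and introduces one binding per equivalence class, so any residual $\vee$ node in a body premise substituting $e_i$ would violate the ordering-premise of Lemma~\ref{vee_generation_lemma} used for the upper $\vee$, contradicting our choice. Aliasing $\vee$ nodes (where $e'$ is a binding variable) are eliminated via Lemma~\ref{remove_aliasing_lemma}, which is crucially applicable because after hoisting the variables $\xx_i$ are always introduced by an enclosing $\vee$ in the transformed derivation.

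The main obstacle will be termination and showing that the hoisting/cleanup loop actually converges to a canonical form. The right measure is, I believe, a lexicographic pair: the multiset of unacceptable-or-misplaced $\vee$ nodes ordered by their distance to a well-positioned ancestor, refined by the $\leqexpr$-rank of the substituted expression. Each application of Lemma~\ref{vee_generation_lemma} strictly raises a $\vee$ node toward the root (shrinking the first component), and Lemma~\ref{remove_aliasing_lemma} removes a $\vee$ node outright without introducing new structural or $\vee$ nodes, so no loop can restart cleanup. A secondary delicate point is that $\leqexpr$ is only a partial order, so ``largest first'' is not a total choice; this is harmless because Definition~\ref{def:well-positionned} only compares $\vee$ nodes whose bound expression is in scope of the other, which, combined with the fact that $\leqexpr$ extends the sub-expression order, ensures that incomparable candidates can be introduced in any order without violating well-positionedness. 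The final type $t'$ satisfies $t'\polyleq t$ because Lemma~\ref{monotonicity_lemma} and Lemma~\ref{vee_generation_lemma} both only add \Rule{Inst}, \Rule{$\wedge$}, and \Rule{$\leq$} nodes, which is precisely the slack captured by $\polyleq$.
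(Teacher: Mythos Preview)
Your proposal has a genuine gap: the ``largest first'' ordering is backwards, and this breaks both well-positionedness and the precondition of Lemma~\ref{vee_generation_lemma}.

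Unpack Definition~\ref{def:well-positionned}. A \Rule{$\vee$} node $N$ substituting $e_1$ is well-positioned iff every ancestor $N'$ (within scope) is itself a \Rule{$\vee$} node whose substituted expression is \emph{not} strictly larger than $e_1$ under $\leqexpr$. In other words, along any root-to-leaf path the unwound expressions bound by \Rule{$\vee$} nodes must be non-decreasing. Your construction puts the largest $e_1$ at the root and the smaller $e_2,\ldots,e_n$ below it; the node for $e_2$ then has an ancestor ($e_1$'s node) substituting something strictly larger, so it is \emph{not} well-positioned. The paper processes the \emph{minimal} faulty node first, and its termination argument is precisely that this minimum strictly increases.

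The ordering error also invalidates your appeal to Lemma~\ref{vee_generation_lemma}. Its side-condition forbids, in the target subderivation, any \Rule{$\vee$} node substituting a strict \emph{sub}-expression of the current $e_i$---i.e., something strictly smaller. Processing larger expressions first tells you nothing about the presence of such smaller \Rule{$\vee$} nodes, which may already live in the original $D$. The paper's minimality choice is what discharges this: if the subderivation $D_L$ contained a \Rule{$\vee$} node for a strict sub-expression, that node would itself be faulty with a smaller associated expression, contradicting minimality of $N$.

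A secondary issue: structural induction on $e$ is fragile here because \Rule{$\vee$} performs substitution, so the body premises type an expression that is not a syntactic sub-expression of $e$. The paper avoids this by an iterative fix-up with a global termination measure (the $\leqexpr$-rank of the minimal faulty node) rather than an induction on the term.
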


\begin{proof}
    First, we can remove any aliasing (i.e. \Rule{\Vee} nodes doing a substitution
    $\subs{\xx}{\yy}$) by applying Lemma~\ref{remove_aliasing_lemma} as needed.
    We can also trivially remove useless \Rule{\Vee} nodes
    (i.e. those doing a substitution $e\subs{\xx}{e_\xx}$ where $e$ does not contain $\xx$).

    Then, let's consider, in the whole derivation, all the nodes that satisfy one of those conditions:
    \begin{itemize}
        \item It is a structural node $N$ such that, when going towards the root,
        it crosses another structural node before crossing a \Rule{$\to$I} node.
        \item It is a structural node $N$ such that, when going towards the root,
        it crosses a \Rule{\Vee} node by one of its body premises
        before crossing a \Rule{\Vee} node by its definition premise.
        \item It is a \Rule{\Vee} node that is not acceptable or not well-positionned in $(D,\leqexpr)$.
    \end{itemize}
    If there is no such node, then the properties of Lemma~\ref{vee_norm_lemma} are satisfied.
    Otherwise, we associate to each of these faulty nodes an expression and a binding context:
    \begin{itemize}
        \item For a structural node applied on an expression $e$ in a binding context $\benv$, we associate $(e,\benv)$,
        \item For a \Rule{\Vee} node doing the substitution $\subs{\xx}{e_\xx}$ in a binding context $\benv$,
        we associate $(e_\xx, \benv)$.
    \end{itemize}
    Now, we select among those nodes the one whose associated pair is minimal with respects to the following order:
    $(e_1,\benv_1)$ is smaller than $(e_2,\benv_2)$ if and only if $\bsubs{e_1}{\benv_1}\leqexpr\bsubs{e_2}{\benv_2}$.
    Let's call this node $N$, and its associated expression and binding context $e$ and $\benv$ respectively.
    
    Now, let's locate, in the segment from the root to $N$, the farthest location $L$ from the root such that
    $N$ would be well-positionned in $(D,\leqexpr)$ at this location.

    Let's note $\Gamma_L\vdasha e_L:t_L$ the judgement at this location.
    We consider the subderivation $D_L$ that derives this judgement in $D$.
    Note that, in $D_L$, there is no \Rule{\Vee} node that makes a substitution $\subs{\yy}{e_\yy}$ with $e_\yy$ a strict sub-expression of $e$,
    because $\bsubs{e_\yy}{\benv_\yy}$ (with $\benv_\yy$ the binding context of the corresponding \Rule{\Vee} node)
    would be smaller than $\bsubs{e}{\benv}$ for $\leqexpr$, thus making it not well-positionned and contradicting the minimality of $(e,\benv)$.
    Also note that $D_L$ contains the node $N$ (otherwise, the \Rule{\Vee} node $N$ would not be well-positionned at location $L$).

    We apply Lemma~\ref{vee_generation_lemma} on the root of $D_L$ so that it performs the substitution
    $e_L'\subs{\zz}{e}$ using the decomposition $\{\mt_i\}_{i\in I}= \{\Any\}$,
    with $\zz$ fresh and $e_L'$ an expression that does not contain $e$ and such that
    $e_L'\subs{\zz}{e}\equiv e_L\subs{\zz}{e}$ (basically, $e_L'$ is $e_L$ where occurrences of $e$ have been replaced by $\zz$).
    This lemma can be applied as:
    \begin{itemize}
        \item There cannot be in our subderivation any \Rule{\Vee} node substituting a strict sub-expression of $e$,
        \item We know that $\Gamma_L\vdasha e:\Any$ holds: we can derive it from the definition premise of $N$ if $N$ is a \Rule{\Vee} node,
        or from $N$ itself if $N$ is a structural node.
    \end{itemize}
    This gives us a new derivation $D_L'$.

    If $N$ is a \Rule{\Vee} node, $N$ can be removed in $D_L'$ by using Lemma~\ref{remove_aliasing_lemma},
    as well as other aliasing that would have been introduced in other branches. Not that, if $N$ is a structural node,
    it has already been eliminated by the application of Lemma~\ref{vee_generation_lemma}.
    Finally, we replace the subderivation $D_L$ by $D_L'$ in $D$.

    We conclude by repeating this whole process until all the nodes satisfy the conditions.
    We are guaranteed that it terminates by the fact that $\bsubs{e}{\benv}$ strictly increases
    for $\leqexpr$ at each iteration (with $(e,\benv)$ the pair associated to the choosen node $N$).
\end{proof}

\textbf{Normalisation of \Rule{Inst} nodes}

\begin{lemma}\label{vartype_alpharenaming_lemma}
  A derivation $\Gamma\vdasha e:t$ can be transformed into a derivation $\Gamma\vdasha e:t\renaming$
  (for any renaming $\renaming$) without changing the structure of the derivation.
\end{lemma}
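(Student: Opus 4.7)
The proof will proceed by straightforward structural induction on the derivation $D$ of $\Gamma\vdasha e:t$, relying on two standing observations: $(i)$ a renaming $\renaming$ only touches polymorphic type variables, so it leaves any monomorphic type $\mt$ and any ground type $\tau$ untouched ($\mt\renaming=\mt$, $\tau\renaming=\tau$); $(ii)$ subtyping is preserved by substitutions, hence $t\leq t'$ implies $t\renaming\leq t'\renaming$. Throughout, when I speak of ``same structure'' I mean the same tree of rule applications, only with possibly different type decorations and different choices of substitutions/renamings at the nodes that carry them.

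For the axioms, the cases \Rule{Const} and \Rule{\Axv}/\Rule{\Axl} are immediate: $\basic{c}\renaming=\basic{c}$, and an axiom using renaming $\renaming'$ can be replaced by the same axiom using the composed renaming $\renaming'\circ\renaming$ (which is still a renaming, since renamings are closed under composition). For the structural rules \Rule{$\to$E}, \Rule{$\times$I}, \Rule{$\times$E$_i$} I just apply the induction hypothesis to every premise with the same $\renaming$ and reassemble the rule, exploiting the fact that renaming commutes with the type constructors (e.g.\ $(t_1\to t_2)\renaming=t_1\renaming\to t_2\renaming$). For \Rule{$\to$I}, the premise $\Gamma,x{:}\mt\vdasha e{:}t$ has a monomorphic binding $\mt$ in the environment that is preserved by $\renaming$, so applying the induction hypothesis to the premise yields $\Gamma,x{:}\mt\vdasha e{:}t\renaming$ and \Rule{$\to$I} gives $\Gamma\vdasha\lambda x.e{:}\mt\to t\renaming=(\mt\to t)\renaming$. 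The typecase rules \Rule{$\Empty$}, \Rule{$\in_1$}, \Rule{$\in_2$} are handled identically since $\tau$ is ground.

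The three non-structural rules are equally direct. For \Rule{$\wedge$}, apply the induction hypothesis to each premise and use the fact that $\tbwedge_{i\in I}(t_i\renaming)=(\tbwedge_{i\in I}t_i)\renaming$. For \Rule{Inst} with substitution $\psubst$, I do \emph{not} even need the induction hypothesis: the composed map $\psubst\circ\renaming$ is again a substitution from $\polyvars$ to types, so the same \Rule{Inst} node (applied to the unchanged premise) with this new substitution derives $t\psubst\renaming=(t\psubst)\renaming$. For \Rule{$\leq$} I invoke the induction hypothesis on the premise to obtain $\Gamma\vdasha e{:}t\renaming$ and then conclude by \Rule{$\leq$} using observation $(ii)$.

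The only case that deserves a second look is \Rule{$\vee$}: from $\Gamma\vdasha e'{:}s$ and, for each $i\in I$, $\Gamma,\xx{:}s\land\mt_i\vdasha e{:}t$, conclude $\Gamma\vdasha e\subs\xx{e'}{:}t\renaming$. Here I leave the definition premise $D_0$ derivation of $e'$ untouched (so $s$ remains exactly the same) and apply the induction hypothesis to each body premise with the renaming $\renaming$, which is legitimate because the induction hypothesis alters only the output type and preserves the environment $\Gamma,\xx{:}s\land\mt_i$ (again using that $\mt_i$ is monomorphic and $s$ is unaffected by our choice not to rename the first premise). The partition condition $\{\mt_i\}_{i\in I}\in\Partitions\Any$ is unchanged, so a single \Rule{$\vee$} node closes the case. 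The main subtlety here, and arguably the only obstacle worth flagging, is precisely this design choice of keeping the first premise untouched; doing otherwise would force us to also rename $s$ inside each body derivation, which is unnecessary and would obscure the structure preservation.
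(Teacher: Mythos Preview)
Your proof is correct and follows essentially the same structural-induction strategy as the paper. The one notable variant is your treatment of \Rule{Inst}: the paper applies the induction hypothesis to the premise (obtaining $t\renaming$) and then replaces the substitution $\psubst$ by the conjugation $\renaming\circ\psubst\circ\renaming^{-1}$, whereas you short-circuit by leaving the premise untouched and composing $\psubst$ with $\renaming$ directly. Both are legitimate; the paper's version keeps the induction uniform, while yours avoids the need for $\renaming^{-1}$ and pairs more naturally with your choice to leave the definition premise of \Rule{$\vee$} untouched (a case the paper's terse proof does not spell out). One small caution: be careful with the order of composition---in the paper's convention $t(\sigma_2\circ\sigma_1)=(t\sigma_1)\sigma_2$, so the renaming you want at an axiom is $\renaming\circ\renaming'$ and at \Rule{Inst} is $\renaming\circ\psubst$, not the reversed orders you wrote.
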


\begin{proof}
  Any polymorphic type variable in $t$ must be introduced either by a \Rule{\Axl}, \Rule{\Axv}, or \Rule{Inst} rule.
  Thus, we can derive $\Gamma\vdasha e:t\renaming$ by induction on $\Gamma\vdasha e:t$,
  where:
  \begin{itemize}
      \item Every renaming $\renaming'$ of a \Rule{\Axl} or \Rule{\Axv} node is replaced by the renaming $\renaming\circ\renaming'$, and
      \item Every substitution $\psubst$ of a \Rule{Inst} node is replaced by the substitution
      $\renaming\circ\psubst\circ\renaming^{-1}$.
  \end{itemize}
\end{proof}

\begin{proposition}\label{inter_polyleq_prop}
  If $\forall i\in I.\ t_i'\polyleq t_i$,
  then $\tbwedge_{i\in I}t_i' \polyleq \tbwedge_{i\in I}t_i$.
\end{proposition}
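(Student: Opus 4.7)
The plan is to simply unfold the definition of $\polyleq$: the hypothesis gives, for each $i \in I$, a set of polymorphic substitutions $\Psubst_i$ such that $\bigwedge_{\psubst \in \Psubst_i} t_i'\psubst \leq t_i$, and the goal becomes the exhibition of a single set $\Psubst$ such that $(\bigwedge_{i\in I} t_i')\Psubst \leq \bigwedge_{i\in I} t_i$. I would take the candidate witness to be $\Psubst \eqdef \bigcup_{i \in I} \Psubst_i$.

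The key observation is that the notation $t\Sigma \eqdef \bigwedge_{\sigma \in \Sigma} t\sigma$ is antimonotone in $\Sigma$: enlarging the set of substitutions only adds more conjuncts, so $t\Sigma$ can only get smaller. Hence for each $i \in I$, since $\Psubst_i \subseteq \Psubst$, we get
\[
t_i'\Psubst \;=\; \bigwedge_{\psubst \in \Psubst} t_i'\psubst \;\leq\; \bigwedge_{\psubst \in \Psubst_i} t_i'\psubst \;=\; t_i'\Psubst_i \;\leq\; t_i.
\]
Now substitution distributes syntactically over intersection, and the same is true of the abbreviation $t\Psubst$, so $(\bigwedge_{i \in I} t_i')\Psubst = \bigwedge_{i \in I}(t_i'\Psubst)$. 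Taking the intersection of the inequalities above over $i \in I$ yields $(\bigwedge_{i} t_i')\Psubst \leq \bigwedge_{i} t_i$, which is the desired $\polyleq$ by definition.

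The edge case $I = \emptyset$ is trivial: both sides reduce to $\Any$, and any $\Psubst$ (e.g.\ $\{\ids\}$) witnesses $\Any \polyleq \Any$. There is essentially no obstacle here: the whole argument boils down to the antimonotonicity of $(-)\Psubst$ in $\Psubst$ combined with the commutation of substitution with intersection. The only point that is perhaps worth emphasizing is that we do \emph{not} need to rename the polymorphic variables in the various $t_i'$ apart before unioning the $\Psubst_i$: even when substitutions in distinct $\Psubst_i$ act on the same variables, taking their union still works, because on each $t_i'$ this union only refines the bound that was already known.
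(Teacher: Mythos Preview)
Your proof is correct and takes essentially the same approach as the paper: both take $\Psubst = \bigcup_{i\in I}\Psubst_i$ as witness and exploit that $(\bigwedge_i t_i')\Psubst \simeq \bigwedge_i (t_i'\Psubst)$ together with antimonotonicity of $t\Psubst$ in $\Psubst$. Your explicit naming of the antimonotonicity observation is slightly crisper than the paper's chained inequalities, but the argument is identical.
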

\begin{proof}
  For each $i\in I$, let $\Psubst_i$ a set of substitutions such that $t_i'\Psubst_i\leq t_i$.
  We consider the set of substitutions $\Psubst = \textstyle\bigcup_{i\in I}\Psubst_i$,
  and we show that $(\tbwedge_{i\in I}t_i')\Psubst \leq \tbwedge_{i\in I}t_i$.

  We have:
  \begin{align*}\tbwedge_{\psubst\in\Psubst}(\tbwedge_{i\in I}t_i')\psubst\
    &\simeq \tbwedge_{i\in I}\tbwedge_{\psubst\in\Psubst_i}(\tbwedge_{j\in I}t_j')\psubst\\
    &\leq \tbwedge_{i\in I}\tbwedge_{\psubst\in\Psubst_i}t_i'\psubst\\
    &\leq \tbwedge_{i\in I}t_i
  \end{align*}
\end{proof}

\begin{proposition}\label{vee_polyleq_prop}
  If $\forall i\in I.\ t_i'\polyleq t_i$ and all $\{t_i\}_{i\in I}$ have disjoint polymorphic type variables,
  then $\tbvee_{i\in I}t_i' \polyleq \tbvee_{i\in I}t_i$.
\end{proposition}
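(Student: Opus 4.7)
\medskip\noindent\textbf{Proof plan.} The plan is to construct explicitly a set of substitutions $\Psubst$ witnessing $\tbvee_{i\in I}t_i'\polyleq \tbvee_{i\in I}t_i$, mirroring the structure of Proposition~\ref{inter_polyleq_prop} but with a more delicate combination of the $\Psubst_i$, and to exploit the disjointness of $\vars(t_i)$ in an essential way. I would first reduce to the binary case $|I|=2$ by induction on $|I|$: if the statement holds for sets of size two and for $n$, then for $|I|=n+1$ one applies the inductive hypothesis to get $\tbvee_{i\leq n}t_i' \polyleq \tbvee_{i\leq n}t_i$, and then combines this with $t_{n+1}'\polyleq t_{n+1}$ using the binary case, noting that $\vars(\tbvee_{i\leq n}t_i)=\textstyle\bigcup_{i\leq n}\vars(t_i)$ is still disjoint from $\vars(t_{n+1})$ by hypothesis.

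For the binary case, write $s_i'\Psubst_i \leq t_i$ for the given witnesses, with $\vars(t_1)\cap\vars(t_2)=\emptyset$. The naive choice $\Psubst=\Psubst_1\cup\Psubst_2$ (used in Proposition~\ref{inter_polyleq_prop}) fails here, because distributing intersection over union produces ``mixed'' conjuncts of the form $\tbwedge_{\psi\in\Psubst_1}t_2'\psi \wedge \tbwedge_{\psi\in\Psubst_2}t_1'\psi$ for which nothing can be concluded. A preliminary normalization will therefore be needed: I would post-compose every $\psi\in\Psubst_i$ with a renaming that sends its image variables to fresh polymorphic variables disjoint from $\vars(t_j)$ and from the image variables of $\Psubst_j$ for $j\neq i$. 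This is admissible precisely because $\vars(t_1)\cap\vars(t_2)=\emptyset$: such a renaming acts as the identity on $\vars(t_i)$, so $t_i$ is unchanged and $t_i'\Psubst_i\leq t_i$ is preserved.

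With the image variables now cleanly separated, I would define $\Psubst$ as a Cartesian-style combination, containing one substitution $\psi_1\boxplus\psi_2$ for each pair $(\psi_1,\psi_2)\in\Psubst_1\times\Psubst_2$, where $\psi_1\boxplus\psi_2$ agrees with $\psi_i$ on the variables of $\vars(t_i')$ that are not shared, and enumerates, over several substitutions, every coherent choice on variables lying in $\vars(t_1')\cap\vars(t_2')$. The verification of $(t_1'\vee t_2')\Psubst\leq t_1\vee t_2$ would then proceed by repeated application of the set-theoretic identity
\[
  \tbwedge_{j\in J}(A_j\vee B)\;\simeq\;(\tbwedge_{j\in J}A_j)\vee B,
\]
first grouping the conjuncts according to the $\psi_1$-component to eliminate $t_1'$ using $t_1'\Psubst_1\leq t_1$, then doing the symmetric operation on the $\psi_2$-component using $t_2'\Psubst_2\leq t_2$.

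The hard part will be the handling of shared source variables $\vars(t_1')\cap\vars(t_2')$, where a single substitution cannot simultaneously act as $\psi_1$ and $\psi_2$ on the same variable. The natural fix is to enumerate over both choices for each shared variable, which blows up the size of $\Psubst$ but makes the distributive argument go through: every ``bad'' assignment of conjuncts to either $t_1'$ or $t_2'$ in the distributivity step will be dominated by some ``good'' one that recovers a full copy of either $\Psubst_1$ or $\Psubst_2$. A careful inductive argument on the set of shared variables should suffice to make this combinatorial bookkeeping precise.
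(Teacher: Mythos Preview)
Your proposal takes the hypothesis as literally stated---that the $t_i$ (unprimed) have disjoint polymorphic variables---and then spends most of its effort handling the case where the $t_i'$ share variables. This is a red herring: the statement as printed contains a typo, and the intended hypothesis is that the $t_i'$ (primed) have pairwise disjoint polymorphic variables. The paper's own proof says so explicitly (``their disjointness is guaranteed by the fact that all $\{t_i'\}_{i\in I}$ have disjoint polymorphic type variables''), and the only application, in Lemma~\ref{inst_norm_lemma}, uses Lemma~\ref{vartype_alpharenaming_lemma} precisely to make the \emph{derived} types---the $t_i'$ in the proposition's notation---variable-disjoint before invoking the proposition.

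In fact the statement as printed is false, so your ``careful inductive argument on the set of shared variables'' cannot succeed. Take $t_1'=\alpha$, $t_2'=\neg\alpha$, $t_1=\Int$, $t_2=\Bool$: then $t_1'\polyleq t_1$ via $\{\alpha\mapsto\Int\}$ and $t_2'\polyleq t_2$ via $\{\alpha\mapsto\neg\Bool\}$, and the ground types $t_1,t_2$ trivially have disjoint polymorphic variables; but $t_1'\vee t_2'\simeq\Any$, which is closed under every substitution and not $\leq\Int\vee\Bool$. Your enumeration idea also fails on less degenerate instances: with $t_1'=\alpha\to\Int$, $t_2'=\alpha\to\Bool$, $t_1=\Int\to\Int$, $t_2=\Bool\to\Bool$, enumerating the two choices for $\alpha$ produces the conjunct $(\Int\to\Bool)\wedge(\Bool\to\Int)$ after distributing, and this is not contained in $(\Int\to\Int)\vee(\Bool\to\Bool)$.

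Once the hypothesis is corrected to disjointness of the $t_i'$, the proof is direct and none of your machinery is needed. Restricting each $\psubst\in\Psubst_i$ to $\vars{t_i'}$, the disjoint union $\psubst_1\disjcup\dots\disjcup\psubst_n$ is well-defined for every tuple $(\psubst_1,\dots,\psubst_n)\in\Psubst_1\times\dots\times\Psubst_n$; take $\Psubst$ to be the set of all such unions. Because the domains are disjoint one has $t_i'(\psubst_1\disjcup\dots\disjcup\psubst_n)=t_i'\psubst_i$, and the set-theoretic identity
\[
  \tbwedge_{(\psubst_1,\dots,\psubst_n)}\;\tbvee_{i} t_i'\psubst_i \;\simeq\; \tbvee_{i}\;\tbwedge_{\psubst_i\in\Psubst_i} t_i'\psubst_i
\]
(distributivity of $\vee$ over $\wedge$) gives $(\tbvee_i t_i')\Psubst\leq\tbvee_i t_i$ immediately---no induction on $|I|$, no post-composition renamings, and no shared-variable case. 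Your ``Cartesian-style combination'' is the right idea; it is just that with the corrected hypothesis the combination is a genuine disjoint union and the argument collapses to three lines.
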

\begin{proof}
  For each $i\in I$, let $\Psubst_i$ a set of substitutions such that $t_i'\Psubst_i\leq t_i$.
  We consider the set of substitutions
  $\Psubst=\{\psubst_1\disjcup\dots\disjcup\psubst_n\,\alt\,\psubst_1\in\Psubst_1,\dots,\psubst_n\in\Psubst_n\}$
  for $I=\{1,\dots,n\}$, where $\disjcup$ denotes the composition of disjoint substitutions (their disjointness
  is guaranteed by the fact that all $\{t_i'\}_{i\in I}$ have disjoint polymorphic type variables),
  and we show that $(\tbvee_{i\in I}t_i')\Psubst \leq \tbvee_{i\in I}t_i$.

  We have:
  \begin{align*}
      &\tbwedge_{\psubst\in\Psubst} (\tbvee_{i\in 1\isep n}t_i')\psubst\\
      \simeq& \tbwedge_{(\psubst_1,\dots,\psubst_n)\in \Psubst_1\times\dots\times\Psubst_n}(\tbvee_{i\in 1\isep n}t_i')(\psubst_1\disjcup\dots\disjcup\psubst_n)\\
      \simeq& \tbwedge_{(\psubst_1,\dots,\psubst_n)\in \Psubst_1\times\dots\times\Psubst_n} \tbvee_{i\in 1\isep n}t_i'(\psubst_1\disjcup\dots\disjcup\psubst_n)\\
      \simeq& \tbwedge_{(\psubst_1,\dots,\psubst_n)\in \Psubst_1\times\dots\times\Psubst_n} \tbvee_{i\in 1\isep n}t_i'\psubst_i\\
      \simeq& \tbvee_{i\in 1\isep n}\tbwedge_{\psubst_i\in\Psubst_i}t_i'\psubst_i&\text{(distributivity of $\lor$ over $\land$)}\\
      \leq& \tbvee_{i\in 1\isep n} t_i
  \end{align*}
\end{proof}

\begin{definition}[\Rule{Inst}-canonical derivation]
  A derivation $D$ is \Rule{Inst}-canonical
  if every \Rule{Inst} node it contains is part of a \Rule{\InstLeq} pattern that is either:
  \begin{itemize}
    \item The first premise of a \Rule{$\Empty$}, \Rule{$\in_1$} or \Rule{$\in_2$} node, or
    \item The premise of a \Rule{$\times$E$_1$} or \Rule{$\times$E$_2$} node, or
    \item One of the premises of a \Rule{$\to$E} node
  \end{itemize}
\end{definition}

\begin{lemma}[Normalisation of \Rule{Inst}]\label{inst_norm_lemma}
  Given an expression order $\leqexpr$, any \Rule{\Vee}-canonical form derivation $D$
  of $\Gamma\vdasha e:t$ can be transformed into a \Rule{\Vee}\Rule{Inst}-canonical form derivation
  of $\Gamma\vdasha e:t'$ for the order $\leqexpr$ and with $t'\polyleq t$.
\end{lemma}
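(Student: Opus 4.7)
The proof proceeds by induction on the structure of the \Rule{\Vee}-canonical form derivation $D$, using a suite of local rewrites to drive every Inst node into one of the canonical positions — namely, into an \Rule{\InstLeq} pattern sitting above the premise of a destructor rule ($\to$E, $\times$E$_i$, $\in_i$, $\Empty$). The procedure scans $D$ bottom-up: at each node, any Inst/$\wedge$/$\leq$ stack above a destructor premise is consolidated into an \Rule{\InstLeq} pattern (the canonical shape), and any Inst node occurring elsewhere is either pushed through the surrounding rule toward a destructor premise, or, failing that, absorbed into the $\polyleq$ slack between the produced type $t'$ and the original type $t$.

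The core technical content is a collection of local commutation lemmas for Inst with each non-destructor rule. For \Rule{$\to$I}, since the domain type $\mt$ is monomorphic we have $(\mt\to t)\psubst = \mt \to t\psubst$, so Inst passes freely through $\to$I. For \Rule{$\times$I}, after first $\alpha$-renaming one premise via Lemma~\ref{vartype_alpharenaming_lemma} so that the two premises use disjoint polymorphic variables, the two Insts above the premises can be merged into a single Inst below the conclusion with the combined substitution. For \Rule{\Vee}, the substitution must be applied uniformly to the definition premise and to every body premise, which again requires $\alpha$-renaming to decorrelate branches. A stack of \Rule{Inst}, \Rule{$\wedge$} and \Rule{$\leq$} nodes in any order can always be reassociated and merged into a single \Rule{\InstLeq} pattern by the definition of that shorthand. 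For \Rule{Const}, an Inst above the conclusion does nothing and is simply deleted. For \Rule{\Axl} and \Rule{\Axv}, an Inst above can be pushed further downward along the derivation; if it ultimately cannot reach a destructor premise, it contributes to the overall substitution witnessing $t' \polyleq t$. Dually, an Inst appearing below a destructor's conclusion can always be lifted above the destructor by distributing it into each of its premises, landing inside fresh \Rule{\InstLeq} patterns. Termination of the rewrite is ensured by a well-founded measure such as the lexicographic pair (total number of non-canonical Inst nodes, sum of their depths).

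The main obstacle will be the \Rule{\Vee} case, since distinct body premises may have internally used different Inst substitutions while jointly yielding the common body type $t$ required by the rule. To normalize, one must either propagate a single common substitution uniformly across all branches (which also rewrites the definition type $s$), or weaken each branch's conclusion through $\leq$ and $\wedge$ to a common type that is compatible with all the per-branch instantiations. This is precisely where the slack $t'\polyleq t$ in the statement is essential: the normalized derivation may produce a less instantiated type $t'$, the discrepancy being recorded in the outer $\polyleq$ relation. A secondary subtlety is to check that each local rewrite preserves \Rule{\Vee}-canonicity — in particular the acceptability (Def.~\ref{def:acceptable}) and well-positionedness (Def.~\ref{def:well-positionned}) of every remaining \Rule{\Vee} node with respect to the chosen expression order $\leqexpr$ — so that the output of the procedure is a genuine \Rule{\Vee}\Rule{Inst}-canonical form derivation and not merely one whose Inst nodes are canonical.
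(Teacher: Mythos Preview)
Your overall direction is reasonable but takes a more complicated route than the paper. The paper does not push \Rule{Inst} through other rules via commutation lemmas at all: by induction on the lex-ordered pair $(n_\vee, n)$ (number of \Rule{\Vee} nodes, total nodes), whenever the root is \Rule{Inst} or \Rule{$\leq$} it is simply discarded and the premise becomes the new root---the lost instantiation is absorbed once and for all into the $\polyleq$ slack. When the root is \Rule{$\wedge$}, one recurses on each premise and concludes via Proposition~\ref{inter_polyleq_prop}. The only substantial case is \Rule{\Vee}: first recurse on the definition premise to obtain $s'\polyleq s$; use monotonicity (Lemma~\ref{monotonicity_lemma}) to rewrite each body premise under the tightened environment $\Gamma,\xx:s'\wedge\mt_i$; recurse on those (this is where $n_\vee$ strictly drops, which is needed because monotonicity may have grown $n$) to obtain types $t_i\polyleq t$, renamed via Lemma~\ref{vartype_alpharenaming_lemma} to have pairwise disjoint polymorphic variables; finally weaken each branch by a single \Rule{$\leq$} to the common type $\bigvee_i t_i$, and invoke Proposition~\ref{vee_polyleq_prop} to conclude $\bigvee_i t_i\polyleq t$. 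Your commutation lemmas for \Rule{$\to$I} and \Rule{$\times$I} are correct but simply never needed.

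Two points in your write-up are genuine gaps rather than stylistic differences. First, for the \Rule{\Vee} case you say ``weaken each branch's conclusion through $\leq$ and $\wedge$ to a common type compatible with all the per-branch instantiations'' without saying what that type is or why it is $\polyleq t$. The answer---take the union $\bigvee_i t_i$ and appeal to Proposition~\ref{vee_polyleq_prop}---is the crux, and the disjoint-variables precondition of that proposition is precisely why the $\alpha$-renaming is needed on the \emph{body} premises (not, as you suggest, merely to decorrelate a substitution being pushed through). Second, your termination measure (count and depth of non-canonical \Rule{Inst} nodes) does not decrease through the \Rule{\Vee} case: tightening $s$ to $s'$ forces the body premises to be re-derived, and monotonicity inserts fresh \Rule{\InstLeq} patterns on top of axioms---these are non-canonical \Rule{Inst} nodes, so your measure goes up. The paper's lex measure on $(n_\vee,n)$ is exactly what absorbs this blow-up, since the recursive call on a body premise has strictly fewer \Rule{\Vee} nodes than $D$.
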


\begin{proof}
    We proceed by induction on $(n_\vee,n)$ (using the lexicographic order),
    where $n_\vee$ denotes the number of \Rule{\Vee} nodes in the derivation, and $n$ denotes the total
    number of nodes in the derivation.
    
    If the root is a \Rule{Inst} or \Rule{$\leq$}, we can remove the root (its premise will be the new root)
    and proceed inductively on the result.
    
    If the root is a \Rule{$\wedge$}, we proceed inductively on all its premises and update the intersection type $t$
    derived by the root into a new intersection type $t'$ (according to the new premises).
    We know that the new derived type $t'$ satisfies $t'\polyleq t$ according to Proposition~\ref{inter_polyleq_prop}.

    If the root is a \Rule{\Vee} of the following form:
    \begin{mathpar}
      \Infer[\Vee]
      {
        \Infer{A}{\Gamma\vdasha e':s}{}
        \Infer{B_i}{\Gamma,\xx:s\land\mt_i\vdasha e:t}{\forall i\in I}
      }
      {\Gamma \vdasha e\subs{\xx}{e'}: t}
      { }
  \end{mathpar}
    \begin{enumerate}
        \item We first proceed inductively on $A$, which gives a derivation $A'$.
        We consider following derivation, where $s'\polyleq s$ (and thus $s'\wedge \mt_i\polyleq s\wedge \mt_i$):
        \begin{mathpar}
            \Infer[\Vee]{
                \Infer{A'}{\Gamma \vdasha e' : s'}{ }\quad
                \Infer
                { B_i' }
                {\Gamma, \xx:s'\wedge \mt_i\vdasha e: t}
                { \forall i\in I }
            }
            {\Gamma \vdasha e\subs{\xx}{e'}: t}{ }
        \end{mathpar} with $B_i'$ a derivation easily derived from $B_i$ by monotonicity (Lemma~\ref{monotonicity_lemma}).
        Note that the application of the monotonicity lemma might insert unwanted \Rule{\InstLeq} patterns
        after axioms, but they will be eliminated with the next step.

        \item The next step is to proceed inductively on the $\{B_i'\}_{i\in I}$ premises,
        yielding some derivations $\{B_i''\}_{i\in I}$ that derive some types $\{t_i\}_{i\in I}$ (with $\forall i\in I.\ t_i\polyleq t$).
        We can suppose that all the $\{t_i\}_{i\in I}$ have disjoint polymorphic type variables:
        if it is not the case, it can be ensured by applying Lemma~\ref{vartype_alpharenaming_lemma} to these premises.
        Then, we consider the following derivation:
    \small\begin{mathpar}
        \Infer[\Vee]
        {
        \Infer{A'}{\Gamma \vdasha e' : s'}{ }\quad
        \Infer[$\leq$]{\Infer{B_i''}{\Gamma, \xx:s'\land \mt_i \vdasha e : t_i}{ }
        }
        { \Gamma, \xx:s'\land \mt_i \vdasha e : \tbvee_{i\in I}t_i }{ \forall i\in I}
        }
        {\Gamma \vdasha e\subs{\xx}{e'}: \tbvee_{i\in I}t_i}{ }
    \end{mathpar}\normalsize
    The result $\tbvee_{i\in I}t_i$ satisfies $\tbvee_{i\in I}t_i\polyleq t$ according to Proposition~\ref{vee_polyleq_prop}.    
    
    \item The new \Rule{$\leq$} nodes that appear as premise of the \Rule{\Vee} root could
    break the properties of Lemma~\ref{vee_norm_lemma} if the corresponding $B_i''$ ends with a \Rule{\Vee} node.
    In this case, we move up the faulty \Rule{$\leq$} nodes as needed using this transformation:
    \small\begin{mathpar}
        \Infer[$\leq$]{ 
            \Infer[\Vee]
            { 
                \Infer{ A }{ \Gamma\vdasha e' : s }{ }\quad
                \Infer{ B_i }{ \Gamma, \xx:s\land \mt_i\vdasha e:t' }{ \forall i\in I }
            }
            { \Gamma\vdasha e\subs{\xx}{e'}:t' }
            { }
        }{
            \Gamma\vdasha e\subs{\xx}{e'}:t
        } { }
        % \quad\raisebox{7pt}{$\leftrightarrow$}\quad
        \\\raisebox{1pt}{$\downarrow$}\\
        \Infer[\Vee]
        { 
            \Infer{ A }{ \Gamma\vdasha e' : s }{ }\quad
            \Infer[$\leq$]{
                \Infer{B_i}{ \Gamma, \xx:s\land \mt_i\vdasha e:t'}{ }\quad
            }{ \Gamma, \xx:s\land \mt_i\vdasha e:t }{ \forall i\in I }
        }
        { \Gamma\vdasha e\subs{\xx}{e'}:t }
        { }
    \end{mathpar}\normalsize
  \end{enumerate}

  The other cases are straightforward.
\end{proof}

\textbf{Normalisation of \Rule{$\leq$} nodes}

\begin{definition}[\Rule{$\leq$}-canonical derivation]
  A derivation $D$ is \Rule{$\leq$}-canonical
  if every \Rule{$\leq$} node it contains is either:
  \begin{itemize}
      \item The first premise of a \Rule{$\in_1$} or \Rule{$\in_2$} node, or
      \item One of the body premises of a \Rule{\Vee} node, or
      \item The premise of a \Rule{$\times$E$_1$} or \Rule{$\times$E$_2$} node, or
      \item The first premise of a \Rule{$\to$E} node
  \end{itemize}
\end{definition}

\begin{lemma}[Normalisation of \Rule{$\leq$}]\label{leq_norm_lemma}
  Given an expression order $\leqexpr$, any \Rule{\Vee}\Rule{Inst}-canonical form derivation $D$
  of $\Gamma\vdasha e:t$
  can be transformed into a \Rule{\Vee}\Rule{Inst}\Rule{$\leq$}-canonical form derivation
  of $\Gamma\vdasha e:t'$ for the order $\leqexpr$ and with $t'\polyleq t$.
\end{lemma}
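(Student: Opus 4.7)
I would prove the lemma by induction on the number of nodes in $D$, with a case analysis on the root, maintaining the $\vee$- and Inst-canonicity invariants throughout. The guiding principle is that every ``extra'' $\leq$ node may be dropped as long as it is compensated by returning a strictly more precise type at the conclusion; since the statement only requires $t' \polyleq t$, the restructuring is essentially a push-down-and-absorb argument.

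If the root is a $\leq$ node, apply the IH to its premise to obtain a derivation of some $t''$ with $t'' \polyleq t' \leq t$, hence $t'' \polyleq t$, and simply drop the root. If the root is a $\wedge$, recurse on each premise and reassemble a $\wedge$; Proposition~\ref{inter_polyleq_prop} gives $\polyleq$ of the new intersection. If the root is an Inst node, by Inst-canonicity of $D$ it is already embedded in an InstLeq pattern at a canonical position; we treat that whole pattern as part of its parent node and recurse on its sub-derivation. For the $\vee$ case, recurse on the definition premise and on each body premise; the possibly different refined body types $\{t_i'\}_{i\in I}$ can be unified by inserting a $\leq$ in each body-premise position (a canonical location) to subsume them to $\bigvee_i t_i'$, which by Proposition~\ref{vee_polyleq_prop} satisfies $\bigvee_i t_i' \polyleq t$; $\vee$-canonicity is preserved since no substitution or sub-expression is touched.

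For the structural introduction rules ($\to$I, $\times$I, Const, Ax$_\lambda$, Ax$_\vee$), the recursion on premises just yields a more precise conclusion without introducing any new $\leq$. The delicate case is structural elimination rules. For $\to$E the IH gives the function premise type $s_1 \polyleq t_1 \to t_2$ and the argument type $s_2 \polyleq t_1$, and the original side condition $t_2 \leq \dom{t_1\to t_2}$ needs to be re-established. The fix is to insert an InstLeq pattern on the first premise (a canonical position) that coerces $s_1$ to an arrow type $s_2 \to (s_1 \circ s_2)$ using the type operators of Appendix~\ref{sec:typeop-appendix}, concluding with the more precise application type $s_1 \circ s_2 \polyleq t_2$. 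The cases $\times$E$_i$, $\in_1$, $\in_2$, and $\Empty$ are analogous, using $\bpi_i$ for projections and direct subsumption to $\tau$, $\neg\tau$, or $\Empty$ for the typecase rules, each time inserting a $\leq$ (or InstLeq) in its permitted canonical position.

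The main obstacle will be establishing that the locally inserted InstLeq patterns around elimination rules comply with both canonicity predicates simultaneously: the $\leq$ nodes they contain must lie in positions allowed by $\leq$-canonicity, the Inst nodes in positions allowed by Inst-canonicity, and the surgery must not disturb the well-positionedness of any $\vee$ node with respect to $\leqexpr$. Since each transformation is strictly local to a single structural node, does not alter the underlying expression nor any $\vee$-substitution, and introduces only InstLeq patterns in positions that are simultaneously Inst- and $\leq$-canonical for the affected elimination rule, both invariants transfer from the IH to the restructured derivation, and the induction closes.
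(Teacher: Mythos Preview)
Your overall plan is close to the paper's, but there is a genuine gap in the $\vee$ case, and it cascades into a wrong induction measure.

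When you recurse on the definition premise $\Gamma\vdasha e':s$ and obtain $\Gamma\vdasha e':s'$ with $s'\polyleq s$, the body premises you still hold are typed under $\Gamma,\xx:s\land\mt_i$, not under $\Gamma,\xx:s'\land\mt_i$. You cannot reassemble a \Rule{$\vee$} node from these pieces: the rule requires the body environments to use the \emph{same} $s'$ that the definition premise now derives. You never mention this environment mismatch. Fixing it requires applying the monotonicity lemma to each body premise, but monotonicity inserts \Rule{\InstLeq} patterns after axioms, which breaks the \Rule{Inst}-canonicity you assumed as input. The paper handles this by re-invoking Lemma~\ref{inst_norm_lemma} on the adjusted bodies \emph{before} recursing on them, so that the inductive call again receives a \Rule{\Vee}\Rule{Inst}-canonical derivation.

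This is precisely why a plain node-count induction fails: after monotonicity plus \Rule{Inst}-renormalisation, the body derivations $B_i'$ may have more nodes than the original $B_i$ (and possibly more than $D$ itself). The paper's measure is the lexicographic pair $(n_\vee,n)$: the bodies have strictly fewer \Rule{$\vee$} nodes than $D$ because the root \Rule{$\vee$} has been peeled off and neither monotonicity nor Lemma~\ref{inst_norm_lemma} introduces new \Rule{$\vee$} nodes, so the inductive call is licensed regardless of total size. Your measure does not allow that call.

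A secondary point: the \Rule{$\leq$} you insert at each body-premise position can land directly above another \Rule{$\vee$} node (when the body is itself a form derivation headed by \Rule{$\vee$}); this breaks well-positionedness of that inner \Rule{$\vee$} and hence \Rule{$\vee$}-canonicity. The paper explicitly pushes such \Rule{$\leq$} nodes through the inner \Rule{$\vee$} into its body premises (its step~3). Your last paragraph asserts that ``each transformation is strictly local \ldots and does not disturb the well-positionedness of any $\vee$ node'', but this particular insertion does.
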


\begin{proof}
    We proceed by induction on $(n_\vee,n)$ (using the lexicographic order),
    with $n_\vee$ the number of \Rule{\Vee} nodes in the derivation, and $n$ the total
    number of nodes in the derivation.

    If the root is a \Rule{$\leq$} or \Rule{Inst}, we can remove the root (its premise will be the new root)
    and proceed by induction on its premise.

    If the root is a \Rule{$\wedge$}, we proceed inductively on all its premises and update the intersection type $t$
    derived by the root into a new intersection type $t'$ (according to the new premises).
    We trivially have $t'\polyleq t$.

    If the root is a \Rule{$\to$I} pattern, we proceed inductively on its premise and update the arrow type $t$
    derived by the root into a new arrow type $t'$ (according to the new premise).
    We trivially have $t'\polyleq t$.

    If the root is a \Rule{\Vee} of the following form:
    \begin{mathpar}
      \Infer[\Vee]{
          \Infer{A}{\Gamma \vdasha e' : s}{ }\quad
          \Infer
          { B_i }
          {\Gamma, \xx:s\wedge \mt_i\vdasha e: t}
          { \forall i\in I }
      }
      {\Gamma \vdasha e\subs{\xx}{e'}: t}{ }
    \end{mathpar}
    \begin{enumerate}
        \item We first proceed inductively on $A$, yielding a derivation $A'$.
        We then consider the following derivation, with $s'\polyleq s$:
        \begin{mathpar}
            \Infer[\Vee]{
                \Infer{A'}{\Gamma \vdasha e' : s'}{ }\quad
                \Infer
                { B_i' }
                {\Gamma, \xx:s'\wedge \mt_i\vdasha e: t}
                { \forall i\in I }
            }
            {\Gamma \vdasha e\subs{\xx}{e'}: t}{ }
        \end{mathpar} where each $B_i'$ is derived from $B_i$ by applying Lemma~\ref{monotonicity_lemma} (monotonicity),
        and then Lemma~\ref{inst_norm_lemma} in order to normalize \Rule{Inst} nodes that might have been introduced
        by the monotonicity lemma.
        Note that this might add unwanted \Rule{$\leq$} nodes, but they will be eliminated with the next step.        
        
        \item We proceed inductively on the $\{B_i'\}_{i\in I}$ premises, yielding some derivations
        $\{B_i''\}_{i\in I}$ that derive some types $\{t_i\}_{i\in I}$ (with $\forall i\in I.\ t_i\polyleq t$).
        Then, we consider the following derivation:
        \small\begin{mathpar}
          \Infer[\Vee]
          {
          \Infer{A'}{\Gamma \vdasha e' : s'}{ }\quad
          \Infer[$\leq$]{\Infer{B_i''}{\Gamma, \xx:s'\land \mt_i \vdasha e : t_i}{ }
          }
          { \Gamma, \xx:s'\land \mt_i \vdasha e : \tbvee_{i\in I}t_i }{ \forall i\in I}
          }
          {\Gamma \vdasha e\subs{\xx}{e'}: \tbvee_{i\in I}t_i}{ }
      \end{mathpar}\normalsize
      The result $\tbvee_{i\in I}t_i$ trivially satisfies $\tbvee_{i\in I}t_i\polyleq t$.    
      
      \item The new \Rule{$\leq$} nodes that appear as premise of the \Rule{\Vee} root could
      break the properties of Lemma~\ref{vee_norm_lemma} if the corresponding $B_i''$ ends with a \Rule{\Vee} node.
      In this case, we move up the faulty \Rule{$\leq$} nodes as needed using this transformation:
      \small\begin{mathpar}
          \Infer[$\leq$]{ 
              \Infer[\Vee]
              { 
                  \Infer{ A }{ \Gamma\vdasha e' : s }{ }\quad
                  \Infer{ B_i }{ \Gamma, \xx:s\land \mt_i\vdasha e:t' }{ \forall i\in I }
              }
              { \Gamma\vdasha e\subs{\xx}{e'}:t' }
              { }
          }{
              \Gamma\vdasha e\subs{\xx}{e'}:t
          } { }
          % \quad\raisebox{7pt}{$\leftrightarrow$}\quad
          \\\raisebox{1pt}{$\downarrow$}\\
          \Infer[\Vee]
          { 
              \Infer{ A }{ \Gamma\vdasha e' : s }{ }\quad
              \Infer[$\leq$]{
                  \Infer{B_i}{ \Gamma, \xx:s\land \mt_i\vdasha e:t'}{ }\quad
              }{ \Gamma, \xx:s\land \mt_i\vdasha e:t }{ \forall i\in I }
          }
          { \Gamma\vdasha e\subs{\xx}{e'}:t }
          { }
      \end{mathpar}\normalsize  
    \end{enumerate}

    The other cases are straightforward.
\end{proof}

\begin{definition}[Canonical derivation]
  Given an expression order $\leqexpr$,
  a canonical derivation for the expression order $\leqexpr$
  is a \Rule{\Vee}\Rule{Inst}\Rule{$\leq$}-canonical derivation.
  We say it is a canonical form derivation if it is also a form derivation,
  and a canonical atomic derivation if it is an atomic derivation.
\end{definition}
When qualifying a derivation $D$ of canonical, the order $\leqexpr$ may be omitted:
in this case, we consider that there exists an expression order $\leqexpr$ such that
$D$ is canonical for $\leqexpr$.

\begin{theorem}[Normalisation of derivations]\label{norm_thm}
  Given an expression order $\leqexpr$,
  any ground derivation $D$ of $\Gamma\vdasha e:t$
  can be transformed into a canonical form derivation of $\Gamma\vdasha e:t'$,
  for the order $\leqexpr$ and with $t'\polyleq t$.
\end{theorem}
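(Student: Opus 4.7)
The plan is straightforward: the theorem is essentially the composition of the three normalisation lemmas already proved in this section, each of which progressively restricts one category of non-structural rules while preserving (up to $\polyleq$) the derived type and producing a derivation of the canonical form required by the next lemma.

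First I would apply Lemma~\ref{vee_norm_lemma} to the ground derivation $D$ of $\Gamma\vdasha e:t$, obtaining a \Rule{\Vee}-canonical form derivation $D_1$ of $\Gamma\vdasha e:t_1$ with $t_1\polyleq t$. Next I would feed $D_1$ into Lemma~\ref{inst_norm_lemma}, which yields a \Rule{\Vee}\Rule{Inst}-canonical form derivation $D_2$ of $\Gamma\vdasha e:t_2$ with $t_2\polyleq t_1$ (the hypothesis is satisfied because $D_1$ is, by construction, a \Rule{\Vee}-canonical form derivation). Finally I would apply Lemma~\ref{leq_norm_lemma} to $D_2$, producing a \Rule{\Vee}\Rule{Inst}\Rule{$\leq$}-canonical form derivation $D'$ of $\Gamma\vdasha e:t'$ with $t'\polyleq t_2$. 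By definition, $D'$ is a canonical form derivation for the order $\leqexpr$.

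To conclude, it remains to chain the three inequalities $t'\polyleq t_2\polyleq t_1\polyleq t$ into $t'\polyleq t$. This requires transitivity of $\polyleq$, which follows directly from its definition (Definition~\ref{def:polyleq}): if $t_a\Psubst_1\leq t_b$ and $t_b\Psubst_2\leq t_c$, then applying $\Psubst_2$ to the first subtyping (which is preserved under type-substitution) and combining with the second gives a set of substitutions witnessing $t_a\polyleq t_c$. More precisely, for the composed set $\{\psubst_2\circ\psubst_1 \mid \psubst_1\in\Psubst_1,\,\psubst_2\in\Psubst_2\}$, a short computation using that substitutions distribute over $\wedge$ and are monotone for $\leq$ shows the witness works.

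The only subtle point I anticipate is ensuring that each successive lemma's hypothesis is met, i.e.\ that the output of one normalisation remains a valid input to the next. The statements of Lemmas~\ref{inst_norm_lemma} and~\ref{leq_norm_lemma} are tailored exactly for this: they require the input to be respectively \Rule{\Vee}-canonical form and \Rule{\Vee}\Rule{Inst}-canonical form, and each produces a derivation in the appropriate stronger canonical form. No further bookkeeping is needed, so the proof reduces to the chain above followed by the transitivity argument.
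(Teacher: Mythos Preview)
Your proposal is correct and matches the paper's own proof, which is simply ``by successive application of Lemma~\ref{vee_norm_lemma}, Lemma~\ref{inst_norm_lemma}, and Lemma~\ref{leq_norm_lemma}.'' Your additional remark on the transitivity of $\polyleq$ is a welcome clarification that the paper leaves implicit.
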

\begin{proof}
  By successive application of Lemma~\ref{vee_norm_lemma}, Lemma~\ref{inst_norm_lemma},
  and Lemma~\ref{leq_norm_lemma}.
\end{proof}

\subsection{Type Safety}\label{sec:typesafety}

\subsubsection{The Parallel Semantics}\label{sec:parallelsem}

One technical difficuulty is that the subject reduction does not hold for the semantics
presented in Figure~\ref{fig:syntax}:
performing a reduction step on an expression $e$ might break the use of a \Rule{$\vee$} rule.
Indeed, if in the original typing derivation a rule \Rule{$\vee$} substitutes
multiple occurrences of the sub-expression $e$ by a variable $x$, reducing one occurrence of $e$
but not the others can make the application of this \Rule{$\vee$} rule impossible:
the correlation between the reduced $e$ and the other occurrences of $e$ is thus lost.

To circumvent this issue, we introduce a notion of parallel
reduction which forces to reduce all occurrences of a
sub-expression at the same time. 
The idea is to first define reduction rules that only apply at top-level,
and then define a context rule (rule \Rule{$\kappa$} below) that allows
reducing under an evaluation context, but that will apply this reduction
everywhere in the term. With this alternative semantics, the subject reduction becomes
true, allowing to prove type safety. The type safety for the initial semantics
(Figure~\ref{fig:syntax}) is then deduced from this result.

\begin{figure}[t]
  \input{proofs/fig-parallel-sem}
  \caption{Parallel Semantics\label{fig:parallel_sem}}
\end{figure}

The parallel semantics is formalized in Figure~\ref{fig:parallel_sem}.
A step of reduction happening at top-level is noted $\treduces$,
and a step of reduction of the parallel semantics under any evaluation context is noted $\preduces$.
Notice that the rule \Rule{$\kappa$} applies on an expression $e''$ a substitution from
an expression $e'$ to an expression $e$, noted $e''\esubs {e'}{e}$,
which is defined inductively on $e''$ as follows:
\begin{itemize}
  \item If  $e'\aequiv e''$,  then $e''\esubs{e'}e = e$.\vspace{1mm}
  \item If  $e'\not\aequiv e''$,  then $e''\esubs{e'}e$ is inductively defined as \vspace{-1.5mm} 
\end{itemize}
  \begin{align*}
    c\esubs{e'}e   & = c\\
    x\esubs{e'}{e} & = x\\
    \xx\esubs{e'}{e} & = \xx\\
    (e_1e_2)\esubs{e'}{e} & =  (e_1\esubs{e'}{e})(e_2\esubs{e'}{e})\\
    (\lambda x.e_\circ)\esubs{e'}{e} & = \lambda x.e_\circ &x\in\fv{e'}\\
    (\lambda x.e_\circ)\esubs{e'}{e} & = \lambda x.(e_\circ\esubs{e'}{e}) &x\not\in\fv e\cup\fv{e'}\\
    (\lambda x.e_\circ)\esubs{e'}{e} &= \lambda y.((e_\circ\subs x y)\esubs{e'}{e})
    &\hspace*{-1cm}x\not\in\fv e,x\in\fv{e'},y\text{ fresh}\\
    (\pi_i e_\circ)\esubs{e'}{e}   & = \pi_i (e_\circ \esubs{e'}{e})\\
    (e_1,e_2)\esubs{e'}{e} & =  (e_1\esubs{e'}{e},e_2\esubs{e'}{e})\\    
    (\tcase{e_1}{t}{e_2}{e_3})\esubs{e'}{e}  & = \tcase{e_1\esubs{e'}{e}}{t}{e_2\esubs{e'}{e}}{e_3\esubs{e'}{e}}
  \end{align*}
In particular, notice that expression substitutions are up to $\alpha$-renaming and perform only one pass.

Here is an example of reduction step using the parallel semantics:
\begin{mathpar}
  \Infer[$\kappa$]
  {
    (\lambda x. x)(\lambda x. x) \treduces \lambda x. x
  }
  {((\lambda x. x)(\lambda x. x),(\lambda x. x)(\lambda x. x)) \preduces (\lambda x. x, \lambda x. x)}
  {}
\end{mathpar}

\subsubsection{Subject Reduction}

\begin{proposition}\label{test_types_prop}
    If $\Gamma\vdasha v:\tau$, then $v\in\tau$ (see Figure~\ref{fig:syntax} for the definition of $\in$).
\end{proposition}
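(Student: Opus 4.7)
The plan is to use the normalization theorem (Theorem~\ref{norm_thm}) to reduce to a canonical form derivation and then proceed by case analysis on the value $v$, exploiting the syntactic restriction of test types (arrows appear only as $\Empty\to\Any$).

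First, applying Theorem~\ref{norm_thm} transforms the hypothesis into a canonical form derivation of $\Gamma\vdasha v:t$ with $t\polyleq\tau$. Since $\tau$ is ground (test types contain no type variables), this yields an instantiation $t\Psubst\leq\tau$, and $t\Psubst$ has the same outermost connective structure as $t$.

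Next, I do case analysis on $v$, using the fact that in a canonical form derivation the non-structural rules are tightly constrained: \Rule{Inst} and \Rule{$\leq$} can only occur attached to destructors (which cannot appear directly in a derivation for a value), and every \Rule{$\vee$} rule must be acceptable and well-positioned. Consequently, no acceptable \Rule{$\vee$} node can substitute a subexpression along the outermost shape of $v$, and the derivation reduces to a tree of \Rule{$\wedge$} rules above structural introduction leaves matching $v$'s shape. For $v=c$, this forces $t\simeq\basic{c}=\typof{c}$, giving $\typof{c}\leq\tau$ directly. For $v=\lambda x.e$, we obtain $t\simeq\bigwedge_i (\mt_i\to t_i)$; since $\tau$ is a test type, its intersection with the universe $\TypeInter{\Empty\to\Any}$ is either empty or all of $\TypeInter{\Empty\to\Any}$, so the fact that the non-empty $\TypeInter{t\Psubst}\subseteq\TypeInter{\Empty\to\Any}$ lies inside $\TypeInter{\tau}$ forces $\Empty\to\Any\leq\tau=\typof{\lambda x.e}\leq\tau$. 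For $v=(v_1,v_2)$, the canonical derivation yields $t\simeq\bigwedge_i (t_{1,i}\times t_{2,i})$ with canonical subderivations $\Gamma\vdasha v_j:t_{j,i}$, and we combine the inductive hypotheses on $v_1,v_2$ with a product-analog of the above test-type subtyping fact to obtain $\typof{v_1}\times\typof{v_2}\leq\tau$.

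The main obstacle is the pair case: the sub-derivations produce arbitrary types $t_{j,i}$, not test types, so the induction hypothesis as stated does not apply directly. To bridge this, I would either strengthen the induction hypothesis so that it relates $\typof{v_j}$ to any derivable type in a way compatible with the restricted shape of test types, or invoke a semantic subtyping fact for products against test types showing that $\bigwedge_i (t_{1,i}\times t_{2,i})\leq\tau$ combined with the shape of the $v_j$ suffices to conclude $\typof{v_1}\times\typof{v_2}\leq\tau$. Either route leverages the constrained syntax of $\tau$, which is precisely what makes the proposition provable despite the absence of a full inversion principle for the declarative system.
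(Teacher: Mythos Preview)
Your route is much heavier than the paper's and does not actually close the gap you identify. The paper proves this proposition by a \emph{direct induction on the derivation} of $\Gamma\vdasha v:\tau$; it never invokes Theorem~\ref{norm_thm}. The only case singled out is $\lambda$-abstractions, and there the restriction of arrows in test types to $\Empty\to\Any$ (together with the fact that every arrow type is non-empty and contained in $\Empty\to\Any$) is all that is needed.

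Going through normalization costs you on several fronts without buying anything. First, Theorem~\ref{norm_thm} applies only to \emph{ground} derivations, a hypothesis the proposition does not carry. Second, even granting groundness, your assertion that for $v=c$ the canonical form derivation ``forces $t\simeq\basic{c}$'' is not immediate: a canonical \emph{form} derivation forbids structural nodes on the initial segment, so the derivation of $c$ must begin with $\wedge$/$\vee$ layers that you would still have to unwind---which is no shorter than the direct induction you are trying to avoid. Third, and most importantly, the pair case you flag as ``the main obstacle'' is not eased by normalization: the two fixes you sketch (strengthen the inductive hypothesis to relate $\typof{v_j}$ to arbitrary derived types; use a product-versus-test-type subtyping fact) are exactly the ingredients one uses in the direct induction, so the detour through canonical derivations has not simplified the problem.

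The clean version is to drop Theorem~\ref{norm_thm} entirely and run the induction on the derivation of $\Gamma\vdasha v:t$ for arbitrary $t$, carrying the invariant ``for every test type $\tau$ with $t\leq\tau$, $\typof{v}\leq\tau$''. The non-structural rules ($\leq$, \textsc{Inst}, $\wedge$, $\vee$) then go through uniformly, and the structural cases (Const, $\to$I, $\times$I) use precisely the test-type facts you already identified for arrows and products.
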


\begin{proof}
    Straightforward, by induction on the derivation of the judgement $\Gamma\vdasha v:\tau$.
    Note that the case of $\lambda$-abstractions is trivial as arrows in $\tau$ can only be $\Empty\to\Any$.
\end{proof}

\begin{lemma}[Substitution lemma]\label{substitution_lemma}
  If $\Gamma, x:s\vdasha e:t$ and $\Gamma\vdasha e':s$,
  then $\Gamma\vdasha e\subs{x}{e'}:t$.
\end{lemma}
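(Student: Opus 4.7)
The plan is to prove the lemma by induction on the derivation of $\Gamma, x{:}s\vdasha e:t$. Almost all rules are handled uniformly: for the structural rules (\Rule{Const}, \Rule{$\to$E}, \Rule{$\times$I}, \Rule{$\times$E$_i$}, \Rule{$\Empty$}, \Rule{$\in_i$}) and for the rules \Rule{$\wedge$}, \Rule{Inst}, \Rule{$\leq$}, the induction hypothesis applies directly to each premise, and the same rule is re-applied at the conclusion, using the compatibility of substitution with each syntactic construct. The rule \Rule{Ax$_\vee$} is trivial since $x$ is a lambda variable and thus distinct from any binding variable $\xx$, so $\xx\subs{x}{e'} = \xx$.

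Two cases require care. First, \Rule{Ax$_\lambda$} applied to the variable $x$ itself: the derivation ends with $\Gamma, x{:}s \vdasha x : s\renaming$, and after substitution we need $\Gamma \vdasha e' : s\renaming$. The hypothesis gives $\Gamma \vdasha e' : s$, and Lemma~\ref{vartype_alpharenaming_lemma} lets us rename the polymorphic variables of the derived type by any renaming $\renaming$, yielding $\Gamma \vdasha e' : s\renaming$. The subcase where \Rule{Ax$_\lambda$} types a variable $y\neq x$ is immediate since $y\subs{x}{e'}=y$ and $y$'s type is unchanged in $\Gamma$. For \Rule{$\to$I}, we have $e = \lambda y.e_0$; we $\alpha$-rename to arrange $y\neq x$ and $y\notin\fv{e'}$, so the substitution commutes with the binder; applying the IH to the premise $\Gamma, x{:}s, y{:}\mt \vdasha e_0 : t$ (which requires $\Gamma, y{:}\mt \vdasha e' : s$, obtained from $\Gamma \vdasha e' : s$ by monotonicity, Lemma~\ref{monotonicity_lemma}) gives the required body judgment, and we conclude with a fresh \Rule{$\to$I}.

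The main obstacle is the \Rule{$\vee$} case, which involves both a nested substitution on a binding variable and the need to keep typability on all body premises. Here the derivation has the form
\[
\Infer[$\vee$]
{ \Gamma, x{:}s \vdasha e_0' : s_0 \quad (\forall i\in I)\ \Gamma, x{:}s, \xx{:}s_0\wedge\mt_i \vdasha e_0 : t }
{ \Gamma, x{:}s \vdasha e_0\subs{\xx}{e_0'} : t }{}
\]
and we want to type $(e_0\subs{\xx}{e_0'})\subs{x}{e'}$. Since bindings are capture-avoiding, we $\alpha$-rename $\xx$ so that $\xx\neq x$ and $\xx\notin\fv{e'}$; then
\[
(e_0\subs{\xx}{e_0'})\subs{x}{e'} \;=\; (e_0\subs{x}{e'})\subs{\xx}{e_0'\subs{x}{e'}}.
\]
The induction hypothesis applied to the definition premise yields $\Gamma \vdasha e_0'\subs{x}{e'} : s_0$. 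For each body premise, we first weaken $\Gamma \vdasha e' : s$ to $\Gamma, \xx{:}s_0\wedge\mt_i \vdasha e' : s$ via Lemma~\ref{monotonicity_lemma}; then the IH yields $\Gamma, \xx{:}s_0\wedge\mt_i \vdasha e_0\subs{x}{e'} : t$. Re-applying \Rule{$\vee$} with the same partition $\{\mt_i\}_{i\in I}$ gives the desired conclusion. The key technical points are thus: choosing the $\alpha$-renaming correctly so the substitutions commute, applying the IH on a derivation whose environment has been extended (which is fine because environments are treated as unordered maps), and invoking Lemma~\ref{vartype_alpharenaming_lemma} to handle the renaming $\renaming$ in the axiom case.
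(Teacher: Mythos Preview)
Your proposal is correct and follows the same approach as the paper: induction on the derivation of $\Gamma, x{:}s\vdasha e:t$, replacing axiom nodes for $x$ by the derivation of $\Gamma\vdasha e':s$. The paper's proof is a one-liner (``straightforward induction \ldots\ where each \Rule{Ax$_\lambda$} node is replaced by the derivation $\Gamma\vdasha e':s$''), while you spell out the cases in detail; in particular, you correctly observe that the axiom yields $s\renaming$ rather than $s$ and invoke Lemma~\ref{vartype_alpharenaming_lemma} to bridge this gap, a point the paper glosses over.
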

\begin{proof}
  Straightforward induction on the derivation $\Gamma, x:s\vdasha e:t$, where
  each \Rule{\Axl} node is replaced by the derivation $\Gamma\vdasha e':s$.
\end{proof}

\begin{definition}[Atomic type]
  A type $t$ is said atomic if it cannot be decomposed into a non-trivial union.
  Formally, $t$ is atomic if and only if,
  for any set of types $\{t_i\}_{i\in I}$,
  $t \leq \tbvee_{i\in I} t_i \Rightarrow \exists i\in I.\ t\leq t_i$.
  Equivalently, $t$ is atomic if and only if, for any type $s$, either $t\leq s$ or $t\leq\neg s$.
\end{definition}

\begin{lemma}[Atomicity of value types]\label{atomic_value_lemma}
    If there exists a canonical atomic derivation $D$ of $\Gamma \vdasha v:s$ (with $v$ a value),
    then $s$ is atomic.
\end{lemma}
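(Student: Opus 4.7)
The plan is to proceed by structural induction on $D$. Since $D$ is canonical, the rules \Rule{Inst} and \Rule{$\leq$} are confined to the immediate premises of destructors or to the body premises of \Rule{$\vee$} nodes, and none of these contexts can appear as the topmost position of a derivation typing a value. Consequently, the root of $D$ must be either a \Rule{$\wedge$} node whose premises are canonical atomic derivations of $v$, or one of the structural introduction rules applicable to a value: \Rule{Const} (if $v=c$), \Rule{$\to$I} (if $v=\lambda x.e$), or \Rule{$\times$I} (if $v=(v_1,v_2)$).

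For the base cases, \Rule{Const} gives $s=\basic{c}$, a singleton type, which is atomic. The rule \Rule{$\to$I} gives $s=\mt\to t$, and arrow types are atomic in set-theoretic subtyping: any non-trivial union bounding $\mt\to t$ above must have one summand already containing it (which one can see directly from the interpretation of arrow types as sets of finite graphs). For the inductive \Rule{$\wedge$} case, by the induction hypothesis each premise yields an atomic type $s_i$; atomicity then transfers to $\bigwedge_i s_i$ because atomicity of some $s_i$ immediately gives $s_i\leq s$ or $s_i\leq\neg s$, and both conclusions lift to the intersection.

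The \Rule{$\times$I} case is handled by first observing that the sub-derivations of $v_1$ and $v_2$ inherit canonicity from $D$ and, by inspection of the shape constraints of a canonical atomic derivation (in particular, the restriction on \Rule{$\vee$}, \Rule{Inst}, and \Rule{$\leq$} placements), can be re-read as canonical atomic derivations of the sub-values, possibly after reshuffling peripheral \Rule{$\wedge$} nodes. Applying the induction hypothesis gives that $t_1$ and $t_2$ are atomic, and it remains to show $t_1\times t_2$ is atomic: for any $s$, we put $s$ into disjunctive normal form and use the atomicity of $t_1$ and $t_2$ coordinate-wise to conclude $t_1\times t_2\leq s$ or $t_1\times t_2\leq\neg s$.

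The hard part will be the \Rule{$\times$I} case: the tight ``exactly one structural node per branch'' constraint of an atomic derivation interacts awkwardly with the nested structure of a pair value, and one has to legitimately extract canonical atomic sub-derivations of $v_1$ and $v_2$ on which to apply the induction hypothesis. This may require either a preliminary shape lemma performing this extraction, or a direct semantic argument for atomicity of $t_1\times t_2$ that bypasses the structural induction on the sub-derivations entirely.
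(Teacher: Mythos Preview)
Your overall approach (induction on $D$, case analysis on the root) is sound and close in spirit to the paper's argument, but your handling of the \Rule{$\times$I} case contains a real error. You claim that the sub-derivations for $v_1$ and $v_2$ ``can be re-read as canonical atomic derivations of the sub-values, possibly after reshuffling peripheral \Rule{$\wedge$} nodes.'' This is false: once \Rule{$\times$I} appears at the root, the single-structural-node budget on that branch segment is spent, so the premises typing $v_1$ and $v_2$ contain \emph{zero} structural nodes, not one. They therefore cannot be atomic derivations, and your inductive hypothesis does not apply. Worse, since $v_1,v_2$ are themselves values (not binding variables), typing them without any structural rule, any \Rule{$\leq$}, any \Rule{Inst}, and any \Rule{$\vee$} is impossible. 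The correct conclusion is that the \Rule{$\times$I} case is \emph{vacuous}: there simply is no canonical atomic derivation of a pair value. You flagged this case as the hard one, but the resolution is to discard it, not to force an extraction.

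The paper's proof takes a slightly different and more direct route that sidesteps this issue entirely. Rather than inducting on $D$, it observes which rules can possibly occur in a canonical atomic derivation of a value (no destructors, no axioms, no \Rule{$\vee$}, no \Rule{$\leq$}, no \Rule{Inst} outside a \Rule{$\to$I}), and from this reads off a small grammar for the resulting type $s$:
\[
\bar t ::= b \mid t \to t \mid \bar t \times \bar t \mid \bar t \wedge \bar t.
\]
It then argues, once and for all, that every type generated by this grammar is atomic (basic types and arrows are atomic; atomicity is closed under product and intersection). The product clause is included generously—by the analysis above it cannot actually arise—but enlarging the grammar is harmless for the atomicity claim. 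Your argument becomes correct if you drop the \Rule{$\times$I} case as impossible, at which point it is essentially the paper's proof recast as an induction.
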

\begin{proof}
    As $v$ is a value, we know that the derivation does not contain any destructor nor axiom node,
    except possibly in the premise of a \Rule{$\to$I} node.
    Moreover, as $D$ is a canonical atomic derivation,
    it does not contain any \Rule{\Vee} nor \Rule{$\leq$} node neither.
    In particular, this implies that $s$ cannot contain any type variable nor union,
    except under an arrow.
    More precisely, we can easily prove that $s$ can be constructed with the following syntax:

    \[
        \begin{array}{lrcl}
        \textbf{Value Type} & \bar t & ::= & b\alt t\to t\alt \bar t\times \bar t \alt \bar t \wedge \bar t
        \end{array}
    \]
    
    Any type $t_1\to t_2$ is atomic, so is any base type $b$.
    Atomicity is preserved by intersection and product,
    thus we can deduce that a type constructed with the syntax above is atomic.
\end{proof}

\begin{definition}[Unavoidable \Rule{\Vee} node]
  In any derivation, a \Rule{\Vee} node $N$ doing the substitution $e\subs{\xx}{e'}$
  is said unavoidable if $N$ is acceptable (Definition~\ref{def:acceptable}) and if $e'$ is not a value.
  A \Rule{\Vee} node that is not unavoidable is said avoidable.
\end{definition}

\begin{definition}[Minimal derivation]
  A derivation is said minimal if every \Rule{\Vee} node it contains is unavoidable.
\end{definition}

\begin{lemma}[Elimination of value substitutions]\label{value_elim_lemma}
  Any canonical form derivation or canonical atomic derivation $D$ of $\Gamma\vdasha e:t$
  can be transformed into a minimal derivation of $\Gamma\vdasha e:t$.
\end{lemma}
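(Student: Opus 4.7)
The plan is to eliminate \emph{avoidable} $\Rule{\Vee}$ nodes one at a time, proceeding by strong induction on their total count in $D$. Because $D$ is canonical, every $\Rule{\Vee}$ node it contains is already acceptable, so the avoidable ones are exactly the nodes whose substituted sub-expression is a value.

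Pick an \emph{innermost} avoidable node $N$, i.e., one whose definition subderivation $A$ contains no further avoidable $\Rule{\Vee}$ node. By the definition of canonical form and canonical atomic derivations, $A$ is a canonical atomic derivation of the shape $A : \Gamma' \vdasha v : s$ for some value $v$, and Lemma~\ref{atomic_value_lemma} tells us that $s$ is atomic. The side condition of $N$ provides a partition $\{\mt_i\}_{i \in I}$ of $\Any$, and atomicity of $s$ together with disjointness of this partition yields a (unique, up to equivalence) index $i_0$ with $s \leq \mt_{i_0}$, hence $s \wedge \mt_{i_0} \simeq s$. Apply monotonicity (Lemma~\ref{monotonicity_lemma}) to the body premise $B_{i_0} : \Gamma', \xx : s \wedge \mt_{i_0} \vdasha e : t$ to obtain a derivation $B'_{i_0} : \Gamma', \xx : s \vdasha e : t$, and then invoke the substitution lemma (Lemma~\ref{substitution_lemma}, adapted to binding variables: the proof is verbatim, using \Rule{\Axv} in place of \Rule{\Axl}) with $A$ to produce a derivation of $\Gamma' \vdasha e\subs{\xx}{v} : t$, which replaces the entire subtree rooted at $N$.

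For termination, let $n$ be the number of avoidable $\Rule{\Vee}$ nodes in the derivation. The node $N$ is removed; discarding the premises $B_j$ for $j \neq i_0$ can only remove further avoidable nodes; the duplicated copies of $A$ introduced by the substitution contribute none, since $N$ was chosen innermost; and the avoidable nodes surviving in $B_{i_0}$ stay avoidable in kind, because substituting $\xx$ by a value sends non-value definitions to non-value definitions. Hence $n$ strictly decreases and the induction closes.

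The main delicate point is verifying that the surviving $\Rule{\Vee}$ nodes inside the transformed $B_{i_0}$ remain \emph{acceptable}, i.e., that $\subs{\xx}{v}$ does not fortuitously create aliasing, leave a useless binding, or break maximal sharing. This relies on two facts already guaranteed by the acceptability of $N$ itself: that $v$ does not occur as a sub-expression anywhere in $e$ (and a fortiori not in any body or definition of an inner $\Rule{\Vee}$ of $B_{i_0}$), and that $\xx$ is fresh in $v$. A short structural argument then shows that two sub-expressions of $B_{i_0}$ that do not contain $v$ and which become syntactically equal after the substitution must already have been equal, ruling out any spurious violation of maximal sharing; the other two clauses of acceptability are immediate. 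Finally, the degenerate case $s \simeq \Empty$ does not arise, since an easy induction on the structure of values shows that $\typof{v}$, and hence any type derivable for $v$, is non-empty.
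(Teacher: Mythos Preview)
Your overall strategy matches the paper's: pick an avoidable $\Rule{\Vee}$ node, use Lemma~\ref{atomic_value_lemma} to see that the definition's type $s$ is atomic, select the one body premise with $s \leq \mt_{i_0}$, and inline the definition by replacing each $\Rule{\Axv}$ on $\xx$ with the definition derivation. The paper organises this as a structural induction on $D$ and always processes the root, whereas you run a strong induction on the count of avoidable nodes and work on an innermost one; the elimination step itself is identical.

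Your termination argument has a genuine gap, however. The claim that ``substituting $\xx$ by a value sends non-value definitions to non-value definitions'' is false. Consider a canonical form derivation for the ground expression $(42,(42,42))$ built from three nested $\Rule{\Vee}$ nodes: the outermost binds $\xx$ to $42$, the next binds $\yy$ to $(\xx,\xx)$, and the innermost binds $\zz$ to $(\xx,\yy)$. All three are acceptable and well-positioned, and only the outermost is avoidable. Eliminating it turns the second node's definition into $(42,42)$, which \emph{is} a value, so that node becomes avoidable and your count has not decreased. More generally, any pair-shaped definition built out of binding variables is a non-value that may become a value after the substitution, and nothing in the canonicity constraints forbids such definitions.

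For what it is worth, the paper's own proof glosses over the same point: it applies the induction hypothesis to the selected body premise \emph{before} substituting, then simply asserts ``This new derivation is minimal'' without explaining why no formerly-unavoidable node has become avoidable. What the paper's structural induction on $D$ does buy is termination of the recursion (every call is on a strict sub-derivation of $D$), which your count-based measure fails to guarantee on its own.
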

\begin{proof}
  We proceed by structural induction on $D$.

  If the root is not a \Rule{\Vee} node, or if it is an unavoidable \Rule{\Vee} node,
  then we proceed inductively on all its premises.

  Otherwise, if the root is an avoidable \Rule{\Vee} node, we know that it is doing
  a substitution $e\subs{\xx}{v}$ with $v$ a value (as $D$ is canonical, the \Rule{\Vee} node
  must be acceptable). We can thus apply Lemma~\ref{atomic_value_lemma} on its definition premise
  $\Gamma\vdasha v:s$: we deduce that $s$ is atomic. Consequently, among the types $\{\mt_i\}_{i\in I}$
  composing the partition of $\Any$ used by this node, there is a type $\mt_i$, for some $i\in I$,
  such that $s\leq \mt_i$.
  
  We consider the corresponding body premise $\Gamma,\xx:s\vdasha e:t$,
  and we proceed by induction on it in order to get a minimal derivation $D'$ for $\Gamma,\xx:s\vdasha e:t$.
  Similarly, we proceed by induction on the definition premise $\Gamma\vdasha v:s$ in order to get
  a minimal derivation $A$ for $\Gamma\vdasha v:s$.

  We can then derive $\Gamma\vdasha e\subs{\xx}{v}:t$ from $D'$ by replacing \Rule{\Axv} nodes
  applying on $\xx$ by the derivation $A$ (in a similar way as done in Lemma~\ref{substitution_lemma},
  but for a binding variable). This new derivation is minimal.
\end{proof}

\newcommand{\srsubst}[0]{\esubs{e_\circ}{e_\circ'}}
\begin{lemma}[Subject reduction]\label{sr_lemma}
    If $D$ is a minimal derivation for $\Gamma\vdasha e:t$, and if $e_\circ\treduces e_\circ'$,
    then $\Gamma\vdasha e\srsubst:t$.
\end{lemma}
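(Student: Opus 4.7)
The plan is to prove the subject reduction lemma by structural induction on the minimal derivation $D$, relying crucially on the parallel semantics and on minimality to keep the $\vee$-node correlations intact.

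First, I would dispose of the easy cases. If $e$ contains no occurrence of $e_\circ$ (up to $\alpha$-equivalence), then $e\srsubst = e$ and there is nothing to prove. Otherwise, the induction splits into three patterns: (i) the root is a non-structural and non-$\vee$ rule (\Rule{$\wedge$}, \Rule{$\leq$}, \Rule{Inst}), in which case the IH is applied to the premises and the same rule is reapplied; (ii) the root is a structural rule but the redex $e_\circ$ is a strict sub-expression of $e$, in which case the IH is applied to the appropriate sub-derivations and the structural rule is reapplied on the transformed premises, using the fact that $\srsubst$ commutes with the construction of compound expressions; (iii) $e \aequiv e_\circ$, i.e., the top-level reduction actually happens at the root of $e$. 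This last case is handled by inversion on the structural rule at the root and by case analysis on which top-level reduction rule \eqref{redapp}--\eqref{redelse} fires: for \eqref{redapp} I use Lemma~\ref{substitution_lemma} (the substitution lemma); for \eqref{redproj1}, \eqref{redproj2} I invert the typing of the pair via \Rule{$\times$I}; and for \eqref{redthen}, \eqref{redelse} I use Proposition~\ref{test_types_prop} to align the runtime class of $v$ with the hypothesis of the \Rule{$\in_1$} or \Rule{$\in_2$} rule. One subtlety in case (iii) is that the derivation ending in the reduced structural rule may have been preceded by nonstructural rules (they have been canonicalized, but they do appear above destructors); these are just threaded through.

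The genuinely delicate case is when the root of $D$ is a \Rule{$\vee$} node substituting $e_s$ for $\xx$ in some body $e_b$, so that $e = e_b\subs{\xx}{e_s}$. Minimality of $D$ tells us two useful things: $e_s$ is not a value, and the substitution is acceptable, i.e., $e_b$ enjoys maximal sharing (in particular $e_b$ contains no occurrence of $e_s$) and $\xx$ is not an alias. I expect to split on whether $e_\circ \aequiv e_s$. If $e_\circ \aequiv e_s$, then by maximal sharing the occurrences of $e_\circ$ in $e_b\subs{\xx}{e_s}$ are exactly the copies of $e_s$ placed at each occurrence of $\xx$, so that $(e_b\subs{\xx}{e_s})\srsubst = e_b\subs{\xx}{e_\circ'}$; I then apply the IH to the definition premise $\Gamma\vdasha e_s : s$ to obtain $\Gamma\vdasha e_\circ' : s$ and reuse the body premises unchanged (they do not mention $e_\circ$ since $e_b$ does not). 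If $e_\circ \neq e_s$, then $\srsubst$ commutes with $\subs{\xx}{e_s}$ (after standard $\alpha$-renaming precautions, and after applying $\srsubst$ to $e_s$ itself), yielding $(e_b\subs{\xx}{e_s})\srsubst = (e_b\srsubst)\subs{\xx}{e_s\srsubst}$; I apply the IH to the definition premise (obtaining a derivation for $e_s\srsubst$ of the same type $s$) and to each body premise (noting that the typing environment of each body premise still derives $e_\circ \treduces e_\circ'$), and I recompose a \Rule{$\vee$} node with the same partition $\{\mt_i\}_{i\in I}$.

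The main obstacle is the bookkeeping in the \Rule{$\vee$} case: the IH is applied to the body premises where the environment has been extended with $\xx : s\wedge\mt_i$, and I need the resulting sub-derivations to remain minimal (so that the IH is actually applicable). For that, I expect to verify that the transformations in the \Rule{$\vee$} case preserve minimality (acceptability is straightforward since the substituted expression is either $e_\circ'$, which is not a value when $e_s$ reduces to it at top level—actually $e_\circ'$ could be a value after reduction, in which case we may invoke Lemma~\ref{value_elim_lemma} to re-minimize before concluding), and that the expression-substitution calculus behaves correctly under $\alpha$-conversion of binders introduced by $\lambda$'s nested inside $e_b$ or $e_s$. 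These bookkeeping obligations are the main technical content of the proof; the rest follows standard subject-reduction patterns.
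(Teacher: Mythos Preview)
Your overall skeleton matches the paper's, but there are two genuine gaps.

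\paragraph{The \Rule{$\vee$} case misses a sub-case.} Your dichotomy ``$e_\circ \aequiv e_s$'' versus ``$e_\circ \neq e_s$'' is too coarse, and the commutation identity $(e_b\subs{\xx}{e_s})\srsubst = (e_b\srsubst)\subs{\xx}{e_s\srsubst}$ you rely on in the second branch is false when $e_\circ$ \emph{strictly contains} $e_s$. In that situation an occurrence of $e_\circ$ in $e_b\subs{\xx}{e_s}$ can be assembled from a fragment $C[\xx]$ in $e_b$ together with the substituted $e_s$; neither $e_b\srsubst$ nor $e_s\srsubst$ rewrites this occurrence, so your right-hand side still contains $e_\circ$. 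The paper handles this by a third sub-case: set $e_\bullet = e_\circ\esubs{e_s}{\xx}$ and $e_\bullet' = e_\circ'\esubs{e_s}{\xx}$, observe that since $e_s$ is \emph{not a value} (this is exactly where minimality is used) it can only appear inside $e_\circ$ under a $\lambda$ or in a non-evaluated type-case branch, hence $e_\bullet \treduces e_\bullet'$, and then apply the IH to the body premises with this new top-level reduction. Without this argument the case does not go through.

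\paragraph{The \Rule{$\to$E} case needs a real inversion.} ``Use the substitution lemma'' is not enough. The premise only gives $\Gamma\vdasha \lambda x.e_\lambda : t_1\to t_2$, and in a minimal (but not canonical) derivation this judgment can be obtained through arbitrary stacks of \Rule{$\wedge$}, \Rule{Inst}, \Rule{$\leq$} over several \Rule{$\to$I} nodes with different domains $\mt_i$. The paper extracts the family $\Gamma,x{:}\mt_i\vdasha e_\lambda:s_i$ with instantiations $\Psubst_i$ such that $\bigwedge_i(\mt_i\to s_i)\Psubst_i \leq t_1\to t_2$, refines the $\mt_i$ into a partition, intersects the body types on each cell, and only then applies the substitution lemma. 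Your ``threading non-structural rules through'' does not address this inversion.

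A minor point: you worry about re-minimizing the output so that the IH applies, but the IH is only ever applied to \emph{sub-derivations} of $D$, which are minimal because $D$ is. The conclusion of the lemma does not require the resulting derivation to be minimal, so Lemma~\ref{value_elim_lemma} is not needed here.
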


\begin{proof}
  We proceed by structural induction on $D$.

  If $e$ contains no occurrence of $e_\circ$ (modulo $\alpha$-renaming), this result is trivial.
  Thus, we will suppose in the following that $e$ contains at least one occurrence of $e_\circ$.

    We denote by $e'$ the expression $e\srsubst$, and
    consider the root of the derivation:
    \begin{description}
        \item[\Rule{Const}] Impossible case ($e$ cannot contain any reducible expression).
        \item[\Rule{\Axl}] Impossible case ($e$ cannot contain any reducible expression).
        \item[\Rule{\Axv}] Impossible case ($e$ cannot contain any reducible expression).
         
        \item[\Rule{$\leq$}] By induction on the premise $\Gamma\vdash e:t'$ (with $t'\leq t$),
        we get a derivation for $\Gamma\vdash e':t'$,
        thus we can derive $\Gamma\vdash e':t$ by using a \Rule{$\leq$} node.
        \item[\Rule{Inst}] Similar to the previous case.
        \item[\Rule{$\wedge$}] We proceed inductively on each premise, and intersect the resulting
        derivations with a \Rule{$\wedge$} node.

        \item[\Rule{$\to$I}] We have $e'\equiv \lambda x.\ (e_\lambda\srsubst)$ for some expression $e_\lambda$.
        We can derive $\Gamma,x:\mt\vdasha e_\lambda\srsubst:t'$ by induction
        on the premise $\Gamma,x:\mt\vdasha e_\lambda:t'$ (with $t\simeq \mt\to t'$) and conclude by using a \Rule{$\to$I} node.
        \item[\Rule{$\times$I}] We have $e'\equiv (e_1\srsubst, e_2\srsubst)$ for some $e_1$ and $e_2$. We proceed inductively
        on the premises, as in the previous case.
    
        \item[\Rule{$\to$E}] We have $e\equiv e_1 e_2$ for some expressions $e_1$ and $e_2$.
        If $e_\circ$ is a sub-expression of $e_1$ and/or $e_2$, we conclude by proceeding
        inductively on the premises, as in the previous case.
    
        Otherwise, $e_\circ\equiv e_1 e_2$ and thus the reduction $e_\circ\treduces e_\circ'$ uses the rule \ref{redapp}.
        Consequently, we know that $e_\circ \equiv e \equiv (\lambda x.\ e_\lambda) v$ for some expression $e_\lambda$ and value $v$,
        and $e_\circ'\equiv e'\equiv e_\lambda\subs{x}{v}$.
    
        We have the following premises:
        \begin{enumerate}
            \item $\Gamma\vdasha\lambda x.\ e_\lambda:t_1\to t_2$ (with $t_2\simeq t$)
            \item $\Gamma\vdasha v:t_1$
        \end{enumerate}
        As $D$ is minimal, we know that the premise (1) has no \Rule{\Vee} node,
        except possibly in the premise of a \Rule{$\to$I} node.
        Thus, we can extract from (1) a collection of derivations of the judgements
        $\Gamma, x:\mt_i\vdasha e_\lambda:s_i$ for $i\in I$,
        such that there exists some instantiations $\Psubst_i$ such that $\tbwedge_{i\in I} (\mt_i\to s_i)\Psubst_i \leq t_1\to t_2$.
        We have $\tbwedge_{i\in I} \mt_i\to s_i\Psubst_i \simeq \tbwedge_{i\in I} (\mt_i\to s_i)\Psubst_i \leq t_1\to t_2$.
        Moreover, using \Rule{\InstLeq} patterns, we can derive for each $i\in I$
        the judgement $\Gamma, x:\mt_i\vdasha e_\lambda:s_i\Psubst_i$.

        Now, let's consider a partition $\{\ms_j\}_{j\in J}$ of $\tbvee_{i\in I}\mt_i$ of minimal cardinality
        such that: $\forall j\in J.\ \forall i\in I.\ \ms_j\leq \mt_i$ or $\ms_j\land \mt_i\simeq\Empty$.
        We know that $J$ is not empty: the case $t_1\leq\tbvee_{i\in I}\mt_i\simeq\Empty$ is impossible since
        the value $v$ would have type $\Empty$, contradicting Proposition~\ref{test_types_prop}.
        For every $j\in J$, we define $I_j = \{i\in I\,\alt\, \ms_i\land \mt_i\not\simeq\Empty\}$
        ($I_j$ cannot be empty as the partition $\{\ms_j\}_{j\in J}$ has minimal cardinality).
        Note that for any $j\in J$ and $i\in I_j$, we have $\ms_j\leq \mt_i$.

        Using the monotonicity lemma (Lemma~\ref{monotonicity_lemma}),
        we can derive for every $j\in J$ and $i\in I_j$ the judgement
        $\Gamma, x:\ms_j\vdasha e_\lambda:s_i\Psubst_i$.
        Thus, using a \Rule{$\wedge$} node, for every $j\in J$
        we can derive $\Gamma, x:\ms_j\vdasha e_\lambda:\tbwedge_{i\in I_j}s_i\Psubst_i$.
        Moreover, as $\tbwedge_{i\in I} \mt_i\to s_i\Psubst_i \leq t_1\to t_2$, we have, for every $j\in J$,
        $\tbwedge_{i\in I_j} s_i\Psubst_i \leq t_2$.
        Consequently, for every $j\in J$, we can derive the judgement
        $\Gamma, x:\ms_j\vdasha e_\lambda:t_2$ using a \Rule{$\leq$} node.
        Combining these judgements with a \Rule{\Vee} node allows us to derive
        $\Gamma, x:\tbvee_{j\in J}\ms_j\vdasha e_\lambda:t_2$.

        As $\tbvee_{j\in J}\ms_j$ covers $t_1$, we can construct from (2)
        a derivation $\Gamma\vdasha v:\tbvee_{j\in J}\ms_j$ using a \Rule{$\leq$} node.
        Finally, applying the substitution lemma (Lemma~\ref{substitution_lemma})
        gives a derivation for $\Gamma\vdasha e_\lambda\subs{x}{v}:t_2$.
        
        \item[\Rule{$\times$E$_1$}] We have $e\equiv \pi_1 e_1$ for some expression $e_1$.
        If $e_\circ$ is a sub-expression of $e_1$, we conclude by proceeding inductively on the premise.
        
        Otherwise, $e_\circ\equiv \pi_1 e_1$ and thus the reduction $e_\circ\treduces e_\circ'$ uses the rule \ref{redproj1}.
        Consequently, we know that $e_\circ \equiv e \equiv \pi_1 (v_1, v_2)$ for some values $v_1$ and $v_2$,
        and $e_\circ'\equiv e'\equiv v_1$.
        
        Similarly to the previous case, as $D$ is minimal, we can extract
        from the premise $\Gamma\vdasha (v_1,v_2):t_1\times t_2$
        a collection of derivations of the judgements $\Gamma\vdasha v_1:s_i$ and $\Gamma\vdasha v_2:s_i'$
        for $i\in I$, such that there exists some instantiations $\Psubst_i$ such that
        $\tbwedge_{i\in I} (s_i\times s_i')\Psubst_i \leq t_1\times t_2$.
        We thus have $\tbwedge_{i\in I} s_i\Psubst_i \leq t_1$, as none of the $\{s_i'\}_{i\in I}$ can be $\Empty$
        (this would contradict Proposition~\ref{test_types_prop} as $v_2$ is a value).
    
        Therefore, we can conclude this case by using a \Rule{\InstLeq} pattern with the premises
        $\{\Gamma\vdasha v_1:s_i\}_{i\in I}$ in order to derive $\Gamma\vdasha v_1:t_1$.
    
        \item[\Rule{$\times$E$_2$}] Similar to the previous case.
        
        \item[\Rule{$\vee$}]
        We have $e\equiv e_1\subs{\xx}{e_2}$ for some expressions $e_1$ and $e_2$, and thus
        $e'\equiv (e_1\subs{\xx}{e_2})\esubs{e_\circ}{e_\circ'}$.

        We have the following premises:
        \begin{enumerate}
            \item Definition premise: $\Gamma\vdasha e_2:s$
            \item Body premises: $\forall i\in I.\ \Gamma,\xx:s\land\mt_i\vdasha e_1:t$
        \end{enumerate}
        As $D$ is minimal, we know that $e_2$ cannot be a value.
        Also, we know that $e_\circ$ does not contain $\xx$
        as a free variable (otherwise there would be no occurrence of $e_\circ$ in $e_1\subs{\xx}{e_2}$).
        Consequently, $e_\circ'$ does not contain $\xx$ neither, because a reduction step cannot introduce a new
        free variable.
    
        There are several cases:
        \begin{itemize}
            \item $e_\circ$ does not contain $e_2$ and $e_2$ does not contain $e_\circ$.
            In this case, we have:\\
            $e'\equiv(e_1\subs{\xx}{e_2})\esubs{e_\circ}{e_\circ'} \equiv (e_1\esubs{e_\circ}{e_\circ'})\subs{\xx}{e_2}$.
            Thus, we can easily conclude by keeping the definition premise of the \Rule{$\vee$} node
            and applying the induction hypothesis on the body premises.

            \item $e_2$ contains $e_\circ$. In this case, we pose $e_2'=e_2\esubs{e_\circ}{e_\circ'}$.\\
            We have $e'\equiv(e_1\subs{\xx}{e_2})\esubs{e_\circ}{e_\circ'} \equiv (e_1\esubs{e_\circ}{e_\circ'})\subs{\xx}{e_2'}$.
            We can derive $\Gamma\vdasha e_2':s$ by induction on the definition premise,
            and $\Gamma, \xx:s\land \mt_i\vdasha e_1\esubs{e_\circ}{e_\circ'}:t$ for all $i\in I$ by induction on the body premises.
            Thus, we can derive $\Gamma\vdasha (e_1\esubs{e_\circ}{e_\circ'})\subs{\xx}{e_2'} : t$ using a \Rule{$\vee$} node.

            \item $e_\circ$ contains $e_2$ as a strict sub-expression.
            In this case, we pose $e_\bullet=e_\circ\esubs{e_2}{\xx}$ and $e_\bullet'=e_\circ'\esubs{e_2}{\xx}$.
            We know that $e_1$ does not contain any occurrence of $e_2$ (because $D$ is minimal),
            and thus no occurrence of $e_\circ$ neither.
            Consequently, we have $e'\equiv(e_1\subs{\xx}{e_2})\esubs{e_\circ}{e_\circ'} \equiv (e_1\esubs{e_\bullet}{e_\bullet'})\subs{\xx}{e_2}$.
            
            As $e_2$ is not a value, it can only appear in $e_\circ$ inside of a $\lambda$-abstraction, and/or inside of a branch of a typecase:
            otherwise, $e_\circ$ would not be reducible.
            
            Thus, we can deduce that $e_\bullet = e_\circ\esubs{e_2}{\xx} \treduces e_\circ'\esubs{e_2}{\xx}=e_\bullet'$.
            Consequently, we can easily conclude by keeping the definition premise of the \Rule{$\vee$} node
            and applying the induction hypothesis on the body premises.
        \end{itemize}
        \item[\Rule{$\Empty$}] We have $e\equiv\tcase{e_1}{\tau}{e_2}{e_3}$ for some $e_1$, $e_2$ and $e_3$.
        As values cannot have the type $\Empty$ (Proposition~\ref{test_types_prop}), we know that $e_1$ is not a value.
        Thus, $e'\equiv\tcase{e_1\srsubst}{\tau}{e_2\srsubst}{e_3\srsubst}$. We can derive $\Gamma\vdasha e_1\srsubst:\Empty$
        by proceeding inductively on the premise, and then we can derive $\Gamma\vdasha e':\Empty$ by using \Rule{$\Empty$}.
        \item[\Rule{$\in_1$}] We have $e\equiv\tcase{e_1}{\tau}{e_2}{e_3}$ for some $e_1$, $e_2$ and $e_3$.
        There are three cases:
        \begin{description}
            \item[$e'\equiv\tcase{e_1\srsubst}{\tau}{e_2\srsubst}{e_3\srsubst}$] We can easily conclude by
            proceeding inductively on the premises.
            \item[$e'\equiv e_2$] The second premise, unchanged, allows us to conclude.
            \item[$e'\equiv e_3$] This case is impossible. Indeed, it implies that $e_1$ is a value, and
            as $\Gamma\vdasha e_1:\tau$ (first premise), we can deduce using Proposition~\ref{test_types_prop}
            that $e_1\in\tau$, which contradicts $e\treduces e_3$.
        \end{description}
        \item[\Rule{$\in_2$}] Similar to the previous case.  
    \end{description}    
\end{proof}

\begin{theorem}[Subject reduction]\label{subj_red}
    If $\Gamma\vdasha e:t$ with $e$ a ground expression and if $e\preduces e'$, then $\Gamma\vdasha e':t$.
\end{theorem}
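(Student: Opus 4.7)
The plan is to chain together the normalisation theorem (Theorem~\ref{norm_thm}), the elimination of value substitutions (Lemma~\ref{value_elim_lemma}) and the local subject reduction (Lemma~\ref{sr_lemma}), exploiting the way the $\Rule{$\kappa$}$ rule rewrites \emph{all} occurrences of the contracted redex simultaneously.

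First, starting from a derivation $D$ of $\Gamma\vdasha e:t$ (with $e$ ground), I would apply Theorem~\ref{norm_thm} to obtain a canonical form derivation $D_1$ of $\Gamma\vdasha e:t_1$ with $t_1\polyleq t$, that is, with some set of substitutions $\Psubst$ such that $\tbwedge_{\psubst\in\Psubst}t_1\psubst\leq t$. Then I would apply Lemma~\ref{value_elim_lemma} to $D_1$ to obtain a \emph{minimal} derivation $D_2$ of $\Gamma\vdasha e:t_1$: now every $\Rule{$\vee$}$ node in $D_2$ performs a substitution $\subs{\xx}{e_\bullet}$ whose substituted expression $e_\bullet$ is not a value, which is exactly the hypothesis needed to invoke Lemma~\ref{sr_lemma}.

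Second, from $e\preduces e'$ and the definition of the parallel semantics in Figure~\ref{fig:parallel_sem}, the only rule that can apply is $\Rule{$\kappa$}$: hence there exist an evaluation context $E$ and expressions $e_\circ,e_\circ'$ with $e_\circ\treduces e_\circ'$, $e\equiv E[e_\circ]$, and $e'\equiv (E[e_\circ])\esubs{e_\circ}{e_\circ'}$. Applying Lemma~\ref{sr_lemma} to $D_2$ with this top-level reduction produces a derivation of $\Gamma\vdasha e\esubs{e_\circ}{e_\circ'}:t_1$, which is precisely $\Gamma\vdasha e':t_1$.

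Finally, to recover the original type $t$, I would use an $\Rule{\InstLeq}$ pattern at the root: since $\tbwedge_{\psubst\in\Psubst}t_1\psubst\leq t$, combining $\Rule{Inst}$ (with each $\psubst\in\Psubst$), $\Rule{$\wedge$}$, and $\Rule{$\leq$}$ over the derivation of $\Gamma\vdasha e':t_1$ yields $\Gamma\vdasha e':t$, concluding the theorem. I do not foresee any serious obstacle here: all three auxiliary results have already been established in Section~\ref{sec:normalisation} and Section~\ref{sec:typesafety}, and the crucial technical point—that a reduction may appear inside a shared sub-expression under a $\Rule{$\vee$}$ node—is exactly what the parallel semantics was tailored to accommodate (via the simultaneous rewriting of all occurrences using $\esubs{\cdot}{\cdot}$) and what Lemma~\ref{sr_lemma} handles via its case analysis on the position of $e_\circ$ w.r.t.\ the $\Rule{$\vee$}$ nodes of the minimal derivation.
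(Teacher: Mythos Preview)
Your proposal is correct and follows essentially the same approach as the paper: normalise, eliminate value substitutions to obtain a minimal derivation, unfold the $\Rule{$\kappa$}$ step, and invoke Lemma~\ref{sr_lemma}. The only difference is cosmetic: the paper phrases it as directly obtaining a minimal derivation for $\Gamma\vdasha e:t$ (implicitly absorbing the $\Rule{\InstLeq}$ adjustment before applying Lemma~\ref{sr_lemma}, which is harmless since adding $\Rule{Inst}$/$\Rule{$\wedge$}$/$\Rule{$\leq$}$ nodes preserves minimality), whereas you apply Lemma~\ref{sr_lemma} at the intermediate type $t_1$ and recover $t$ afterwards---both orderings work.
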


\begin{proof}
  Using the normalisation theorem (Theorem~\ref{norm_thm}) and the Lemma~\ref{value_elim_lemma},
  we can build a minimal derivation for $\Gamma\vdasha e:t$.

  Moreover, the root of the reduction step $e\preduces e'$ is a \Rule{$\kappa$} node,
  with its premise being of the form $e_\circ\treduces e_\circ'$, and with
  $e'\equiv e\esubs{e_\circ}{e_\circ'}$.

  Thus, by using Lemma~\ref{sr_lemma}, we obtain $\Gamma\vdasha e':t$.
\end{proof}

\subsubsection{Progress}

\begin{lemma}[Progress]\label{progress_lemma}
  If $D$ is a minimal derivation for $\Gamma\vdasha e:t$,
  and if there is no evaluation context $E$ and
  variable $\xxx$ such that $e \equiv E[\xxx]$,
  then either $e$ is a value or $\exists e'.\ e\preduces e'$.
\end{lemma}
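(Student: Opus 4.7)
The plan is to proceed by structural induction on the minimal derivation $D$. The base and easy cases dispose of most rules: \Rule{Const}, \Rule{$\to$I}, and \Rule{$\times$I} (when both components are values) yield values; \Rule{\Axl} and \Rule{\Axv} are ruled out by hypothesis since $e = [\,][\xxx]$ is of the forbidden form $E[\xxx]$; and \Rule{$\wedge$}, \Rule{Inst}, \Rule{$\leq$} just forward to a single premise on the same expression. For $\Rule{$\times$I}$, I verify that subterms are not of the form $E'[\xxx]$ (since $(E',e_2)$ and $(v_1,E')$ are themselves evaluation contexts, so the property propagates), then apply the IH to obtain a reduction or a value-pair.

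For the destructor rules \Rule{$\to$E}, \Rule{$\times$E$_i$}, \Rule{$\Empty$}, \Rule{$\in_1$}, \Rule{$\in_2$}, I apply the IH to the principal premise. If its expression reduces, then $e$ reduces under the enclosing evaluation context. Otherwise it is a value, and I use canonical forms (constants have basic type, $\lambda$-abstractions have arrow type, pairs have product type, and these are pairwise disjoint in the subtyping) to conclude that it has the right shape: a $\lambda$ in the application case, a pair for a projection, and any value in the typecase case (using Proposition~\ref{test_types_prop} to select the branch via $v\in\tau$ or $v\in\neg\tau$). A top-level redex then fires, producing a parallel reduction via \Rule{$\kappa$}.

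The main obstacle is the \Rule{$\vee$} case, where $e \equiv e_1\subs{\xx}{e_2}$ and minimality forces $e_2$ to be non-value. I split on whether $e_1 \equiv E_1[\xxx']$ for some free $\xxx'$. If it is, I consider two sub-cases. When $\xxx' \neq \xx$, substituting $\xx$ cannot erase $\xxx'$, so $e \equiv E_1\subs{\xx}{e_2}[\xxx']$, contradicting the hypothesis on $e$. When $\xxx' = \xx$, then $e \equiv E_1\subs{\xx}{e_2}[e_2]$; if $e_2$ were of the form $E_2[\xxx'']$, composing contexts would again contradict the hypothesis on $e$, so I may apply the IH to the definition premise, and since $e_2$ is not a value it reduces, propagating a reduction of $e$.

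If instead $e_1$ is not of the form $E_1[\xxx']$, the IH applies to the body premise. If $e_1$ is a value, acceptability forces $\xx \in \fv{e_1}$, and an easy induction on the structure of values shows every occurrence of $\xx$ in a value lies under a $\lambda$-binder, whence $e_1\subs{\xx}{e_2}$ is again a value. If $e_1 \preduces e_1'$, then $e_1 = E_1[e_\circ]$ for a top-level redex $e_\circ$; the key technical verification is that $e_\circ\subs{\xx}{e_2}$ remains a top-level redex and $E_1\subs{\xx}{e_2}$ remains an evaluation context. This uses that the value $v$ inside each redex shape (e.g.\ the argument of a $\beta$-redex, the scrutinee of a projection or typecase) remains a value after substituting $e_2$ for $\xx$ (by the same lambda-body argument), and that the dynamic test $v\in\tau$ depends only on the outer shape of $v$, which is preserved by the substitution. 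Hence $e$ reduces, completing the case and the induction.
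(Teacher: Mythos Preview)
Your proof is correct and follows essentially the same approach as the paper's: structural induction on the minimal derivation $D$, with the same treatment of all the structural and non-structural rules. Your handling of the \Rule{$\vee$} case is in fact slightly more careful than the paper's: you split on whether $e_1 \equiv E_1[\xxx']$ for \emph{any} variable $\xxx'$ and explicitly dispose of the sub-case $\xxx'\neq\xx$ (needed to justify applying the induction hypothesis to a body premise), whereas the paper's case split only mentions $\xx$ and leaves that sub-case implicit; you also spell out why substituting $e_2$ for $\xx$ preserves value-ness, evaluation contexts, top-level redexes, and the dynamic test $v\in\tau$, all of which the paper asserts without justification.
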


\begin{proof}
  We proceed by structural induction on $D$.

  We consider the root of the derivation:
  \begin{description}
      \item[\Rule{Const}] Trivial ($e$ is a value).
      \item[\Rule{\Axl}] Impossible case (contradict the hypotheses).
      \item[\Rule{\Axv}] Impossible case (contradict the hypotheses).
      
      \item[\Rule{$\leq$}] Trivial (by induction on the premise).
      \item[\Rule{Inst}] Trivial (by induction on the premise).
      \item[\Rule{$\wedge$}] Trivial (by induction on one of the premises).
      \item[\Rule{$\to$I}] Trivial ($e$ is a value).
      \item[\Rule{$\times$I}] We have $e\equiv(e_1, e_2)$ for some expressions $e_1$ and $e_2$.
      \begin{itemize}
          \item If $e_1$ is not a value, and as we have $\forall E, \xxx.\ e_1\not\equiv E[\xxx]$,
          we know by applying the induction hypothesis that $e_1$ can be reduced.
          Thus, $e$ can also be reduced under the evaluation context $([\,], e_2)$.
          \item If $e_1$ is a value,
          then we can apply the induction hypothesis on the second premise
          (as $e_1$ is a value, we know that $\forall E, \xxx.\ e_2\not\equiv E[\xxx]$).
          It gives that either $e_2$ is a value or it can be reduced. We can easily conclude in both cases:
          if $e_2$ is a value, then $e$ is also a value, otherwise, $e$ can be reduced under the evaluation context $(e_1, [\,])$.
      \end{itemize}

      \item[\Rule{$\to$E}] We have $e\equiv e_1 e_2$ for some expressions $e_1$ and $e_2$, with $\Gamma\vdasha e_1: s\to t$
      and $\Gamma\vdasha e_2:s$.
      \begin{itemize}
          \item If $e_1$ is not a value, and as we have $\forall E, \xxx.\ e_1\not\equiv E[\xxx]$,
          we know by applying the induction hypothesis that $e_1$ can be reduced.
          Thus, $e$ can also be reduced under the evaluation context $[\,] e_2$.
          \item If $e_1$ is a value, we can apply Proposition~\ref{test_types_prop} on it:
          as $\Gamma \vdasha e_1 : \Empty\to\Any$, it implies that $e_1\in \Empty\to\Any$ and thus
          $e_1\equiv \lambda x.\ e_\lambda$ for some $e_\lambda$.
          Moreover, we can apply the induction hypothesis on the second premise
          (as $e_1$ is a value, we know that $\forall E, \xxx.\ e_2\not\equiv E[\xxx]$).
          It gives that either $e_2$ is a value or it can be reduced. We can easily conclude in both cases:
          if $e_2$ is a value, then $e$ can be reduced using the rule \ref{redapp}, otherwise,
          $e$ can be reduced under the evaluation context $e_1 [\,]$.
      \end{itemize}

      \item[\Rule{$\times$E$_1$}] We have $e\equiv\pi_1 e_1$ for some $e_1$, with $\Gamma\vdasha e_1: t\times s$.
      By applying the induction hypothesis on the premise, we know that $e_1$ is either a value or it can be reduced.
      If $e_1$ can be reduced, then $e$ can also be reduced under the evaluation context $\pi_1 [\,]$.
      Otherwise, as $\Gamma\vdasha e_1: \Any\times\Any$, we can apply Proposition~\ref{test_types_prop} on it,
      yielding $e_1\in\Any\times\Any$. Thus, $e_1\equiv (v_1, v_2)$ for some values $v_1$ and $v_2$,
      and consequently $e$ can be reduced using the rule \ref{redproj1}.
      \item[\Rule{$\times$E$_2$}] Similar to the previous case.
      
      \item[\Rule{$\vee$}] We have $e\equiv e_1\subs{\xx}{e_2}$ for some $e_1$ and $e_2$, with $\Gamma\vdasha e_2:s$
      and $\forall i\in I.\ \Gamma, \xx:s\land \mt_i\vdasha e_1:t$. There are two cases:
      \begin{itemize}
          \item There exists an evaluation context $E$ such that $e_1\equiv E[\xx]$.
          In this case, we know that $\forall E', \yyy.\ e_2\not\equiv E'[\yyy]$,
          otherwise we would have $e\equiv E[E'[\yyy]]$. Thus, we can apply the induction
          hypothesis on the definition premise. It gives that either $e_2$ is a value or it can be reduced.
          As $D$ is minimal, $e_2$ cannot be a value and thus $e_2$ can be reduced.
          Consequently, $e$ can also be reduced under the evaluation context $E$.

          \item There is no evaluation context $E$ such that $e_1\equiv E[\xx]$.
          In this case, we apply the induction hypothesis on one of the body premises.
          It gives that either $e_1$ is a value or it can be reduced.
          We can easily conclude in both cases: if $e_1$ is a value, then $e$ is also a value,
          otherwise, $e$ can be reduced.
      \end{itemize} 

      \item[\Rule{$\Empty$}] We have $e\equiv\tcase{e_1}{\tau}{e_2}{e_3}$ for some $e_1$, $e_2$ and $e_3$,
      with $\Gamma\vdasha e_1: \Empty$.
      As values cannot have the type $\Empty$ (Proposition~\ref{test_types_prop}),
      we know that $e_1$ is not a value. Thus, by applying the induction hypothesis on the premise,
      we know that $e_1$ can be reduced. Consequently, $e$ can be reduced under the evaluation context $\tcase{[\,]}{\tau}{e_2}{e_3}$.
      \item[\Rule{$\in_1$}] We have $e\equiv\tcase{e_1}{\tau}{e_2}{e_3}$ for some $e_1$, $e_2$ and $e_3$,
      with $\Gamma\vdasha e_1: \tau$. We thus have $e_1\in\tau$ (Proposition~\ref{test_types_prop}).
      By applying the induction hypothesis on the first premise, we know that $e_1$ is either a value or it can be reduced.
      If $e_1$ is a value, then $e$ can be reduced using the rule \ref{redthen}.
      Otherwise, $e_1$ can be reduced and thus $e$ can also be reduced under the evaluation context $\tcase{[\,]}{\tau}{e_2}{e_3}$.
      \item[\Rule{$\in_2$}] Similar to the previous case.
  \end{description}
\end{proof}

\begin{theorem}[Progress]\label{progress}
  If $\emptyenv\vdasha e:t$ with $e$ a ground expression, then either $e$ is a value or $\exists e'.\ e\preduces e'$.
\end{theorem}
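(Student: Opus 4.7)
The plan is to chain together three earlier results: the normalisation theorem (Theorem~\ref{norm_thm}), the elimination of value substitutions (Lemma~\ref{value_elim_lemma}), and the progress lemma (Lemma~\ref{progress_lemma}). Starting from a derivation $D$ of $\emptyenv\vdasha e:t$, I would first invoke Theorem~\ref{norm_thm} (picking any expression order $\leqexpr$, for instance the sub-expression order itself) to obtain a canonical form derivation $D_1$ of $\emptyenv\vdasha e:t'$ for some $t'\polyleq t$. Since the statement only asserts the existence of \emph{some} typing, the weakening of the type from $t$ to $t'$ is harmless. I would then apply Lemma~\ref{value_elim_lemma} to $D_1$ to obtain a minimal derivation $D_2$ of $\emptyenv\vdasha e:t'$.

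Next, I need to verify the side condition of Lemma~\ref{progress_lemma}: there is no evaluation context $E$ and variable $\xxx$ such that $e\equiv E[\xxx]$. This is where the two assumptions on $e$ are used together. Since $e$ is a ground expression, it contains no binding variables $\xx\in\BVars$, so $\xxx$ could only be a lambda variable $x\in\LVars$. Inspecting the grammar of evaluation contexts $E ::= [\,]\alt E e \alt v E \alt (E,e) \alt (v,E) \alt \pi_i E \alt \tcase{E}{\tau}{e}{e}$, none of the productions places the hole under a $\lambda$-binder. Hence any $\xxx$ occurring at the position of the hole in $E[\xxx]$ is free in $e$, which contradicts typability in the empty environment $\emptyenv$. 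Therefore the side condition of Lemma~\ref{progress_lemma} is vacuously satisfied.

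Applying Lemma~\ref{progress_lemma} to $D_2$ then yields the conclusion: either $e$ is a value, or there exists $e'$ with $e\preduces e'$. I do not foresee any real obstacle here — all the substantive work has already been done in the normalisation theorem and in the subject-reduction machinery; the theorem itself is essentially the packaging of Lemma~\ref{progress_lemma} in the special case of a closed ground expression. The one subtle point worth flagging in the write-up is that the type produced by normalisation is $t'$ rather than the original $t$, but since the conclusion of Theorem~\ref{progress} does not mention $t$ beyond the hypothesis, this shift is immaterial.
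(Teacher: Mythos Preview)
Your proposal is correct and follows essentially the same route as the paper: apply Theorem~\ref{norm_thm} then Lemma~\ref{value_elim_lemma} to obtain a minimal derivation, check that the empty environment together with groundness rules out $e\equiv E[\xxx]$, and conclude by Lemma~\ref{progress_lemma}. You are in fact slightly more careful than the paper, which glosses over the fact that normalisation yields a derivation for some $t'\polyleq t$ rather than $t$ itself; as you note, this is immaterial since the conclusion does not depend on the type.
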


\begin{proof}
  Using the normalisation theorem (Theorem~\ref{norm_thm}) and the Lemma~\ref{value_elim_lemma},
  we can build a minimal derivation for $\emptyenv\vdasha e:t$.

  Moreover, we can deduce from $\emptyenv\vdasha e:t$ that there is no evaluation context $E$ and lambda variable $\xxx$
  such that $e \equiv E[\xxx]$. Thus, we can conclude using Lemma~\ref{progress_lemma}.
\end{proof}

\subsubsection{Type Safety for Programs}

Now that we expressed subject reduction and progress for the parallel semantics on expressions,
we extend it to programs, still using the parallel semantics.
Then, we will deduce from it a type safety theorem on programs for the parallel semantics.

\begin{lemma}[Weak monotonicity]\label{weak_monotonicity_lemma}
  For derivation $D$ of $\Gamma\vdasha e:t$ and environment $\Gamma'$
  such that $\Gamma'\leq\Gamma$, $D$ can be transformed into a derivation of $\Gamma'\vdasha e:t$
  just by adding \Rule{$\leq$} nodes.
\end{lemma}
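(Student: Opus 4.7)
The plan is to proceed by straightforward structural induction on the derivation $D$, showing that at every node the only adjustment needed when weakening from $\Gamma$ to $\Gamma'$ is the insertion of \Rule{$\leq$} nodes at the leaves (axioms) to bridge the gap between $\Gamma'(\xxx)\renaming$ and $\Gamma(\xxx)\renaming$. This is exactly the weaker counterpart of Lemma~\ref{monotonicity_lemma}: there, because we allowed $\polyleq$ and therefore multiple instantiations, we needed \Rule{Inst} and \Rule{$\wedge$} nodes packaged into a \Rule{\InstLeq} pattern; here, since $\Gamma'\leq\Gamma$ uses only ordinary subtyping, a single \Rule{$\leq$} suffices.

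The base cases are the axiom rules. For \Rule{\Axl} (the case of \Rule{\Axv} is identical), we start with $\Gamma\vdasha x:\Gamma(x)\renaming$ and want to derive $\Gamma'\vdasha x:\Gamma(x)\renaming$. Apply \Rule{\Axl} in $\Gamma'$ to obtain $\Gamma'\vdasha x:\Gamma'(x)\renaming$; since $\Gamma'(x)\leq\Gamma(x)$ and subtyping is preserved by substitution (in particular by renamings), we have $\Gamma'(x)\renaming\leq\Gamma(x)\renaming$ and can conclude with a single \Rule{$\leq$} node. The \Rule{Const} case requires no change at all.

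The inductive cases are routine. For the rules that do not modify the environment (\Rule{$\to$E}, \Rule{$\times$I}, \Rule{$\times$E$_i$}, \Rule{$\Empty$}, \Rule{$\in_1$}, \Rule{$\in_2$}, \Rule{$\land$}, \Rule{Inst}, \Rule{$\leq$}) one simply applies the induction hypothesis to each premise and reassembles the rule unchanged. For the rules that extend the environment, namely \Rule{$\to$I} (which adds $x{:}\mt$) and \Rule{$\vee$} (which adds $\xx{:}s\land\mt_i$ in each body premise), the key observation is that the extended environments still satisfy $(\Gamma',x{:}\mt)\leq(\Gamma,x{:}\mt)$ and $(\Gamma',\xx{:}s\land\mt_i)\leq(\Gamma,\xx{:}s\land\mt_i)$, so the induction hypothesis applies directly to the premises and the rule is reapplied with the same bound type.

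There is essentially no hard step here: the whole point of stating this as a separate lemma is precisely that, unlike the polymorphic monotonicity lemma, no use of \Rule{Inst} or \Rule{$\wedge$} is required, because $\Gamma'(\xxx)\leq\Gamma(\xxx)$ transports along any renaming $\renaming$ appearing in an axiom without needing to take intersections of distinct instances. The only point worth checking carefully is that, at an axiom, the same renaming $\renaming$ chosen in the original derivation still works in the transformed derivation, which follows immediately from stability of $\leq$ under substitution.
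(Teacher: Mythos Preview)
Your proposal is correct and follows essentially the same approach as the paper: a straightforward induction on the derivation, inserting a single \Rule{$\leq$} node on top of each \Rule{\Axl} or \Rule{\Axv} leaf to bridge $\Gamma'(\xxx)\renaming\leq\Gamma(\xxx)\renaming$, with all other rules passing through unchanged (including those that extend the environment, since the added binding is identical on both sides). The paper's proof is just the one-line version of what you wrote.
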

\begin{proof}
  Straightforward induction on the derivation $\Gamma\vdasha e:t$,
  where each \Rule{\Axv} and \Rule{\Axl} node is replaced by a \Rule{$\leq$} node
  with the \Rule{\Axv} or \Rule{\Axl} node as premise.
\end{proof}

\begin{lemma}[Elimination of \Rule{Inst} nodes]\label{inst_elimination_lemma}
  For any canonical form derivation $D$ of $\Gamma\vdasha e:t$ and set of substitutions $\Psubst$,
  $D$ can be transformed into a derivation $D'$ of $\Gamma\Psubst'\vdasha e:t\Psubst$, for some $\Psubst'$,
  and such that $D'$ does not contain any \Rule{Inst} node.
\end{lemma}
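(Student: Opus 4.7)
The plan is to proceed by structural induction on the canonical form derivation $D$, pushing the substitution set $\Psubst$ down to the axiom leaves, where it can be folded into the renaming that \Rule{Axl} and \Rule{Axv} are allowed to perform freely. To keep the bookkeeping manageable I would first reduce to the singleton case $\Psubst = \{\psubst\}$: given the result for singletons, for a general $\Psubst$ the lemma follows by applying the singleton case to each $\psubst \in \Psubst$ separately to obtain derivations $D_\psubst$ of $\Gamma\Psubst'_\psubst \vdasha e : t\psubst$ without Inst nodes, setting $\Psubst' = \bigcup_{\psubst\in\Psubst}\Psubst'_\psubst$ so that $\Gamma\Psubst'$ is pointwise smaller than each $\Gamma\Psubst'_\psubst$, lifting each $D_\psubst$ to $\Gamma\Psubst' \vdasha e : t\psubst$ via weak monotonicity (Lemma~\ref{weak_monotonicity_lemma}) which only introduces \Rule{$\leq$} nodes, and combining via a single \Rule{$\wedge$} node to obtain $\Gamma\Psubst' \vdasha e : \tbwedge_{\psubst\in\Psubst}t\psubst = t\Psubst$.

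For the singleton case, the induction is routine, with the axioms being the place where the substitution is actually absorbed. For \Rule{Const}, since $\basic{c}$ is ground we have $\basic{c}\psubst = \basic{c}$, so we take $\Psubst' = \{\mathrm{id}\}$. For an \Rule{Axl} node deriving $\Gamma \vdasha x : \Gamma(x)\renaming$, I take $\Psubst' = \{\renaming\psubst\}$ viewed as a single substitution, so that $(\Gamma\Psubst')(x) = \Gamma(x)\renaming\psubst$, and \Rule{Axl} with the identity renaming directly gives the required judgment; \Rule{Axv} is identical. For \Rule{$\to$I}, \Rule{$\to$E}, \Rule{$\times$I}, the projections, \Rule{$\Empty$}, \Rule{$\in_1$}, \Rule{$\in_2$}, as well as for \Rule{$\wedge$} and \Rule{$\leq$}, one applies the induction hypothesis to each premise under the same $\psubst$ and reassembles, using the fact that subtyping is preserved by substitution to revalidate the side conditions. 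For a \Rule{$\vee$} node, the monomorphic splitters $\mt_i$ become $\mt_i\psubst$, still a partition of $\Any$ because $\Any\psubst = \Any$ and pairwise disjointness is preserved. For an \Rule{Inst} node, which in a canonical derivation occurs only within an \Rule{InstLeq} pattern, one applies the induction hypothesis to the sub-derivation below the pattern using the composed substitution set obtained by combining the inner Inst substitutions with the outer $\psubst$, and rewraps the result with \Rule{$\wedge$} and \Rule{$\leq$} to yield the original type without reintroducing Inst.

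The main technical obstacle will be threading the per-axiom $\Psubst'$ choices into a single global $\Psubst'$; this is precisely the role of the reduction to singletons plus the weak-monotonicity-and-$\wedge$ lifting, which lets us tolerate each leaf contributing its own favoured substitution. A secondary subtlety is the non-commutativity of the axiom renaming $\renaming$ with the outer $\psubst$: the specific choice $\Psubst' = \{\renaming\psubst\}$ is what makes $(\Gamma\Psubst')(x)$ match the targeted type $\Gamma(x)\renaming\psubst$ on the nose, avoiding any residual subsumption or instantiation step that would defeat the purpose of eliminating Inst nodes. Modulo these bookkeeping points, the induction goes through and produces the required Inst-free derivation.
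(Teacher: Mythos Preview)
Your proposal has a genuine gap in the \Rule{$\vee$} case that is not merely cosmetic. You claim that for the \Rule{$\vee$} node ``one applies the induction hypothesis to each premise under the same $\psubst$ and reassembles,'' and you only remark that the monomorphic splitters $\mt_i$ are preserved. But the real obstacle is coordinating the type $s$ of the definition premise with what the body premises require after applying the induction hypothesis. Concretely: applying the IH (with the outer $\psubst$) to body premise $i$ yields an Inst-free derivation of $\Gamma\Psubst_i',\,\xx:s\Psubst_i'\land\mt_i \vdasha e:t\psubst$, where $\Psubst_i'$ is dictated by the axiom leaves of that premise and has no a priori relation to $\psubst$. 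To reassemble with a \Rule{$\vee$} node you need a definition premise deriving $\Gamma^*\vdasha e':s^*$ with $s^*\leq s\Psubst'$ where $\Psubst'=\bigcup_i\Psubst_i'$. Applying the IH to the definition premise with the \emph{same} $\psubst$ gives only $s\psubst$, and there is no reason for $s\psubst\leq s\Psubst'$.

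The paper's proof handles this by \emph{not} invoking the IH on the (atomic) definition premise with the outer $\Psubst$ at all: after collecting $\Psubst'$ from the body premises, it runs a separate case analysis on the atomic premise (one case for $\wedge$, one for \Rule{$\to$E}, one for \Rule{$\to$I}, one for \Rule{Ax$_\lambda$}, etc.), in each case constructing directly an Inst-free derivation of $\Gamma\Psubst''\vdasha e':s'$ with $s'\leq s\Psubst'$ (this is where the composed sets like $\{\psubst'\circ\psubst''\mid\psubst'\in\Psubst',\psubst''\in\Psubst_1''\cup\Psubst_2''\}$ arise, feeding $\Psubst'$---not the outer $\Psubst$---into the Inst$\wedge$$\leq$ substitutions of the destructor). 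Your proposal can be repaired along similar lines, e.g.\ by applying the singleton IH to the definition premise once for each $\psubst'\in\Psubst'$ and then combining via weak monotonicity and $\wedge$; but that is a different argument than ``apply the IH with the same $\psubst$ to each premise,'' and it must be stated explicitly, since it is precisely the step where the burden of eliminating \Rule{Inst} interacts with the union-elimination structure.
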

\begin{proof}
  We procced by structural induction on the derivation $D$ (with $D$ being a canonical form derivation).

  If the root of $D$ is a \Rule{\Axv} node, we can directly conclude by choosing $\Psubst' = \Psubst$.

  Otherwise, if the root $D$ is a \Rule{\Vee} node, doing a substitution $e\subs{\xx}{e'}$,
  we first proceed inductively on its body premises (which all are canonical form derivations).
  It yields some derivations $\Gamma\Psubst_i',\xx:s\Psubst_i'\land\mt_i\vdasha e:t\Psubst$ for $i\in I$.
  We consider the set of substitutions $\Psubst' = \textstyle\bigcup_{i\in I} \Psubst_i'$.

  Now, let's try to derive $\Gamma\Psubst''\vdasha e':s'$, for some $\Psubst''$ and $s'$,
  with $s'\leq s\Psubst'$ ($\leq s\Psubst_i'$ for every $i\in I$).
  For that, we consider the definition premise of $D$, deriving $\Gamma\vdasha e':s$,
  and apply the following process on it:
  \begin{itemize}
    \item If its root is a \Rule{$\land$} node, we apply inductively this process to each premise
    $\Gamma\vdasha e':s_j$ (with $\tbvee_{j\in J} s_j\simeq s$) in order to derive some
    $\Gamma\Psubst_j''\vdasha e':s_j'$, with $s_j'\leq s\Psubst'$. We then consider
    $\Psubst''=\textstyle\bigcup_{j\in J} \Psubst_j''$, and derive for each $j\in J$
    $\Gamma\Psubst''\vdasha e':s_j'$ using Lemma~\ref{weak_monotonicity_lemma}.
    Finally, we intersect the those derivations with a \Rule{$\land$} node.
    \item If its root is a \Rule{$\to$E} node, we apply the following transformation:
    \small\begin{mathpar}
      \Infer[$\to$E]
      {
        \Infer[\InstLeq]{\Infer[\Axv]{ }{\Gamma\vdasha \xx_1:\Gamma(\xx_1)}{}}{\Gamma\vdasha \xx_1: t\to s}{ }\\
        \Infer[\InstLeq]{\Infer[\Axv]{ }{\Gamma\vdasha \xx_2:\Gamma(\xx_2)}{}}{\Gamma\vdasha \xx_2: t}{ }
      }
      { \Gamma\vdasha \xx_1 \xx_2:s }
      { }
      \\
      \text{with $t\to s\leq\Gamma(\xx_1)\Psubst_1''$ and $t\leq\Gamma(\xx_2)\Psubst_2''$}
      \\\raisebox{1pt}{$\downarrow$}\\
      \Infer[$\to$E]
      {
        \Infer[$\leq$]{\Infer[\Axv]{ }{\Gamma\Psubst''\vdasha \xx_1:\Gamma(\xx_1)\Psubst''}{}}{\Gamma\Psubst''\vdasha \xx_1: t\Psubst'\to s\Psubst'}{ }\\
        \Infer[$\leq$]{\Infer[\Axv]{ }{\Gamma\Psubst''\vdasha \xx_2:\Gamma(\xx_2)\Psubst''}{}}{\Gamma\Psubst''\vdasha \xx_2: t\Psubst'}{ }
      }
      { \Gamma\Psubst''\vdasha \xx_1 \xx_2:s\Psubst' }
      { }
      \\
      \text{with $\Psubst''=\{\psubst'\circ\psubst''\,\alt\,\psubst'\in\Psubst', \psubst''\in\Psubst_1''\cup\Psubst_2''\}$}
    \end{mathpar}\normalsize
    \item If its root is a \Rule{$\to$I} node, we apply the following transformation:
    \small\begin{mathpar}
      \Infer[$\to$I]
      {
        \Infer{ A }{ \Gamma, x:\mt \vdasha e_\circ : t }{ }
      }
      { \Gamma\vdasha \lambda x. e_\circ : \mt\to t }
      { }
      \quad\raisebox{7pt}{$\leftrightarrow$}\quad
      \Infer[$\to$I]
      {
        \Infer{ A' }{ \Gamma\Psubst'', x:\mt \vdasha e_\circ : (t\Psubst') }{ }
      }
      { \Gamma\Psubst''\vdasha \lambda x. e_\circ : \mt\to (t\Psubst') }
      { }
      \\
      \text{with $A'$ obtained by applying the induction hypothesis on $A$ and $\Psubst'$.}
    \end{mathpar}\normalsize

    \item If its root is a \Rule{\Axl} node, we apply the following transformation:
    \small\begin{mathpar}
      \Infer[\Axl]
      { }
      { \Gamma\vdasha x : \Gamma(x) }
      { }
      \quad\raisebox{7pt}{$\leftrightarrow$}\quad
      \Infer[\Axl]
      { }
      { \Gamma\Psubst'\vdasha x : \Gamma(x)\Psubst' }
      { }
    \end{mathpar}\normalsize

    \item The other cases are similar.
  \end{itemize}

  We thus have $\Psubst''$ and $s'$ such that
  $\Gamma\Psubst''\vdasha e':s'$ and $\forall i\in I.\ s' \leq s\Psubst_i'$.
  We transform, for each $i\in I$, the derivation $\Gamma\Psubst_i',\xx:s\Psubst_i'\land\mt_i\vdasha e:t\Psubst$
  into a derivation $\Gamma\Psubst'',\xx:s'\land\mt_i\vdasha e:t\Psubst$ using Lemma~\ref{weak_monotonicity_lemma}.
  By combining all those derivations into a \Rule{\Vee} node, we can derive
  $\Gamma\Psubst''\vdasha e\subs{\xx}{e'}: t\Psubst$.
\end{proof}

\begin{lemma}\label{instantiation_lemma}
  If $\Gamma\vdasha e:t$, then for any substitution $\msubst$,
  we can derive $\Gamma\msubst\vdasha e:t\msubst$.
\end{lemma}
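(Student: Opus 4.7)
The plan is to proceed by structural induction on the derivation $D$ of $\Gamma \vdasha e : t$, showing that each rule commutes with applying the monomorphic substitution $\msubst$ to both the context and the derived type. Because $\msubst$ has domain $\monovars$ and codomain in monomorphic types (hence containing only monomorphic variables), it commutes with any renaming $\renaming$ of polymorphic variables (disjoint domains and codomains) and with any polymorphic substitution $\psubst$ in the following sense: if we define $\psubst' = \{\alpha \mapsto (\psubst(\alpha))\msubst \mid \alpha \in \dom \psubst\}$, then $(t\psubst)\msubst = (t\msubst)\psubst'$, so \Rule{Inst} can be reinstated with $\psubst'$ after applying the induction hypothesis.

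For the structural and axiom rules, the argument is routine. The cases \Rule{Const}, \Rule{\Axl}, \Rule{\Axv} use that $\basic{c}$ is ground, $(\Gamma\msubst)(x) = \Gamma(x)\msubst$, and commutation of $\msubst$ with $\renaming$. The cases \Rule{$\to$I}, \Rule{$\to$E}, \Rule{$\times$I}, \Rule{$\times$E$_i$} follow by applying the induction hypothesis to each premise and observing that $(\mt\to t)\msubst = \mt\msubst \to t\msubst$, $(t_1 \times t_2)\msubst = t_1\msubst \times t_2\msubst$. The typecase rules \Rule{$\Empty$}, \Rule{$\in_1$}, \Rule{$\in_2$} rely on the fact that test types $\tau$ are ground, so $\tau\msubst = \tau$ and $(\neg\tau)\msubst = \neg\tau$. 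The case \Rule{$\wedge$} is immediate. The case \Rule{$\leq$} uses the key property recalled in the paper that subtyping is preserved under arbitrary type substitutions: if $t \leq t'$ then $t\msubst \leq t'\msubst$.

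The main obstacle is the \Rule{$\vee$} case. Given the rule
\[
\Infer[\vee]{\Gamma \vdasha e' : s \qquad (\forall i \in I)\ \Gamma, \xx : s \wedge \mt_i \vdasha e : t}{\Gamma \vdasha e\subs{\xx}{e'} : t}{\{\mt_i\}_{i \in I} \in \Partitions{\Any}}
\]
applying $\msubst$ and the induction hypothesis yields derivations of $\Gamma\msubst \vdasha e' : s\msubst$ and, for each $i$, $\Gamma\msubst, \xx : s\msubst \wedge \mt_i\msubst \vdasha e : t\msubst$. The subtlety is that $\{\mt_i\msubst\}_{i \in I}$ need not be a partition of $\Any$ because some $\mt_i\msubst$ may become equivalent to $\Empty$. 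However, preservation of subtyping under substitution shows that $\mt_i\msubst \wedge \mt_j\msubst \simeq \Empty$ for $i \neq j$ and that $\bigvee_i \mt_i\msubst \simeq \Any$. It therefore suffices to set $I' = \{i \in I \mid \mt_i\msubst \not\simeq \Empty\}$; then $\{\mt_i\msubst\}_{i \in I'}$ is a genuine partition of $\Any$ (and non-empty since $\Any \not\simeq \Empty$), and restricting the family of premises to $I'$ yields a valid \Rule{$\vee$} instance concluding $\Gamma\msubst \vdasha (e\subs{\xx}{e'})\msubst = e\subs{\xx}{e'} : t\msubst$, where the last equality holds because substitution on expressions is orthogonal to type substitution. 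This completes the induction.
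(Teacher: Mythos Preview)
Your proposal is correct and follows essentially the same approach as the paper: a straightforward structural induction on the derivation, applying $\msubst$ to every type and environment, relying on preservation of subtyping under substitution for \Rule{$\leq$} and on the disjointness of $\dom\msubst\subseteq\monovars$ from $\dom\psubst\subseteq\polyvars$ for \Rule{Inst}. Your treatment is in fact more careful than the paper's terse proof, which simply calls the induction ``straightforward'' and does not explicitly address the possibility that some $\mt_i\msubst$ become empty in the \Rule{$\vee$} case; your restriction to $I' = \{i \in I \mid \mt_i\msubst \not\simeq \Empty\}$ handles this cleanly.
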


\begin{proof}
  Straightforward induction on the derivation of $\Gamma\vdasha e:t$.
  The type substitution $\msubst$ can be applied to every type (and type environment) in the derivation,
  and as $\dom\msubst\subseteq\Monovars$ it will not conflict with any \Rule{Inst} node in the derivation.
  Subtyping relations are preserved as $t_1\leq t_2 \Rightarrow t_1\msubst\leq t_2\msubst$.
\end{proof}

\begin{lemma}[Generalisation lemma]\label{generalisation_lemma}
  If $\Gamma, x:s\subst \vdasha e:t$ with $e$ a ground expression, $\subst\disjoint\Gamma$, and $\Gamma\vdasha e':s$,
  then $\Gamma\vdasha e\subs{x}{e'}:t'$ for some $t'$ such that there exists a substitution $\subst'$
  such that $t'\subst'\simeq t$ and $\subst'\disjoint\Gamma$.
\end{lemma}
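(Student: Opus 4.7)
The plan is to proceed by structural induction on a derivation $D$ of $\Gamma, x{:}s\subst \vdasha e:t$. The guiding intuition is that the type variables in $\dom(\subst)$, being disjoint from $\Gamma$, behave like locally generalizable variables: when we replace each axiom use of $x$ in $D$ by a fresh copy of the given derivation of $e' : s$, we obtain a derivation whose conclusion type $t'$ is a ``more general'' version of $t$, in the sense that a substitution $\subst'$ built from $\subst$ together with the fresh renamings maps $t'$ back to $t$ while remaining disjoint from $\Gamma$.

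The base case carrying all of the work is \Rule{Ax$_\lambda$} applied to $x$, which derives $\Gamma, x{:}s\subst \vdasha x : (s\subst)\renaming$ for some renaming $\renaming$. Using Lemma~\ref{vartype_alpharenaming_lemma}, I would rename the polymorphic variables in the given derivation of $\Gamma \vdasha e' : s$ so that they are fresh with respect to $\Gamma$ and any previously used variables, obtaining $\Gamma \vdasha e' : s\renaming'$ for a fresh renaming $\renaming'$. I then set $t' = s\renaming'$ and take $\subst'$ to agree with $\renaming^{-1}\circ\subst\circ\renaming'$ on the image of $\renaming'$; disjointness of $\subst'$ from $\Gamma$ follows from $\subst \disjoint \Gamma$ and the freshness of $\renaming'$. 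The other base cases (\Rule{Const} and \Rule{Ax$_\lambda$} on $y \neq x$) are immediate with $t' = t$ and $\subst' = \mathrm{id}$.

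The inductive cases for the structural rules (\Rule{$\to$I}, \Rule{$\to$E}, \Rule{$\times$I}, \Rule{$\times$E$_i$}, \Rule{$\in_1$}, \Rule{$\in_2$}, \Rule{$\Empty$}), for \Rule{$\vee$} (which introduces a binding variable $\xx \neq x$), and for the non-structural rules \Rule{$\leq$} and \Rule{Inst}, are handled by applying the induction hypothesis to each premise and recombining with the same rule. Since $\subst'$ only touches type variables outside $\Gamma$, it commutes with subsequent subtyping and instantiation steps, so the required equivalence $t'\subst' \simeq t$ is preserved at each level.

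The main obstacle will be the \Rule{$\wedge$} case, where two subderivations give independent pairs $(t'_1, \subst'_1)$ and $(t'_2, \subst'_2)$ and we must combine them into a single pair $(t', \subst')$ satisfying $t' \subst' \simeq t_1 \wedge t_2$ with $\subst' \disjoint \Gamma$. The strategy is again to invoke Lemma~\ref{vartype_alpharenaming_lemma} on one of the subderivations to refresh its polymorphic and monomorphic variables so that $\dom(\subst'_1)$ and $\dom(\subst'_2)$ become disjoint; then $\subst' = \subst'_1 \cup \subst'_2$ is well defined and $t' = t'_1 \wedge t'_2$ works. A closely related subtlety, which must be respected throughout the induction, is that the axiom case for $x$ must be allowed to produce a \emph{fresh} copy of the derivation of $e'$ at each occurrence, so that uses of $x$ in distinct branches of a \Rule{$\wedge$} or \Rule{$\vee$} are not spuriously correlated through shared type variables; this is again arranged via Lemma~\ref{vartype_alpharenaming_lemma}.
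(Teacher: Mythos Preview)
Your direct induction has a genuine gap at the elimination rules, most visibly at \Rule{$\to$E}. From premises $\Gamma, x{:}s\subst \vdasha e_1 : t_1\to t_2$ and $\Gamma, x{:}s\subst \vdasha e_2 : t_1$ your induction hypothesis yields $\Gamma \vdasha e_1\subs{x}{e'} : u_1$ with $u_1\subst'_1 \simeq t_1\to t_2$ and $\Gamma \vdasha e_2\subs{x}{e'} : u_2$ with $u_2\subst'_2 \simeq t_1$, for \emph{some} $u_1,u_2,\subst'_1,\subst'_2$. To ``recombine with the same rule'' you need $u_1$ to be an arrow whose domain coincides with $u_2$; nothing guarantees this. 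Concretely, take $s = \monovar\to\monovar$, $\subst = \{\monovar\mapsto\alpha\}$, $\Gamma = \{f : (\Int\to\Int)\to\Bool\}$, $e' = \lambda z.z$, and $e = f\,x$, typed by \Rule{Ax$_\lambda$} on $x$, then \Rule{Inst} with $\{\alpha\mapsto\Int\}$, then \Rule{$\to$E}. At your base case you obtain $\Gamma\vdasha e':\monovar\to\monovar$ (Lemma~\ref{vartype_alpharenaming_lemma} renames only polymorphic variables, so it cannot touch $\monovar$); at the \Rule{Inst} step, \Rule{Inst} is again powerless on $\monovar$, so you still have $e':\monovar\to\monovar$ with the needed instantiation pushed into $\subst''=\{\monovar\mapsto\Int\}$; and at the \Rule{$\to$E} step the argument type $\monovar\to\monovar$ no longer matches $f$'s domain $\Int\to\Int$. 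The claim that ``$\subst'$ only touches variables outside $\Gamma$ and therefore commutes'' is exactly what fails: the pending $\subst'$ must be \emph{discharged} before the elimination rule can fire, and \Rule{Inst} cannot discharge it because its domain is monomorphic.

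The paper's proof takes a quite different route to circumvent this. It first normalises the derivation (Theorem~\ref{norm_thm}) and then uses Lemma~\ref{inst_elimination_lemma} to \emph{globally eliminate every \Rule{Inst} node}, at the price of instantiating $\Gamma$ by some set $\Psubst'$. Once no \Rule{Inst} nodes remain, the polymorphic variables introduced by $\subst$ can be uniformly renamed back to fresh monomorphic ones throughout the entire derivation (a step that would be blocked by any surviving \Rule{Inst}); monotonicity then removes the spurious instantiation of $\Gamma$. Only after this global preprocessing is the ordinary substitution lemma invoked, together with Lemma~\ref{instantiation_lemma} applied to the derivation of $e'$ to produce each required instance of $s$. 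What your \Rule{$\wedge$} discussion identifies as the ``main obstacle'' is, by contrast, harmless; the real difficulty is that instantiations of the generalised variables may be consumed by elimination rules deep inside the derivation, and this cannot be handled by a purely local induction.
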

\begin{proof}
  We can suppose, without loss of generality, that $\subst$ is a full generalisation,
  that is, an injective substitution mapping every type variable in $(\vars{x}\land\Monovars)\setminus\vars{\Gamma}$
  to a fresh polymorphic type variable (and being the identity for any other type variable).
  Indeed, if for any $\subst$ the conditions of Lemma~\ref{generalisation_lemma} are satisfied, then they are also satisfied
  for any such generalisation according to the monotonicity lemma (Lemma~\ref{monotonicity_lemma}).

  Now, we transform the derivation of $\Gamma, x:s\subst \vdasha e:t$ into a canonical form derivation $D$
  of $\Gamma, x:s\subst \vdasha e:t'$ (with $t'\polyleq t$) using Theorem~\ref{norm_thm}.
  Let $\Psubst$ such that $t'\Psubst \leq t$. Then, using Lemma~\ref{inst_elimination_lemma},
  we transform $D$ into a derivation $D'$ of $\Gamma\Psubst', x:(s\subst)\Psubst' \vdasha e:t'\Psubst$,
  for some $\Psubst'$, and such that $D'$ does not contain any \Rule{Inst} node.
  Adding a \Rule{$\leq$} at the root gives $\Gamma\Psubst', x:(s\subst)\Psubst' \vdasha e:t$.

  Let $\psubst''$ be an injective substitution mapping every polymorphic type variable appearing in the image of
  at least one of the substitutions $\psubst'\in\Psubst'$ and $\subst$ to a fresh monomorphic type variable
  (and being the identity for any other type variable).
  From $\Gamma\Psubst', x:(s\subst)\Psubst' \vdasha e:t$, we can derive a derivation
  of $(\Gamma\Psubst')\psubst'', x:((s\subst)\Psubst')\psubst'' \vdasha e:t\psubst''$ simply by renaming the type variables
  everywhere according to $\psubst''$. This is only possible because the derivation does not contain any \Rule{Inst} node
  (substituting a polymorphic type variable by a monomorphic one could invalidate \Rule{Inst} nodes).
  By monotonicity (Lemma~\ref{monotonicity_lemma}), we get $\Gamma, x:((s\subst)\Psubst')\psubst'' \vdasha e:t\psubst''$.

  Now, we consider the set of substitutions $\{\subst_i\}_{i\in I}\eqdef\{ \psubst''\circ\psubst'\circ\subst\,\alt\, \psubst'\in\Psubst' \}$.
  For any $i\in I$, we can decompose $\subst_i$ into two substitutions,
  $\msubst_i \eqdef \restr{\psubst_i}{\Monovars}$ and $\psubst_i\eqdef \restr{\psubst_i}{\Polyvars}$.
  Note that $\forall i\in I.\ \msubst_i\disjoint\Gamma$ (it follows from $\subst\disjoint\Gamma$).
  Rewriting the previous judgement with these new notations gives
  $\Gamma, x:\tbwedge_{i\in I}(s\msubst_i)\psubst_i \vdasha e:t\psubst''$.

  Using Lemma~\ref{instantiation_lemma} on the derivation of $\Gamma\vdasha e':s$,
  we can derive, for any $i\in I$, $\Gamma\vdasha e':s\msubst_i$ (we recall that $\msubst_i\disjoint\Gamma$,
  and thus $\Gamma\msubst_i\simeq\Gamma$). Using a \Rule{Inst} node,
  we can then derive $\Gamma\vdasha e':(s\msubst_i)\psubst_i$ for any $i\in I$.
  Regrouping those derivations with a \Rule{$\wedge$} node gives $\Gamma\vdasha e':\tbwedge_{i\in I}(s\msubst_i)\psubst_i$.

  Finally, using the substitution lemma (Lemma~\ref{substitution_lemma}) on
  $\Gamma, x:\tbwedge_{i\in I}(s\msubst_i)\psubst_i \vdasha e:t\psubst''$ and $\Gamma\vdasha e':\tbwedge_{i\in I}(s\msubst_i)\psubst_i$,
  we get a derivation for $\Gamma \vdasha e\subs{x}{e'}:t\psubst''$.
  We recall that the substitution $\psubst''$ is injective and thus inversible,
  with its inverse $\subst'$ being such that $\subst'\disjoint\Gamma$ (as the image of $\psubst''$ is only composed of fresh monomorphic type variables).
  Thus, we have $(t\psubst'')\subst'\simeq t$, which concludes the proof.
\end{proof}

We now have all the results necessary to prove the type safety of programs for the parallel semantics.
The parallel semantics is extended to programs in a straightforward way:

\begin{minipage}[b]{0.45\linewidth}
  \begin{displaymath}
  \begin{array}{l}
  \textbf{Reduction rule}\,\,\,\,\,
  \begin{array}{rcll}
  \tletexp{x}{v}{p} & \preducesprog& p\subs x v
  \end{array}\\
  \textbf{Evaluation Context}\,\,\,\,\,
  \begin{array}[b]{r@{~}r@{~}l}
    P & ::= &  [\,]\alt\tletexp{x}{[]}{p}%\alt  \tletexp{x}{v}{P} & 
  \end{array}
  \end{array}
  \end{displaymath}
  \end{minipage}\hfill\begin{minipage}{0.40\linewidth}
  \begin{displaymath}
    \Infer {e \preduces e'}{P[e]\preducesprog P[e']}{}
  \end{displaymath}
\end{minipage}\vspace*{0.5cm}

\begin{theorem}[Subject reduction for programs]\label{subj_red_prog}
  If $\Gamma\vdashp p:t$ and $p\preducesprog p'$, then $\Gamma\vdashp p':t$.
\end{theorem}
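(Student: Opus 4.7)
The plan is to proceed by case analysis on the derivation of $p \preducesprog p'$. A program reduction is either a root-level $\beta$-step $\tletexp{x}{v}{p_1} \preducesprog p_1\subs{x}{v}$, or a context reduction $P[e] \preducesprog P[e']$ produced from some $e \preduces e'$ with $P \in \{[\,],\ \tletexp{x}{[\,]}{p_1}\}$. Both context shapes are handled uniformly: unfold the program typing through \Rule{TopLevel-Expr} (and, when $P$ is a let frame, through \Rule{TopLevel-Let}) to expose the expression judgment $\Gamma \vdash e:t_0$ of the active redex; apply Theorem~\ref{subj_red} (parallel-semantics subject reduction for expressions) to obtain $\Gamma \vdash e':t_0$; then reassemble the program typing by reusing the same generalisation $\grenaming$ in \Rule{TopLevel-Expr} and, where relevant, the untouched typing $\Gamma,x{:}s \vdashp p_1:t$ of the continuation in \Rule{TopLevel-Let}.

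The interesting case is the $\beta$-step. Its typing decomposes via \Rule{TopLevel-Let} and \Rule{TopLevel-Expr} as $\Gamma \vdash v : s_0$ with generalisation $s = s_0\grenaming_0$ (where $\grenaming_0 \disjoint \Gamma$), together with $\Gamma, x{:}s \vdashp p_1:t$, and we must exhibit $\Gamma \vdashp p_1\subs{x}{v}:t$. To conclude, I would establish as a separate auxiliary lemma a \emph{program-level generalisation} statement: if $\Gamma, x{:}s\grenaming \vdashp p:t$ with $\grenaming \disjoint \Gamma$ and $\Gamma \vdash v:s$, then $\Gamma \vdashp p\subs{x}{v}:t$. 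The proof is by structural induction on $p$. In the inductive case $p = \tletexp{y}{e}{p_2}$, splitting with \Rule{TopLevel-Let} and \Rule{TopLevel-Expr} reduces the head $e$ via Lemma~\ref{generalisation_lemma} and the tail $p_2$ via the induction hypothesis. In the base case $p = e$, the program typing yields $\Gamma, x{:}s\grenaming \vdash e:t_0$ with $t = t_0\grenaming_1$ and $\grenaming_1 \disjoint \Gamma$, to which Lemma~\ref{generalisation_lemma} applies, producing $\Gamma \vdash e\subs{x}{v}:t_0'$ with $t_0'\subst' \simeq t_0$ and $\subst' \disjoint \Gamma$; a final \Rule{TopLevel-Expr} step recovers the required program-level type.

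The main obstacle is precisely this base case, namely reconciling the \emph{exact} type $t = t_0\grenaming_1$ required in the conclusion with the fact that Lemma~\ref{generalisation_lemma} only returns a type $t_0'$ that matches $t_0$ up to an instantiation $\subst'$. The key idea is to choose the fresh generalisation substitution used in the final \Rule{TopLevel-Expr} as the disjoint union of $\grenaming_1$ (acting on the monomorphic variables inherited from $t_0$ that still appear in $t_0'$) and $\subst'$ (acting on the fresh monomorphic variables introduced inside the proof of Lemma~\ref{generalisation_lemma}); by choosing those fresh variables to be disjoint from $\vars{\Gamma}$ and from the image of $\grenaming_1$, the combined substitution is a valid generalisation disjoint from $\Gamma$ that produces exactly $t_0\grenaming_1 = t$. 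The remaining subtleties are routine: ensuring the initial derivation can be taken in the normalised $\vdasha$ form so that Lemma~\ref{generalisation_lemma} applies, and checking that the expression at a program's head is always a ground expression so that Theorem~\ref{subj_red} can be invoked in the context cases.
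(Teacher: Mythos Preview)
Your proposal is correct and mirrors the paper's approach: the paper's proof is a one-line ``Direct consequence of Theorem~\ref{subj_red} and Lemma~\ref{generalisation_lemma}'', and you have correctly unpacked this into the two cases (context reduction handled by subject reduction for expressions, and the top-level $\beta$-step handled by the generalisation lemma applied inductively along the chain of program-level lets). The obstacle you identify in the base case---that Lemma~\ref{generalisation_lemma} only returns the target type up to a substitution $\subst'$---is real, and your fix (combining $\subst'$ with the original $\grenaming_1$ into a single generalisation, choosing the fresh monomorphic variables introduced in the lemma's proof to be disjoint from $\vars{\Gamma}$ and from $\dom{\grenaming_1}$) works because both $\subst'$ and $\grenaming_1$ are mono-to-poly substitutions on disjoint domains, so their union is again a valid generalisation disjoint from $\Gamma$.
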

\begin{proof}
  Direct consequence of Theorem~\ref{subj_red} and Lemma~\ref{generalisation_lemma}.
\end{proof}

\begin{theorem}[Progress for programs]\label{progress_prog}
  If $\emptyenv\vdashp p:t$, then either $p$ is a value or $\exists p'.\ p\preducesprog p'$.
\end{theorem}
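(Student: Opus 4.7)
The plan is to prove this by induction on the structure of the program $p$, with the base case handling the expression form and the inductive case handling the top-level let form. The key ingredient is Theorem~\ref{progress} (progress for expressions under the parallel semantics), together with the reduction rules and evaluation contexts for programs.

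In the base case, $p \equiv e$ is an expression, and the derivation of $\emptyenv\vdashp e:t$ must end with rule \Rule{TopLevel-Expr}, whose premise gives $\emptyenv\vdash e:t'$ for some $t'$ with $t = t'\grenaming$. Applying Theorem~\ref{progress} to this premise yields either that $e$ is a value (in which case $p$ is a value, since programs that are expressions inherit values) or that $e\preduces e'$. In the latter case the program context $P \equiv [\,]$ together with the context rule for programs gives $p \preducesprog e'$.

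In the inductive case, $p \equiv \tletexp{x}{e}{p''}$, and the derivation must end with rule \Rule{TopLevel-Let}, whose first premise is $\emptyenv\vdashp e:s$ for some $s$. Since $e$ is an expression, this subderivation must itself end with \Rule{TopLevel-Expr}, so there exists $s'$ with $\emptyenv\vdash e:s'$. Applying Theorem~\ref{progress} to this, either $e$ is already a value $v$, in which case the top-level reduction rule $\tletexp{x}{v}{p''} \preducesprog p''\subs{x}{v}$ applies directly, or $e \preduces e'$, in which case the program evaluation context $P \equiv \tletexp{x}{[\,]}{p''}$ combined with the context rule yields $p \preducesprog \tletexp{x}{e'}{p''}$.

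This proof is essentially routine, since the two progress-rule premises of \Rule{TopLevel-Let} are exactly what is needed to fire either the top-level $\letin$ reduction or the program context rule. The only subtle point---which I do not expect to cause difficulty---is observing that the $\vdashp$ judgment on an expression necessarily decomposes through \Rule{TopLevel-Expr}, so that Theorem~\ref{progress} (which is stated for the expression-level judgment $\vdasha$, and hence equivalently for $\vdash$) is immediately applicable. No normalization of derivations or appeal to the substitution/generalisation lemmas is required here; those were needed for subject reduction but not for progress.
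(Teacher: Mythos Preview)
Your proof is correct and follows the same approach as the paper, which simply records the result as a ``direct consequence of Theorem~\ref{progress}''; you have spelled out the case analysis that justifies this. One minor remark: what you call ``induction on the structure of $p$'' is really just a case split on the two syntactic forms of programs, since you never invoke the inductive hypothesis on $p''$---but this is harmless.
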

\begin{proof}
  Direct consequence of Theorem~\ref{progress}.
\end{proof}

\begin{theorem}[Type safety for the parallel semantics]\label{type_safety_parallel_prog}
  For any program $p$, if $\emptyenv\vdashp p:t$, then either $p\preducesprog^* v$ with
  $\emptyenv\vdashp v:t$ or $e\preducesprog^\infty$.
\end{theorem}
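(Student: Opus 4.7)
The plan is to combine the two theorems that immediately precede the statement: subject reduction for programs (Theorem~\ref{subj_red_prog}) and progress for programs (Theorem~\ref{progress_prog}). These together give the usual ``well-typed programs do not get stuck'' slogan, and what remains is to package them into the dichotomy between termination at a value of type $t$ and divergence.

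First, I would consider the (possibly infinite) maximal reduction sequence $p = p_0 \preducesprog p_1 \preducesprog p_2 \preducesprog \cdots$ starting from $p$. By iterating Theorem~\ref{subj_red_prog} (a straightforward induction on the length of the prefix), every $p_i$ that appears in this sequence satisfies $\emptyenv \vdashp p_i : t$. Then I would split on whether the sequence is infinite or finite. If it is infinite, we directly have $p \preducesprog^\infty$, which is the second disjunct. If it is finite, let $p_n$ be its last element. Since the sequence is maximal, $p_n$ cannot be further reduced. Applying Theorem~\ref{progress_prog} to $\emptyenv \vdashp p_n : t$, we conclude that $p_n$ must be a value $v$, and subject reduction gives $\emptyenv \vdashp v : t$, which is the first disjunct. (Technically progress is stated for expressions, but the extension to programs is a one-line case analysis: either $p_n$ is of the form $\tletexp{x}{e}{p'}$, in which case progress on $e$ lets us reduce, or $p_n$ is an expression, and progress applies directly; since $p_n$ is irreducible, the only remaining possibility is that $p_n$ is a value.)

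The main obstacle is essentially administrative rather than conceptual: making sure the invocation of progress at the program level correctly handles the \textsf{TopLevel}-shaped contexts $P$ and, in particular, handles the case where the current program is a let-binding $\tletexp{x}{v}{p'}$ whose head is already a value (such a program is not stuck, since it reduces by the rule $\tletexp{x}{v}{p} \preducesprog p\subs{x}{v}$). Once this case analysis on the shape of $p_n$ is spelled out, the theorem follows directly, with no further use of the type system's meta-theory.
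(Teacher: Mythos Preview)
Your proposal is correct and matches the paper's approach exactly: the paper's proof is the one-liner ``Straightforward consequence of Theorem~\ref{subj_red_prog} and Theorem~\ref{progress_prog}'', and you have spelled out precisely the standard subject-reduction-plus-progress argument behind that sentence. One minor remark: your parenthetical about progress being ``stated for expressions'' is unnecessary here, since the paper already provides progress at the program level (Theorem~\ref{progress_prog}), so you can invoke it directly without redoing the case analysis on the shape of $p_n$.
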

\begin{proof}
  Straightforward consequence of Theorem~\ref{subj_red_prog} and Theorem~\ref{progress_prog}.
\end{proof}

\subsubsection{Type Safety for the Source Semantics}

The final step is to deduce a type safety theorem for the source semantics (Figure~\ref{fig:syntax})
from the type safety theorem for the parallel semantics.
In order to help comparing the two reduction semantics,
we introduce again another reduction semantics $\creduces$ on expressions
that can perform a reduction $\treduces$
under any context $\ct$ (not just an evaluation context):

\begin{minipage}[b]{0.45\linewidth}
\begin{displaymath}
\begin{array}[b]{r@{~}r@{~}l}
  \ct &::=&  [\,]\alt \ct e \alt e \ct \alt
  (\ct,e) \alt (e,\ct) \alt \pi_i \ct \\
  &&\alt\tcase{\ct}{\tau}{e}{e}\alt\tcase{e}{\tau}{\ct}{e}\alt\tcase{e}{\tau}{e}{\ct}\\
\end{array}
\end{displaymath}
\end{minipage}\hfill\begin{minipage}[b]{0.45\linewidth}
\begin{displaymath}
  \Infer {e \treduces e'}{\ct[e]\creduces \ct[e']}{}
\end{displaymath}
\end{minipage}

\begin{definition}
  We say that a context $\ct_1$ is a subcontext of $\ct_2$, noted $\ct_1\ctleq\ct_2$,
  if and only if there exists a context $\ct_1'$ such that $\ct_2\equiv \ct_1[\ct_1']$.
\end{definition}

\begin{lemma}\label{creduces_lemma}
  For any expressions $e_1$ and $e_3$, if we have a chain $e_1\creduces^* e_3$
  such that at least one of the reduction steps
  happen under an evaluation context, then there exists an expression $e_2$ such that
  $e_1\reduces e_2$ and $e_2\creduces^* e_3$.
\end{lemma}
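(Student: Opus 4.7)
The plan is to prove the lemma by induction on the length $n$ of the chain $e_1 \creduces^* e_3$, relying on a \emph{postponement lemma} which asserts that an internal $\creduces$ step (one whose contracted redex is not under an evaluation context) can always be pushed past a subsequent $\reduces$ step. Concretely, this auxiliary lemma will read: whenever $e \creduces e'$ with the contracted redex not under an evaluation context and $e' \reduces e''$, there exists $e_2$ with $e \reduces e_2$ and $e_2 \creduces^* e''$.

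For the main induction, the base case $n = 1$ is immediate, since the unique step must be the one under an evaluation context, and one takes $e_2 := e_3$. For $n \geq 2$ I will write the chain as $e_1 \creduces e_1' \creduces^* e_3$. If the first step is itself under an evaluation context I simply set $e_2 := e_1'$. Otherwise, the tail chain $e_1' \creduces^* e_3$ still contains at least one evaluation-context step, so the induction hypothesis yields $e_1''$ with $e_1' \reduces e_1''$ and $e_1'' \creduces^* e_3$. Applying the postponement lemma to $e_1 \creduces e_1' \reduces e_1''$ then produces $e_2$ satisfying $e_1 \reduces e_2$ and $e_2 \creduces^* e_1'' \creduces^* e_3$, as required.

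To establish the postponement lemma I will perform a case analysis on the relative positions of the two contracted redexes. Let $p$ be the position of the $\creduces$ step in $e$ (not under an evaluation context) and $q$ the position of the subsequent $\reduces$ step in $e'$ (under an evaluation context). The salient observation is that evaluation contexts do not descend under $\lambda$-binders, nor into a non-selected branch of a type-case, nor into the second component of a pair whose first component is not yet a value; so $p$ must lie in one of these ``guarded'' regions of $e$. As a consequence, the redex contracted at $q$ in $e'$ already appears in $e$ at the very same evaluation-context position, and firing it first produces a candidate $e_2$ with $e_1 \reduces e_2$. The original reduction at $p$ can then be \emph{replayed} inside $e_2$: zero times if the guarded region containing $p$ was discarded (e.g.~the branch erased by a type-case contraction, or an argument not referenced by the body of a $\beta$-reduced $\lambda$), once if the region is preserved verbatim, or several times if it was duplicated by a substitution produced by a $\beta$-contraction at $q$.

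The main obstacle will be precisely this replay step: a single original $\creduces$ may have to be re-executed multiple times to reach $e''$, which is why the conclusion of the postponement lemma involves $\creduces^*$ rather than a single $\creduces$. The bookkeeping amounts to verifying that every top-level reduction rule (beta, projection, and type-case) is stable under capture-avoiding substitution and under being placed in an arbitrary context, so that each duplicated copy of the original redex remains a valid redex and can be contracted by a $\creduces$ step. Once the postponement lemma is in place, the outer induction concatenates these replay chains transparently, yielding the required witness $e_2$.
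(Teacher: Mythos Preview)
Your argument is correct and follows a standard postponement pattern from rewriting theory. The paper takes a somewhat different route: rather than inducting on the chain length and commuting one non-evaluation step past one $\reduces$ step at a time, it locates the \emph{first} evaluation-context step in the chain, writes the term at that point as $E[e_2']$ with $e_2'$ a top-level redex, and then ``reverses'' the preceding non-evaluation $\creduces$ steps one by one (in reverse order) through the pair $(E,e_2')$, maintaining the invariant that the current term still decomposes as an evaluation context around a top-level redex. The case analysis underlying each reversal is exactly your observation---a non-evaluation step lives under a $\lambda$, in an unselected type-case branch, or in another frozen position, so it can neither destroy the evaluation spine nor the redex shape---but the paper packages it as an in-place update of $(E,e_2')$ rather than as a standalone commutation lemma. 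Your decomposition is more modular (the single-step postponement lemma is reusable, and the outer induction is routine); the paper's version avoids the outer induction and keeps the evaluation-context invariant explicit throughout. Both need the same ``replay with possible duplication'' bookkeeping for $\beta$-contractions that you correctly identify.
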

\begin{proof}
  Let $e_1$ and $e_4$ two expressions such that $e_1\creduces^* e_4$, where at least one of the reduction steps
  happen under an evaluation context.

  Let's consider the first reduction step $e_2\creduces e_3$ happening under an evaluation context.
  We have $e_1\creduces^* e_2$ (with no reduction step happening under an evaluation context)
  and $e_3\creduces^* e_4$.
  Moreover, we have $e_2\equiv E[e_2']$ and $e_3\equiv E[e_3']$ for some evaluation context $E$
  and expressions $e_2'$ and $e_3'$ such that $e_2'\treduces e_3'$.

  No reduction step in $e_1\creduces^* e_2$ happen under a context $\ct$ such that $\ct\ctleq E$:
  otherwise, it would also be an evaluation context, contradicting the fact that $e_2\creduces e_3$ is the first
  reduction step happening under an evaluation context.
  Consequently, we can "reverse" in $E$ and $e_2'$ the reduction
  steps happening in $e_1\creduces^* e_2$ (one after the other, in reverse order):
  \begin{itemize}
    \item If a reduction step happens under a context $\ct$ such that $E\ctleq \ct$, it only involves $e_2'$,
    we can thus apply the reverse rewriting to $e_2'$. After that, the expression we get
    is still reducible at top-level, as the reduction step that has been reversed
    cannot happen under an evaluation context (no reduction step in $e_1\creduces^* e_2$ can happen under an evaluation context).
    \item Otherwise, if a reduction step happens under a context $\ct$ such that
    $\ct\not\ctleq E$ and $E\not\ctleq \ct$, it only involves $E$,
    we can thus apply the reverse rewriting to $E$. After that, the new context we get
    is still an evaluation context, as the reduction step that has been reversed
    cannot happen under an evaluation context (no reduction step in $e_1\creduces^* e_2$ can happen under an evaluation context).
  \end{itemize}
  After this reversing process, we get a new evaluation context $E'$ and expression $e_1'$ such that $e_1\equiv E'[e_1']$
  and $e_1'\treduces e_2''$ for some $e_2''$ such that $E'[e_2'']\creduces^* E[e_2']$.
  
  Consequently, we have $e_1\equiv E'[e_1']\reduces E'[e_2'']$, and
  $E'[e_2'']\creduces^* E[e_2'] \equiv e_3 \creduces^* e_4$, which concludes the proof.
\end{proof}

\begin{lemma}\label{preduces_lemma}
  For any expressions $e_1$, $e_2$ and $e_3$, if $e_1\creduces^* e_2\preduces e_3$,
  then there exists an expression $e_1'$ such that $e_1\reduces e_1'\creduces^* e_3$.
\end{lemma}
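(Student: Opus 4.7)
The plan is to lift the parallel step $e_2 \preduces e_3$ into a chain of $\creduces$ steps containing at least one reduction under an evaluation context, and then invoke Lemma~\ref{creduces_lemma} on the combined chain starting from $e_1$.

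First, I would unfold the definition of $\preduces$. The step $e_2 \preduces e_3$ arises from the rule \Rule{$\kappa$}, so there exist an evaluation context $E$ and expressions $e_\circ, e_\circ'$ such that $e_2 \equiv E[e_\circ]$, $e_\circ \treduces e_\circ'$, and $e_3 \equiv (E[e_\circ])\esubs{e_\circ}{e_\circ'}$. The key observation is that the distinguished occurrence of $e_\circ$ at the hole of $E$ can be reduced on its own: the single step $e_2 \equiv E[e_\circ] \creduces E[e_\circ']$ is a $\creduces$ step, and because $E$ is (in particular) a context, it is in fact a step happening under an evaluation context.

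Next, I would simulate the remaining part of the parallel substitution by individual $\creduces$ steps. The expression $e_3 = (E[e_\circ])\esubs{e_\circ}{e_\circ'}$ replaces every occurrence of $e_\circ$ in $E[e_\circ]$ (modulo $\alpha$-equivalence) by $e_\circ'$, in a single pass. The occurrences still present in $E[e_\circ']$ are exactly those outside the hole of $E$; each of them sits under some (possibly non-evaluation) context $\ct$, so the rule defining $\creduces$ lets us rewrite them one at a time using the top-level reduction $e_\circ \treduces e_\circ'$. Enumerating these occurrences and rewriting them successively yields a chain $E[e_\circ'] \creduces^* e_3$. Combined with the previous step, this gives $e_2 \creduces E[e_\circ'] \creduces^* e_3$.

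Concatenating with $e_1 \creduces^* e_2$ produces a chain $e_1 \creduces^* e_3$ in which the step $e_2 \creduces E[e_\circ']$ is performed under an evaluation context. Lemma~\ref{creduces_lemma} then delivers an expression $e_1'$ with $e_1 \reduces e_1'$ and $e_1' \creduces^* e_3$, which is exactly the conclusion. The main subtlety is the bookkeeping around the one-pass expression substitution $\esubs{\cdot}{\cdot}$: we must check that the residual occurrences of $e_\circ$ in $E[e_\circ']$ (those not erased by the hole-filling) are precisely the ones the parallel substitution rewrites, and that $\alpha$-equivalence is handled uniformly between $\esubs{\cdot}{\cdot}$ and $\treduces$. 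Once this is spelled out, each residual occurrence gives rise to exactly one $\creduces$ step, and no further difficulty arises.
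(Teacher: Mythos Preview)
Your proposal is correct and follows essentially the same approach as the paper: decompose the parallel step $e_2 \preduces e_3$ into a sequence of $\creduces$ steps, observe that the step rewriting the occurrence at the hole of $E$ is performed under an evaluation context, concatenate with $e_1 \creduces^* e_2$, and invoke Lemma~\ref{creduces_lemma}. The paper states this decomposition in one line without spelling out the enumeration of occurrences, so your version is simply a more detailed rendering of the same argument.
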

\begin{proof}
  The step $e_2\preduces e_3$ can be decomposed into several $\creduces$ steps
  with at least one happening under an evaluation context.
  Thus, this lemma is an immediate consequence of Lemma~\ref{creduces_lemma}
  applied on the chain $e_1\creduces^* e_2\creduces^* e_3$.
\end{proof}

\begin{lemma}\label{creduces_value_lemma}
  For any expression $e$ and value $v$, if $e\creduces^* v$
  then either $e\reduces^\infty$ or there exists a value $v'$ such that $e\reduces^* v'\creduces^* v$.
\end{lemma}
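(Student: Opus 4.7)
The plan is to induct on the number $k$ of steps of the chosen witnessing chain $e \creduces^* v$ that happen under an evaluation context (as opposed to strictly under a non-evaluation context, i.e.\ inside a $\lambda$-body, inside a branch of a type-case, or inside the second component of a pair whose first component is not yet a value). Before performing this induction I would establish an auxiliary claim: \emph{if $e \creduces^* v$ using only non-evaluation-context steps, then $e$ is already a value}.

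The auxiliary claim is proved by induction on the length of the chain, relying on a structural sub-lemma: for any non-evaluation context $\ct$ and any top-level step $e_1 \treduces e_2$, if $\ct[e_2]$ is a value, then so is $\ct[e_1]$. The sub-lemma goes by structural induction on $\ct$. The case $\ct = \lambda x.\,\ct'$ is immediate since $\ct[e_1]$ is then a $\lambda$-abstraction; outer constructors that are application, projection, or type-case are impossible because $\ct[e_2]$ would fail to be a value; the pair cases $\ct = (\ct', e'')$ and $\ct = (e'', \ct')$ force the non-context components of $\ct[e_2]$ to be values and, combined with $\ct$ being non-evaluation, force the inner context $\ct'$ to also be non-evaluation, so the inductive hypothesis on $\ct'$ gives a value; the case $\ct = [\,]$ is excluded since the empty context is an evaluation context.

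With the claim in hand, the main induction on $k$ proceeds as follows. For $k = 0$ the auxiliary claim yields that $e$ is a value, and $v' \eqdef e$ satisfies the conclusion. For $k \geq 1$, Lemma~\ref{creduces_lemma} applies and delivers $e'$ with $e \reduces e'$ and $e' \creduces^* v$; applying the inductive hypothesis to $e' \creduces^* v$ yields either $e' \reduces^\infty$ (whence $e \reduces e' \reduces^\infty$) or $e' \reduces^* v'$ with $v' \creduces^* v$ (whence $e \reduces e' \reduces^* v'$), which is exactly what we need.

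The main obstacle is justifying that the new $\creduces^*$-chain produced by Lemma~\ref{creduces_lemma} has strictly fewer evaluation-context steps than the original one, so that the measure $k$ strictly decreases. The construction in the proof of Lemma~\ref{creduces_lemma} commutes the first evaluation-context step of the chain to the front (as a $\reduces$-step) and replays the previously-taken non-evaluation-context steps on the new expression; one must check that this replay preserves the classification ``non-evaluation-context'' of each replayed step — essentially bookkeeping about the shape of the surrounding contexts, but it is the delicate point that makes the induction well-founded. Once this decrease is verified, the two concatenation steps in the inductive case are routine.
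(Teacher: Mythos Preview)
Your auxiliary claim and its structural sub-lemma are correct and correspond to what the paper uses implicitly (``it is not possible for $v$ to be a value otherwise''). One minor slip: the grammar for $\ct$ in the paper does \emph{not} include $\lambda x.\,\ct'$, so that case of your sub-lemma is vacuous; this does not affect correctness.

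The real problem is your main induction. You correctly identify the obstacle---that the replayed steps must stay non-evaluation-context for your measure $k$ to decrease---but you dismiss it as ``essentially bookkeeping''. It is not: the classification is \emph{not} preserved. Take $e=((\lambda x.x)(\lambda y.y),\,(\lambda z.z)(\lambda w.w))$ and the two-step chain that first reduces the right component (at the context $((\lambda x.x)(\lambda y.y),[\,])$, which is \emph{not} an evaluation context since the left component is not a value) and then reduces the left component (at $([\,],\lambda w.w)$, which \emph{is} an evaluation context). Here $k=1$. Applying the construction of Lemma~\ref{creduces_lemma} commutes the left reduction to the front, producing $e\reduces(\lambda y.y,\,(\lambda z.z)(\lambda w.w))$, and the replayed right-component step now takes place at $(\lambda y.y,[\,])$---which \emph{is} an evaluation context, because the left component has become a value. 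The new chain therefore still has $k=1$, and your induction is not well-founded.

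The paper sidesteps this entirely. It does not induct on any measure: it simply iterates Lemma~\ref{creduces_lemma} as long as the current head is not a value. Each iteration produces one $\reduces$-step, so either the process stops at some value $v'$ (giving $e\reduces^* v'\creduces^* v$) or it continues forever (giving $e\reduces^\infty$). Both alternatives are admitted by the disjunctive conclusion, so no termination argument is needed. If you want to keep an inductive structure, induct on the \emph{total} length of the $\creduces^*$-chain instead: the construction in the proof of Lemma~\ref{creduces_lemma} replays exactly the same number of prefix steps and drops the one extracted evaluation-context step, so the total length strictly decreases by one.
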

\begin{proof}
  If $e$ is not already a value, then there must be at least one step in $e\creduces^* v$
  that happen under an evaluation context (it is not possible for $v$ to be a value otherwise).
  We can thus apply Lemma~\ref{creduces_lemma} successively, starting on $e\creduces^* v$
  and continuing until the remaining $e'\creduces^* v$ chain is such that $e'$ is a value.
  If this process terminates, it builds a chain $e\reduces^* v'$ with $v'\creduces^* v$,
  otherwise it builds $e\reduces^\infty$.
\end{proof}

\begin{lemma}\label{preduces_value_lemma}
  For any expression $e_1$, $e_2$, and value $v$,
  if $e_1\creduces^*e_2\preduces^* v$ then there exists $v'$ such that
  $e_1\reduces^* v'\creduces^* v$ or $e_1\reduces^\infty$.
\end{lemma}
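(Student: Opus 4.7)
The plan is to do a straightforward induction on the length $n$ of the parallel reduction chain $e_2 \preduces^* v$, using the three previously established lemmas (\ref{creduces_lemma}, \ref{preduces_lemma}, \ref{creduces_value_lemma}) as black boxes. All the real technical work — namely the observation that any $\creduces$-step under an evaluation context can be pushed to the front of a chain of $\creduces$-steps — has already been carried out in Lemma~\ref{creduces_lemma} and exploited in Lemma~\ref{preduces_lemma}; the current statement is essentially the iterated form of Lemma~\ref{preduces_lemma}, closed off at the end by Lemma~\ref{creduces_value_lemma}.

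For the base case ($n=0$), we have $e_2 \aequiv v$, so $e_1 \creduces^* v$. Applying Lemma~\ref{creduces_value_lemma} directly yields either $e_1 \reduces^\infty$ or some $v'$ with $e_1 \reduces^* v' \creduces^* v$, which is exactly what is required. For the inductive step, decompose the chain as $e_2 \preduces e_2' \preduces^{*} v$ with $n-1$ steps remaining, so the premise becomes $e_1 \creduces^* e_2 \preduces e_2'$ followed by $e_2' \preduces^{*} v$. Apply Lemma~\ref{preduces_lemma} to the first part to obtain an expression $e_1'$ with $e_1 \reduces e_1' \creduces^* e_2'$. The induction hypothesis then applies to the shorter chain $e_1' \creduces^* e_2' \preduces^{*} v$, giving either $e_1' \reduces^\infty$ — in which case $e_1 \reduces e_1' \reduces^\infty$ gives $e_1 \reduces^\infty$ — or some $v'$ with $e_1' \reduces^* v' \creduces^* v$, which prepended with $e_1 \reduces e_1'$ yields $e_1 \reduces^* v' \creduces^* v$.

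The only point that requires any care is a conceptual one rather than a technical one: the two disjuncts in the induction hypothesis combine cleanly with the single step $e_1 \reduces e_1'$ because $\reduces^\infty$ is closed under prepending a finite prefix and $\reduces^*$ is obviously closed under concatenation. There is no subtle interaction with parallel substitution, with typing, or with the difference between $\creduces$ and $\reduces$ left to handle here — all of that was absorbed into the preceding lemmas. Consequently, I expect no real obstacle; the proof is essentially a two-case induction of a few lines.
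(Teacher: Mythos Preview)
Your proof is correct and follows essentially the same approach as the paper: induction on the length of the $\preduces^*$ chain, using Lemma~\ref{preduces_lemma} in the step and Lemma~\ref{creduces_value_lemma} to close off. The only cosmetic difference is that the paper first proves the intermediate claim $e_1\reduces^* e'\creduces^* v$ (without the divergence disjunct) by induction and then applies Lemma~\ref{creduces_value_lemma} once at the end, whereas you fold that lemma into the base case and carry the divergence disjunct through the induction; both structures are equally valid.
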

\begin{proof}
  By induction on the number of steps in $e_2\preduces^* v$, we prove using Lemma~\ref{preduces_lemma}
  that $e_1\reduces^* e'\creduces^* v$ for some $e'$.
  Then, we conclude by applying Lemma~\ref{creduces_value_lemma} on $e'\creduces^* v$.
\end{proof}

\begin{lemma}\label{preduces_diverge_lemma}
  For any expression $e_1$ and $e_2$,
  if $e_1\creduces^*e_2\preduces^\infty$ then $e_1\reduces^\infty$.
\end{lemma}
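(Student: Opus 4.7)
The plan is to iteratively apply Lemma~\ref{preduces_lemma} along the infinite parallel reduction sequence in order to construct an infinite source-language reduction chain starting from $e_1$. The intuition is that each individual $\preduces$ step dominates (in a bounded way) a nonempty prefix of $\reduces$ steps, so a truly infinite $\preduces$ sequence forces at least one $\reduces$ step to be produced each time, which is all we need for $\reduces^\infty$.

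Concretely, unfold $e_2\preduces^\infty$ as an infinite sequence $e_2 \equiv f_0 \preduces f_1 \preduces f_2 \preduces \cdots$. I will build, by induction on $n$, a sequence of expressions $e_1 \equiv g_0,\,g_1,\,g_2,\,\ldots$ together with the invariant $g_n \creduces^{*} f_n$ and (for $n\geq 1$) the individual source-language step $g_{n-1}\reduces g_n$. The base case $n=0$ is exactly the hypothesis $e_1 \creduces^{*} e_2 = f_0$. For the inductive step, from $g_n \creduces^{*} f_n$ and $f_n \preduces f_{n+1}$ Lemma~\ref{preduces_lemma} yields some $g_{n+1}$ with $g_n \reduces g_{n+1} \creduces^{*} f_{n+1}$, which preserves the invariant and produces the required new $\reduces$ step.

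Stitching together the steps $g_0 \reduces g_1 \reduces g_2 \reduces \cdots$ produced at each stage gives an infinite $\reduces$-reduction chain starting from $e_1$, i.e., $e_1 \reduces^\infty$, as required.

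There is no real obstacle here: the construction is a straightforward coinductive use of Lemma~\ref{preduces_lemma}. The only point worth being careful about is that Lemma~\ref{preduces_lemma} guarantees at least one honest $\reduces$ step per $\preduces$ step (and not merely $\reduces^{*}$), so that the constructed $\reduces$-sequence is genuinely infinite rather than eventually stationary; since the invariant is maintained verbatim at each stage, no further bookkeeping (on contexts, values, or types) is needed.
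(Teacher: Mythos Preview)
Your proof is correct and is exactly the paper's approach spelled out in detail: the paper's proof is the single sentence ``We can arbitrarily extend a chain $e_1\reduces\dots$ using Lemma~\ref{preduces_lemma},'' and your inductive construction of the $g_n$ with invariant $g_n\creduces^* f_n$ is precisely how that extension is carried out.
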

\begin{proof}
  We can arbitrarily extend a chain $e_1\reduces\dots$ using Lemma~\ref{preduces_lemma}.
\end{proof}

The semantics $\creduces$ is extended into a semantics $\creducesprog$ for programs,
with an extended context allowing to perform a reduction under any top-level definition of the program:

\begin{minipage}[b]{0.45\linewidth}
\begin{displaymath}
\begin{array}[b]{r@{~}r@{~}l}
  \ctprog &::=& [] \alt \tletexp{x}{[]}{p} \alt \tletexp{x}{e}{\ctprog}
\end{array}
\end{displaymath}
\end{minipage}\hfill\begin{minipage}[b]{0.45\linewidth}
\begin{displaymath}
  \Infer {e \treduces e'}{\ctprog[\ct[e]]\creducesprog \ctprog[\ct[e']]}{}
\end{displaymath}
\end{minipage}

\begin{lemma}\label{preduces_prog_value_lemma}
  For any program $p_1$, $p_2$, and value $v$,
  if $p_1\creducesprog^*p_2\preducesprog^* v$ then there exists $v'$ such that
  $p_1\reducesprog^* v'\creducesprog^* v$ or $p_1\reducesprog^\infty$.
\end{lemma}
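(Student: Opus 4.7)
The plan is to mirror the structure of the proof of Lemma~\ref{preduces_value_lemma} but at the program level. The main ingredient to establish first is a program-level analog of Lemma~\ref{preduces_lemma}: for any programs $p_1, p_2, p_3$, if $p_1 \creducesprog^* p_2 \preducesprog p_3$ then there exists $p_1'$ such that $p_1 \reducesprog p_1' \creducesprog^* p_3$. Given this, the main lemma follows by induction on the number of $\preducesprog$ steps in $p_2 \preducesprog^* v$: each step is converted to a $\reducesprog$ step prepended to the chain starting at $p_1$, extending a $\reducesprog^*$-chain from $p_1$. If the induction terminates we obtain $p_1 \reducesprog^* p' \creducesprog^* v$; since $v$ is a value and the context $\ct$ does not descend into $\lambda$-abstractions or constants, values are $\creducesprog$-irreducible, so $p' = v$ and we can take $v' = v$. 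Otherwise the process builds an infinite chain, giving $p_1 \reducesprog^\infty$.

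The core work is therefore the program-level analog of Lemma~\ref{preduces_lemma}. I would case-split on the shape of the $\preducesprog$ step $p_2 \preducesprog p_3$: either it is a top-level substitution $\tletexp{x}{w}{q} \preducesprog q\subs{x}{w}$ with $w$ a value, or it is an expression-level parallel step $P[e] \preducesprog P[e']$ with $P \in \{[\,], \tletexp{x}{[]}{q}\}$ and $e \preduces e'$. In the expression-level case, I would apply Lemma~\ref{preduces_lemma} on the restriction of the $\creducesprog^*$ chain to the first expression of the program (the hole position of $P$ in $p_2$), producing a first $\reduces$ step there that lifts to a $\reducesprog$ step on $p_1$.

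In the substitution case, the let-binding $\tletexp{x}{w}{q}$ sits at the head of $p_2$; its $\creducesprog$-preimage in $p_1$ is of the form $\tletexp{x}{e_1}{q_1}$ with $e_1 \creduces^* w$ (and $q_1 \creducesprog^* q$, where $\creducesprog^*$ is understood to act under the head binder as permitted by $\ctprog$). By Lemma~\ref{creduces_value_lemma} applied to $e_1 \creduces^* w$, either $e_1 \reduces^\infty$ (in which case $p_1 \reducesprog^\infty$, and we are done at the level of the main lemma by extending the chain instead of producing a $p_1'$), or $e_1 \reduces^* w'$ with $w'$ a value; as before $w' = w$ since values are $\creduces$-irreducible. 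The corresponding $\reducesprog^*$ steps on $p_1$ reach $\tletexp{x}{w}{q_1}$, after which a single $\reducesprog$ substitution step yields $q_1\subs{x}{w}$, which $\creducesprog^*$-reduces to $q\subs{x}{w} = p_3$ (using that capture-avoiding substitution commutes with $\creducesprog$ steps happening in $q_1$).

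The main obstacle will be handling the fact that $\creducesprog$, via its context $\ctprog$, can reduce arbitrarily deep inside later top-level definitions, while $\reducesprog$ and $\preducesprog$ only touch the first non-value expression or perform a top-level substitution. The fix is a commutation argument: $\creducesprog$ steps occurring in disjoint top-level definitions commute, so the $\creducesprog^*$ chain can be reordered to place the reductions affecting the head expression first, leaving the reductions in deeper definitions to be replayed after the $\reducesprog$ step is taken. Once this reordering is justified (which follows by induction on the length of the chain, using that $\treduces$ steps act locally inside expressions), the program-level analog of Lemma~\ref{preduces_lemma} is established, and the main lemma follows.
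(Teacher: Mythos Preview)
Your approach is workable but more circuitous than the paper's, and it contains one genuine slip.

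\textbf{Different decomposition.} The paper proceeds by induction on the number of top-level definitions in $p_2$, using the expression-level Lemma~\ref{preduces_value_lemma} as a black box. The key observation is that $\creducesprog$ never touches the top-level let structure, so $p_1$ and $p_2$ have the same number of definitions; the $\preducesprog^*$ chain from $p_2$ to $v$ must therefore contain exactly that many substitution steps, and one peels them off one by one. You instead rebuild the expression-level machinery (the analog of Lemma~\ref{preduces_lemma}) at the program level and then run an induction on the length of the $\preducesprog^*$ chain. This works, but it duplicates effort: every ingredient you need (the commutation of $\creducesprog$ steps in disjoint positions, the commutation of substitution with $\creducesprog$, the use of Lemma~\ref{creduces_value_lemma} on the head expression) is exactly what the paper's induction on top-level definitions invokes anyway, only packaged differently. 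Note also that your ``program-level analog of Lemma~\ref{preduces_lemma}'' as you first state it (one $\reducesprog$ step) is not true; you correctly weaken it to $\reducesprog^+$-or-diverge in the substitution case, but the initial statement should already say that.

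\textbf{The slip.} At the end you write: from $p'\creducesprog^* v$ and ``values are $\creducesprog$-irreducible'' you conclude $p'=v$. That inference is backwards: irreducibility of $v$ says nothing about $p'$. What you actually have after exhausting the $\preducesprog$ steps is $p_1\reducesprog^* p'\creducesprog^* v$ with $p'$ an expression (since $\creducesprog$ preserves the number of top-level lets and $v$ has none), but not necessarily a value. To finish you must invoke Lemma~\ref{creduces_value_lemma} on $p'\creduces^* v$, obtaining either $p'\reduces^\infty$ (hence $p_1\reducesprog^\infty$) or $p'\reduces^* v'\creduces^* v$. This is exactly the second half of the proof of Lemma~\ref{preduces_value_lemma}, which your outline omits. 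The fix is immediate, but as written the argument does not close.
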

\begin{proof}
  Straightforward induction on the number of top-level definitions in $p_2$, using Lemma~\ref{preduces_value_lemma}
  (note that $p_1$ and $p_2$ must have the same number of top-level definitions as $p_1\creducesprog^*p_2$).
\end{proof}

\begin{lemma}\label{preduces_prog_diverge_lemma}
  For any program $p_1$ and $p_2$,
  if $p_1\creducesprog^*p_2\preducesprog^\infty$ then $p_1\reducesprog^\infty$.
\end{lemma}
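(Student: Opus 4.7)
The plan is to mirror the argument of Lemma~\ref{preduces_diverge_lemma}, now lifted to programs, by induction on the number $n$ of top-level let-bindings of $p_1$. A preliminary observation is that each $\creducesprog$-step---of the form $\ctprog[\ct[e]] \creducesprog \ctprog[\ct[e']]$---acts only inside one binding or the final expression without altering the surrounding let-structure, so $p_1 \creducesprog^* p_2$ preserves the number of bindings and (since steps on disjoint subterms commute) can be reordered into $e_1 \creduces^* e_2$ followed by $p_1^{\mathit{rest}} \creducesprog^* p_2^{\mathit{rest}}$ whenever $p_1 = \tletexp{x}{e_1}{p_1^{\mathit{rest}}}$ and $p_2 = \tletexp{x}{e_2}{p_2^{\mathit{rest}}}$.

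The base case $n = 0$ reduces directly to Lemma~\ref{preduces_diverge_lemma} applied to the final expressions. For the inductive step, I would inspect the shape of $p_2 \preducesprog^{\infty}$. Since the source program evaluation context only exposes the head position, each $\preducesprog$-step is either a head reduction $e_2 \preduces e_2'$ (when $e_2$ is not a value) or the let-elimination $\tletexp{x}{v}{p_2^{\mathit{rest}}} \preducesprog p_2^{\mathit{rest}}\subs{x}{v}$ (when $e_2 = v$), and the chain cannot get stuck with $e_2$ a non-value having no successor. Two subcases follow. In case (a), infinitely many head reductions occur, so $e_2 \preduces^{\infty}$; Lemma~\ref{preduces_diverge_lemma} applied to $e_1 \creduces^* e_2$ then yields $e_1 \reduces^{\infty}$, hence $p_1 \reducesprog^{\infty}$ by reducing $e_1$ under the source context $\tletexp{x}{[\,]}{p_1^{\mathit{rest}}}$. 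In case (b), only finitely many head reductions occur, so the head eventually becomes a value $v$ and the let-elimination fires: $e_2 \preduces^* v$ and $p_2^{\mathit{rest}}\subs{x}{v} \preducesprog^{\infty}$. Applying Lemma~\ref{preduces_value_lemma} to $e_1 \creduces^* e_2 \preduces^* v$ gives either $e_1 \reduces^{\infty}$ (done), or $e_1 \reduces^* v'$ with $v' \creduces^* v$. Since the context grammar $\ct$ does not descend under $\lambda$-abstractions, values do not $\creduces$-reduce, so $v' = v$, and $p_1 \reducesprog^* \tletexp{x}{v}{p_1^{\mathit{rest}}} \reducesprog p_1^{\mathit{rest}}\subs{x}{v}$. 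A routine substitution lemma lifts $p_1^{\mathit{rest}} \creducesprog^* p_2^{\mathit{rest}}$ to $p_1^{\mathit{rest}}\subs{x}{v} \creducesprog^* p_2^{\mathit{rest}}\subs{x}{v}$, whence the induction hypothesis applied to the $(n{-}1)$-binding program $p_1^{\mathit{rest}}\subs{x}{v}$ yields an infinite source chain from it, and prepending the source prefix completes $p_1 \reducesprog^{\infty}$.

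The hard part will be purely technical bookkeeping: establishing that $\creducesprog^*$ is preserved by substitution of a closed value for a variable (a standard congruence fact, since $\treduces$ is stable under such substitutions) and pinning down that values cannot $\creduces$-reduce so that $v' = v$ above. Once these auxiliary facts are in place, the argument is a straightforward combination of Lemma~\ref{preduces_diverge_lemma}, Lemma~\ref{preduces_value_lemma}, and the structural induction on the number of bindings, and it closely parallels the proof of Lemma~\ref{preduces_prog_value_lemma}.
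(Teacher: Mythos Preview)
Your proposal is correct and follows the same overall strategy as the paper: induction on the number of top-level definitions, together with Lemma~\ref{preduces_diverge_lemma} for the case where the head diverges. The paper's proof is a single terse sentence, and what you have written is essentially the natural unfolding of that induction; in particular, your case~(b), where the head becomes a value and the let is eliminated, naturally brings in Lemma~\ref{preduces_value_lemma} even though the paper does not name it explicitly. Your observation that values admit no $\creduces$-step (since $\ct$ does not descend under $\lambda$) is correct given the paper's definition of $\ct$, so the step $v'=v$ is justified and you avoid needing the extra $\creducesprog^*$ wiggle after substitution.
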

\begin{proof}
  Straightforward induction on the number of top-level definitions in $p_2$, using Lemma~\ref{preduces_diverge_lemma}
  (note that $p_1$ and $p_2$ must have the same number of top-level definitions as $p_1\creducesprog^*p_2$).
\end{proof}

\begin{lemma}\label{test_type_preservation_lemma}
  For any values $v_1$ and $v_2$ such that $v_1\creduces^* v_2$,
  if $v_2\in\tau$ then $v_1\in\tau$.
\end{lemma}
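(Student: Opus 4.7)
The plan is to strengthen the statement to the claim that $\typof(v_1) = \typof(v_2)$ whenever $v_1 \creduces^* v_2$ and both are values. Given this strengthening, the original lemma follows immediately, since by definition $v_i\in\tau \Leftrightarrow \typof(v_i)\leq\tau$, so $\typof(v_2)\leq\tau$ and $\typof(v_1)=\typof(v_2)$ together yield $v_1\in\tau$.

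The key observation justifying this strengthening is that $\typof$ inspects only the outermost shape of a value: every constant is mapped to its basic type, every $\lambda$-abstraction to $\Empty\to\Any$ regardless of its body, and a pair to the product of the $\typof$'s of its components. I will prove the strengthened claim by structural induction on $v_1$. For the base case $v_1 = c$, a constant contains no subterm so no step of $\creduces$ is possible; hence $v_2 = c$ and $\typof(v_1) = \typof(v_2)$ trivially. For $v_1 = \lambda x.e$, the reduction contexts $\ct$ do not include $\lambda$-binders, so a simple induction on the length of the chain $v_1\creduces^* v_2$ shows that the outermost $\lambda$ is preserved; thus $v_2 = \lambda x.e'$ for some $e'$, and both sides give $\Empty\to\Any$.

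For the pair case $v_1 = (v_1',v_1'')$, I need to show that the outer pair constructor is preserved along the reduction chain and that the chain can be split into two sub-chains acting independently on the two components. Examining the grammar of contexts $\ct$, the only contexts whose head is a pair are $(\ct',e)$ and $(e,\ct')$, so every step $\creduces$ occurring inside a pair term affects exactly one of the two components. An induction on the number of $\creduces$ steps in $v_1\creduces^* v_2$ then gives $v_2 = (e_2',e_2'')$ with $v_1'\creduces^* e_2'$ and $v_1''\creduces^* e_2''$. Since $v_2$ is a value, both $e_2'$ and $e_2''$ must be values; the inductive hypothesis on the strengthened claim then yields $\typof(v_1')=\typof(e_2')$ and $\typof(v_1'')=\typof(e_2'')$, and the definition of $\typof$ on pairs concludes.

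The main obstacle will be the bookkeeping in the pair case, namely formally splitting the chain $v_1\creduces^* v_2$ into two sub-chains on the components; this requires noting that each $\creduces$ step is driven by a single redex and its surrounding context, so the induction can track the position (left vs.\ right) at which the step occurs and dispatch it to the appropriate sub-chain. All other cases are straightforward since $\typof$ does not depend on the interior of $\lambda$-bodies, which is precisely what the restriction on arrow types in test types $\tau$ is designed to exploit.
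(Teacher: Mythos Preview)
Your proof is correct and rests on the same observation as the paper's: $\typof$ (and hence $\cdot\in\tau$) is insensitive to the bodies of $\lambda$-abstractions, and in a value the only place a redex can live is inside such a body. The paper's proof exploits this more directly, without the structural induction: since $v_1$ is a value, every $\creduces$ step in the chain $v_1\creduces^* v_2$ must occur under some $\lambda$-abstraction, and therefore cannot alter $\typof$ at any level of the value's spine. That single remark already subsumes the pair case you spend effort on, because a pair of values contains no redex outside $\lambda$-bodies either, so the chain-splitting bookkeeping you flag as the ``main obstacle'' never arises.

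One small point on your $\lambda$ case: the reason the outermost $\lambda$ is preserved is not that the context grammar omits $\lambda x.\ct$ (in fact the paper's own proof presumes reductions \emph{do} happen under $\lambda$, consistently with how $\creduces$ is used to decompose parallel reductions that substitute under binders), but simply that $\lambda x.e$ is never itself a $\treduces$ redex. Your conclusion that $v_2=\lambda x.e'$ for some $e'$ is right regardless, so this is only a matter of justification, not correctness.
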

\begin{proof}
  As $v_1$ is a value, every reduction step in $v_1\creduces^* v_2$ can only happen under a $\lambda$-abstraction.
  Given that the $\in$ relation ignores the body of $\lambda$-abstractions ($(\lambda x.e) \in \Empty\to\Any$ for any $e$),
  none of the reduction steps in $v_1\creduces^* v_2$ can change the relation $\cdot\in\tau$.
\end{proof}

\begin{theorem}[Type safety for the source semantics]\label{type_safety_prog}
  For any program $p$, if $\emptyenv\vdashp p:\tau$, then either $p\reducesprog^* v$ with $v\in\tau$
  or $p\reducesprog^\infty$.
\end{theorem}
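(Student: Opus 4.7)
The plan is to deduce the theorem from Theorem~\ref{type_safety_parallel_prog} (type safety for the parallel semantics) by simulating an execution of the source semantics using the two bridging lemmas~\ref{preduces_prog_value_lemma} and~\ref{preduces_prog_diverge_lemma} already established. The key observation is that both lemmas apply with the trivial starting chain $p\creducesprog^*p$, so a run in the parallel semantics starting from $p$ can always be matched by a run in the source semantics starting from $p$.

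First I would apply Theorem~\ref{type_safety_parallel_prog} to $\emptyenv\vdashp p:\tau$, splitting into two cases. In the divergent case, $p\preducesprog^\infty$, I would invoke Lemma~\ref{preduces_prog_diverge_lemma} with $p_1=p_2=p$ (using the trivial zero-step chain $p\creducesprog^* p$) to conclude $p\reducesprog^\infty$.

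In the convergent case, I have $p\preducesprog^* v$ with $\emptyenv\vdashp v:\tau$. Applying Lemma~\ref{preduces_prog_value_lemma} with $p_1=p_2=p$ yields either $p\reducesprog^\infty$ (and we are done) or $p\reducesprog^* v'$ with $v'\creducesprog^* v$ for some value $v'$. In the latter sub-case, what remains is to check $v'\in\tau$. From $\emptyenv\vdashp v:\tau$, the program-level typing rules reduce (since $v$ is a value expression, so no top-level let-bindings remain and no non-trivial generalization applies to the ground test type $\tau$) to $\emptyenv\vdash v:\tau$, hence $\emptyenv\vdasha v:\tau$, and Proposition~\ref{test_types_prop} gives $v\in\tau$. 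Since $v'$ and $v$ are both values, the chain $v'\creducesprog^* v$ consists solely of expression-level $\creduces$ steps (the program evaluation context contributes nothing once the program is a bare value), so Lemma~\ref{test_type_preservation_lemma} applies and yields $v'\in\tau$.

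The only mild subtlety will be pinning down that a program which is a value is, for the purpose of applying the expression-level results (Proposition~\ref{test_types_prop} and Lemma~\ref{test_type_preservation_lemma}), just the value expression itself, so the program-to-expression passage is trivial. Apart from this bookkeeping, the proof is a straightforward chaining of the already established parallel-semantics safety and the two simulation lemmas, with no new inductive argument required.
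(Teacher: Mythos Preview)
Your proposal is correct and follows essentially the same approach as the paper: apply type safety for the parallel semantics (Theorem~\ref{type_safety_parallel_prog}), bridge to the source semantics via Lemmas~\ref{preduces_prog_value_lemma} and~\ref{preduces_prog_diverge_lemma}, and in the convergent case recover $v'\in\tau$ from $v\in\tau$ using Proposition~\ref{test_types_prop} followed by Lemma~\ref{test_type_preservation_lemma}. Your naming of $v$ and $v'$ is swapped relative to the paper's, and you are a bit more explicit about the program-versus-expression bookkeeping, but the argument is the same.
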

\begin{proof}
  Straightforward combination of the type safety for the parallel semantics (Theorem~\ref{type_safety_parallel_prog})
  with the Lemmas~\ref{preduces_prog_value_lemma} and \ref{preduces_prog_diverge_lemma}.
  In the case where we get $p\reducesprog^* v$, for some value $v$ such that there exists $v'$
  such that $v\creduces^* v'$ and $\emptyenv\vdashp v':\tau$,
  we can deduce $\emptyenv\vdasha v':\tau$, then $v'\in\tau$ using Proposition~\ref{test_types_prop},
  and finally $v\in\tau$ using Lemma~\ref{test_type_preservation_lemma}.
\end{proof}

\subsection{Algorithmic Type System}

\subsubsection{Maximal Sharing Canonical Form}

This section applies to definitions of Appendix~\ref{sec:msc-appendix}.

As defined in Section~\ref{sec:normalisation}, we consider that expressions of the source language
can contain binding variables.
In particular, the unwinding operator $\eras {.}$ can be used on atoms and
canonical forms containing free binding variables.
An expression without binding variables is called ground expression.

\begin{proposition}
    For any ground expression of the source language $e$,
    $\eras{\term(\ucanon e)}\equiv e$.
    \end{proposition}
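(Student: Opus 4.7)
The plan is to prove the proposition by straightforward structural induction on $e$, with one auxiliary compositionality lemma about how $\eras{\cdot}$ interacts with the concatenation of binding contexts performed by $\term$ and $\ucanon{\cdot}$.

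First I would dispatch the base cases by direct computation. For a constant $c$, $\ucanon{c} = (\be{\xx_\circ}{c}, \xx_\circ)$, so $\term(\ucanon{c}) = \bindexp{\xx_\circ}{c}{\xx_\circ}$, and by the definition of the unwinding operator $\eras{\bindexp{\xx_\circ}{c}{\xx_\circ}} = \eras{\xx_\circ}\subs{\xx_\circ}{\eras{c}} = \xx_\circ\subs{\xx_\circ}{c} = c$; the case for a lambda variable $x$ is identical. Note that ``ground'' rules out the case where $e$ is itself a binding variable.

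The inductive cases all reduce to the same pattern. Take application as the representative case: with $\ucanon{e_1} = (\benv_1,\xx_1)$ and $\ucanon{e_2} = (\benv_2,\xx_2)$, the construction gives $\ucanon{e_1 e_2} = ((\benv_1;\benv_2;\be{\xx_\circ}{\xx_1\xx_2}),\xx_\circ)$, and $\term(\ucanon{e_1 e_2})$ wraps $\bindexp{\xx_\circ}{\xx_1\xx_2}{\xx_\circ}$ successively in the bindings of $\benv_2$ and then of $\benv_1$. Unwinding the innermost $\bindexp{\xx_\circ}{\xx_1\xx_2}{\xx_\circ}$ produces $\xx_1\xx_2$; the auxiliary lemma below then shows that wrapping by $\benv_2$ and unwinding substitutes $\xx_2$ exactly as if one had unwound $\term(\benv_2,\xx_2)$, giving $\xx_1\, e_2$ by the induction hypothesis; wrapping by $\benv_1$ and unwinding analogously yields $e_1\, e_2$. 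The other constructors (lambda, pair, projection, type-case) follow the same recipe.

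The auxiliary lemma I need is the following compositionality property: if $\benv_1$ and $\benv_2$ have disjoint domains of fresh binding variables, no variable of $\dom{\benv_1}$ appears free in any atom of $\benv_2$, and $\xx\in\dom{\benv_2}$, then $\eras{\term(\benv_1;\benv_2,\xx)}$ equals $\eras{\term(\benv_2,\xx)}$ with the bindings of $\benv_1$ applied pointwise to its subterms. This is proved by a short induction on the length of $\benv_1$ directly from the defining clause $\eras{\bindexp{\xx}{a}{\kappa}} = \eras{\kappa}\subs{\xx}{\eras{a}}$. The freshness hypotheses needed by the lemma hold at every recursive call of $\ucanon{\cdot}$ because the metavariable $\xx_\circ$ is always chosen fresh.

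The main subtlety, and the step most likely to trip up a careful reader, is exactly this freshness bookkeeping: binding variables introduced in $\ucanon{e_1}$ must not clash with those of $\ucanon{e_2}$, nor appear free in $e_2$, so that substitutions in the unwinding do not accidentally capture or duplicate subterms. Beyond this, the proof is entirely mechanical expansion of definitions combined with the induction hypothesis, and no hard step remains.
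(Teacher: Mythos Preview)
Your proposal is correct and follows essentially the same approach as the paper, which dispatches the result in a single line: ``Straightforward structural induction on $e$.'' Your elaboration of the auxiliary compositionality fact and the freshness bookkeeping is exactly the kind of detail that makes the one-line proof go through, and nothing in your outline is mistaken or missing.
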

\begin{proof}
    Straightorward structural induction on $e$.
\end{proof}

\begin{proposition}\label{eqcan_sound_prop}
    If $\kappa\eqcan\kappa'$, then $\eras{\kappa}\aequiv\eras{\kappa'}$.
\end{proposition}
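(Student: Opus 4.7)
The plan is to proceed by induction on the derivation witnessing $\kappa \eqcan \kappa'$. Since $\eqcan$ is the smallest congruence closed under $\alpha$-conversion and the swap axiom of Definition~\ref{def:order}, the reflexivity, symmetry, and transitivity cases follow immediately from the fact that $\aequiv$ is itself an equivalence on source-language expressions.

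For the $\alpha$-conversion case, I would first establish, as a separate auxiliary lemma, that $\eras{\cdot}$ respects $\alpha$-equivalence. This is a straightforward structural induction on canonical forms and atoms using the definition in Appendix~\ref{app:unwind}: the only binding construct unwinding preserves is the one inside $\lambda$-abstractions (where the $\alpha$-equivalence is inherited from the bodies), and the capture-avoiding substitutions $\eras{\kappa_\circ}\subs{\xx}{\eras{a}}$ already respect $\aequiv$ by the standard substitution lemma. The congruence cases are equally direct: the structural definition of $\eras{\cdot}$ together with the fact that $\aequiv$ on source expressions is itself a congruence that commutes with capture-avoiding substitution lets us lift $\alpha$-equivalence of subterms to $\alpha$-equivalence of enclosing terms.

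The heart of the proof is the swap axiom. Unwinding the two sides of
\[
\bindexp{\xx_1}{a_1}{\bindexp{\xx_2}{a_2}{\kappa_\circ}}\ \eqcan\ \bindexp{\xx_2}{a_2}{\bindexp{\xx_1}{a_1}{\kappa_\circ}}
\]
yields, respectively, $(\eras{\kappa_\circ}\subs{\xx_2}{\eras{a_2}})\subs{\xx_1}{\eras{a_1}}$ and $(\eras{\kappa_\circ}\subs{\xx_1}{\eras{a_1}})\subs{\xx_2}{\eras{a_2}}$. These are $\alpha$-equivalent by the standard substitution commutation lemma, whose side conditions reduce to $\xx_1 \neq \xx_2$ (immediate, as the bindings are distinct), $\xx_1 \notin \fv{\eras{a_2}}$, and $\xx_2 \notin \fv{\eras{a_1}}$.

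The main obstacle, and the step requiring the most care, will be discharging these two freshness conditions: the axiom only gives us $\xx_1 \notin \fv{a_2}$ and $\xx_2 \notin \fv{a_1}$, so I need a small preliminary lemma showing $\fv{\eras{a}} = \fv{a}$ (and similarly for canonical forms, restricted to their free binding variables). I expect to prove this by an inner structural induction on atoms and canonical forms: at each binding $\bindexp{\yy}{a'}{\kappa_\circ}$, unwinding substitutes $\eras{a'}$ for $\yy$, which by induction has the same free variables as $a'$, so the free-variable set of the result is $(\fv{\eras{\kappa_\circ}}\setminus\{\yy\}) \cup \fv{\eras{a'}} = (\fv{\kappa_\circ}\setminus\{\yy\}) \cup \fv{a'} = \fv{\bindexp{\yy}{a'}{\kappa_\circ}}$. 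Once this is in hand the swap case closes cleanly and the induction goes through.
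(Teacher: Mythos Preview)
Your approach is essentially the same as the paper's: handle the swap axiom via the substitution-commutation lemma, then lift to arbitrary terms by congruence (the paper phrases this last step as ``$\eras{\kappa_1}\aequiv\eras{\kappa_2} \Rightarrow \eras{\ct[\kappa_1]}\aequiv\eras{\ct[\kappa_2]}$ for any context $\ct$'', which is exactly your congruence cases). Your write-up is more explicit than the paper's two-line sketch, but the content is the same.

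One small technical correction: the auxiliary lemma you propose, $\fv{\eras{a}} = \fv{a}$, is false for general canonical forms. If $\yy\notin\fv{\kappa_\circ}$ (a useless binding, which is allowed outside of MSC-forms), then $\fv{\eras{\bindexp{\yy}{a'}{\kappa_\circ}}} = \fv{\eras{\kappa_\circ}}$, which need not contain $\fv{\eras{a'}}$; your inductive step silently assumes $\yy\in\fv{\eras{\kappa_\circ}}$. What you actually need, and what the same induction does prove, is the inclusion $\fv{\eras{a}}\subseteq\fv{a}$ (and likewise for forms). That inclusion is exactly enough to discharge the side conditions $\xx_1\notin\fv{\eras{a_2}}$ and $\xx_2\notin\fv{\eras{a_1}}$ from the axiom's hypotheses $\xx_1\notin\fv{a_2}$ and $\xx_2\notin\fv{a_1}$, so the swap case still closes.
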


\begin{proof}
    If a reordering, as defined in Definition~\ref{def:order}, applies at top-level on the expression
    $\bindexp{\xx_1}{a_1}{\bindexp{\xx_2}{a_2}{\kappa}}$, the unwinding remains unchanged:
    as $\xx_1\not\in\fv{a_2}$ and $x_2\not\in\fv{a_1}$, we have
    $\kappa\subs{\xx_1}{a_1}\subs{\xx_2}{a_2} \equiv \kappa\subs{\xx_2}{a_2}\subs{\xx_1}{a_1}$.

    The general case can easily be deduced with the observation that
    $\forall \ct,\kappa_1,\kappa_2.\ \eras{\kappa_1}\aequiv\eras{\kappa_2} \Rightarrow \eras{\ct[\kappa_1]}\aequiv\eras{\ct[\kappa_2]}$
    (with $\ct$ denoting an arbitrary context).
\end{proof}

\begin{proposition}[Equivalence of MSC-forms]\label{prop:mscfequiv}
    If $\kappa_1$ and $\kappa_2$ are two MSC-forms and $\eras {\kappa_1}\aequiv\eras{\kappa_2}$, then $\kappa_1\eqcan\kappa_2$.
\end{proposition}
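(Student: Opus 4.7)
The plan is to prove Proposition~\ref{prop:mscfequiv} by strong induction on the size of $e = \eras{\kappa_1}$ (which, by hypothesis, is $\alpha$-equivalent to $\eras{\kappa_2}$), using each of the three defining properties of MSC-forms to pin down a canonical ``outermost'' binding in both $\kappa_1$ and $\kappa_2$. The base cases are easy: when $e$ is a constant $c$ or a $\lambda$-variable $x$, property~(3) of Definition~\ref{def:maximalsharing} (no useless binds) forces each $\kappa_i$ to be of the form $\bindexp{\xx}{a}{\xx}$ with $\eras{a} = e$, which is unique up to $\alpha$-renaming of $\xx$ and hence up to $\eqcan$.

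For the inductive step I would first observe that the bindings of an MSC-form $\kappa$ form a partial order under the ``uses'' relation (namely, $\bindexp{\xx}{a}{\ldots}$ uses $\bindexp{\yy}{b}{\ldots}$ when $\yy \in \fv a$), and that the swap rule of Definition~\ref{def:order} allows any linearization of this order. I would therefore select in $\kappa_1$ a binding $\bindexp{\xx_1}{a_1}{\kappa_1'}$ minimal in this order---i.e., one whose atom $a_1$ contains no binding variable of $\kappa_1'$---and, by repeated application of the swap, bring it to the outermost position. By property~(3), $\xx_1$ occurs in $\kappa_1'$, so $\eras{a_1}$ occurs (possibly multiple times) in $e$. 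I would then locate the matching binding in $\kappa_2$: since $\eras{\kappa_2} \aequiv e$, the sub-expression $\eras{a_1}$ appears in $\eras{\kappa_2}$; the maximal sharing property~(1) applied to $\kappa_2$ forces all these occurrences to originate from a single binding $\bindexp{\xx_2}{a_2}{\kappa_2''}$ with $a_2 \eqcan a_1$ (modulo renaming $\xx_2$ into $\xx_1$). The extrusion property~(2) applied to $\kappa_2$ ensures that this binding is placed at the highest admissible scope, which matches $a_1$'s (empty) set of binding-variable dependencies and its set of $\lambda$-variable dependencies---hence it is also top-level and minimal for $\kappa_2$'s ``uses'' order. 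A second swap sequence brings it to the outermost position of $\kappa_2$. Both MSC-forms now begin with the same prefix $\bindexp{\xx_1}{a_1}{\,\cdot\,}$; stripping this prefix gives two MSC-forms whose unwindings are $\alpha$-equivalent and strictly smaller (one atom's worth less), and the induction hypothesis concludes. The sub-case $a_1 = \lambda x.\kappa_1^\lambda$ is handled by invoking the induction hypothesis internally on $(\kappa_1^\lambda, \kappa_2^\lambda)$ after noting that property~(2) forbids $\kappa_i^\lambda$ from containing bindings extrudable past the $\lambda$, so the residual forms are again genuine MSC-forms.

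The main obstacle is the step that correlates occurrences of $\eras{a_1}$ inside $e$ with actual bindings of $\kappa_2$: a priori, an occurrence of $\eras{a_1}$ could arise ``accidentally'' as a subterm of $\eras{a}$ for some other atom $a$ of $\kappa_2$, rather than from a binding. I would rule this out by a decomposition lemma stating that the sub-expression positions of $\eras{\kappa}$ are faithfully tracked by the pair (binding structure of $\kappa$, substitutions performed by $\eras{\cdot}$); combined with the minimality of $a_1$ (which has no free binding variables) this forces any occurrence of $\eras{a_1}$ in $\eras{\kappa_2}$ to come from an actual binding of $\kappa_2$---if not, the would-be host binding of $\kappa_2$ would contain $\eras{a_1}$ as a subterm of its own atom, and applying maximal sharing in reverse would contradict either property~(1) for $\kappa_2$ (two distinct atoms sharing a common sub-atom not factored out) or for $\kappa_1$ once the roles are swapped. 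A second subtle point is the interplay between $\alpha$-conversion of $\xx_2$ into $\xx_1$ and the subsequent induction step, but this is benign since $\eqcan$ is already closed under $\alpha$-conversion by definition.
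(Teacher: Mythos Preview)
Your outline goes in the right direction but has two concrete gaps, and the paper's proof avoids both by proceeding in the \emph{opposite} direction.

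\textbf{Induction measure.} ``Strong induction on the size of $e=\eras{\kappa_1}$'' does not terminate as stated: when the minimal atom $a_1$ is a constant or a $\lambda$-variable, $\eras{a_1}$ has size~$1$, so replacing its occurrences by $\xx_1$ leaves the size of the unwinding unchanged. You need a measure that actually decreases, e.g.\ the total number of atoms/bindings (which is exactly what the paper uses).

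\textbf{Matching by unwinding.} You invoke property~(1) to conclude that ``all occurrences of $\eras{a_1}$ in $e$ originate from a single binding of $\kappa_2$'', but property~(1) says only that distinct bindings have distinct atoms; it says nothing about \emph{occurrences} in the unwinding. What is really needed is the conjunction of (i)~the shallow grammar of atoms (so every non-trivial subterm of $e$ is headed by some binding), (ii)~property~(2) to force that binding up to top level, and (iii)~property~(1) for uniqueness. Your ``decomposition lemma'' sketch does not spell this out, and your contradiction argument (``two distinct atoms sharing a common sub-atom not factored out'') is not a violation of property~(1). The same missing fact is needed a second time when you claim $\eras{\kappa_1'}\aequiv\eras{\kappa_2'}$ after stripping: from $\eras{\kappa_1'}\subs{\xx_1}{\eras{a_1}}\aequiv\eras{\kappa_2'}\subs{\xx_1}{\eras{a_1}}$ you can only cancel the substitution if you already know that $\eras{a_1}$ does not occur in either residual except at the $\xx_1$-positions---which is precisely the ``maximal capture'' statement you have not proved.

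\textbf{How the paper sidesteps this.} The paper generalizes the statement to pairs of binding contexts $(\benv_1,\benv_2)$ sharing the \emph{same} tail expression $e$, and peels from the \emph{innermost} end. The last binding $\be{\xx}{a_1}$ of $\benv_1$ has $\xx\in\fv e$ by property~(3); since $e$ is literally shared, $\benv_2$ must also bind $\xx$, so the match is by \emph{variable name} rather than by unwinding, and the obstacle above disappears. The remaining work---showing the two atoms bound to $\xx$ agree---is handled by a case analysis on the atom shape; the non-trivial case (two binding-variable positions that might or might not coincide) is discharged by applying the induction hypothesis to suitably restricted binding contexts, which is where property~(1) is actually used. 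The induction then recurses on $\benv_1'$, $\benv_2'$ and the new shared tail $e\subs{\xx}{a}$.
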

\begin{proof}
    We will show that $\kappa_2$ can be transformed into $\kappa_1$ just with $\alpha$-renaming and reordering of independent
    bindings (as specified in the definition of $\eqcan$).

    In this proof, we represent \textit{partially unwinded canonical forms}
    by a pair $(\benv, e)$, where $\benv$ is a binding context
    and $e$ an expression. With this representation, the unwinding of $(\benv,e)$ is
    $\bsubs{e}{\benv}$, but for clarity we can also use the notation $\eras{(\benv,e)}$.

    Let $(\benv_1,\xx_1)$ be the representation of $\kappa_1$,
    with $\benv_1$ representing its top-level definitions and $\xx_1$ its final binding variable,
    and $(\benv_2,\xx_2)$ be the representation of $\kappa_2$.
    Formally, we have $\term(\benv_1,\xx_1)\equiv\kappa_1$ and $\term(\benv_2,\xx_2)\equiv\kappa_2$.

    As $\eras {\kappa_1}\aequiv\eras{\kappa_2}$, we have $\eras {(\benv_1,\xx_1)}\aequiv\eras{(\benv_2,\xx_2)}$.
    By $\alpha$-renaming, we can assume that $\xx_1=\xx_2=\xx$ and $\eras {(\benv_1,\xx)}\equiv\eras{(\benv_2,\xx)}$.

    Now, let's prove the property below, from which Proposition~\ref{prop:mscfequiv} can be deduced.
    \textit{
        Let $\benv_1$ and $\benv_2$ two binding contexts, and $e$ an expression such that:
        \begin{itemize}
            \item $\eras{(\benv_1,e)}\equiv\eras{(\benv_2,e)}$
            \item The body of $\lambda$-abstractions in $\benv_1$ and $\benv_2$ are in MSC-form (Definition~\ref{def:maximalsharing})
            \item Both $\benv_1$ and $\benv_2$ satisfy the following properties
            (corresponding to the MSC-form properties applied to the top-level definitions), written here for
            a binding context $\benv$:
            \begin{enumerate}
            \item if $\be{\xx_1}{a_1}$ and $\be{\xx_2}{a_2}$ are distinct
                definitions in $\benv$, then $a_1 \not\eqcan a_2$
            \item for any definition $\be \xx {\lambda z. \kappa}$ in $\benv$, any binding $\bindexp{\yy}{a}{\kappa'}$
                in $\kappa$ is such that $\fv a \not\subseteq\fv{\lambda z. \kappa}$
            \item if $\benv$ contains a definition $\be\xx a$, then $\xx$ is a free variable of
                one of the next definitions in $\benv$ or of $e$
            \end{enumerate}
        \end{itemize}
        Then, we can transform $\benv_2$ into $\benv_1$ just with $\alpha$-renaming, reordering of independent definitions,
        and replacement of an atom by a $\eqcan$-equivalent one.    
    }

    We prove this property by induction on the total number of atoms appearing in $\benv_1$
    (by counting top-level atoms as well as the sub-atoms they contain).

    The base case ($\benv_1=\bempty$) is trivial:
    as $\eras{(\benv_2,e)}\equiv\eras{(\benv_1,e)}\equiv\eras{(\bempty,e)}\equiv e$,
    we deduce with Property 3 of MSC-forms that $\benv_2=\bempty$.

    For the inductive case, let's consider $\benv_1 = \benv_1' \,;\, \be\xx {a_1}$.

    With property 3 of MSC-forms, we know that $\xx$ appears in $e$.
    As $\eras {(\benv_1,e)}\equiv\eras {(\benv_2,e)}$, we can deduce that $\eras{(\benv_1,\xx)}\equiv\eras{(\benv_2,\xx)}$.
    Thus, $\benv_2$ must also feature a definition for $\xx$, let's call $a_2$ the associated atom.
    We move in $\benv_2$ the definition $\be\xx {a_2}$ at the end (if not already),
    it gives $\benv_2 = \benv_2' \,;\, \be\xx {a_2}$.
    We then have $\eras{(\benv_1', a_1)}\equiv\eras{(\benv_2', a_2)}$.
    As every kind of atom introduces a different syntactic construction, we can deduce that
    $a_1$ and $a_2$ are atoms of the same kind.
    \begin{itemize}
        \item If $a_1$ and $a_2$ are atoms that are not $\lambda$-abstractions and that do not contain any binding variable
        (constants, lambda variables), we directly have $a_1 \equiv a_2$.
        \item If $a_1$ and $a_2$ are atoms that are not $\lambda$-abstractions and that contain only one binding variable
        (projections), we can $\alpha$-rename binding variables in $\benv_2$ so that $a_1 \equiv a_2$.
        \item If $a_1$ and $a_2$ are atoms that are not $\lambda$-abstractions and that contain two binding variables
        $\xx$ and $\yy$ (applications, pairs), we consider two cases:
        \begin{itemize}
            \item If $\eras{(\benv_1', \xx)}\aequiv \eras{(\benv_1', \yy)}$,
            let's show that we necessarily have $\xx=\yy$.
            We consider the binding context $\benv_\xx$, which is a cleaned version of $\benv_1'$ where all the definitions
            that are not related (directly or indirectly) to $\xx_1$ have been removed.
            Similarly, we consider the binding context $\benv_\yy$ where the definitions unrelated to $\yy$ have been removed.
            Then, we apply the induction hypothesis to the binding contexts $\benv_\xx$, $\benv_\yy\subs{\yy}{\xx}$ and the expression $\xx$.
            This tells us that $\benv_\xx$ and $\benv_\yy$ are equivalent modulo reordering of the definitions, $\alpha$-renaming and
            $\eqcan$-transformation of atoms. Thus, according to the property 1 of MSC-forms, $\xx$ and $\yy$ cannot have two distinct definitions in
            $\benv_1'$, and thus $\xx=\yy$. The same reasoning can be done for $a_2$, and thus we deduce
            that both $a_1$ and $a_2$ contain the same binding variable twice.
            Thus, we can $\alpha$-rename binding variables in $\benv_2$ so that $a_1 \equiv a_2$.
            \item Otherwise, $\xx$ and $\yy$ must be different, and the same applies to $a_2$.
            Thus, we can $\alpha$-rename binding variables in $\benv_2$ so that $a_1 \equiv a_2$.
        \end{itemize}
        \item If $a_1$ and $a_2$ are typecases (containing 3 binding variables), we proceed similarly to the
        previous case to obtain $a_1 \equiv a_2$.
        \item In the case where $a_1$ and $a_2$ are $\lambda$-abstractions, let's say $\lambda x.\ \kappa_1$
        and $\lambda x.\ \kappa_2$, we note $(\benv_{\xx_1}, \xx_1)$ and
        $(\benv_{\xx_2}, \xx_2)$ the representations of $\kappa_1$ and $\kappa_2$ respectively.
        We now consider the representation $(\benv_1';\benv_{\xx_1}, \lambda x.\ \xx_1)$ and remove from it all the unused
        definitions (i.e. not related to $\xx_1$), it gives us a new representation $(\benv_1'', \lambda x.\ \xx_1)$ .
        We do the same for $(\benv_2';\benv_{\xx_2}, \lambda x.\ \xx_2)$, it gives a new representation $(\benv_2'', \lambda x.\ \xx_2)$.
        Then, we use the induction hypothesis on $\benv_1''$, $\benv_2''\subs{\xx_2}{\xx_1}$ and $\lambda x.\ \xx_1$.
        It gives us that $\benv_1''$ and $\benv_2''$ are equivalent modulo reordering of the definitions, $\alpha$-renaming and
        $\eqcan$-transformation of the atoms. By using the property 2 of MSC-forms, we can deduce that $\benv_{\xx_1}$ and $\benv_{\xx_2}$
        are equivalent modulo reordering of the definitions, $\alpha$-renaming and
        $\eqcan$-transformation of the atoms.
        Thus, we can $\alpha$-rename $\benv_2$ and $\eqcan$-transform some of its atoms so that $\benv_{\xx_1}\equiv \benv_{\xx_2}$,
        and thus $a_1\equiv a_2$.
    \end{itemize}

    In any case, we get $a_1\equiv a_2\equiv a$, thus the last definition of $\benv_1$ is the same as the last definiton of $\benv_2$.
    The same can be proven for the previous definitions by using the induction hypothesis on
    $\benv_1'$, $\benv_2'$ and $e\subs{\xx}{a}$.
\end{proof}
  
\begin{proposition}\label{msfr_sound_prop}
    If $\kappa\msred\kappa'$, then $\eras{\kappa}\aequiv\eras{\kappa'}$.
\end{proposition}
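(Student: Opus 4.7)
The plan is to proceed by case analysis on which of the four rewriting rules \eqref{sharing}, \eqref{useless}, \eqref{extrude}, \eqref{modeqcan} is applied at the root of the reduction $\kappa \msred \kappa'$, and then lift the argument to arbitrary contexts via a congruence property of the unwinding operator. Proposition~\ref{eqcan_sound_prop} will take care of rule \eqref{modeqcan} immediately: if $\kappa_1 \eqcan \kappa_1' \msred \kappa_2$, then $\eras{\kappa_1} \aequiv \eras{\kappa_1'}$, and the other three cases (assumed proved) give $\eras{\kappa_1'} \aequiv \eras{\kappa_2}$.

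For the three substantive cases, the argument is a direct computation with the inductive definition of $\eras{.}$. For rule \eqref{sharing}, unfolding the two sides yields $(\eras{\kappa}\subs{\xx_2}{\eras{a_2}})\subs{\xx_1}{\eras{a_1}}$ on the left and $(\eras{\kappa}\subs{\xx_2}{\xx_1})\subs{\xx_1}{\eras{a_1}}$ on the right; since $a_1 \eqcan a_2$ implies $\eras{a_1}\aequiv\eras{a_2}$ by Proposition~\ref{eqcan_sound_prop} (restricted to atoms) and $\xx_1 \notin \fv{\eras{a_1}}$ after suitable $\alpha$-renaming, the two composed substitutions agree up to $\alpha$-equivalence. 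For rule \eqref{useless}, the side-condition $\xx \notin \fv{\kappa}$ gives $\xx \notin \fv{\eras{\kappa}}$, so $\eras{\kappa}\subs{\xx}{\eras{a}} = \eras{\kappa}$. For rule \eqref{extrude}, both sides unwind to a term of the form $\eras{\kappa}\subs{\xx}{\lambda y.(\eras{\kappa_\circ}\subs{\zz}{\eras{a}})}$; the commutation of the two outer substitutions on the right-hand side uses precisely the side-conditions $y \notin \fv{a}$ (so that no capture occurs when pushing the $\subs{\zz}{\eras{a}}$ under $\lambda y$) and $\zz \notin \fv{\kappa}$ (so that $\subs{\zz}{\eras{a}}$ is vacuous on $\eras{\kappa}$).

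To extend from top-level reductions to reductions under an arbitrary context $\ct$, I will prove separately a congruence lemma: if $\eras{\kappa_1} \aequiv \eras{\kappa_2}$ then $\eras{\ct[\kappa_1]} \aequiv \eras{\ct[\kappa_2]}$ for every context $\ct$ over canonical forms and atoms. This is a routine structural induction on $\ct$: inside a $\lambda$-abstraction it is preserved because $\aequiv$ is a congruence on source expressions, and inside a binding $\bindexp{\xx}{\_}{\_}$ it is preserved because unwinding only composes the inner unwindings by capture-avoiding substitution, which respects $\aequiv$. Combining the congruence lemma with the three top-level cases and the $\eqcan$ case yields the proposition.

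The main obstacle I anticipate is bookkeeping around freshness and $\alpha$-equivalence in cases \eqref{sharing} and \eqref{extrude}. In particular, in case \eqref{sharing} one has to be careful that the clause rewriting $a_2$ into $\xx_1$ inside $\kappa$ really matches, after unwinding, the substitution of $\eras{a_2}$; this is why we need the congruence of substitution with respect to $\aequiv$, not merely syntactic equality. In case \eqref{extrude}, the subtle point is that unwinding does not push naively through $\lambda y$: one must use the side-condition $y \notin \fv{a}$ to justify that the inner substitution $\subs{\zz}{\eras{a}}$ commutes with binding $y$, which in turn forces the whole argument to be stated up to $\aequiv$ rather than up to syntactic equality.
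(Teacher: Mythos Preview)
Your proposal is correct and follows essentially the same approach as the paper, which simply notes that the proof is ``Similar to Proposition~\ref{eqcan_sound_prop}'': verify that each rewriting rule preserves unwinding at top level, then lift to arbitrary contexts via the congruence observation $\eras{\kappa_1}\aequiv\eras{\kappa_2} \Rightarrow \eras{\ct[\kappa_1]}\aequiv\eras{\ct[\kappa_2]}$. Your write-up is in fact more detailed than the paper's, and the freshness and $\alpha$-equivalence bookkeeping you flag in cases~\eqref{sharing} and~\eqref{extrude} is exactly the routine work the paper leaves implicit.
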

\begin{proof}
    Similar to Proposition~\ref{eqcan_sound_prop}.
\end{proof}

\begin{proposition}[Normalisation]\label{msfr_norm_prop}
    There is no infinite chain $\kappa_1\msred\kappa_2\msred\cdots$
\end{proposition}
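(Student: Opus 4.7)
The plan is to exhibit a strictly decreasing termination measure into a well-founded order. A natural candidate is the lexicographic pair
\[
\mu(\kappa) \;=\; \bigl(\,\#\mathrm{Bind}(\kappa),\; \mathrm{LDepth}(\kappa)\,\bigr)
\]
where $\#\mathrm{Bind}(\kappa)$ counts the total number of bind nodes occurring in $\kappa$ (including those under $\lambda$-abstractions), and $\mathrm{LDepth}(\kappa)$ is the sum, over every bind sub-expression $\bindexp{\yy}{a}{\kappa'}$ of $\kappa$, of the number of $\lambda$-abstractions it sits under in $\kappa$. This pair lives in $\mathbb{N}\times\mathbb{N}$ with the lexicographic order, which is well-founded, so it suffices to verify that every step $\kappa\msred\kappa'$ satisfies $\mu(\kappa')<\mu(\kappa)$.

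First I would observe that $\eqcan$ preserves $\mu$. Indeed, $\alpha$-conversion clearly does, and the swap $\bindexp{\xx_1}{a_1}{\bindexp{\xx_2}{a_2}{\kappa}} \eqcan \bindexp{\xx_2}{a_2}{\bindexp{\xx_1}{a_1}{\kappa}}$ (when $\xx_1\notin\fv{a_2}$ and $\xx_2\notin\fv{a_1}$) exchanges two adjacent binds that are both at the same $\lambda$-depth, leaving both components unchanged. Since congruence on canonical forms never crosses a $\lambda$, this extends by induction to arbitrary $\eqcan$-steps. Hence rule~\eqref{modeqcan} reduces the termination question for $\msred$ on the quotient by $\eqcan$ to checking rules~\eqref{sharing}--\eqref{extrude} on representatives.

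Next I would check each productive rule separately. Rule~\eqref{sharing} replaces two binds whose atoms are $\eqcan$-equivalent by a single bind and a substitution on $\kappa$; since substituting a binding variable for another binding variable in $\kappa$ neither creates nor removes binds, $\#\mathrm{Bind}$ strictly decreases by one, which dominates in the lexicographic order. Rule~\eqref{useless} directly removes one bind, so again $\#\mathrm{Bind}$ drops by one. Rule~\eqref{extrude} is the delicate one: it transforms $\bindexp{\xx}{\lambda y.(\bindexp{\zz}{a}{\kappa_\circ})}{\kappa}$ into $\bindexp{\zz}{a}{\bindexp{\xx}{\lambda y.\kappa_\circ}{\kappa}}$, preserving the number of binds but reducing the $\lambda$-depth of the extruded bind $\be{\zz}{a}$ by exactly one, while every other bind (those in $\kappa_\circ$, those in $\kappa$, and the bind on $\xx$ itself) is left at the same $\lambda$-depth. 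So $\#\mathrm{Bind}$ is unchanged and $\mathrm{LDepth}$ strictly decreases; once more $\mu$ drops. Closure under arbitrary evaluation contexts is harmless since both components are defined as sums over all bind sub-expressions and contributions outside the rewritten redex are identical on the two sides.

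The only genuine subtlety I anticipate is justifying rigorously that extrusion does not inadvertently raise the $\lambda$-depth of some other bind through capture-avoidance: the substitution in rule~\eqref{sharing} could in principle trigger an $\alpha$-rename, and one must argue that such renamings never move a bind under a new $\lambda$. This follows because the substitution only replaces one binding variable by another binding variable, and binding variables themselves are not bound by $\lambda$-abstractions, so the canonical form's lambda skeleton is untouched. With this checked, lexicographic well-foundedness of $\mathbb{N}\times\mathbb{N}$ rules out any infinite chain, which is the desired conclusion.
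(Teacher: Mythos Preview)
Your argument is correct and uses a genuinely different termination measure from the paper. The paper does not use a pair; it fixes $n$ as the maximal $\lambda$-nesting depth occurring in $\kappa_1$, defines $N_\kappa(i)$ to be the number of bindings at depth exactly $i$, and takes the $(n{+}1)$-tuple $S(\kappa)=(N_\kappa(n),N_\kappa(n{-}1),\dots,N_\kappa(0))$ with the lexicographic order. Rules~\eqref{sharing} and~\eqref{useless} decrease some $N_\kappa(d)$, and rule~\eqref{extrude} decreases $N_\kappa(d)$ at some $d\geq 1$ while possibly increasing only components of strictly lower index, so the tuple drops lexicographically. Your measure $(\#\mathrm{Bind},\mathrm{LDepth})$ is coarser but sufficient: it collapses the histogram into its total and its first moment, and it works because rule~\eqref{extrude} never increases the bind count. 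The paper's measure would still terminate a hypothetical rule that extrudes one bind at the cost of creating several new ones at strictly lower depth; yours would not. Since no such rule is present, your simpler pair is adequate and arguably cleaner.

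Two small imprecisions, neither fatal. First, in your treatment of rule~\eqref{extrude} you list the binds whose depth is unchanged but omit the binds occurring \emph{inside} $a$ (recall that an atom may itself be a $\lambda$-abstraction containing binds). Those binds also drop in $\lambda$-depth by one when $a$ is lifted out of $\lambda y$, so $\mathrm{LDepth}$ decreases by at least one and possibly more; your conclusion stands. Second, the remark that ``congruence on canonical forms never crosses a $\lambda$'' is not quite right: $\eqcan$ is a congruence and may be applied under any context, including under $\lambda$. What you need, and what is true, is that the generating swap preserves both components of $\mu$ at whatever depth it is applied, so the whole congruence preserves $\mu$.
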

\begin{proof}
    Let $n$ be the maximal number of nested $\lambda$-abstractions in $\kappa_1$.
    We call depth of a binding the number of nested $\lambda$-abstractions it is into (the depth of
    a binding of $\kappa_1$ is at most $n$).

    Let $N_{\kappa}(i)$ be the number of bindings of depth $i$ in an expression $\kappa$.
    Let $S(\kappa)$ be the following n-uplet: $(N_{\kappa}(n), N_{\kappa}(n-1), \dots, N_{\kappa}(0))$.

    For any chain $\kappa_1\msred\kappa_2\msred\cdots$,
    the sequence $S(\kappa_1)$, $S(\kappa_2)$, $\dots$ is strictly decreasing with respects to the lexicographic order.
    Thus $\kappa_1\msred\kappa_2\msred\cdots$ cannot be infinite.
\end{proof}

\begin{proposition}\label{msfr_sound2_prop}
    If $\kappa\not\msred$ (i.e., no $\msred$ rule apply on $\kappa$), then $\kappa$ is an MSC-form.
\end{proposition}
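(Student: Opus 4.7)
The plan is to prove the contrapositive: if $\kappa$ fails some clause of Definition~\ref{def:maximalsharing}, then some $\msred$ rule applies. Since $\msred$ is closed under arbitrary contexts, it suffices to locate an outermost witness of the failed clause and exhibit one applicable rule, possibly modulo an $\eqcan$-reordering via~\eqref{modeqcan}. I would dispatch the three conditions of Definition~\ref{def:maximalsharing} in the order 3, 2, 1.

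If condition~3 fails, some sub-expression $\bindexp{\xx}{a}{\kappa'}$ of $\kappa$ satisfies $\xx\notin\fv{\kappa'}$, and rule~\eqref{useless} fires immediately at that position.

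If condition~3 holds but condition~2 fails, pick a witness $(\lambda x.\kappa_1,\bindexp{\yy}{a}{\kappa_2})$ with the witnessing binding sitting in the top-level binding chain of $\kappa_1$ (not nested inside any further lambda of $\kappa_1$). Writing that chain as $\bindexp{\xx_1}{a_1}{\cdots\bindexp{\xx_k}{a_k}{\bindexp{\yy}{a}{\kappa_2}}}$, the hypothesis $\fv{a}\subseteq\fv{\lambda x.\kappa_1}$ forbids $a$ from containing $x$ or any of $\xx_1,\dots,\xx_k$, since those are all bound within $\lambda x.\kappa_1$. After $\alpha$-renaming $\yy$ away from every $a_i$, the swap clause of Definition~\ref{def:order} commutes $\be{\yy}{a}$ past each $\be{\xx_i}{a_i}$, bringing it to the head of $\kappa_1$; rule~\eqref{extrude}, combined with~\eqref{modeqcan}, then fires.

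If conditions~2 and~3 hold but condition~1 fails, pick two distinct sub-expressions $\bindexp{\xx_1}{a_1}{\kappa_1'}$ and $\bindexp{\xx_2}{a_2}{\kappa_2'}$ of $\kappa$ with $a_1\eqcan a_2$. Working up to $\alpha$-conversion so that all bound variables are pairwise distinct, $\eqcan$-equivalence forces the two bindings to share the same enclosing lambda scope: otherwise condition~2 (which we may assume) would force the atom lying strictly deeper in some lambda to mention a lambda-bound variable absent from the other atom, contradicting the preservation of free variables under $\eqcan$. Hence the two bindings lie in a common chain $\bindexp{\xx_1}{a_1}{\bindexp{\zz_1}{b_1}{\cdots\bindexp{\zz_m}{b_m}{\bindexp{\xx_2}{a_2}{\kappa_0}}}}$. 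By scoping, no $\zz_i$ is free in $a_1$, and therefore none is free in $a_2$ either; after $\alpha$-renaming, $\xx_2\notin\fv{b_i}$ too, so $\be{\xx_2}{a_2}$ commutes up to become adjacent to $\be{\xx_1}{a_1}$, at which point rule~\eqref{sharing} applies via~\eqref{modeqcan}. The main obstacle is this last case --- in particular, the sub-claim that cross-scope $\eqcan$-equivalent atoms are impossible, and the commutation argument bringing equivalent bindings adjacent --- both of which rest on the preservation of free variables under $\eqcan$ and on the $\alpha$-renaming freedom that MSC-forms allow.
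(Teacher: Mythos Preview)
Your proof is correct and follows essentially the same approach as the paper's: argue the contrapositive, dispatch the three clauses of Definition~\ref{def:maximalsharing} in the order 3, 2, 1, and use rule~\eqref{modeqcan} to reorder bindings before firing \eqref{useless}, \eqref{extrude}, or \eqref{sharing}. Your treatment is in fact more explicit than the paper's on the commutation steps; the one small elision is that you assert without argument that a witness for the failure of condition~2 can be chosen with the binding at the top level of $\kappa_1$, whereas the paper instead passes to the innermost enclosing $\lambda$-abstraction---both routes work, and your choice is easily justified by observing that for such an innermost $\lambda z.\kappa_3$ one still has $\fv{a}\subseteq\fv{\lambda z.\kappa_3}$.
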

\begin{proof}
    We assume $\kappa\not\msred$ and show that all 3 MSC properties are satisfied.

    The property 3 (no unused bindings) is trivially verified: any binding that does not satisfy this property can directly be
    eliminated with the rule \ref{useless}. As the rule \ref{useless} does not apply, this property must be satisfied.

    Now, we focus on the property 2 (extrusion of bindings). We assume that there exists a sub-expression $\lambda x.\ \kappa_1$ of $\kappa$
    and a sub-expression $\bindexp{\yy}{a}{\kappa_2}$ of $\kappa_1$ such that $\fv a\subseteq\fv {\lambda x.\ \kappa_1}$.
    We know that $a$ does not depend on $x$, otherwise $x$ would be in $\fv a$ and not in $\fv{\lambda x.\ \kappa_1}$.
    Thus, we can reorder the binding $y$ (\ref{modeqcan})
    in the first position of its inner-most containing lambda-abstraction, and then apply the rule \ref{extrude} on it,
    which contradicts $\kappa\not\msred$. Consequently, the property 2 is satisfied.

    Finally, we show that the property 1 (sharing) is also satisfied.
    We assume that there are two distinct bindings $\bindexp{\xx_1}{a_1}{\dots}$ and $\bindexp{\xx_2}{a_2}{\dots}$
    such that $a_1\eqcan a_2$. As the property 2 is satisfied, and as $\fv{a_1}=\fv{a_2}$, we know that
    these two bindings are in the same $\lambda$-abstraction.
    Thus, we can reorder them (\ref{modeqcan}) to be the one next to the other so that the rule \ref{sharing} is applicable,
    which contradicts $\kappa\not\msred$. Thus, the property 1 is satisfied.
\end{proof}

\begin{proposition}[Confluence]\label{msfr_confl_prop}
    Let $\kappa_1$, $\kappa_2$ and $\kappa_2'$ such that
    $\kappa_1\msred^*\kappa_2$ and $\kappa_1\msred^*\kappa_2'$.
    Then, there exists $\kappa_3$ and $\kappa_3'$ such that
    $\kappa_2\msred^*\kappa_3$, $\kappa_2'\msred^*\kappa_3'$, and $\kappa_3\eqcan\kappa_3'$.
\end{proposition}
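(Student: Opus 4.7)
The plan is to leverage the results already established about MSC-forms rather than attempting a direct diamond-property argument on $\msred$. A direct analysis of critical pairs between rules \eqref{sharing}, \eqref{useless}, \eqref{extrude}, and \eqref{modeqcan} would be painful because rule \eqref{modeqcan} incorporates $\eqcan$-reordering into $\msred$, so overlaps proliferate; instead, I would use the earlier soundness, normalisation, and uniqueness results to reduce confluence to the uniqueness of MSC-forms.

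First, I would observe that $\msred$ preserves the unwinding of a canonical form: this is exactly Proposition~\ref{msfr_sound_prop}, applied inductively. Hence, from the hypotheses $\kappa_1\msred^*\kappa_2$ and $\kappa_1\msred^*\kappa_2'$, we obtain $\eras{\kappa_2}\aequiv\eras{\kappa_1}\aequiv\eras{\kappa_2'}$.

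Next, I would invoke normalisation (Proposition~\ref{msfr_norm_prop}) to obtain, starting from $\kappa_2$, a finite reduction sequence $\kappa_2\msred^*\kappa_3$ that cannot be extended; similarly $\kappa_2'\msred^*\kappa_3'$ with $\kappa_3'$ terminal. By Proposition~\ref{msfr_sound2_prop}, both $\kappa_3$ and $\kappa_3'$ are MSC-forms. Applying Proposition~\ref{msfr_sound_prop} once more along these two reductions, we get $\eras{\kappa_3}\aequiv\eras{\kappa_2}$ and $\eras{\kappa_3'}\aequiv\eras{\kappa_2'}$, so in total $\eras{\kappa_3}\aequiv\eras{\kappa_3'}$.

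Finally, I would conclude by Proposition~\ref{prop:mscfequiv} (uniqueness of MSC-forms): two MSC-forms with $\alpha$-equivalent unwindings are $\eqcan$-equivalent. Hence $\kappa_3\eqcan\kappa_3'$, which yields exactly the diagram required by the statement. The only non-routine ingredient is that we rely on Proposition~\ref{prop:mscfequiv}, whose proof is itself the main technical work; modulo that, confluence comes essentially for free, and no case analysis on critical pairs is needed. If one wanted a purely rewriting-theoretic proof without going through uniqueness, the main obstacle would be the interaction of rule \eqref{modeqcan} (reduction up to $\eqcan$) with the sharing rule \eqref{sharing}, where two overlapping redexes on the same atom modulo $\alpha$-conversion can create genuinely different intermediate terms that only reconverge after several further reordering steps; the route through unwinding neatly sidesteps that difficulty.
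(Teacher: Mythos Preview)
Your proof is correct and follows exactly the same approach as the paper: the paper's proof is a one-liner citing normalisation (Proposition~\ref{msfr_norm_prop}), preservation of unwinding (Proposition~\ref{msfr_sound_prop}), the fact that normal forms are MSC-forms (Proposition~\ref{msfr_sound2_prop}), and uniqueness of MSC-forms (Proposition~\ref{prop:mscfequiv}), which is precisely the chain of reasoning you spell out.
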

\begin{proof}
    Immediate consequence of Proposition~\ref{msfr_norm_prop} (normalisation), Proposition~\ref{msfr_sound_prop} (preservation of $\eras{.}$),
    Proposition~\ref{msfr_sound2_prop} ($\not\msred$ implies MSC-form), and Proposition~\ref{prop:mscfequiv} (equivalence of MSC-forms).
\end{proof}

\subsubsection{Soundness}

See Appendix~\ref{sec:algo-appendix} for the full algorithmic system,
without the rules for extensions.

\begin{lemma}[Monotonicity]\label{monotonicity_algo_lemma}\ \\
    If $\Gamma\vdashA\ea{\kappa}{\annot}:t$ and $\Gamma'\polyleq\Gamma$, then $\exists\annot',t'.\ \Gamma'\vdashA\ea{\kappa}{\annot'}:t'$ with $t'\polyleq t$.\\
    If $\Gamma\vdashA\ea{a}{\annota}:t$ and $\Gamma'\polyleq\Gamma$, then $\exists\annota',t'.\ \Gamma'\vdashA\ea{a}{\annota'}:t'$ with $t'\polyleq t$.
\end{lemma}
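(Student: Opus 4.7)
The plan is to prove both statements simultaneously by mutual structural induction on the derivations $\Gamma\vdashA\ea\kappa\annot:t$ and $\Gamma\vdashA\ea a\annota:t$, performing a case analysis on the last rule applied. For each case we construct a new annotation $\annot'$ (resp.\ $\annota'$) and a new result type $t'$ with $t'\polyleq t$.

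For the axiomatic cases (\Rule{Const\AR}, \Rule{Ax\AR}, \Rule{Var\AR}): since $\Gamma'\polyleq\Gamma$ ensures that every variable of $\Gamma$ is also in $\Gamma'$, we can directly reapply the axiom; for \Rule{Var\AR} the renaming $\renaming$ may need to be composed with a fresh renaming to avoid clashing with variables now polymorphic in $\Gamma'(\xx)$. The purely structural cases \Rule{$\land$\AR} (both atom and form versions) follow by induction on each premise combined with Proposition~\ref{inter_polyleq_prop}, and \Rule{Bind$_1$\AR} is immediate. For \Rule{$\to$I\AR} we reuse the same $\mt$: since $\mt$ is monomorphic and polymorphic substitutions act trivially on it, the environment $\Gamma',x\!:\!\mt$ still satisfies $\Gamma',x\!:\!\mt\polyleq\Gamma,x\!:\!\mt$, and the induction hypothesis on the body yields $\mt\to t'\polyleq\mt\to t$ by pushing the witnessing $\Psubst$ under the arrow.

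The destructor rules (\Rule{$\to$E\AR}, \Rule{$\times$E$_i$\AR}, \Rule{$\in_1$\AR}, \Rule{$\in_2$\AR}, \Rule{$\Empty$\AR}, \Rule{$\times$I\AR}) are the technical heart. Given a witness $\Psubst_k$ for $\Gamma'(\xx_k)\polyleq\Gamma(\xx_k)$, we replace each instantiation $\Sigma_k$ of the original rule by $\Sigma_k'=\{\sigma\circ\psi\,\alt\,\sigma\in\Sigma_k,\psi\in\Psubst_k\}$; a short computation (using $(\bigwedge X_i)\sigma=\bigwedge(X_i\sigma)$) shows $\Gamma'(\xx_k)\Sigma_k'\leq\Gamma(\xx_k)\Sigma_k$. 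The side conditions are preserved by subtyping: in \Rule{$\to$E\AR}, $t_1'\leq t_1\leq\Empty\to\Any$ gives the first guard, while $\dom{}$ is antitone, so $\dom{t_1}\leq\dom{t_1'}$ and $t_2'\leq t_2\leq\dom{t_1}\leq\dom{t_1'}$ gives the second; similarly the (joint) monotonicity of the $\circ$ operator (monotone in the function, antitone in the argument, combining to monotone when both decrease) yields $t_1'\circ t_2'\leq t_1\circ t_2$, hence $\polyleq$. For \Rule{$\times$I\AR} it suffices to compose the renamings $\renaming_k$ with the witnesses $\Psubst_k$, using that pair types are covariant.

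The main obstacle is \Rule{Bind$_2$\AR}. From the induction hypothesis on the atom, $\Gamma'\vdashA\ea a{\annota'}:s'$ with $s'\polyleq s$. We then have $(s'\wedge\mt_i)\polyleq(s\wedge\mt_i)$ because the witness $\Psubst$ for $s'\polyleq s$ acts trivially on $\mt_i$, and hence the induction hypothesis applies to each body premise yielding $\Gamma',\xx\!:\!s'\wedge\mt_i\vdashA\ea\kappa{\annot_i'}:t_i'$ with $t_i'\polyleq t_i$. The partition condition $\bigvee\mt_i\simeq\Any$ is unchanged, so we can reassemble a \Rule{Bind$_2$\AR} derivation. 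The subtle step is concluding $\bigvee t_i'\polyleq\bigvee t_i$: Proposition~\ref{vee_polyleq_prop} requires the $t_i'$ to have pairwise disjoint polymorphic variables. We arrange this by post-composing each derivation obtained from the induction hypothesis with a fresh renaming $\renaming_i$ of the polymorphic variables introduced for the $i$-th branch; these renamings propagate into the annotations $\annot_i'$ by composing the various $\renaming$'s and $\Sigma$'s appearing at axioms and destructors (an operation analogous to Lemma~\ref{vartype_alpharenaming_lemma} but carried out on algorithmic annotations rather than on declarative derivations). After this renaming the hypotheses of Proposition~\ref{vee_polyleq_prop} are met and we obtain the required $\bigvee t_i'\polyleq\bigvee t_i$, completing the induction.
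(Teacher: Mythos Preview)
Your proof is correct and follows the same approach as the paper, which records only ``Straightforward structural induction on the derivation''; you have simply spelled out the case analysis that the paper leaves implicit. One small slip: the operator $\circ$ is monotone (not antitone) in its second argument---if $s'\leq s$ then $s\to u\leq s'\to u$ by contravariance, so $t\leq s\to u$ implies $t\leq s'\to u$ and hence $t\circ s'\leq t\circ s$---but your conclusion $t_1'\circ t_2'\leq t_1\circ t_2$ is unaffected.
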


\begin{proof}
    Straightforward structural induction on the derivation.
\end{proof}

\begin{theorem}[Soundness]\label{algo_soundness}
    If $\Gamma\vdashA \ea \kappa \annot:t$, then $\Gamma\vdasha \eras \kappa: t$.
    If $\Gamma\vdashA \ea a \annota:t$, then $\Gamma\vdasha \eras a: t$.
\end{theorem}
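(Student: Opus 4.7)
The plan is to prove both statements simultaneously by mutual structural induction on the algorithmic derivations. For each algorithmic rule we need to produce a corresponding declarative derivation of $\vdasha$ on the unwinded expression, making crucial use of the unwinding equations (in particular $\eras{\bindexp{\xx}{a}{\kappa}} = \eras\kappa\subs{\xx}{\eras a}$ and $\eras{\lambda x.\kappa} = \lambda x.\eras\kappa$). The declarative system used is the alternative formulation $\vdasha$ from Figure~\ref{fig:alt_declarative}, where \Rule{Ax$_\lambda$} and \Rule{Ax$_\vee$} already embed a renaming $\renaming$ and where $\vee$-rules come with arbitrary partitions of $\Any$; these conveniences match the shape of the algorithmic annotations and keep the translation tight.

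The structural and simple cases are immediate. For \Rule{Const\AR}, \Rule{Ax\AR}, \Rule{Var\AR} we apply the corresponding axiom (taking identity renamings where needed for \Rule{Ax\AR} and reading off the renaming $\renaming$ from the annotation for \Rule{Var\AR}). For \Rule{$\to$I\AR} the induction hypothesis gives $\Gamma,x{:}\mt \vdasha \eras\kappa : t$, so \Rule{$\to$I} yields $\Gamma \vdasha \lambda x.\eras\kappa : \mt\to t$. For \Rule{$\land$\AR} (both at atom and form level) we apply the induction hypothesis to each premise and conclude with the $n$-ary \Rule{$\wedge$}. For \Rule{$\times$I\AR} the induction hypothesis combined with an \Rule{Ax$_\vee$} with renamings $\renaming_1,\renaming_2$ plus \Rule{$\times$I} does the job.

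For the destructor rules \Rule{$\to$E\AR}, \Rule{$\times$E$_i$\AR}, \Rule{$\Empty$\AR}, \Rule{$\in_1$\AR}, \Rule{$\in_2$\AR}, the key is the \Rule{\InstLeq} pattern introduced in Section~\ref{norm_lemmas}: starting from an \Rule{Ax$_\vee$} on $\xx$ and applying \Rule{Inst} for each $\sigma\in\Sigma$ followed by \Rule{$\wedge$} produces $\Gamma(\xx)\Sigma$; a final \Rule{$\leq$} brings us below an arrow, product, or negated test type as required by the corresponding declarative rule. The side conditions of the algorithmic rules exactly match what is needed (e.g. $t_1\leq \Empty\to\Any$ and $t_2\leq\dom{t_1}$ for application), and the soundness of the type operators $\circ$, $\bpi_i$, $\dom{}$ recalled in Appendix~\ref{sec:typeop-appendix} guarantees $t_1\leq t_2\to(t_1\circ t_2)$ and analogous inequalities for projections.

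The main obstacle, as usual, lies in the binding rules, because they are the algorithmic incarnation of the $\vee$-rule. For \Rule{Bind$_1$\AR}, the side condition $\xx\notin\dom\Gamma$ forces $\xx$ not to occur free in $\kappa$ (any occurrence would have to be consumed by \Rule{Var\AR}, which requires $\xx\in\dom\Gamma'$, and $\xx$ can only enter the environment via \Rule{Bind$_2$\AR}); hence the substitution in $\eras\kappa\subs{\xx}{\eras a}$ is vacuous and the induction hypothesis directly transports. For \Rule{Bind$_2$\AR}, induction gives $\Gamma\vdasha \eras a : s$ and $\Gamma,\xx{:}s\wedge\mt_i \vdasha \eras\kappa : t_i$ for each $i\in I$; we subsume each $t_i$ to $\bigvee_{i\in I} t_i$ with \Rule{$\leq$} so that all body judgments share a common conclusion, and then apply the $n$-ary \Rule{$\vee$} rule using the partition $\{\mt_i\}_{i\in I}$ of $\Any$ to derive $\Gamma\vdasha \eras\kappa\subs{\xx}{\eras a} : \bigvee_{i\in I} t_i$, which is exactly $\Gamma\vdasha \eras{\bindexp{\xx}{a}{\kappa}} : \bigvee_{i\in I} t_i$. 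No nontrivial lemma beyond the standard monotonicity of $\leq$ and the existing $\vee$-rule is needed; the main care is just to route the annotations through the correct declarative combinators.
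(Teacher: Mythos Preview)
Your overall structure matches the paper's proof, but there is a genuine gap in the \Rule{Bind$_1$\AR} case. You claim that the side condition $\xx\notin\dom\Gamma$ forces $\xx$ not to occur free in $\kappa$, making the substitution $\eras\kappa\subs{\xx}{\eras a}$ vacuous. This is false: $\xx$ may well occur in $\kappa$, typically in an unreachable position such as the $\xx_2$ of a typecase atom $\tcase{\yy}{\tau}{\xx_1}{\xx_2}$ that is typed by \Rule{$\in_1$\AR} (which only consults $\Gamma(\yy)$ and $\Gamma(\xx_1)$, never $\Gamma(\xx_2)$). The paper's description of \Rule{Bind$_1$\AR} explicitly mentions this scenario. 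The correct argument, and the one the paper gives, is different: from $\xx\notin\dom\Gamma$ one concludes only that the declarative derivation obtained by induction for $\Gamma\vdasha\eras\kappa:t$ contains no \Rule{Ax$_\vee$} leaf on $\xx$; one can therefore replace every syntactic occurrence of $\xx$ in the \emph{expressions} of that derivation by $\eras a$, obtaining a valid derivation of $\Gamma\vdasha\eras\kappa\subs{\xx}{\eras a}:t$. The substitution is not vacuous, it is just harmless to the derivation structure.

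There is also a smaller omission in the \Rule{Bind$_2$\AR} case: the algorithmic side condition is only $\bigvee_{i\in I}\mt_i\simeq\Any$, whereas the $n$-ary \Rule{$\vee$} of Figure~\ref{fig:alt_declarative} requires $\{\mt_i\}_{i\in I}$ to be a \emph{partition} of $\Any$. You cannot apply \Rule{$\vee$} directly; the paper first disjoints the $\mt_i$ (replacing $\mt_j$ by $\mt_j\setminus\bigvee_{k<j}\mt_k$) and uses monotonicity (Lemma~\ref{monotonicity_algo_lemma}) to recover the body premises under the refined hypotheses.
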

\begin{proof}
We proceed by structural induction on the typing derivation of $\Gamma\vdashA \ea \kappa \annot:t$
(resp. $\Gamma\vdashA \ea a \annota:t$) in order to build a derivation
$\Gamma\vdasha \eras \kappa: t$ (resp. $\Gamma\vdasha \eras a: t$).
We consider the root of the derivation:
\begin{description}
    \item[\Rule{Const\AR}] Trivial (we use a \Rule{Const} node).
    \item[\Rule{Ax\AR}] Trivial (we use a \Rule{\Axl} node).
    \item[\Rule{$\to$I\AR}] We have $a\equiv\lambda x.\ \kappa$,
    and thus $\eras a\equiv\lambda x.\ \eras \kappa$.
    
    By induction on the premise, we get $\Gamma, x:\mt\vdasha \eras \kappa: s$.
    By applying the rule \Rule{$\to$I}, we get $\Gamma \vdasha \eras a: \mt\to s$
    (with $t\simeq\mt\to s$).

    \item[\Rule{$\to$E\AR}] We have $a\equiv\xx_1 \xx_2$.
    We pose $t_1\eqdef\Gamma(\xx_1)\Sigma_1$ and $t_2\eqdef\Gamma(\xx_2)\Sigma_2$.

    With an \Rule{\Axv} node, we can derive $\Gamma\vdasha \xx_1: \Gamma(\xx_1)$
    and $\Gamma\vdasha \xx_2: \Gamma(\xx_2)$.
    Using a \Rule{\InstLeq} pattern, we can derive from that
    $\Gamma\vdasha \xx_1: t_1$ and $\Gamma\vdasha \xx_2: t_2$.
    We have $t\simeq t_1\circ t_2$. Thus, according to the definition of $\circ$, we know that $t_1\leq t_2\to t$.
    Thus, with an application of the \Rule{$\leq$} rule on $\Gamma\vdasha \xx_1: t_1$, we can derive $\Gamma\vdasha \xx_1: t_2\to t$.
    We can then conclude with an application of the \Rule{$\to$E} rule.

    \item[\Rule{$\times$I\AR}] We have $a\equiv(\xx_1, \xx_2)$.
    
    With an \Rule{\Axv} node, we can derive $\Gamma\vdasha \xx_1: \Gamma(\xx_1)\renaming_1$
    and $\Gamma\vdasha \xx_2: \Gamma(\xx_2)\renaming_2$
    (with $\renaming_1$ and $\renaming_2$ as in the \Rule{$\times$I\AR} node).
    We can then conclude with an application of the \Rule{$\times$I} rule.

    \item[\Rule{$\times$E$_1$\AR}] We have $a\equiv\pi_1 \xx$.
    We pose $t_\circ=\Gamma(\xx)\Sigma$.

    With an \Rule{\Axv} node, we can derive $\Gamma\vdasha \xx: \Gamma(\xx)$.
    Using a \Rule{\InstLeq} pattern, we can derive from that $\Gamma\vdasha \xx: t_\circ$.
    We have $t\simeq \bpi_1(t_\circ)$. Thus, according to the definition of $\bpi_1$,
    we know that $t_\circ\leq t \times \Any$.
    Thus, with an application of the \Rule{$\leq$} rule, we can derive $\Gamma\vdasha \xx: t \times \Any$.
    We can then conclude with an application of the \Rule{$\times$E$_1$} rule.

    \item[\Rule{$\times$E$_2$\AR}] Similar to the previous case.
    \item[\Rule{$\Empty$\AR}] Similar to the previous case.

    \item[\Rule{$\in_1$\AR}] Similar to the previous case.

    \item[\Rule{$\in_2$\AR}] Similar to the previous case.

    \item[\Rule{Var\AR}] Trivial (we use a \Rule{\Axv} node).

    \item[\Rule{Bind$_1$\AR}]
    We have $\kappa\equiv\bindexp{\xx}{a}{\kappa_\circ}$ and thus
    $\eras \kappa \equiv \eras{\kappa_\circ}\subs{\xx}{\eras a}$.

    By induction on the premise, we get $\Gamma \vdasha \eras {\kappa_\circ}: t$.
    As $\xx\not\in\dom\Gamma$, we know that this derivation does not contain any \Rule{\Axv} node applied on $\xx$.
    We can thus easily transform it into a derivation of $\Gamma \vdasha \eras {\kappa_\circ}\subs{\xx}{\eras a}: t$
    by substituting every occurrence of $\xx$ by $\eras a$.

    \item[\Rule{Bind$_2$\AR}]
    We have $\kappa\equiv\bindexp{\xx}{a}{\kappa_\circ}$ and thus
    $\eras \kappa \equiv \eras{\kappa_\circ}\subs{\xx}{\eras a}$.

    We can suppose without loss of generality that the decomposition $\{\mt_i\}_{i\in I}$
    is a partition of $\Any$: if for any two distinct $i,j\in I$, $\mt_i$ and $\mt_j$ are not disjoint,
    then we can replace the $\mt_j$ part by $\mt_j\setminus\mt_i$ and conclude by monotonicity
    (Lemma~\ref{monotonicity_algo_lemma}).

    By induction on the first premise, we get $\Gamma\vdasha \eras a:s$.
    For any $i\in I$, we apply the induction hypothesis on the corresponding premise.
    It gives $\Gamma, \xx:s\land \mt_i \vdasha \eras{\kappa_\circ} : t_i$.
    With a \Rule{$\leq$} node, we can obtain $\Gamma, \xx:s\land \mt_i \vdasha \eras{\kappa_\circ} : t$
    (with $t\simeq\tbvee_{i\in I} t_i$).
    We conclude with a \Rule{\Vee} node.

    \item[\Rule{$\wedge$\AR}] Trivial induction on the premises.
\end{description}
\end{proof}
  
\subsubsection{Completeness}

\begin{definition}[Atomic source expression]\label{ase_def}
    We say that an expression of the source language is an atomic source expression if it can be constructed with the following syntax:
    \begin{equation*}
        \begin{array}{lrclr}
        \textbf{Atomic source expressions} &\ase &::=& c\alt x\alt \lambda x. e\alt (\xx,\xx)\alt \xx \xx\alt \pi_i \xx\alt \tcase{\xx}{\tau}{\xx}{\xx}
        \end{array}
    \end{equation*}
    and if, for the case $\lambda x. e$,
    all sub-expressions of $e$ are either a binding variable or they contain a lambda variable
    that is not in $\fv{\lambda x. e}$.

    The variable $\ase$ is used to range over atomic source expressions.
\end{definition}

\begin{definition}
    For any atomic source expression $\ase$, we define $\MSCFA{\ase}$ as follows:
    \begin{align*}
        \MSCFA{\lambda x. e} &= \lambda x. \MSCF e\\
        \MSCFA{\ase} &= \ase &\text{ for any $\ase$ that is not a $\lambda$-abstraction}
      \end{align*}      
\end{definition}

\begin{proposition}\label{mscfa_prop}
    For any atomic source expression $\ase$, $\bindexp{\xx}{\MSCFA{\ase}}{\xx}$
    is a valid MSC-form.
\end{proposition}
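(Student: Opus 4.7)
The plan is to proceed by case analysis on the grammar of Definition~\ref{ase_def}. For the six non-$\lambda$ shapes of $\ase$, $\MSCFA{\ase} = \ase$ and so the candidate canonical form $\bindexp{\xx}{\ase}{\xx}$ consists of a single top-level binding whose atom contains neither nested bindings nor $\lambda$-abstractions. Conditions 1 and 2 of Definition~\ref{def:maximalsharing} are then vacuously satisfied, and condition 3 holds because $\xx$ occurs in the body $\xx$.

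For the remaining case $\ase = \lambda x. e$, we have $\MSCFA{\ase} = \lambda x. \MSCF e$, and $\MSCF e$ is already an MSC-form by definition. Hence conditions 1--3 hold for every binding and every $\lambda$-abstraction strictly inside $\MSCF e$, and only the outer layer needs to be checked: the top-level $\xx$-binding and the new $\lambda x$. Condition 3 at this layer is immediate since $\xx$ occurs in the body $\xx$. Condition 1 reduces to showing that no binding inside $\MSCF e$ has an atom $\eqcan$-equivalent to $\lambda x. \MSCF e$: by Proposition~\ref{eqcan_sound_prop} combined with $\eras{\MSCF e} \aequiv e$, such an atom would be a $\lambda$-abstraction whose unwinding (in the outer binding context of $\MSCF e$) is $\alpha$-equivalent to $\lambda x. e$, but unwindings of bindings inside $\MSCF e$ are sub-expressions of $e$, and $\lambda x. e$ is not a sub-expression of~$e$, giving the required contradiction.

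The delicate step is condition 2 for the outer $\lambda x$: for every binding $\bindexp{\yy}{a}{\kappa'}$ inside $\MSCF e$, one must show $\fv a \not\subseteq \fv{\lambda x. \MSCF e}$. When the binding lies strictly under some $\lambda z.\kappa_z$ already present in $\MSCF e$, condition 2 applied to $\MSCF e$ itself furnishes a variable of $\fv a$ bound within $\kappa_z$, hence bound (not free) in $\lambda x. \MSCF e$. What remains is the case of a binding at the top level of $\MSCF e$, i.e. not nested under any $\lambda$ in $\MSCF e$. Here the side condition of Definition~\ref{ase_def} for $\lambda x. e$ is used: unwinding the binding in the outer context yields a sub-expression $e'$ of $e$ that is not a binding variable, hence contains free some lambda variable $y \notin \fv{\lambda x. e}$. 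This $y$ is bound either by the outer $\lambda x$ (if $y = x$) or by a $\lambda y$ strictly inside $e$ and lying above the occurrence of $e'$. The use of $y$ in $e'$ is reflected in the MSC-form either as a direct occurrence of $y$ or $x$ in $\fv a$, or through an outer binding variable of $\MSCF e$ appearing in $\fv a$; in every case $\fv a$ contains a variable bound within $\lambda x. \MSCF e$, as required.

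The only non-routine piece is this final ``tracing'' step, which relates the free variables of an atom in $\MSCF e$ to those of the sub-expression of $e$ it represents; I would formalize it by an auxiliary induction on the MSC-form construction (Appendix~\ref{app:sl2cf} together with the rewrites of Appendix~\ref{app:cf2msc}), showing that every lambda variable $y$ that is free in the sub-expression $e'$ associated with a binding $\bindexp{\yy}{a}{\kappa'}$ either appears directly in $\fv a$ or is reached through a binding variable in $\fv a$ that is bound within $\MSCF e$.
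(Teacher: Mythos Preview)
Your proposal is correct and follows the same approach as the paper, just far more explicitly: the paper's proof is a two-line remark that properties~1 and~3 are ``trivially satisfied'' and that property~2 (extrusion) is ``ensured by the condition on $\lambda$-abstractions in Definition~\ref{ase_def}''. Your detailed case analysis---including the split between nested and top-level bindings and the tracing of free variables through unwinding---spells out exactly what the paper leaves implicit.
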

\begin{proof}
    The extrusion property is ensured by the condition on $\lambda$-abstractions in Definition~\ref{ase_def}.
    The two other properties are trivially satisfied.
\end{proof}

\begin{definition}
  For a binding context $\benv$ and an expression $e$,
  we define the operator $\bdiff e \benv$ as follows: 
    \begin{align*}
        \bdiff{e}{\bempty} &= e\\
        \bdiff{e}{((\xx,a);\benv)} &= \bdiff{(e\esubs{\eras a}{\xx})}{\benv}
    \end{align*}
\end{definition}

\begin{proposition}\label{eras_share_prop}
    For any binding context $\benv$ and expression $e$,
    we have $\bsubs {(\bdiff e \benv)} \benv \aequiv \bsubs e \benv$ and
    $\bdiff {(\bsubs e \benv)} \benv \aequiv \bdiff e \benv$.
\end{proposition}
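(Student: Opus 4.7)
The plan is to prove both equalities by simultaneous induction on the length of the binding context $\benv$. The base case $\benv = \bempty$ is immediate since both $\bdiff{e}{\bempty} = e$ and $\bsubs{e}{\bempty} = e$ by definition, so each equality reduces to $e \aequiv e$.

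For the inductive step, write $\benv = \be{\xx}{a};\benv'$ and expand using the definitions:
\[
  \bdiff{e}{\benv} = \bdiff{(e\esubs{\eras a}{\xx})}{\benv'}, \qquad
  \bsubs{e}{\benv} = (\bsubs{e}{\benv'})\subs{\xx}{a},
\]
where the second identity uses that $\xx$ is the outermost binding of $\benv$ and hence the \emph{last} substitution applied by $\bsubs{\cdot}{\benv}$. For the first equality of the proposition, I apply the induction hypothesis to $\benv'$ and $e\esubs{\eras a}{\xx}$ to reduce the problem to showing that $\bsubs{(e\esubs{\eras a}{\xx})}{\benv'}\subs{\xx}{a}$ is $\aequiv$-equal to $\bsubs{e}{\benv}$. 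The key observation that makes this work is that $\bsubs{\xx}{\benv} \aequiv \eras{a}$ (a routine consequence of the well-ordering of bindings in $\benv$: atoms only reference earlier binding variables). Hence every $\xx$ introduced by $\esubs{\eras a}{\xx}$ is sent back by $\bsubs{\cdot}{\benv}$ to an $\alpha$-equivalent copy of $\eras a$, cancelling the effect of the $\esubs$.

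For the second equality, I apply the induction hypothesis to $\benv'$ and $e\esubs{\eras a}{\xx}$, obtaining $\bdiff{\bsubs{(e\esubs{\eras a}{\xx})}{\benv'}}{\benv'} \aequiv \bdiff{(e\esubs{\eras a}{\xx})}{\benv'} = \bdiff{e}{\benv}$. It then suffices to establish the commutation $\bsubs{(e\esubs{\eras a}{\xx})}{\benv'} \aequiv (\bsubs{e}{\benv})\esubs{\eras a}{\xx}$. Intuitively, both sides leave $\xx$-occurrences in exactly the same positions --- the left because $\xx \notin \dom{\benv'}$, the right because $\esubs$ re-introduces $\xx$ both where $\xx$ was originally in $e$ (now appearing as $\eras a$ after full substitution) and where $\eras a$ already occurred --- and both sides present the same fully unwound material elsewhere.

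The main obstacle will be carefully tracking $\alpha$-renaming, since $\esubs$ is defined up to $\alpha$-conversion and may introduce fresh names under $\lambda$-binders to avoid capture, which is precisely why the proposition is stated up to $\aequiv$ rather than syntactic equality. I expect to need an auxiliary commutation lemma of the form: whenever $\xx \notin \dom\sigma$ and the variables in the image of $\sigma$ are suitably disjoint from those in $\eras a$, $(e\esubs{\eras a}{\xx})\sigma \aequiv (e\sigma)\esubs{\eras a}{\xx}$, provable by structural induction on $e$ with careful handling of the $\lambda$-case. Once this lemma is in place, both equalities follow from the inductive argument sketched above.
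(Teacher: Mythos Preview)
Your proposal is correct and takes essentially the same approach as the paper: both proceed by induction on $\benv$, with the paper simply declaring the induction ``straightforward'' while you spell out the commutation sub-goals that arise in the inductive step. The auxiliary lemma you anticipate (commuting $\esubs{\eras a}{\xx}$ with the variable substitutions performed by $\bsubs{\cdot}{\benv'}$) is exactly the content the paper elides; one small caution is your claim $\bsubs{\xx}{\benv} \aequiv \eras a$, which literally computes to $a$ rather than $\eras a$ under the paper's definition of $\bsubs{\cdot}{\cdot}$---this is harmless for the argument but worth keeping straight.
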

\begin{proof}
    Straightforward induction on $\benv$.
\end{proof}

In the following, we fix an expression order $\leqexpr$ over expressions (c.f. Desfinition~\ref{def:leqexpr}).
This order should be total, so that for any expression $e$,
it determines a unique MSC-form $\MSCF{e}$ modulo $\alpha$-renaming (and not only modulo $\eqcan$):
independent consecutive bindings must follow the increasing order $\leqexpr$ of their unwinding.
As the order $\leqexpr$ is arbitrarily choosen, the proofs below will work for any MSC-form.

\begin{lemma}[Decomposition of canonical form derivations]\label{form_decomposition}
    If $D$ is a canonical form derivation of $\Gamma\vdasha e: t$, with the root being a \Rule{\Vee} node
    doing the substitution $e'\subs{\xx}{e_\xx}$, then there exists a binding context $\benv$
    such that $\MSCF e\aequiv \term(\benv,\bindexp{\xx}{\MSCFA{\bdiff{e_\xx}\benv}}{\MSCF{\bdiff{e'}\benv}})$.
\end{lemma}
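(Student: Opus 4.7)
The strategy is to extract the binding context $\benv$ directly from the structure of $\MSCF{e}$ itself, and then conclude via the uniqueness of MSC-forms. Since $D$ is a canonical form derivation whose root is a well-positioned \Rule{$\vee$} node, all free variables of $e_\xx$ lie in $\Gamma$, so $e_\xx$ is not captured by any $\lambda$-abstraction occurring in $e$. By the extrusion property (Property~2) of $\MSCF{e}$, the binding corresponding to the sub-expression $e_\xx$---which exists and is unique by the maximal-sharing property (Property~1)---must then occur at the top level of $\MSCF{e}$. Acceptability of the root \Rule{$\vee$} (so $e_\xx$ is not itself a binding variable and $e'$ does not contain $e_\xx$) ensures that this binding is distinct from the trailing variable of $\MSCF{e}$ and that its associated variable genuinely occurs later. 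Define $\benv$ as the prefix of the top-level bindings of $\MSCF{e}$ preceding this binding, and rename its binding variable to $\xx$; this gives a decomposition $\MSCF{e}\aequiv\term(\benv,\,\bindexp{\xx}{a_\xx}{\kappa_{\mathrm{rest}}})$ for some atom $a_\xx$ and canonical form $\kappa_{\mathrm{rest}}$.

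The next step is to identify $a_\xx$ with $\MSCFA{\bdiff{e_\xx}{\benv}}$ and $\kappa_{\mathrm{rest}}$ with $\MSCF{\bdiff{e'}{\benv}}$. For the first identification, observe that $\bdiff{e_\xx}{\benv}$ replaces every sub-expression of $e_\xx$ having a binding in $\benv$ by the corresponding binding variable; because $\benv$ was carved out as exactly the top-level prefix of $\MSCF{e}$ up to (but excluding) the binding for $e_\xx$, the non-extruded sub-expressions of $e_\xx$ are all captured, so $\bdiff{e_\xx}{\benv}$ is an atomic source expression in the sense of Definition~\ref{ase_def}, and $\MSCFA{\bdiff{e_\xx}{\benv}}$ coincides with $a_\xx$ up to $\alpha$-renaming (using Proposition~\ref{mscfa_prop}). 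For the second identification, $\kappa_{\mathrm{rest}}$ is itself a sub-form of the MSC-form $\MSCF{e}$ and hence satisfies the three MSC properties, while a direct calculation based on Proposition~\ref{eras_share_prop} shows that its unwinding is $\alpha$-equivalent to $\bdiff{e'}{\benv}$. Proposition~\ref{prop:mscfequiv} (uniqueness of MSC-forms) then gives $\kappa_{\mathrm{rest}} \eqcan \MSCF{\bdiff{e'}{\benv}}$.

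Combining these two identifications with the initial decomposition of $\MSCF{e}$ yields the desired equation. The main obstacle will be rigorously justifying that the binding corresponding to $e_\xx$ sits exactly at the top level of $\MSCF{e}$ and that $\kappa_{\mathrm{rest}}$, when read in isolation as an MSC-form of $\bdiff{e'}{\benv}$, still satisfies maximal sharing and extrusion with respect to the now-free binding variables introduced by $\benv$. Both points rely on a delicate interaction between the canonicity conditions on $D$ (which pin down where $e_\xx$ can appear relative to $\lambda$-abstractions) and the properties of $\bdiff{\cdot}{\benv}$ (which must provably eliminate every occurrence of an $\eras{a_i}$ for $\be{\xx_i}{a_i}\in\benv$, including those nested inside $\lambda$-abstractions of $e'$). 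The verification of this last point requires a careful case analysis on the shape of the atoms in $\benv$ and on whether the matching sub-expression in $e'$ has its free variables bound by an enclosing $\lambda$, handled using extrusion of the original $\MSCF{e}$.
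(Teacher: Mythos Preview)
Your approach is essentially the same as the paper's: locate the binding for $e_\xx$ at the top level of $\MSCF{e}$, take $\benv$ to be the prefix of top-level bindings preceding it, and then identify the atom and the remainder via Propositions~\ref{mscfa_prop}, \ref{eras_share_prop}, and~\ref{prop:mscfequiv}.

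There is, however, one ingredient you omit that the paper uses and that your argument does not cleanly replace: the fact that, because $D$ is a canonical \emph{form} derivation, the definition premise $\Gamma\vdasha e_\xx:s$ is an \emph{atomic} derivation. This is what forces $e_\xx$ to have a single structural constructor at its top level, and it is the paper's reason why $\bdiff{e_\xx}{\benv}$ is an atomic source expression. Your justification---``the non-extruded sub-expressions of $e_\xx$ are all captured''---gestures at the right phenomenon but does not by itself establish atomicity of $\bdiff{e_\xx}{\benv}$; you would still need to argue that, after replacing those sub-expressions by binding variables, exactly one constructor remains at the top. Without invoking the atomic-derivation structure of the definition premise (or proving directly that $\bdiff{e_\xx}{\benv}\aequiv\eras{a_\xx}$, which requires a separate argument from extrusion and maximal sharing), this step is a gap. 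The paper also makes explicit use of the fixed total order $\leqexpr$ to pin down exactly which bindings land in $\benv$ (including sub-expressions of $e'$ that happen to be $\leqexpr$-smaller than $e_\xx$), which you leave implicit in ``the prefix''; making that explicit would tighten the argument for why $\kappa_{\mathrm{rest}}$ is exactly $\MSCF{\bdiff{e'}{\benv}}$.
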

\begin{proof}
    First, we deduce from the fact that $D$ is a canonical form derivation
    that the definition premise, $\Gamma\vdasha e_\xx:s$, is an atomic derivation.

    We know that $e_\xx$ appears in $e$ (as $D$ is canonical, see Definition~\ref{def:acceptable}).
    Thus, we can deduce that $\MSCF{e}$ contains a definition for an atom $a$ that unwinds to $e_\xx$.
    Formally, we know that there exists a binding context $\benv$, an atom $a$ and a canonical form $\kappa$
    such that $\MSCF{e}\aequiv\term(\benv,\bindexp{\xx}{a}{\kappa})$ with $\bsubs{\eras a}\benv \aequiv e_\xx$.

    First, we determine what $\benv$ is.
    The expression $e_\xx$ could contain some sub-expressions that are not binding variables and
    that have no occurrence of a lambda variable defined in $e_\xx$.
    In this case, these sub-expressions should be defined by some bindings in $\benv$ (the unwinding of each such binding
    is necessarily smaller than $e_\xx$ by $\leqexpr$, as $\leqexpr$ is an extension of the sub-expressions order).
    The expression $e'$ could also contain some such sub-expressions. The ones whose unwnding is smaller than $e_\xx$
    according to $\leqexpr$ must be defined by some bindings in $\benv$ too. No other expression should be defined in $\benv$
    % (and each of these sub-expressions must be associated to a unique binding variable)
    or it would contradict the properties of MSC-forms.

    Now, we determine what $a$ is.
    Under the context $\benv$, the expression $\bdiff{e_\xx}\benv$ unwinds to $e_\xx$ (Proposition~\ref{eras_share_prop}).
    Moreover, as $\Gamma\vdasha e_\xx:s$ is an atomic derivation, and given how $\benv$ is constructed,
    $\bdiff{e_\xx}\benv$ must be an atomic source expression.
    Thus, we can deduce from Proposition~\ref{mscfa_prop} that $\MSCFA{\bdiff{e_\xx}\benv}$ can be used in place of the atom $a$ without breaking any property of the MSC-form,
    and thus by unicity of the MSC-form (Proposition~\ref{prop:mscfequiv}) we can conclude that $a\aequiv \MSCFA{\bdiff{e_\xx}\benv}$.

    Finally, we determine what $\kappa$ is. We note $\benv'$ the binding context $\benv;(\xx,\MSCFA{\bdiff{e_\xx}\benv})$.
    The expression $\bdiff{e'}{\benv'}$ unwinds to $e'\subs{\xx}{e_\xx}$
    under the context $\benv'$ (Proposition~\ref{eras_share_prop}).
    As $D$ is a canonical form, $e_\xx$ cannot be a sub-expression of $e'$ (c.f. Definition~\ref{def:acceptable}), thus
    $\bdiff{e'}{\benv}\aequiv\bdiff{e'}{\benv'}$ and thus $\bdiff{e'}{\benv}$ also unwinds to $e'\subs{\xx}{e_\xx}$.
    Thus, $\MSCF{\bdiff{e'}\benv}$ can be used in place of $\kappa$ without breaking any property of the MSC-form
    (note that it only contains top-level bindings for expressions whose unwinding is greater than $e_\xx$ by $\leqexpr$, as the smaller ones have been put in $\benv$).
    By unicity of the MSC-form (Proposition~\ref{prop:mscfequiv}), we conclude that $\kappa\aequiv \MSCF{\bdiff{e'}\benv}$,
    and thus $\MSCF e\aequiv \term(\benv,\bindexp{\xx}{\MSCFA{\bdiff{e_\xx}\benv}}{\MSCF{\bdiff{e'}\benv}})$.
\end{proof}

\begin{lemma}[Completeness]\label{algo_gen_completeness}
    If $D$ is a canonical form derivation of $\Gamma\vdasha e: t$, then $\exists \annot,t'.\ \Gamma\vdashA \ea{\MSCF{e}}{\annot}: t'$ with $t'\polyleq t$.\\
    If $D$ is a canonical atomic derivation of $\Gamma\vdasha \ase: t$ (with $\ase$ an atomic source expression), then $\exists \annota,t'.\ \Gamma\vdashA \ea{\MSCFA{\ase}}{\annota}: t'$ with $t'\polyleq t$.
\end{lemma}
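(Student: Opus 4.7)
The plan is to prove the two statements by simultaneous induction on the canonical derivation $D$, splitting cases on the shape of the root of $D$. Since $D$ is canonical, the non-structural rules $[\text{Inst}]$ and $[\leq]$ appear only in restricted positions (namely wrapped around axioms as \Rule{\InstLeq}-patterns feeding destructors, or as premises of $[\in_i]$ and $[\vee]$), so we can absorb them uniformly into the algorithmic side-conditions via the type operators $\dom{}, \circ, \bpi_i$ and the substitution $\Sigma$ carried by the annotations. The rule $[\wedge]$ will be handled directly by $[\wedge\AR]$ applied to the inductive annotations of its premises. The substitution \Rule{\Axv} premises of $[\wedge]$ / axioms with renaming are matched by $[\text{Var}\AR]$ and the renamings stored in $\annotpair{\renaming_1}{\renaming_2}$ / $\annotvar{\renaming}$.

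For the atomic case (second statement), when $\ase$ is a constant, variable, pair, application, projection, or type-case, the canonical atomic form of $D$ is a single structural rule possibly surrounded by \Rule{\InstLeq}-patterns on its axiom premises; the corresponding algorithmic rule directly applies, choosing $\Sigma$ (or $\renaming$) equal to the set of substitutions collected from those patterns. When $\ase = \lambda x.e$, the root is $[\to\!\text{I}]$ with premise a canonical form derivation for $e$ under $\Gamma, x{:}\mt$; apply the induction hypothesis to obtain an algorithmic annotation $\annot'$ with type $t'\polyleq t_{\text{body}}$ and conclude with $[\to\!\text{I}\AR]$ using $\annotlambda{\mt}{\annot'}$, yielding type $\mt\to t'\polyleq \mt\to t_{\text{body}}$.

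For the form case (first statement), the root is either a $[\vee]$ node or a segment with no structural nodes ending in a \Rule{\Axv} or in an atomic derivation. If it is just \Rule{\Axv} on some $\xx$, then $\MSCF{e}\equiv\xx$ and $[\text{Var}\AR]$ applied with the appropriate renaming discharges it. Otherwise, if the root is a $[\vee]$ with substitution $e'\subs{\xx}{e_\xx}$, invoke Lemma~\ref{form_decomposition} to obtain a binding context $\benv$ with $\MSCF{e}\aequiv \term(\benv,\bindexp{\xx}{\MSCFA{\bdiff{e_\xx}\benv}}{\MSCF{\bdiff{e'}\benv}})$. The definition premise is a canonical atomic derivation, so the induction hypothesis gives an atom annotation $\annota$ typing $\MSCFA{\bdiff{e_\xx}\benv}$ with some $s'\polyleq s$; each body premise is a canonical form derivation to which the induction hypothesis applies, yielding annotations $\annot_i$ typing $\MSCF{\bdiff{e'}\benv}$ with $t_i'\polyleq t$ under $\Gamma,\xx{:}s'\land\mt_i$ (using the monotonicity Lemma~\ref{monotonicity_algo_lemma} to propagate the change from $s$ to $s'$). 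Then $[\text{Bind}_2\AR]$ followed by a chain of $[\text{Bind}_1\AR]$'s for the bindings of $\benv$ (which, having been hoisted out by the MSC property, do not need to be typed at this point) produces the desired annotation $\annotbind{\annota}{\{(\mt_i,\annot_i)\}_{i\in I}}$ with type $\bigvee_i t_i'\polyleq t$.

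The main obstacle is the interaction between Lemma~\ref{form_decomposition} and Lemma~\ref{monotonicity_algo_lemma}: applying the induction hypothesis to the body premises gives types bounded by $t$ only up to $\polyleq$, but after replacing the definition type $s$ by a smaller $s'$, the body type can degrade further, and we must argue carefully that the resulting union still satisfies $\bigvee_i t_i'\polyleq t$. A second subtlety is ensuring that the $[\text{Bind}_1\AR]$ applications for the ``external'' bindings in $\benv$ are all legal: this requires showing that those binding variables do not clash with $\dom\Gamma$, which follows from $\alpha$-renaming and the uniqueness of the MSC-form up to $\eqcan$ (Proposition~\ref{prop:mscfequiv}). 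Finally, for the $[\wedge]$ case at a form position one must first apply the induction hypothesis to every premise (each of which is itself a canonical form derivation) and combine the resulting annotations via $[\wedge\AR]$, appealing to Proposition~\ref{inter_polyleq_prop} to conclude that the intersection of the obtained types remains $\polyleq t$.
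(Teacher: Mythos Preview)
Your proposal follows essentially the same strategy as the paper's proof: simultaneous induction on the canonical derivation with case analysis on the root, using Lemma~\ref{form_decomposition} for the $[\vee]$ case, the algorithmic monotonicity Lemma~\ref{monotonicity_algo_lemma} to adjust for the change from $s$ to $s'$, and Proposition~\ref{inter_polyleq_prop} for the $[\wedge]$ case. There is, however, one substantive step you skip in the $[\vee]$ case.

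After invoking Lemma~\ref{form_decomposition}, you write that the induction hypothesis on the body premises ``yields annotations $\annot_i$ typing $\MSCF{\bdiff{e'}\benv}$''. But the body premises of the original $[\vee]$ node derive $\Gamma,\xx{:}s\land\mt_i\vdasha e':t$, not a judgment about $\bdiff{e'}{\benv}$; so the induction hypothesis as stated gives annotations for $\MSCF{e'}$, which is in general a \emph{different} canonical form from $\MSCF{\bdiff{e'}{\benv}}$ (the former still contains bindings for the sub-expressions that $\benv$ has already abstracted out). The paper closes this gap by first \emph{transforming} the whole derivation $D$: the sub-expressions defined in $\benv$ unwind to expressions strictly smaller than $e_\xx$ for $\leqexpr$, so by well-positionedness no $[\vee]$ node of $D$ substitutes them, and since in a canonical form derivation structural nodes occur only in definition premises of $[\vee]$ nodes, none of these sub-expressions is ever the subject of a typing judgment in $D$. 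Hence one can uniformly replace each such occurrence by its binding variable from $\benv$, obtaining a canonical derivation for $(\bdiff{e'}{\benv})\subs{\xx}{(\bdiff{e_\xx}{\benv})}$ of the same depth; the induction hypothesis now applies to its premises and directly produces annotations for $\MSCFA{\bdiff{e_\xx}{\benv}}$ and $\MSCF{\bdiff{e'}{\benv}}$. This substitution step is precisely where the canonicity hypothesis (in particular well-positionedness with respect to the fixed total order $\leqexpr$) is actually used, and your outline does not surface it.
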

\begin{proof}
    We proceed by induction on the depth of $D$.

    We consider the root of the derivation (the cases up to \Rule{$\in_2$} are for canonical atomic derivations,
    the cases after are for canonical form derivations):
    \begin{description}
        \item[\Rule{Const}] Trivial.
        \item[\Rule{\Axl}] Trivial.
        \item[\Rule{$\to$I}] We have $\ase\equiv\lambda\xx.\ e$ and thus $\MSCFA \ase \aequiv \lambda\xx.\ \MSCF{e}$.

        The premise of this \Rule{$\to$I} node is a canonical form derivation.
        Thus, by induction on this premise, we get $\Gamma,\xx:\mt\vdashA\ea{\MSCF{e}}{\annot}:t'$ (with $t'\polyleq t$).
        We can thus derive $\Gamma\vdashA\ea{\lambda\xx.\ \MSCF{e}}{\annotlambda{\mt}{\annot}}:\mt\to t'$
        and we have $\mt\to t'\polyleq \mt\to t$, which concludes this case.

        \item[\Rule{$\to$E}] We have $\ase\equiv \xx_1 \xx_2$ and thus $\MSCFA \ase \aequiv \xx_1 \xx_2$.

        As $D$ is a canonical atomic derivation, we know that the second premise, $\Gamma\vdasha \xx_2:t_1$, is a \Rule{\InstLeq} pattern
        with no \Rule{$\leq$} node and whose premise is a \Rule{Ax$_{\vee}$} node. Thus, we know that there exists $\Sigma_2$ such that
        $\Gamma(\xx_2)\Sigma_2\simeq t_1$.
        Similarly, the first premise, $\Gamma\vdasha \xx_1:t_1\to t_2$, is a \Rule{\InstLeq} pattern whose premise is a \Rule{Ax$_{\vee}$} node.
        Thus, we know that there exists $\Sigma_1$ such that $\Gamma(\xx_1)\Sigma_1\leq t_1\to t_2$.

        Consequently, and by definition of $\circ$, we know that $(\Gamma(\xx_1)\Sigma_1) \circ (\Gamma(\xx_2)\Sigma_2) \leq t_2$.
        We can thus derive $\Gamma\vdashA\ea{\xx_1 \xx_2}{\annotapp{\Sigma_1}{\Sigma_2}}:t'$
        (with $t'\simeq (\Gamma(\xx_1)\Sigma_1) \circ (\Gamma(\xx_2)\Sigma_2)$) such that $t'\leq t_2$, which concludes this case.

        \item[\Rule{$\times$I}] We have $\ase\equiv (\xx_1, \xx_2)$ and thus $\MSCFA \ase \aequiv (\xx_1, \xx_2)$.

        As $D$ is a canonical atomic derivation, both premises can only be \Rule{Ax$_{\vee}$} nodes.
        Thus, we can deduce that there exists two renamings of polymorphic type variables $\renaming_1$ and $\renaming_2$
        such that $\Gamma(\xx_1)\renaming_1 \simeq t_1$ and $\Gamma(\xx_2)\renaming_2 \simeq t_2$. Thus,
        we can derive $\Gamma\vdashA\ea{(\xx_1, \xx_2)}{\annotpair{\renaming_1}{\renaming_2}}:t_1\times t_2$.

        \item[\Rule{$\times$E$_1$}] We have $\ase\equiv \pi_1 \xx$ and thus $\MSCFA \ase \aequiv \pi_1 \xx$.

        As $D$ is a canonical atomic derivation, we know that the premise, $\Gamma\vdasha \xx:t_1\times t_2$,
        is a \Rule{\InstLeq} pattern whose premise is a \Rule{Ax$_{\vee}$} node.
        Thus, we know that there exists $\Sigma$ such that $\Gamma(\xx)\Sigma\leq t_1\times t_2$.

        Consequently, and by definition of $\bpi_1$, we know that $\bpi_1 (\Gamma(\xx)\Sigma)\leq t_1$.
        We can thus derive $\Gamma\vdashA\ea{\pi_1 \xx}{\annotproj{\Sigma}}:t'$
        (with $t'\simeq \bpi_1 (\Gamma(\xx)\Sigma)$) such that $t'\leq t_1$, which concludes this case.

        \item[\Rule{$\times$E$_2$}] Similar to the previous case.
        \item[\Rule{$\Empty$}] We have $\ase\equiv \tcase{\xx}{\tau}{\xx_1}{\xx_2}$ and thus $\MSCFA \ase \aequiv \tcase{\xx}{\tau}{\xx_1}{\xx_2}$.
 
        As $D$ is a canonical atomic derivation, we know that the premise, $\Gamma\vdasha \xx:\Empty$,
        is a \Rule{\InstLeq} pattern with no \Rule{$\leq$} node and whose premise is a \Rule{Ax$_{\vee}$} node.
        Thus, we know that there exists $\Sigma$ such that $\Gamma(\xx)\Sigma\simeq\Empty$.

        We can thus derive $\Gamma\vdashA\ea{\tcase{\xx}{\tau}{\xx_1}{\xx_2}}{\annotempty\Sigma}:\Empty$.
    
        \item[\Rule{$\in_1$}] We have $\ase\equiv \tcase{\xx}{\tau}{\xx_1}{\xx_2}$ and thus $\MSCFA \ase \aequiv \tcase{\xx}{\tau}{\xx_1}{\xx_2}$.
 
        As $D$ is a canonical atomic derivation, we know that the first premise, $\Gamma\vdasha \xx:\tau$,
        is a \Rule{\InstLeq} pattern whose premise is a \Rule{Ax$_{\vee}$} node.
        Thus, we know that there exists $\Sigma$ such that $\Gamma(\xx)\Sigma\leq\tau$.
        Similarly, the second premise, $\Gamma\vdasha \xx_1:t_1$, can only be a \Rule{Ax$_{\vee}$}.
        Thus, we know that there exists a renaming of polymorphic variables $\renaming$
        such that $\Gamma(\xx_1)\renaming\simeq t_1$.

        We can thus derive $\Gamma\vdashA\ea{\tcase{\xx}{\tau}{\xx_1}{\xx_2}}{\annotthen\Sigma}:\Gamma(\xx_1)$
        with $\Gamma(\xx_1)\polyleq t_1$.
    
        \item[\Rule{$\in_2$}] Similar to the previous case. 
        \item[\Rule{\Axv}] Trivial.
        \item[\Rule{$\leq$}] Straightforward induction on the premise.
        \item[\Rule{Inst}] Straightforward induction on the premise.
        \item[\Rule{$\wedge$}] By induction on the premises, we get $\forall i\in I.\ \Gamma\vdashA \ea{\MSCF{e}}{\annot_i}:t_i'$
        with $t_i'\polyleq t_i$. Thus, we can derive $\Gamma\vdashA \ea{\MSCF{e}}{\annotinter{\{\annot_i\}_{i\in I}}}:\tbwedge_{i\in I}t_i'$
        (with $\tbwedge_{i\in I}t_i'\polyleq \tbwedge_{i\in I}t_i$).

        \item[\Rule{\Vee}] By using Lemma~\ref{form_decomposition}, we know that there exists $\benv$ such that
        $\MSCF e\aequiv \benv[\bindexp{\xx}{\MSCFA{\bdiff{e_\xx}\benv}}{\MSCF{\bdiff{e'}\benv}}]$
        (with $\bdiff{e_\xx}\benv$ being an atomic source expression).

        The unwiding of the top-level definitions in $\benv$ are necessarily smaller than $e_\xx$ by $\leqexpr$.
        Thus, none of them can be defined by a \Rule{\Vee} node in $D$.
        Consequently, and as in canonical form derivation a structural node can only appear in the first premise of a \Rule{\Vee} node,
        none of the expressions defined in $\benv$ are typed in $D$. Thus, these sub-expressions can easily be substituted,
        in $D$, by the associated binding variables in $\benv$. It yields a derivation for
        $\Gamma\vdasha \bdiff{(e'\subs{\xx}{e_\xx})}\benv:t$, or equivalently, for
        $\Gamma\vdasha (\bdiff {e'} \benv)\subs{\xx}{(\bdiff {e_\xx} \benv)}:t$.

        By induction on the premises of this new derivation, we get $\Gamma\vdashA \ea {\MSCFA{\bdiff{e_\xx}\benv}}{\annota} : s'$ (with $s'\polyleq s$)
        and $\forall i\in I.\ \Gamma,\xx:s\land \mt_i\vdashA \ea {\MSCF{\bdiff{e'}\benv}}{\annot_i} : t_i$ (with $t_i\polyleq t$).
        By monotonicity (Lemma~\ref{monotonicity_algo_lemma}), we can derive
        $\forall i\in I.\ \Gamma,\xx:s'\land \mt_i\vdashA \ea {\MSCF{\bdiff{e'}\benv}}{\annot_i'} : t_i'$ (with $t_i'\polyleq t_i\polyleq t$).
        We can thus derive $\Gamma\vdashA \ea{\bindexp{\xx}{\MSCFA{\bdiff{e_\xx}\benv}}{\MSCF{\bdiff{e'}\benv}}}{\annot}:\tbvee_{i\in I}t_i'$ with $\annot=\annotbind{\annota}{\{(\mt_i,\annot_i')\}_{i\in I}}$.

        From that, we can easily derive $\Gamma\vdashA \ea{\MSCF{e}}{\annot'}:\tbvee_{i\in I}t_i'$
        with $\annot'$ obtained by inserting at the root of $\annot$ a $\texttt{skip}$ annotation for each definition in $\benv$.
    \end{description}
\end{proof}

\begin{theorem}[Completeness]\label{algo_completeness}
    If $\Gamma\vdasha e: t$ with $e$ a ground expression, then $\exists \annot,t'.\ \Gamma\vdashA \ea{\MSCF{e}}{\annot}: t'$ with $t'\polyleq t$.
\end{theorem}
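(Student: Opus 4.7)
The plan is to obtain the theorem simply by composing two results already established: the normalisation theorem for declarative derivations (Theorem~\ref{norm_thm}) and the generalised completeness lemma for canonical form derivations (Lemma~\ref{algo_gen_completeness}). All the technical work has been done in those two statements; here we only need to glue them together and verify that the ``polymorphic subtyping'' relation $\polyleq$ composes as expected.

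Concretely, starting from a derivation $\Gamma \vdasha e : t$ for a ground expression $e$, I would fix any total expression order $\leqexpr$ that extends the sub-expression order (used only to pin down a canonical choice of MSC-form up to $\alpha$-conversion, by Propositions~\ref{prop:mscfequiv} and~\ref{msfr_confl_prop}). Applying Theorem~\ref{norm_thm} to the given derivation with respect to $\leqexpr$ produces a canonical form derivation $D$ of $\Gamma \vdasha e : t''$ for some $t''$ with $t'' \polyleq t$. This is the key step in which non-structural rules (union elimination, intersection, instantiation, subsumption) are rearranged into the stratified shape required by the algorithmic system.

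Next I would feed $D$ into Lemma~\ref{algo_gen_completeness}, specialised to its ``form derivation'' clause. This directly yields an annotation $\annot$ and a type $t'$ such that $\Gamma \vdashA \ea{\MSCF{e}}{\annot} : t'$ with $t' \polyleq t''$. Note that the lemma speaks of \emph{the} MSC-form of $e$, which is legitimate because Proposition~\ref{prop:mscfequiv} guarantees uniqueness modulo $\eqcan$ (and the algorithmic system is plainly invariant under $\eqcan$ since the rules are closed under $\alpha$-conversion and under swaps of independent bindings).

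To conclude it suffices to observe that $\polyleq$ is transitive. If $t' \polyleq t''$ and $t'' \polyleq t$, then there exist sets of substitutions $\Psubst_1,\Psubst_2$ with $t'\Psubst_1 \leq t''$ and $t''\Psubst_2 \leq t$; taking $\Psubst = \{\,\psubst_1\circ\psubst_2 \mid \psubst_1\in\Psubst_1,\psubst_2\in\Psubst_2\,\}$ and using that subtyping is preserved by substitution gives $t'\Psubst \leq t$, hence $t' \polyleq t$, which is the desired conclusion. There is no real obstacle: since Theorem~\ref{norm_thm} and Lemma~\ref{algo_gen_completeness} are stated exactly in compatible form (both producing and consuming canonical form derivations with the $\polyleq$ slack on the deduced type), the only thing worth double-checking is this transitivity argument and the fact that the expression order $\leqexpr$ chosen for normalisation is consistent with the one implicitly used to define $\MSCF{e}$.
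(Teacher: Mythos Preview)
Your proposal is correct and matches the paper's proof exactly: apply Theorem~\ref{norm_thm} to obtain a canonical form derivation, then invoke Lemma~\ref{algo_gen_completeness}. The extra details you spell out (choice of total $\leqexpr$, transitivity of $\polyleq$) are left implicit in the paper but are precisely the glue needed; just be careful that in your transitivity argument the composite substitutions should be $\psubst_2\circ\psubst_1$ (apply $\psubst_1$ first, then $\psubst_2$) under the paper's convention for $\circ$.
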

\begin{proof}
    Direct application of Lemma~\ref{algo_gen_completeness} after
    using the normalisation theorem (Theorem~\ref{norm_thm})
    on a derivation of $\Gamma\vdasha e: t$.
\end{proof}

\subsection{Annotations Reconstruction System}

This section contains proofs for the reconstruction system (Appendix~\ref{sec:reconstruction-appendix}).

\begin{theorem}[Soundness]
    If $\Gamma\paavdash\epa{\kappa}{\banns}\refines\annot$,
    then $\exists t.\ \Gamma\vdashA\ea{\kappa}{\annot}:t$.\\
    If $\Gamma\paavdash\epa{a}{\lanns}\refines\annota$,
    then $\exists t.\ \Gamma\vdashA\ea{a}{\annota}:t$.
\end{theorem}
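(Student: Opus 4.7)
The plan is to prove both statements simultaneously by mutual structural induction on the derivations of the auxiliary reconstruction judgments $\Gamma\paavdash\epa{\kappa}{\banns}\refines\annot$ and $\Gamma\paavdash\epa{a}{\lanns}\refines\annota$. For each rule of the auxiliary reconstruction system (Appendix~\ref{sec:auxreconstruction-appendix}), I would exhibit an algorithmic typing derivation whose root matches the produced annotation, using the induction hypothesis for premises that recursively invoke the auxiliary system.

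The base cases \Rule{Const} and \Rule{Ax} are immediate: the emitted $\annotconst$ and $\annotvara$ annotations are directly typable by the corresponding \Rule{Const\AR} and \Rule{Ax\AR} rules. For \Rule{Pair}, the produced $\annotpair{\renaming_1}{\renaming_2}$ is typable by \Rule{$\times$I\AR} since renamings trivially preserve well-formedness. For \Rule{Lambda}, the induction hypothesis on the body premise gives a type $t$ under the extended environment, and then \Rule{$\to$I\AR} gives the arrow type $\mt\to t$. For \Rule{Var}, \Rule{BindSkip}, \Rule{BindKeep} and \Rule{Inter}, the correspondence with \Rule{Var\AR}, \Rule{Bind$_1$\AR}, \Rule{Bind$_2$\AR} and \Rule{$\land$\AR} is direct after applying the induction hypothesis; note that in \Rule{BindKeep} the side condition $\tbvee_{i\in I}\mt_i\simeq\Any$ is required by both rules, so it propagates faithfully.

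The main technical work lies in the destructor rules \Rule{App}, \Rule{Proj}, \Rule{Case$_\Empty$}, \Rule{Case$_1$}, and \Rule{Case$_2$}, each of which relies on the tallying algorithm to produce the instantiation $\Sigma$ used in the resulting annotation. For each such rule, I must verify that the tallying solutions satisfy the side conditions of the corresponding algorithmic rule. For \Rule{App}, the tallying produces $\Sigma$ such that for every $\sigma\in\Sigma$, $(t_1\renaming_1)\sigma \leq (t_2\renaming_2\to\polyvar)\sigma$. The emitted annotation uses $\Sigma_1=\{\sigma\circ\renaming_1\mid\sigma\in\Sigma\}$ and $\Sigma_2=\{\sigma\circ\renaming_2\mid\sigma\in\Sigma\}$; I would need to show that $\Gamma(\xx_1)\Sigma_1\leq\Empty\to\Any$ and $\Gamma(\xx_2)\Sigma_2\leq\dom{\Gamma(\xx_1)\Sigma_1}$, which follows from the tallying soundness and from properties of $\dom{}$ on intersections of arrow types. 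The side condition $\Sigma\neq\emptyset$ ensures we have at least one substitution to intersect with. Similar arguments work for \Rule{Proj} (via the side condition $t\leq\Any\times\Any$) and for the three type-case rules.

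The subtlest point—and what I expect to be the main obstacle—is handling the interaction between the renamings $\renaming_1,\renaming_2$ used to decorrelate occurrences and the type operators $\dom{}$, $\circ$, $\bpi_i$. More specifically, since $\Sigma_1$ and $\Sigma_2$ share the underlying $\sigma$ but are pre-composed with distinct renamings, proving $\Gamma(\xx_2)\Sigma_2\leq\dom{\Gamma(\xx_1)\Sigma_1}$ from the tallying guarantee $(\Gamma(\xx_1)\renaming_1)\sigma\leq(\Gamma(\xx_2)\renaming_2\to\polyvar)\sigma$ requires a lemma relating substitution composition with $\dom{}$, essentially showing that $\dom{\bigwedge_i s_i}\geq\bigvee_i\dom{s_i}$ when each $s_i$ is of arrow shape. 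Once this lemma is in place, the destructor cases proceed uniformly, and the overall induction closes cleanly; the existence of the witness type $t$ is then immediate from the conclusion of the matched algorithmic rule.
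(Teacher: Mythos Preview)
Your proposal is correct and follows essentially the same approach as the paper: a structural induction on the auxiliary reconstruction derivation, with the only non-trivial case being \Rule{App} (the paper explicitly says ``the other cases are similar or straightforward''). The ``subtle point'' you flag is exactly what the paper handles, and it does so in one line by invoking that $\dom{\cdot}$ is monotonically non-increasing---your lemma $\dom{\bigwedge_i s_i}\geq\bigvee_i\dom{s_i}$ is an immediate consequence of this antitonicity, so you are slightly over-anticipating the difficulty there, but the argument is the same.
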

\begin{proof}
    We proceed by structural induction on the derivation of $\Gamma\paavdash\epa{\kappa}{\banns}\refines\annot$
    or $\Gamma\paavdash\epa{a}{\lanns}\refines\annota$.

    If the derivation $\Gamma\paavdash\epa{a}{\lanns}\refines\annota$ has a \Rule{App} root,
    we construct a derivation $\Gamma\vdashA\ea{a}{\annota}:t$ (for some $t$) with a \Rule{$\to$E\AR} root.
    In order to satisfy the guard-conditions of the rule \Rule{$\to$E\AR}, we need to prove that
    the annotation $\annotapp{\Sigma_1}{\Sigma_2}$ generated by the \Rule{App} root satisfies
    $\Gamma(\xx_1)\Psubst_1\leq\Empty\to\Any$ and $\Gamma(\xx_2)\Psubst_2\leq\dom{\Gamma(\xx_1)\Psubst_1}$.

    We have, in the premise of the \Rule{App} root, $\Psubst=\tallyf{\Gamma(\xx_1)\renaming_1}{\Gamma(\xx_2)\renaming_2\to \polyvar}$
    and $\Psubst\neq\emptyset$. Let $\psubst\in\Psubst$. By definition of the tallying problem,
    we have $(\Gamma(\xx_1)\renaming_1)\psubst\leq (\Gamma(\xx_2)\renaming_2\to \polyvar)\psubst$,
    which can be rewritten $\Gamma(\xx_1)(\psubst\circ\renaming_1)\leq (\Gamma(\xx_2)(\psubst\circ\renaming_2))\to(\polyvar\psubst)$.
    From that subtyping relation, we can deduce $\Gamma(\xx_1)(\psubst\circ\renaming_1)\leq \Empty\to\Any$,
    and by definition of $\dom{.}$, $\Gamma(\xx_2)(\psubst\circ\renaming_2)\leq\dom{\Gamma(\xx_1)(\psubst\circ\renaming_1)}$.
    As $(\psubst\circ\renaming_2)\in\Psubst_2$ and $(\psubst\circ\renaming_1)\in\Psubst_1$, we
    deduce $\Gamma(\xx_1)\Psubst_1\leq \Empty\to\Any$ and $\Gamma(\xx_2)\Psubst_2\leq\dom{\Gamma(\xx_1)\Psubst_1}$
    (we use the fact that $\dom{.}$ is monotonically non-increasing).

    The other cases are similar or straightforward.
\end{proof}

Now, we propose a sketch of proof justifying that the deduction rules
for the reconstruction system define a terminating algorithm.
The idea of this proof is similar to the proof of termination of
the \textit{Kirby-Paris hydra} game\footnote{\url{https://en.wikipedia.org/wiki/Hydra_game}}:
we can associate an ordinal number weight to each node,
and this weight can only decrease as the game (or derivation) advances.
Intuitively, this non-negative weight represents the advancement of
the game (or derivation).
Though sub-trees can sometimes be duplicated,
their weight is always lowered before being duplicated,
resulting in a lower weight overall.

\newcommand{\rw}[3]{\texttt{w}(#1,#2,#3)}
\newcommand{\resw}[3]{\texttt{w}(#1,#2,#3)}

For any type environment $\Gamma$, canonical form or atom $\kappaa$,
and intermediate annotation $\aanns$ compatible with the structure of $\kappaa$, the weight $\rw\Gamma\kappaa\aanns$
is the ordinal number defined as follows:

\begin{minipage}{.95\textwidth}
\begin{align*}
    \rw \Gamma \kappaa \annotityp & = 1\\
    \rw \Gamma \kappaa \annotiuntyp & = 1\\
    \rw \Gamma {\kappaa} {\annotiinter{S_1}{S_2}} & = \textstyle\sum \{\rw {\Gamma} \kappaa {\aanns}\,\alt\,\aanns\in S_1\}
\end{align*}
\end{minipage}

\begin{minipage}{.45\textwidth}
\begin{align*}
    \rw \Gamma c \annotiinfer & = \omega\\
    \rw \Gamma x \annotiinfer & = \omega
\end{align*}
\end{minipage}
\begin{minipage}{.45\textwidth}
\begin{align*}
    \rw \Gamma {\lambda x.\kappa} \annotiinfer & = \omega^{\rw \Gamma \kappa \annotiinfer}\\
    \rw \Gamma {\lambda x.\kappa} {\annotilambda{\mt}{\banns}} & = \omega^{\rw \Gamma \kappa \banns}
\end{align*}
\end{minipage}

\begin{align*}
    \rw \Gamma {\pi_i \xx} \annotiinfer & = \omega&\text{if }\xx\in\dom\Gamma\\
    \rw \Gamma {\pi_i \xx} \annotiinfer & = \omega^2&\text{otherwise}
\end{align*}
\begin{align*}
    \rw \Gamma {\xx_1\xx_2} \annotiinfer & = \omega&\text{if } \{\xx_1,\xx_2\}\subseteq\dom\Gamma\\
    \rw \Gamma {\xx_1\xx_2} \annotiinfer & = \omega^2&\text{otherwise, if } \xx_1\in\dom\Gamma\\
    \rw \Gamma {\xx_1\xx_2} \annotiinfer & = \omega^2&\text{otherwise, if } \xx_2\in\dom\Gamma\\
    \rw \Gamma {\xx_1\xx_2} \annotiinfer & = \omega^3&\text{otherwise}
\end{align*}
\begin{align*}
    \rw \Gamma {(\xx_1,\xx_2)} \annotiinfer & = \omega&\text{if } \{\xx_1,\xx_2\}\subseteq\dom\Gamma\\
    \rw \Gamma {(\xx_1,\xx_2)} \annotiinfer & = \omega^2&\text{otherwise, if } \xx_1\in\dom\Gamma\\
    \rw \Gamma {(\xx_1,\xx_2)} \annotiinfer & = \omega^2&\text{otherwise, if } \xx_2\in\dom\Gamma\\
    \rw \Gamma {(\xx_1,\xx_2)} \annotiinfer & = \omega^3&\text{otherwise}
\end{align*}
\begin{align*}
    \rw \Gamma {\tcase{\xx_0}{\tau}{\xx_1}{\xx_2}} \annotithenelse & = \omega\\
    \rw \Gamma {\tcase{\xx_0}{\tau}{\xx_1}{\xx_2}} \annotiinfer & = \omega^2&\text{if }\Gamma(\xx_0)\leq\tau\\
    \rw \Gamma {\tcase{\xx_0}{\tau}{\xx_1}{\xx_2}} \annotiinfer & = \omega^2&\text{otherwise, if }\Gamma(\xx_0)\leq\neg\tau\\
    \rw \Gamma {\tcase{\xx_0}{\tau}{\xx_1}{\xx_2}} \annotiinfer & = \omega^3&\text{otherwise, if }\xx_0\in\dom\Gamma\\
    \rw \Gamma {\tcase{\xx_0}{\tau}{\xx_1}{\xx_2}} \annotiinfer & = \omega^4&\text{otherwise}
\end{align*}
\begin{align*}
    \rw \Gamma {\bindexp{\xx}{a}{\kappa}} {\annotiskip{\banns}} & = \omega^{\rw \Gamma \kappa \banns}\\
    \rw \Gamma {\bindexp{\xx}{a}{\kappa}} {\annotibind{\lanns}{\sanns_1}{\sanns_2}} & = \omega^{\alpha}\\
    &\hspace*{-3.5cm}\text{with }\alpha=\textstyle\sum \{\rw {(\Gamma,\xx:\mt)} \kappa {\banns}\alt(\mt,\banns)\in\sanns_1\}\\
    \rw \Gamma {\bindexp{\xx}{a}{\kappa}} {\annotiprop{\lanns}{\Gammas}{\sanns_1}{\sanns_2}} & =
    \omega^{\alpha+\card\Gammas}\\
    &\hspace*{-3.5cm}\text{with }\alpha=\textstyle\sum \{\rw {(\Gamma,\xx:\mt)} \kappa {\banns}\alt(\mt,\banns)\in\sanns_1\}\\
    \rw \Gamma {\bindexp{\xx}{a}{\kappa}} {\annotitrybind{\lanns}{\banns_1}{\banns_2}} & = \omega^{\alpha}\\
    &\hspace*{-3.5cm}\text{with }\alpha=\textstyle\sum \{\rw \Gamma a \lanns,\ \rw {(\Gamma,\xx:\Any)} \kappa {\banns_1},\ \rw {\Gamma} \kappa {\banns_2}\}\\
    \rw \Gamma {\bindexp{\xx}{a}{\kappa}} {\annotitryskip{\banns}} & = \omega^{\alpha}\\
    &\hspace*{-3.5cm}\text{with }\alpha=\textstyle\sum \{\rw{\Gamma}{a}{\annotiinfer},\ \rw \Gamma \kappa \banns\}
\end{align*}
\begin{align*}
    \rw \Gamma \xx \annotiinfer & = \omega &\text{if } \xx\in\dom\Gamma\\
    \rw \Gamma \xx \annotiinfer & = \omega^2 &\text{otherwise}
\end{align*}
where, for any multiset $\{\alpha_1, \alpha_2, \dots, \alpha_n\}$,
$\textstyle\sum \{\alpha_1, \alpha_2, \dots, \alpha_n\} \eqdef \alpha_1 + \alpha_2 + \dots + \alpha_n$
with $\alpha_1\geq\alpha_2\geq\dots\geq\alpha_n$.

Then, we define a weight $\resw{\Gamma}{\kappaa}{\results}$
for any type environment $\Gamma$, canonical form or atom $\kappaa$,
and result $\results$ compatible with the structure of $\kappaa$:
\begin{align*}
    \resw{\Gamma}{\kappaa}{\resok{\aanns}} & = 1\\
    \resw{\Gamma}{\kappaa}{\resfail} & = 1\\
    \resw{\Gamma}{\kappaa}{\respart{\Gamma'}{\aanns_1}{\aanns_2}} & =
    \textstyle\sum\ \{\rw{\Gamma\land\Gamma'}{\kappaa}{\aanns_1}\}\cup\{\rw{(\Gamma,\xx:\neg \mt)}{\kappaa}{\aanns_2}\,\alt\,(\xx:\mt)\in\Gamma'\}\\
    \resw{\Gamma}{\kappaa}{\ressubst{\{\msubst_i\}_{i\in I}}{\aanns_1}{\aanns_2}} & =
    \textstyle\sum\ \{\rw{\Gamma}{\kappaa}{\aanns_2}\}\cup\{\rw{\Gamma\msubst_i}{\kappaa}{\aanns_1\msubst_i}\,\alt\,i\in I\}\\
    \resw{\Gamma}{\kappaa}{\resvar{\xx}{\aanns_1}{\aanns_2}} & = \textstyle\sum \{\rw{(\Gamma,\xx:\Any)}{\kappaa}{\aanns_1},\ \rw{\Gamma}{\kappaa}{\aanns_2}\}
\end{align*}

\begin{lemma}
    For any $\Gamma, \kappaa, \aanns$, and $\Gamma'$ such that $\Gamma'\leq\Gamma$,
    we have $\rw{\Gamma'}{\kappaa}{\aanns}\leq\rw{\Gamma}{\kappaa}{\aanns}$. 
\end{lemma}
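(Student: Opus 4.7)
The plan is to proceed by structural induction on the pair $(\kappaa,\aanns)$, following the case split used in the definition of $\texttt{w}$. The key monotonicity observation we need is that $\Gamma'\leq\Gamma$ implies three things that propagate downwards: $(i)$ $\dom\Gamma\subseteq\dom{\Gamma'}$, $(ii)$ for every $\xxx\in\dom\Gamma$, $\Gamma'(\xxx)\leq\Gamma(\xxx)$, and $(iii)$ for every type $\mt$, $(\Gamma',x{:}\mt)\leq(\Gamma,x{:}\mt)$. From these we derive the case-monotonicity of the weight.

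The base cases $\rw{\cdot}{\kappaa}{\annotityp}=\rw{\cdot}{\kappaa}{\annotiuntyp}=1$ and $\rw{\cdot}{c}{\annotiinfer}=\omega$ are independent of the environment, so the conclusion is trivial. The cases that depend explicitly on $\Gamma$ are the variable $\xx$, the projection $\pi_i\xx$, the application $\xx_1\xx_2$, the pair $(\xx_1,\xx_2)$, and the typecase annotated with $\annotiinfer$. In each of these the weight is a piecewise-constant ordinal of the form $\omega^k$ ($k\in\{1,2,3,4\}$) that decreases whenever a variable enters the domain or whenever a subtyping test $\Gamma(\xx_0)\leq\tau$ (resp.\ $\leq\neg\tau$) becomes satisfied. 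By observations $(i)$ and $(ii)$, any branch that fires for $\Gamma$ still fires for $\Gamma'$ (and possibly a smaller branch fires instead), so $\rw{\Gamma'}{\kappaa}{\aanns}\leq\rw{\Gamma}{\kappaa}{\aanns}$.

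For the structural/recursive cases ($\annotilambda{\mt}{\banns}$, $\annotiskip{\banns}$, $\annotibind{\lanns}{\sanns_1}{\sanns_2}$, $\annotiprop{\lanns}{\Gammas}{\sanns_1}{\sanns_2}$, $\annotitrybind{\lanns}{\banns_1}{\banns_2}$, $\annotitryskip{\banns}$, and $\annotiinter{S_1}{S_2}$), the weight is built as $\omega^\alpha$ (or $\omega^{\alpha+\card{\Gammas}}$) where $\alpha$ is a $\textstyle\sum$ of recursive weights, each computed under an environment obtained by extending $\Gamma$ with a binding. By observation $(iii)$ the induction hypothesis applies to each recursive call and yields a pointwise decrease of the multiset of operands passed to $\textstyle\sum$. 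Since ordinal addition of operands listed in non-increasing order is monotone in each operand (working with Cantor normal forms, decreasing any summand yields a smaller resulting ordinal), the sum decreases; then monotonicity of $\omega^{(\cdot)}$ lifts this to the outer exponential. The case $\annotiprop{\lanns}{\Gammas}{\sanns_1}{\sanns_2}$ has an extra additive $\card{\Gammas}$, which does not depend on $\Gamma$ and is therefore preserved.

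The main obstacle is the bookkeeping in the binding cases, where the weight combines several recursive subweights via $\textstyle\sum$ under \emph{different} extended environments (e.g.\ one extension per $(\mt,\banns)$ in $\sanns_1$, plus one for the bound atom $a$). We must be careful that, in passing from $\Gamma$ to $\Gamma'$, each extended environment $(\Gamma',x{:}\mt)$ is still pointwise $\leq(\Gamma,x{:}\mt)$, so that the induction hypothesis can be invoked uniformly on every summand; once this is checked, the monotonicity of the sorted ordinal sum and of $\omega^{(\cdot)}$ closes the argument.
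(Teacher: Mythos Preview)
Your proposal is correct and takes the same approach as the paper, which simply records ``Straightforward induction.'' You have faithfully unpacked that induction: the case analysis on $(\kappaa,\aanns)$, the domain/subtyping propagation for the base cases, and the preservation of $\leq$ under environment extension for the recursive cases are exactly what is needed, and your observation that the sorted multiset sum is (weakly) monotone in each operand together with monotonicity of $\omega^{(\cdot)}$ is the right way to close the compound cases.
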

\begin{proof}
    Straightforward induction.
\end{proof}

\begin{lemma}
    For any $\Gamma, \kappaa, \aanns$, and $\msubst$,
    we have $\rw{\Gamma\msubst}{\kappaa}{\aanns\msubst}\leq\rw{\Gamma}{\kappaa}{\aanns}$. 
\end{lemma}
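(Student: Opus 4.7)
The plan is to prove the lemma by simultaneous structural induction on the intermediate annotation $\aanns$, handling atom and form annotations together. For each case, I would compare the definition of $\rw{\Gamma\msubst}{\kappaa}{\aanns\msubst}$ with $\rw{\Gamma}{\kappaa}{\aanns}$ and establish the inequality directly.

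The base cases ($\annotityp$, $\annotiuntyp$) give weight $1$ regardless of $\Gamma$ and $\aanns$, so the inequality is trivial. For annotations $\annotiinfer$ on constants and lambda variables, the weight is a constant ordinal, so again the inequality is immediate. For annotations $\annotiinfer$ on projections, applications, pairs, typecases, and binding variables, the weight is determined by a case split on membership $\xx\in\dom\Gamma$ and on subtyping conditions of the form $\Gamma(\xx_0)\leq\tau$. Two key observations make these cases work: first, substitution does not affect the domain of an environment, so all the $\xx\in\dom\Gamma$ conditions are invariant under $\msubst$; second, test types $\tau$ are ground (per the grammar in Figure~\ref{fig:syntax}), so $\tau\msubst=\tau$, and since subtyping is preserved by substitution, $\Gamma(\xx_0)\leq\tau$ implies $(\Gamma\msubst)(\xx_0)\leq\tau$. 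Consequently, after substitution we can only fall into the same case or into a ``simpler'' case of the weight definition, which has a strictly smaller ordinal weight.

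For the compound annotations ($\annotilambda{\mt}{\banns}$, $\annotiskip{\banns}$, $\annotibind{\lanns}{\sanns_1}{\sanns_2}$, $\annotiprop{\lanns}{\Gammas}{\sanns_1}{\sanns_2}$, $\annotitrybind{\lanns}{\banns_1}{\banns_2}$, $\annotitryskip{\banns}$, and $\annotiinter{S_1}{S_2}$), the weight is built from weights of sub-components by ordinal operations: the canonical sum $\sum$ and exponentiation $\omega^{(-)}$. Here I would use the induction hypothesis on each sub-component, noting in particular that the extended environment $(\Gamma,\xx:\mt)$ commutes with substitution in the sense that $(\Gamma,\xx:\mt)\msubst = (\Gamma\msubst,\xx:\mt\msubst)$, and that $\aanns\msubst$ preserves the shape of $\aanns$ (e.g., $(\annotibind{\lanns}{\sanns_1}{\sanns_2})\msubst = \annotibind{\lanns\msubst}{\sanns_1\msubst}{\sanns_2\msubst}$ with $\sanns_i\msubst = \{(\mt\msubst,\banns\msubst)\alt(\mt,\banns)\in\sanns_i\}$). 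From the induction hypothesis, each sub-weight decreases (non-strictly), and one concludes by monotonicity of $\omega^{(-)}$ and of the canonical sum.

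The main obstacle is justifying monotonicity of the canonical sum $\sum$ used for the bind and intersection weights. Since it is defined as the ordinal sum of its arguments taken in decreasing order, a pointwise decrease of the multiset does not obviously yield a decrease of the sum (ordinal addition is only right-strictly-monotone). I would handle this with a short auxiliary lemma: if $\{\alpha_i'\}_{i\in I}$ and $\{\alpha_i\}_{i\in I}$ are multisets with $\alpha_i'\leq\alpha_i$ for each $i$, then $\sum\{\alpha_i'\}_{i\in I}\leq\sum\{\alpha_i\}_{i\in I}$. This follows by induction on $|I|$, using the fact that the ordering of each sum is forced to be decreasing and that for any ordinals $\alpha'\leq\alpha$ and $\beta'\leq\beta$ one has $\alpha'+\beta'\leq\alpha+\beta$ (which is straightforward from the usual transfinite definition of ordinal addition). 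Once this lemma is in hand, the induction step for every compound annotation goes through uniformly.
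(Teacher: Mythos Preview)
Your proposal is correct and follows essentially the same approach as the paper, which dismisses this lemma with a one-line ``straightforward induction (we recall that test types $\tau$ do not contain type variables).'' You have simply unpacked that induction in full detail, including the key observation about ground test types that the paper also highlights, and you additionally isolate and justify the monotonicity of the decreasing-order ordinal sum $\sum$, a point the paper leaves implicit.
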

\begin{proof}
    Straightforward induction (we recall that test types $\tau$ do not contain type variables).
\end{proof}

\begin{lemma}
    If $\Gamma\aavdash\epa{\kappaa}{\aanns}\refines\results$
    or $\Gamma\raavdash\epa{\kappaa}{\aanns}\refines\results$,
    then $\rw{\Gamma}{\kappaa}{\aanns}\gneq \resw{\Gamma}{\kappaa}{\results}$.
\end{lemma}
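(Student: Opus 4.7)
The plan is to carry out a simultaneous structural induction on the derivation of $\Gamma\aavdash\epa{\kappaa}{\aanns}\refines\results$ and $\Gamma\raavdash\epa{\kappaa}{\aanns}\refines\results$, and to verify by case analysis on the last rule that the strict ordinal inequality holds, appealing when needed to the two preceding monotonicity lemmas on $\rw{\cdot}{\cdot}{\cdot}$.

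First I would dispatch the terminal rules (\textsc{Ok}, \textsc{Fail}, \textsc{Const}, \textsc{AxOk}, \textsc{AxFail}, \textsc{PairOk}, \textsc{CaseEmpty}, etc.): here both sides are computed by simply unfolding $\rw{}{}{}$ and $\resw{}{}{}$ and comparing the (small, finite) ordinals obtained, using that $\annotiinfer$ on a constructor always has weight at least $\omega$ while $\annotityp$ has weight $1$. Next I would handle the rules that emit a non-terminal result carrying an intermediate annotation (\textsc{AppVar$_i$}, \textsc{AppInfer}, \textsc{ProjInfer}, \textsc{PairVar$_i$}, \textsc{CaseSplit}, \textsc{CaseThen}, \textsc{CaseElse}, \textsc{CaseVar$_i$}): in each of these the conclusion annotation is $\annotiinfer$ or $\annotithenelse$ (which by definition weighs $\omega^k$ for some $k\geq 1$) while the result components are $\annotityp$ or $\annotiuntyp$ of weight $1$, so that after summing and possibly intersecting with $\Gamma'$ or applying a substitution $\msubst_i\disjoint\Gamma$, the finite sum stays strictly under $\omega^k$ by the defining property of ordinal exponentiation.

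The inductive rules for $\lambda$-abstractions, binds, and intersections require using the induction hypothesis together with the monotonicity of $X\mapsto\omega^X$: for \textsc{Lambda}, \textsc{BindSkip$_i$}, \textsc{BindKeep$_1$/$_3$}, \textsc{Inter$_1$/$_2$/$_3$} the weight of the premise result is strictly smaller than the weight of the premise annotation by IH, and this strict inequality is transported through $\omega^{(\cdot)}$, finite sums, and the multiset operations $\mapres{\results}{X}{f(X)}$ that reshape $\results$ but only substitute its annotation components. The \textsc{Inter$_1$} and \textsc{Inter$_2$} cases also shrink the unprocessed set $S$, so the first-level multiset sum strictly decreases. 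For the \textsc{Bind}-propagation rules \textsc{BindTrySkip$_{1,2,3}$}, \textsc{BindTryKeep$_{1,2,3}$}, \textsc{BindProp$_{1,2}$}, and above all \textsc{BindKeep$_2$}, the new $\annotiprop{\lanns}{\Gammas_1\cup\Gammas_2}{\cdot}{\cdot}$ annotation that is produced encodes the environment refinements in its $\Gammas$ component, and the exponent $\alpha+\card{\Gammas}$ used in its weight is precisely what is needed to make it strictly dominate both its $\respart$-continuation (handled by \textsc{BindProp$_1$}, whose $\card\Gammas$ strictly decreases) and the fallback $\annotibind$ branch (handled by \textsc{BindProp$_2$}).

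The main obstacle, and the case I would spend the most effort on, is \textsc{Iterate$_1$} and \textsc{Iterate$_2$}, because there the conclusion bound must be obtained by \emph{composing} two strict decreases. For \textsc{Iterate$_1$}, the premise $\Gamma\aavdash\epa{\kappaa}{\aanns}\refines\respart{\emptyenv}{\aanns_1}{\aanns_2}$ gives by IH that $\rw{\Gamma}{\kappaa}{\aanns}$ strictly dominates $\resw{\Gamma}{\kappaa}{\respart{\emptyenv}{\aanns_1}{\aanns_2}}$, which by the definition of $\resw$ with $\Gamma'=\emptyenv$ is exactly the sum containing $\rw{\Gamma}{\kappaa}{\aanns_1}$; then a second IH on the second premise yields $\rw{\Gamma}{\kappaa}{\aanns_1}\gneq\resw{\Gamma}{\kappaa}{\results}$, and chaining the two strict inequalities gives the claim. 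For \textsc{Iterate$_2$} the argument is the same but the intermediate annotation is $\annotiinter{\{\aanns_1\msubst_i\}_{i\in I}\cup\{\aanns_2\}}{\emptyset}$; here the side condition $\msubst_i\disjoint\Gamma$ together with the monotonicity of $\rw{}{}{}$ under a disjoint-from-$\Gamma$ substitution (first stated auxiliary lemma) guarantees that $\rw{\Gamma}{\kappaa}{\aanns_1\msubst_i}=\rw{\Gamma\msubst_i}{\kappaa}{\aanns_1\msubst_i}$ and hence the sum defining $\rw{\Gamma}{\kappaa}{\annotiinter{\cdots}{\emptyset}}$ coincides with $\resw{\Gamma}{\kappaa}{\ressubst{\{\msubst_i\}_{i\in I}}{\aanns_1}{\aanns_2}}$, so that IH again provides the strict decrease. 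The residual \textsc{Stop} case is immediate since it returns its premise's result unchanged.
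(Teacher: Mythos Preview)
Your proposal is correct and follows exactly the approach the paper uses: the paper's proof is the single line ``Structural induction on the derivation of $\Gamma\aavdash\epa{\kappaa}{\aanns}\refines\results$ or $\Gamma\raavdash\epa{\kappaa}{\aanns}\refines\results$,'' and your plan simply elaborates the case analysis that this sentence leaves implicit. One small slip: the substitution-monotonicity lemma you invoke for \textsc{Iterate}$_2$ is the \emph{second} of the two auxiliary lemmas, not the first.
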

\begin{proof}
    Structural induction on the derivation of $\Gamma\aavdash\epa{\kappaa}{\aanns}\refines\results$
    or $\Gamma\raavdash\epa{\kappaa}{\aanns}\refines\results$.
\end{proof}

\begin{theorem}[Termination]\label{app:rec_term}
    The deduction rules $\raavdash$ and $\aavdash$ define a terminating algorithm:
    it can either fail (if no rule applies at some point) or return a result $\results$.
\end{theorem}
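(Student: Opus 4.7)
The plan is to exploit the ordinal-weight machinery already set up in the preceding three lemmas. Since $\rw{\Gamma}{\kappaa}{\aanns}$ is an ordinal and ordinals are well-founded, it suffices to show that the algorithm's recursive structure corresponds to a strictly decreasing sequence of such weights, and that each rule application without recursive calls is itself a terminating step. The key lemma that carries the weight of the argument is the third one above: whenever a deduction $\Gamma\aavdash\epa{\kappaa}{\aanns}\refines\results$ or $\Gamma\raavdash\epa{\kappaa}{\aanns}\refines\results$ is derivable, we have $\rw{\Gamma}{\kappaa}{\aanns} \gneq \resw{\Gamma}{\kappaa}{\results}$.

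First I would formalize the algorithm as a recursive procedure: given a triple $(\Gamma,\kappaa,\aanns)$, attempt to match it against the head of some rule of $\aavdash$ or $\raavdash$; if no rule matches, fail; otherwise, make the recursive calls required by the premises of that rule and combine their results as dictated by the conclusion. For the algorithm to loop, there would have to be an infinite chain of such recursive calls. Each recursive call has the shape $(\Gamma',\kappaa',\aanns')$ where $(\Gamma',\kappaa',\aanns')$ appears in a premise of the rule being applied, and by the third lemma (applied to the already-closed subderivations) together with the first two monotonicity lemmas, one can show that the weight of each recursive subproblem is strictly smaller than the weight of the invoking triple.

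The main step, then, is a careful case analysis on the rules. For structural rules such as \Rule{Const}, \Rule{Ok}, \Rule{Fail}, \Rule{AxOk}, \Rule{AxFail}, \Rule{InterEmpty}, \Rule{InterOk}, \Rule{BindOk}, and the base cases of the $\results$-continuation rules, there are no recursive calls and termination is immediate. For rules that make a single recursive call with the same or smaller environment and a structurally strictly smaller annotation (e.g.\ \Rule{LambdaInfer}, \Rule{BindInfer}, \Rule{BindTrySkip}, the variants \Rule{BindKeep$_i$}, \Rule{Inter$_i$}, \Rule{Lambda}), the weight definitions are chosen precisely so that the wrapping annotation contributes at least one more level of $\omega^{\cdot}$ than the child; hence the recursive problem has strictly smaller weight. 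The most delicate cases are the two \Rule{Iterate} rules, because \Rule{Iterate$_2$} duplicates $\aanns_1$ under a substitution $\msubst_i$ for each $i\in I$ and wraps the list in an intersection annotation; here the second monotonicity lemma ($\rw{\Gamma\msubst}{\kappaa}{\aanns\msubst}\leq\rw{\Gamma}{\kappaa}{\aanns}$) together with the ordinal definition $\resw{\Gamma}{\kappaa}{\ressubst{\{\msubst_i\}_{i\in I}}{\aanns_1}{\aanns_2}}$ as a finite ordinal sum guarantees that each copy contributes no more than the original, and the wrapping intersection annotation leaves the sum strictly below the parent weight.

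The hard part will be verifying, rule by rule, that the ordinal inequalities used implicitly in the definition of $\rw{\cdot}{\cdot}{\cdot}$ are tight enough for the interesting rules: namely, the \Rule{BindKeep$_2$} rule (which produces a $\respart{\cdot}{\cdot}{\cdot}$ result whose continuation under \Rule{Iterate$_1$} further refines the binding's split), and the \Rule{BindProp$_i$} rules (which iterate through the set $\Gammas$ of propagated refinements, encoded by the $+\card\Gammas$ summand in the weight). For these, I expect to need the first monotonicity lemma on each branch where the environment is refined, combined with the fact that the weight definition places $\annotiprop{\lanns}{\Gammas}{\sanns_1}{\sanns_2}$ strictly above $\annotibind{\lanns}{\sanns_1}{\sanns_2}$. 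Once all cases are discharged, we conclude by well-foundedness of the ordinals: no infinite recursion is possible, and since each individual rule application performs only finitely many primitive operations (tallying, subtyping tests, and set manipulations, all decidable), the algorithm either fails or returns a result in finitely many steps.
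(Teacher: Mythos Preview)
Your proposal is correct and follows the same approach as the paper: use the ordinal weight function and the three preceding lemmas, then conclude by well-foundedness of the ordinals. The paper's own proof is in fact a one-sentence invocation of exactly this idea, observing that an infinite sequence of \Rule{Iterate$_1$}/\Rule{Iterate$_2$} applications on a fixed $\kappaa$ would yield an infinite strictly decreasing chain of ordinals.

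One remark: much of the rule-by-rule case analysis you outline (for \Rule{BindKeep$_2$}, \Rule{BindProp$_i$}, the \Rule{Inter$_i$} rules, etc.) is already the content of the \emph{third} lemma, which was proved by structural induction on the derivation. Once that lemma is available, the termination theorem itself needs no further per-rule inspection: any putative infinite run of the algorithm would produce an infinite chain of triples $(\Gamma_k,\kappaa_k,\aanns_k)$ with strictly decreasing weights $\rw{\Gamma_k}{\kappaa_k}{\aanns_k}$, contradicting well-foundedness. So your plan is sound, but you can shorten it considerably by leaning on the third lemma rather than unpacking its proof again.
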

\begin{proof}
    There can only be finitely many \Rule{Iterate$_1$}
    and \Rule{Iterate$_2$} nodes applied on a given canonical form or atom,
    otherwise, according to the previous lemmas,
    we could extract from them an infinite decreasing chain of ordinal numbers.
\end{proof}

\ifpersonalcopy
\else
\setcounter{TotPages}{30}
\fi
\end{document}